%% file: 000-main.tex
\newif\ifsubmit 

\newif\ifeprint
\newif\ifconference
\newif\ifproceedings

\input{111-PARAMS}

\ifproceedings
\documentclass[runningheads]{llncs}
\fi
 
\ifconference
\documentclass{llncs}
\fi

\ifeprint
\documentclass[11pt]{article} 
\fi

\input{000-include/macros/config}
\input{000-include/macros/commoncmd}
\input{111-THIS-PROJECT/cmd}

\input{111-THIS-PROJECT/title-authors}

\begin{document}
\maketitle

\ifproceedings
\input{112-PROCEEDINGS}
\fi

\ifconference

\input{112-CONFERENCE}
\fi

\ifeprint

\input{112-EPRINT}
\fi

\bibliographystyle{alpha}
\end{document}

%% file: 111-PARAMS.tex
\submittrue

\eprinttrue

\conferencefalse

\proceedingsfalse

%% file: 000-include/macros/config.tex
\usepackage[utf8]{inputenc}

\ifproceedings
\usepackage[T1]{fontenc}
\fi

\ifeprint
\usepackage[pdftex,left=1in,top=1in,bottom=1in,right=1in]{geometry}
\usepackage{amsthm}
\fi

\usepackage{amsmath,amssymb,dsfont,mathrsfs}

\usepackage{braket}
\usepackage{physics}
\usepackage{mathtools}

\usepackage{xcolor}
\usepackage[framemethod=tikz]{mdframed}
\usepackage{graphicx}
\usepackage{caption}
\usepackage{bm}

\usepackage{verbatim} 
\usepackage{enumitem}

\usepackage[colorlinks=true,citecolor=blue,linkcolor=blue]{hyperref}
\usepackage[nameinlink, noabbrev, capitalize]{cleveref}

\hypersetup{
    colorlinks,
    linkcolor={red!50!black},
    citecolor={blue!50!black},
    urlcolor={blue!80!black}
}

\ifeprint
\usepackage{tocloft}

\fi

\definecolor{darkgreen}{rgb}{0,0.5,0}
\definecolor{darkblue}{rgb}{0,0,0.6}

\allowdisplaybreaks

\usepackage{tabularx}

%% file: 000-include/macros/commoncmd.tex
\newcommand{\addrefs}{\bibliography{000-include/bibs/abbrev0,000-include/bibs/crypto,111-THIS-PROJECT/refs}}

\DeclareMathAlphabet{\mathpzc}{OT1}{pzc}{m}{it}

\newcommand{\varr}[1]{\mathrm{{#1}}} 
\newcommand{\qvarr}[1]{\mathpzc{#1}} 
\newcommand{\sch}[1]{\bm{\mathsf{#1}}} 
\newcommand{\algo}[1]{\mathsf{#1}} 
\newcommand{\gam}[1]{\mathsf{#1}} 

\newcommand{\samp}{\leftarrow} 

\newcommand{\C}{\mathds{C}}

\newcommand{\BN}{\mathbb{N}}
\newcommand{\nat}{\BN}
\newcommand{\U}{\mathcal{U}}

\newcommand{\R}{\mathbb{R}}
\newcommand{\E}{\mathds{E}}

\newcommand{\zo}{\{0,1\}}
\newcommand{\poly}{\mathsf{poly}}

\newcommand{\regi}{\mathsf{R}} 
\newcommand{\regis}[1]{\regi_{\mathsf{#1}}} 
\newcommand{\trd}[2]{\norm{#1 - #2}_{Tr}} 

\newcommand{\messpa}{\mathcal{M}} 
\newcommand{\meslen}{m}

\newcommand{\ora}{\mathcal{O}}
\newcommand{\obfd}[1]{\hat{#1}} 

\newcommand{\io}{i\mathcal{O}}
\newcommand{\prf}{\sch{PRF}}

\newcommand{\pk}{\varr{pk}}
\newcommand{\sk}{\varr{sk}}

\newcommand{\qsk}{\qvarr{sk}}
\newcommand{\mes}{\varr{m}}
\newcommand{\ct}{\varr{ct}}

\newcommand{\enc}{\algo{Enc}}
\newcommand{\dec}{\algo{Dec}}
\newcommand{\setup}{\algo{Setup}}
\newcommand{\ver}{\algo{Ver}}
\newcommand{\ceval}{\algo{Eval}}
\newcommand{\gen}{\algo{Gen}}

\newcommand{\genkey}{\algo{GenKey}}
\newcommand{\keygen}{\genkey}

\newcommand{\sign}{\algo{Sign}}

\newcommand{\negl}{\mathsf{negl}}
\newcommand{\adve}{\mathcal{A}}

\ifeprint
\newtheorem{theorem}{Theorem}
 
 \newtheorem{claim}{Claim}
 \newtheorem{lemma}{Lemma}
 
 \newtheorem{corollary}{Corollary}
 
 \newtheorem{definition}{Definition}
 
 \newtheorem{remark}{Remark}
 
\fi

%% file: 111-THIS-PROJECT/cmd.tex
\usepackage[normalem]{ulem}
\usepackage{xspace}
\usepackage{tikz}
\usetikzlibrary{arrows.meta,calc,decorations.markings,positioning}
\newcommand{\moe}{\text{Monogamy-of-Entanglement}\xspace}

\newcommand{\dcmoe}{\textsf{Decisional-Coset-MOE}\xspace}

\newcommand{\cosetue}{\textsf{Coset-UE}\xspace}

\newcommand{\advmoe}{{\adve_{\textsf{moe}}}}
\newcommand{\advue}{{\adve_{\textsf{UE}}}}

\newcommand{\keyspace}{{\cal K}}

\newcommand{\bit}{\{0,1\}}

\newcommand{\comp}{\text{Compatible}}

\newcommand{\comparitycoset}{\text{Comp-Parity-Coset}}
\newcommand{\can}{\text{Coset-rep}}

\newcommand{\secparam}{\lambda}

\newcommand{\alice}{\ensuremath{\mathcal{A}}\xspace}
\newcommand{\bob}{\ensuremath{\mathcal{B}}\xspace}
\newcommand{\charlie}{\ensuremath{\mathcal{C}}\xspace}

\newcommand{\col}{\mathsf{Col}}

\newcommand{\regalice}{\ensuremath{\mathsf{A}}\xspace}
\newcommand{\regbob}{\ensuremath{\regi_{1}}\xspace}
\newcommand{\regcharlie}{\ensuremath{\regi_{2}}\xspace}
\newcommand{\regeve}{\ensuremath{\mathsf{E}}\xspace}
\newcommand{\ch}{\textsf{Ch}\xspace}

\newcommand{\FF}{\mathbb{F}}

\newcommand{\NN}{\mathbb{N}}

\newcommand{\CC}{\mathbb{C}}
\newcommand{\EE}{\mathbb{E}}

\newcommand{\getsr}{\xleftarrow{\$}}

\newcommand{\epsilondcmoe}{\epsilon_{\mathsf{DCMOE},n}}
\newcommand{\hybrid}{\mathsf{Hyb}}

\newcommand{\compdcmoe}{\textsf{Comp-Decisional-Coset-MOE}\xspace}
\newcommand{\auxsetup}{\textsf{Aux.Setup}\xspace}
\newcommand{\auxgen}{\textsf{Aux.Gen}\xspace}
\newcommand{\auxsim}{\textsf{Aux.Sim}\xspace}

\providecommand{\iO}{\mathsf{iO}}
\providecommand{\shO}{\mathsf{shO}}

\providecommand{\negl}{\mathsf{negl}}

\newcommand{\sde}{\sch{SDE}}
\newcommand{\corrsde}{\gam{SDE\mbox{-}CORR}\xspace}

\newcommand{\sep}{\mathsf{IND}}
\newcommand{\idmode}{\mathsf{IDEN}}
\newcommand{\projimp}{\mathsf{ProjImp}}
\newcommand{\TI}{\mathsf{TI}}
\providecommand{\sampler}{\mathsf{Samp}}

\newcommand{\upo}{\sch{UPO}}
\newcommand{\Obf}{\mathsf{Obf}}
\newcommand{\Eval}{\mathsf{Eval}}
\newcommand{\acupo}{\textsf{Correlated-UPO}\xspace}
\newcommand{\genpuncture}{\mathsf{GenPuncture}}
\newcommand{\acpuncture}{\textsf{Correlated-Puncture}}
\newcommand{\pprf}{\mathsf{PPRF}}
\newcommand{\advupo}{{\adve_{\mathsf{UPO}}}}
\newcommand{\unifdist}{\mathsf{U}}
\newcommand{\ID}{\mathsf{ID}}
\newcommand{\IDU}{\mathsf{ID}_{\mathsf{U}}}
\newcommand{\Hmin}{\widetilde H_\infty}
\newcommand{\adv}{\adve}

%% file: 111-THIS-PROJECT/title-authors.tex
\title{Copy-Protection with Correlated Challenges: \\ Point Functions and More via Decisional Coset Monogamy}

\ifproceedings
\fi

\date{}

\ifproceedings
\author{Amit Behera\inst{2}\orcidID{0009-0008-1462-7222} \and \author{Alper \c{C}akan\inst{1}\orcidID{0000-0003-3567-1704} \and
Vipul Goyal\inst{2,1}}

\authorrunning{A. Behera and A. \c{C}akan and V. Goyal}

\institute{
NTT Research, Sunnyvale, CA, USA \email{amitbehera1767@gmail.com} \and
Carnegie Mellon University, Pittsburgh PA, USA \email{alpercakan98@gmail.com} \and
NTT Research, Sunnyvale, CA, USA
\email{vipul@vipulgoyal.org}}
\fi

\ifeprint
 \author{Amit Behera\thanks{NTT Research \texttt{amitbehera1767@gmail.com}} \and Alper \c{C}akan\thanks{Carnegie Mellon University. \texttt{alpercakan98@gmail.com}.} \and Vipul Goyal\thanks{NTT Research \& Carnegie Mellon University.  \texttt{vipul@vipulgoyal.org}}}
\fi

\ifsubmit
\newcommand{\todo}[1]{}
\newcommand{\alper}[1]{}
\newcommand{\amit}[1]{}
\newcommand{\vipul}[1]{}
\else
\newcommand{\todo}[1]{{\color{blue} \footnotesize(TODO: \uppercase{#1})}}
\newcommand{\alper}[1]{{\color{blue} \footnotesize(ALPER: {#1})}}
\newcommand{\amit}[1]{{\color{purple} \footnotesize(AMIT: {#1})}}
\newcommand{\vipul}[1]{{\color{red} \footnotesize(VIPUL: {#1})}}
\fi

%% file: 112-PROCEEDINGS.tex
\input{111-THIS-PROJECT/abstract}

\input{intro}


\begin{credits}
\subsubsection{\ackname} 
\input{111-THIS-PROJECT/acknowledgement}

\end{credits}

\addrefs 

%% file: 111-THIS-PROJECT/abstract.tex
\begin{abstract}
Copy-protection is one of the main applications of quantum information in cryptography. In copy-protection, we encode a functionality in a reusable quantum state
so that it cannot be split into two states (called \emph{freeloader adversaries}) that remain simultaneously useful. Despite a long line of research, previous works have only been able to show security with respect to \emph{independently sampled challenges} in the plain-model. However, arguably a more natural security notion considers the two freeloader adversaries receiving the same challenge. This so-called \emph{identical-challenge} security notion is also connected to other fundamental quantum cryptographic primitives such as unclonable bits (i.e. unclonable encryption) and copy-protection of point functions.

In this work, first we make progress on the definitional foundations of these primitives, and then prove security in the plain model for our new stronger definitions, in particular also resolving the question of copy-protection with identical challenges and copy-protection of point functions. In more detail, we obtain the following results.
\begin{itemize}[leftmargin=1.5em,labelsep=0.5em,itemsep=0.3em,topsep=0.2em,parsep=0pt]
\item \textbf{Copy-protecting decryption keys (Single-decryptor encryption).}
We define a new natural security notion for single-decryptor encryption (SDE) called \emph{correlated challenge security}, and show that implies all previous security definitions for SDE, including identical-challenge security. Then, we prove that, assuming indistinguishability obfuscation (iO) and one-way functions, the SDE construction of Kitagawa and Yamakawa (TCC'25) satisfies correlated challenge security. We also provide an almost complete characterization of the relationship among previous SDE security notions.

\item \textbf{Copy-Protecting General Functionalities with Correlated Challenges.}
We define correlated challenge unclonable puncturable obfuscation (UPO), allowing
arbitrary correlations among challenge points and puncturing bits, plus auxiliary information before and after splitting. Security requires only
conditionally uniform bits and $\lambda^c$ average conditional min-entropy in
each point separately, for any constant $c>0$; thus, in particular the challenge points may be
identical. Assuming polynomially secure post-quantum iO and quantum-hard LWE,
we construct correlated UPO for arbitrary polynomial-size keyed circuits with
input length at least $\lambda^c$, answering the open question of Ananth, Behera, Huang, Kitagawa, Yamakawa (EUROCRYPT'26) and Çakan-Goyal (EUROCRYPT'26).

\item \textbf{Applications.}
Our results yield the first plain-model copy protection for point functions,
$k$-point functions, and compute-and-compare programs under natural security definitions, and identical-challenge copy protection for general puncturable functionalities.
\end{itemize}

The technical core of our results is a new \emph{decisional monogamy theorem for coset states}, which both simplifies the proofs and generalizes the results of existing copy-protection constructions, which may be of independent interest.

\ifproceedings
\keywords{Quantum cryptography \and Copy-protection \and Unclonable encryption}
\fi
\end{abstract}
\ifeprint
\textbf{Keywords:} Quantum cryptography, Copy-protection, Unclonable encryption.
\fi

%% file: intro.tex
\section{Introduction}
Starting with the seminal work of Wiesner~\cite{Wie83}, a long line of
research has shown that quantum information can help us achieve remarkable cryptographic guarantees that are simply impossible with classical information.  A central example is the notions of quantum money (\cite{Wie83,EPRINT:AarChr12}) and \emph{quantum copy protection}~\cite{Aar09}. In the latter, a functionality is encoded into a quantum state that can be evaluated, but no efficient adversary can split the state into two programs/adversaries (also called \emph{freeloaders}) that both remain useful. The functionality can be a cryptographic object (such as a decryption key for an encryption scheme) or can be a piece of software (such as a video game or a spreadsheet program). Note that a classical program or key can
always be copied bit for bit, so copy protection is indeed an inherently quantum security notion.

There has been significant recent progress on constructing copy protection
for general classes of functionalities and laying down the definitional foundations of copy protection~\cite{EPRINT:ColMajPor20,C:ALLZZ21,C:CLLZ21,C:AKLLZ22,C:AnaKalLiu23,AB24,ABH+26,CG26,TCC:KitYam25}.
Nevertheless, as we discuss in more detail below, the state of affairs remains quite unsatisfactory on both the construction and definitional fronts. Even for some of the simplest functionalities, such as \emph{point functions}, we still do not have a provably secure copy protection scheme (under a natural security definition) in the plain model. On the definitional side, we do not have good security definitions for even copy-protecting decryption keys, which is one of the most fundamental applications of copy protection. Further, some of the existing natural security definitions for decryption keys still have not been achieved. In this work, we consider these issues and make significant progress on both construction and definitional fronts. We now discuss the issues in more detail.

\paragraph{Copy-protecting decryption keys.} The earliest and perhaps the most important primitive in cryptography is \emph{encryption}, and similarly, one of the earliest and most fundamental questions in quantum copy protection has been the question of copy-protecting decryption keys. Thus, this question has been one of the most studied questions in copy protection~\cite{EPRINT:GeoZha20,C:CLLZ21,TCC:AnaKal21,C:AKLLZ22,TCC:KitNis23,C:AnaKalLiu23,TCC:KitYam25} and even has a dedicated name: \emph{single-decryptor encryption} (SDE) \cite{C:CLLZ21}.

In SDE, we encode a decryption key in a quantum state, and we require that no adversary can split it into two \emph{useful} keys. Note that the encryption key and the ciphertexts are still classical. In the case of classical encryption, the notion of a \emph{useful} adversary has been settled decades ago: The seminal work of Goldwasser and Micali~\cite{STOC:GolMic82,GolMic84} introduced the notion of indistinguishability security (now also called \emph{CPA security}), where the adversary receives an encryption of either $\mes_0$ or $\mes_1$ with probability $1/2$, and security is considered violated when it can guess which case it is with probability non-negligibly better than $1/2$. This security notion is the \emph{gold standard} for encryption, and also implies various other properties that one would expect from an encryption, such as unpredictability security (also called search security), where an adversary tries to guess the message from its ciphertext (the message comes from an unpredictable distribution).

However, it turns out that the question of defining security for single-decryptor encryption (SDE) is much trickier: There are 7 different security notions currently in the literature! To make things worse, even the relationships among these security notions are unclear, and there is no \emph{gold standard} definition: No single definition is known to imply all others. For example, unlike classical encryption, it is not even known if the existing indistinguishability security implies unpredictability security (in fact, in this work we show that in general, it does not). Further, while some of the seven definitions have been achieved; for some natural ones we still do not even have provably secure classical-ciphertext constructions under standard assumptions, such as \emph{identical-challenge} security where the two freeloaders receive the same ciphertext. This directly models a single ciphertext distributed to many users---such as an encrypted television broadcast or a fixed encrypted file sent to all subscribers---on which two copies of the decryption key should not both remain useful. Introduced by Georgiou and Zhandry~\cite{EPRINT:GeoZha20}, achieving identical-challenge security notion has proved difficult and remained open despite many attempts~\cite{C:CLLZ21,C:AnaKalLiu23,EPRINT:CGLR23a,TCC:KitNis23,EPRINT:CheHerVu23b,ITCS:AnaKalYue25,TCC:KitYam25,AB24,ABH+26}.

Thus, in this work, we ask the following questions.
\begin{quote}
\emph{Can we define a \textnormal{gold-standard} security notion for single-decryptor encryption that captures the natural security intuition and uses cases, and implies existing definitions (including identical-challenge security)?}
\end{quote}

\begin{quote}
\emph{Can we achieve a construction with provable security for all known definitions including identical-challenge security? In fact, can we achieve a construction that provably satisfies the gold standard definition, and hence as a result, all known definitions?}
\end{quote}

\paragraph{Copy-protecting general functionalities.} Moving beyond encryption, various works have considered copy-protecting general classes of functionalities~\cite{EPRINT:ColMajPor20,C:ALLZZ21,C:CLLZ21,C:AKLLZ22,C:AnaKalLiu23,AB24,ABH+26,CG26}. The state of the art consists of the recent concurrent works of \c{C}akan and Goyal~\cite{CG26} and Ananth, Behera, Huang, Kitagawa, and Yamakawa~\cite{CG26,ABH+26}, which show how to copy-protect any \emph{puncturable functionality} in the plain model.

Unfortunately, these results also suffer from the same issues as the case of single-decryptor encryption. First, there is still not a single security definition for copy-protecting general functionalities. Further, these works only achieve what is called \emph{independent-challenge security}. However, a competing natural definition called \emph{identical-challenge security} has not been achieved and was left as an open question by both works~\cite{CG26,ABH+26}. This latter notion, identical-challenge security, has many practical security motivations, and aside from that, it is also tightly related to other fundamental questions, such as \emph{unclonable bits} or \emph{unclonable encryption}~\cite{BL20}, and \emph{copy-protecting point functions}~\cite{Aar09}. In contrast, independent-challenge security does not imply any of these notions. More broadly, prior works have struggled to tackle the question of correlation in challenge distributions and to get feasibility for correlated challenge distributions, which raises the following question.

\begin{quote}
\emph{Can we achieve copy protection with identical-challenge security, and more generally correlated challenge security for general classes of functionalities? }
\end{quote}

Point functions are perhaps the simplest nontrivial programs and have natural applications such as digital lockboxes and password-verification software. Their conceptual simplicity and concrete applications led Aaronson to single out their copy protection when introducing quantum copy protection in 2009~\cite{Aar09}; he proposed two candidate schemes in the structured ideal oracle model but left proving security under standard assumptions open. Although Coladangelo, Majenz, and Poremba later achieved security in the quantum random-oracle model~\cite{EPRINT:ColMajPor20}, the input challenge distribution they considered for copy protection was not the natural challenge distribution.\footnote{Here, by natural challenge distribution, we mean that for a point function $P_y$ with probability $1/2$, give the point $y$ to the freeloaders $\bob$ and $\charlie$, and with the rest of the probability, give a random point $x$ to both the freeloaders.} Hence, getting copy protection for point functions with the natural challenge distribution in the plain model still remains open.

Thus, in this work, we ask the following questions.

\begin{quote}
  \emph{Can we achieve copy protection for fundamental functionalities such as point functions and compute-and-compare functions?}
  \end{quote}

In \cite{AB24}, the authors provided a modular pathway to achieving copy protection for general functionalities, via an unclonable variant of obfuscation called \emph{Unclonable Puncturable Obfuscation (UPO)}. 
However, the UPO security notion itself is sensitive to challenge distributions, and~\cite{AB24} considered two variants of UPO security, one with identical challenge distribution and the other with independent challenge distributions, but each with its own set of applications to copy protection and unclonable cryptography in general. In particular, the identical-challenge variant (along with $\io$) implies identical-challenge secure copy protection for various functionalities, including identical-challenge secure SDE and copy protection of point functions\footnote{In fact, it implies a generalization of point functions called $k$-point functions.}, as well as unclonable indistinguishable encryption. Unfortunately,~\cite{AB24} could only construct two variants of UPO based on two conjectures. In a later work, the requirement of conjectures was removed for the case of independent challenge distributions~\cite{ABH+26}, but constructing the identical-challenge variant of UPO in the plain model still remains open, which brings us to the following natural question. 
\begin{quote}
\emph{Can we achieve UPO with identical-challenge security and, more generally, correlated-challenge security? Moreover, can we achieve stronger generalizations of UPO that can help us obtain copy protection for more functionalities beyond what is currently known in the plain model?}
\end{quote}

\subsection{Our Results}

We note that correlation was the missing axis in prior constructions, as they primarily address independent challenges, whereas natural applications often give the two recipients correlated or identical challenges. Our broad conceptual contribution is that we provide a framework to handle correlation in challenge distributions that enables us to answer the questions asked above affirmatively.

\paragraph{Gold-standard definition for SDE.}
 We introduce a new simple and natural security definition for SDE called \emph{correlated-challenge security} (SDE-CORR) that better captures the security intuition and use cases for SDE, and also show that it implies all existing SDE security definitions.

\begin{theorem}[Informal]
SDE-CORR implies all seven prior security notions for SDE,
including identical-challenge security.
\end{theorem}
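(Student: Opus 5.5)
The plan is to fix the formal SDE-CORR game once and treat the theorem as seven separate reductions, one per prior notion, all following the same template. In the SDE-CORR game, after the adversary splits the key into freeloaders $\bob$ and $\charlie$, the challenger samples correlated challenges: message pairs and bits $(b_1,b_2)$ from an admissible joint distribution, possibly with auxiliary information. Security asks that the probability that both freeloaders guess their bits is at most the trivial bound plus $\negl(\secparam)$.

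For each prior notion $X$, I will take an adversary $(\adve,\bob,\charlie)$ that breaks $X$ and build an SDE-CORR adversary. The splitter is reused unchanged. I then choose a correlated challenge distribution under which the SDE-CORR experiment reproduces exactly the challenge experiment of $X$. The last check is that this distribution satisfies the admissibility condition of SDE-CORR (conditional uniformity of each bit, and the allowed correlations) and that the two trivial bounds coincide.

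\textbf{Guessing-based indistinguishability notions.} These reductions are essentially syntactic. I take the challenge distributions in the following order.
\begin{enumerate}
\item \emph{Independent challenges:} a product distribution, with $b_1$ and $b_2$ independent and separate ciphertexts.
\item \emph{Identical challenges:} $b_1=b_2$, the same message pair, and a single ciphertext $\ct$ handed to both freeloaders.
\item \emph{Random-message variants:} the message pairs are sampled uniformly.
\end{enumerate}
The only care needed is that the threshold in each notion ($1/2$ for identical and $1/4$ for independent) is exactly the trivial value that SDE-CORR assigns to that correlation pattern.

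\textbf{Unpredictability (search) notions.} Here I will reduce through a distinguishing game. The challenge bit chooses between encrypting the unpredictable message $m$ and encrypting an independent uniform $m'$. Each freeloader runs its predictor and outputs $0$ exactly when the prediction equals the candidate message it holds, so a pair of predictors that succeed jointly gives correlated distinguishers that succeed jointly. In the identical setting both freeloaders share the same $m$. I need to check that the min-entropy condition makes false matches negligible, and that the resulting joint distribution is admissible: each bit must be conditionally uniform given the other party's view, and this is where correlated (not merely identical) challenges are needed.

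\textbf{Test-based notions (main obstacle).} The hardest step is the notions phrased via approximate threshold or projective implementation, where the freeloaders are $\gamma$-good decryptors. Following the CLLZ-style argument, I first apply the efficient approximate threshold implementation $\TI$ to both registers. With non-negligible probability both collapse to states that are simultaneously $(1/2+\gamma-\epsilon)$-good for the relevant challenge distribution. I then run both on a correlated challenge. The delicate point is to bound the joint success probability from below when each party is good only marginally. The first thing I would try is to use the projective-implementation structure: after the measurement, each post-measurement state is a mixture of eigenstates with success at least the threshold, and these measurements commute across the two registers. This should give a joint advantage of order $\gamma^2$ over the trivial bound, contradicting SDE-CORR.

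If this lower bound fails for arbitrary correlations, the fallback is to choose the correlated distribution so that each test's challenge distribution is exactly the conditional marginal of the SDE-CORR distribution. That is precisely what the correlated definition permits.
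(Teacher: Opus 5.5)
\textbf{The gap: strong CPA.} Your argument breaks at the test-based step for strong CPA. It also does not reach CPA$^{+}$, which your list omits: its winning condition $\widetilde b_\bob\oplus\widetilde b_\charlie=b_\bob\oplus b_\charlie$ is not a correlation pattern of SDE-CORR. With one ciphertext per party, marginal $\gamma$-goodness does not give joint success above $1/2$ under the both-correct condition, whatever admissible correlation you choose. Take two unentangled decryptors that, on every ciphertext, output the correct bit with probability $1/2+\gamma$ using independent coins. Both pass the $(\gamma/2)$-good CPA test with certainty. Yet in either mode and for every correlation of $(b_\bob,b_\charlie)$, they are jointly correct with probability $(1/2+\gamma)^2<1/2$ for inverse-polynomial $\gamma$. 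No non-communicating post-processing with shared randomness beats $1/2$ either. In operator form: with independent bits the joint operator is $P_\bob\otimes P_\charlie$, and with $b_\bob=b_\charlie$ it is $\frac12(P_{\bob,0}\otimes P_{\charlie,0}+P_{\bob,1}\otimes P_{\charlie,1})$, which the marginals $P_X=\frac12(P_{X,0}+P_{X,1})$ do not control. The $\Omega(\gamma^2)$ you expect is the CPA$^{+}$ advantage $2(P_\bob-I/2)\otimes(P_\charlie-I/2)$, not a both-correct probability. Commuting threshold measurements do not change this, and neither does matching the test distribution to conditional marginals, since the obstruction is not a distribution mismatch. Your threshold argument does work for strong search, because there the joint operator is a product of search operators and the trivial bound is negligible.

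\textbf{The missing idea: amplification over $N$ pairs.} Your description of SDE-CORR leaves out the batch of $N$ challenge pairs under one selector bit, and the paper's proof needs $N>1$ exactly here. Repeat the strong-CPA pair in all $N$ coordinates and use separate mode with the admissible common bit $b_\bob=b_\charlie=b$. Let $Z_X(\ct)$ be the difference of $X$'s output-$0$ and output-$1$ operators, $Z_{X,j}$ its average over encryptions of $\mes_{X,j}$, and $H_X:=\frac14(Z_{X,0}-Z_{X,1})$. Consider the empirical observable $\frac12\bigl(\frac1N\sum_iZ_X(\ct_{X,i})-\frac12(Z_{X,0}+Z_{X,1})\bigr)$, whose reference term each party implements with fresh public-key encryptions. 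By matrix concentration it is within $\gamma/100$ of $(-1)^bH_X$ in operator norm, except with negligible probability, once $N\approx(q_X+\secparam)/\gamma^2$ with $q_X$ the decryptor's qubit count. Before the split, $\alice$ runs gapped threshold measurements on $P_X$ and then on $I/2+H_X$. The second is needed because the strong-CPA test certifies the spectral support of $P_X$, while decoding needs that of $H_X$, and the two differ since $H_X-(P_X-I/2)\succeq0$ accounts for non-bit outputs. On the support where $H_X>\gamma/4$, a final threshold measurement of the batch test at $\frac12\pm\frac{\gamma}{16}$ returns $b$ with error about $1/100$ per party, so both are correct with probability near $1$. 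If any threshold measurement rejects, both output a shared uniform bit and score exactly $1/2$. This gives advantage $\Omega(\varepsilon\gamma^2)$, where $\varepsilon$ is the strong-CPA success probability.

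\textbf{The remaining notions.} These follow as in the paper: $N=1$ gives IND-CPA and IDEN-CPA, and the prior relations give CPA$^{+}$, strong search, and IND-/IDEN-SEARCH. Your direct search reductions are also fine, provided you use $b_\bob=b_\charlie$; independent bits give only about $1/4$. Note also that every notion here, IND-CPA included, has threshold $1/2$, not $1/4$, since $\alice$ can hand the whole key to one party.
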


As discussed below, the relationships of the prior definitions among themselves is also unclear. In this work, we establish an almost complete categorization of the prior definitions, which in particular also shows their inadequacy: For example, the usual {indistinguishability}/CPA SDE security definition\footnote{Here in indistinguishability/CPA SDE notion, we do not include non-operational CPA notions like Strong CPA anti-piracy~\cite{C:CLLZ21}.} does not even imply search security! Further, we show that independent-challenge CPA security does not imply identical-challenge CPA security, refuting a folklore conjecture and showing the inadequacy of previous techniques for proving identical-challenge CPA security. We refer the reader to \cref{fig:sde-hierarchy} for the complete picture.

We also achieve our gold-standard notion in the plain model, with same cryptographic assumptions as previous works.
\begin{theorem}[Informal]
Assuming indistinguishability obfuscation and one-way functions, there exists an SDE scheme satisfying SDE-CORR.
\end{theorem}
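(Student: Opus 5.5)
The plan is to take the Kitagawa--Yamakawa (TCC'25) SDE construction unchanged and show that it satisfies SDE-CORR, reducing to the decisional monogamy theorem for coset states mentioned in the abstract. I will assume that this theorem (\dcmoe) is available in its computational, $\iO$-assisted form: even when the adversaries hold obfuscated membership oracles for the cosets $A+s$ and $A^\perp+s'$, no split $(\bob,\charlie)$ of a coset state can have both parties decide a correlated challenge bit about the coset with advantage non-negligibly above the trivial bound. The ciphertexts of the construction are $\iO$-obfuscations of a program $C_{m}$. This program checks whether its input lies in the appropriate coset (primal or dual, depending on the basis choice) and, if so, outputs $m$.

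\textbf{Hybrids.} In the SDE-CORR game the challenger samples a pair of correlated challenge bits $(b_1,b_2)$ together with correlated ciphertexts $(\ct_1,\ct_2)$, one for each freeloader. This covers the identical-challenge case $\ct_1=\ct_2$.

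1. Replace each ciphertext program with a functionally equivalent one that hardwires the bit only through a coset-membership test. Indistinguishability of the two programs follows from $\iO$. Where a PRF-based basis choice is used, puncture the PRF at the challenge tag.
2. Use subspace-hiding obfuscation to replace the hardwired subspaces with larger random superspaces. After this step, the challenge programs leak only coset membership information. This is the standard hybrid chain of Coladangelo et al.\ and Kitagawa--Yamakawa, carried out once per ciphertext. The new feature is that both ciphertexts may share the same tag and basis choice, so the hybrids must be applied jointly rather than independently.
3. Conclude that in the final hybrid a successful pair $(\bob,\charlie)$ yields adversaries against \dcmoe. The reduction embeds the coset state as the decryption key, answers the freeloaders' challenges using its membership oracles, and outputs the freeloaders' guesses.

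\textbf{Main obstacle.} The hardest step is the correlation in step 3. Because the two ciphertexts can be identical, the reduction cannot use the classical trick of giving one freeloader the primal coset and the other the dual coset independently. Instead, success for both parties on the same challenge must be converted into simultaneous distinguishing in the decisional game. I would first try to show directly that the \dcmoe statement permits arbitrary correlated challenge bits, provided each bit is conditionally uniform. If that fails, I would use a simultaneous threshold/projective implementation argument to extract from each freeloader an efficient distinguisher for its bit, then argue that the two distinguishers succeed simultaneously with noticeable probability. The final step would be to check that the $\iO$ hybrids preserve the joint distribution of the two freeloaders' outputs, not just each marginal.
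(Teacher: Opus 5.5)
There is a genuine gap. The proposal never explains how a reduction to the decisional coset game can produce challenge ciphertexts whose selector is tied to the hidden bit that game asks for, and step 3 as written (``answer the freeloaders' challenges using its membership oracles'') cannot work.

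The target bit is $e=\langle u,t\rangle\oplus\langle v,s\rangle$, where $u\in A$ and $v\in A^\perp$ are revealed only after the split. The reduction knows neither $s,t$ nor $e$. After the split there is also no central party left, so each freeloader must build its own ciphertexts from $(u,v)$. The paper's proof supplies two ideas that you are missing.
\begin{enumerate}
\item Uniform marginals force $p_{00}=p_{11}$ and $p_{01}=p_{10}$. So the sampled pair can be written as $(g,g\oplus\Delta)$ with $g$ uniform and independent of $\Delta$. Setting $g:=e\oplus\zeta$ gives $b_X=e\oplus d_X$, with $d_{\bob}=\zeta$ and $d_{\charlie}=\zeta\oplus\Delta$ known locally. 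No correlated variant of the monogamy theorem is needed.
\item The Kitagawa--Yamakawa ciphertext has two branches: it outputs $\mu\oplus r$ on $A+s$ and $r$ on $A^\perp+t$. Resample the pad as $r=q\oplus\langle u,t\rangle\cdot\delta$ with $\delta=\mu_0\oplus\mu_1$. Then switch by $\iO$ to the equivalent program that outputs $q\oplus\mu_0\oplus(d_X\oplus\langle v,w\rangle)\cdot\delta$ on $w\in A+s$, and $q\oplus\langle u,w\rangle\cdot\delta$ on $w\in A^\perp+t$. On these cosets $\langle v,w\rangle=\langle v,s\rangle$ and $\langle u,w\rangle=\langle u,t\rangle$, so the program needs only $(u,v,q,d_X,\mu_0,\delta)$ and the public membership programs.
\end{enumerate}
With these, the splitter hands each side its $d_X$, the $q$'s, and the $\iO$ tapes. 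In identical mode both sides get the same ones, so the copied ciphertext is reproduced bit for bit. Each side obfuscates the program after receiving $(u,v)$ and outputs its guess XOR $d_X$.

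Relatedly, the scheme you describe is the Coladangelo et al.\ template, not Kitagawa--Yamakawa's. It is a single program that outputs $\mu$ on one primal or dual coset, selected by a (PRF-derived) basis choice. Such a program can make the selected message depend on at most one of $\langle v,s\rangle$ or $\langle u,t\rangle$. Either one alone is computable by both parties from a single coset vector that $\alice$ measures and copies, so $e$ cannot be embedded. With a short tag space the paper even shows such a scheme fails IDEN-SEARCH. Splitting the pad across both branches is what forces the XOR of a primal and a dual parity, which is exactly what decisional monogamy protects.

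Your fallback via simultaneous threshold implementations is the extraction route that the paper identifies as failing for identical challenges, so it would not close the gap. Two smaller points:
\begin{itemize}
\item The subspace-hiding hybrids belong inside the proof of the computational monogamy theorem, not in the SDE reduction.
\item You must also handle the $N$-pair batch with a single selector; the programming above does this coordinatewise with the same $d_X$.
\end{itemize}
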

In particular, this is the first classical-ciphertext SDE construction satisfying identical-challenge security from standard cryptographic assumptions, resolving the question introduced by Georgiou and Zhandry and left open throughout subsequent work~\cite{EPRINT:GeoZha20,C:CLLZ21,C:AnaKalLiu23,EPRINT:CGLR23a,TCC:KitNis23,EPRINT:CheHerVu23b,AB24,ITCS:AnaKalYue25,TCC:KitYam25}.

\paragraph{Copy-Protecting general functionalities.}
For copy-protecting general functionalities, we follow the modular approach of Ananth and Behera~\cite{AB24}, and achieve our feasibility results via unclonable puncturable obfuscation (UPO).  

We first introduce a strengthened definition for UPO called correlated-challenge unclonable puncturable obfuscation (UPO) or simply correlated UPO, that implies both variants of UPO security definitions introduced by~\cite{AB24}, i.e., identical-challenge security and independent-challenge security. Then, we show how to achieve it in the plain model.

\begin{theorem}[Informal]
Assuming indistinguishability obfuscation and hardness of LWE, there exists an unclonable puncturable obfuscation (UPO) scheme that satisfies correlated security for any sufficiently unpredictable challenge distribution.
\end{theorem}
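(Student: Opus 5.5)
We plan to build correlated UPO along the lines of the existing plain-model UPO constructions for independent challenges (e.g.\ \cite{ABH+26,CG26}). The obfuscated program is a quantum state made of coset states $\ket{A_{s,s'}}$ for random subspaces $A$ of dimension $n/2$. This state is bundled with $\io$-obfuscated membership-check programs for $A+s$ and $A^\perp+s'$, plus an $\io$-obfuscation of a wrapper circuit. On input $x$, the wrapper evaluates the keyed circuit $C_k(x)$ only after a check that consumes the coset state. The check is: measure in a basis determined by a hash of $x$, then decode a coset membership witness. We will fix the parameters so that the input length is at least $\lambda^c$; this makes the challenge points carry enough min-entropy to be used as the extraction seed below.

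The security proof would go through a sequence of hybrids. The goal is to reduce any successful pair of freeloaders $(\bob,\charlie)$ to an adversary against the decisional coset monogamy theorem (\dcmoe) announced in the abstract, which we would state and prove as a separate lemma.

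\textbf{Step 1.} Using $\io$ and puncturable PRFs, we puncture the key at the two challenge points $x_1,x_2$. We then replace the output on each $x_i$ by a value hard-wired as an encryption-like mask of the form $\mathsf{PRF}(\text{coset rep}) \oplus b_i$. The points may be identical or arbitrarily correlated, so the puncturing cannot rely on $x_1\neq x_2$. Instead we puncture at the set $\{x_1,x_2\}$, which is handled uniformly.

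\textbf{Step 2.} We apply a strong-extractor / leftover-hash argument to each $x_i$ separately. Because each point has $\lambda^c$ average conditional min-entropy given the auxiliary information, a seeded extractor applied to $x_i$ yields bits that are close to uniform, even conditioned on everything the adversary sees before splitting. We use these bits to select the subspace challenge, which is the step where the $\lambda^c$ bound on input length is needed.

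\textbf{Step 3.} Given Steps 1--2, each freeloader's success in guessing its puncturing bit $b_i$ translates into distinguishing a canonical coset representative from random. The bits $b_i$ are conditionally uniform but may be correlated, for instance $b_1=b_2$. Simultaneous success of both freeloaders therefore contradicts \dcmoe. That theorem says that no pair of split adversaries can both decide a correlated decisional predicate about $(A+s,A^\perp+s')$ with advantage noticeably beyond the trivial bound.

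The main obstacle is Step 3, together with proving \dcmoe itself. Search-to-decision monogamy reductions usually break when challenges are correlated: running Goldreich--Levin on one freeloader disturbs the other, and the identical-bit case offers no independence to exploit. To handle this, we would first prove \dcmoe via a projective/threshold-implementation argument (in the style of Zhandry's $\projimp$/$\TI$). From both freeloaders we jointly extract measurements that are simultaneously good, and we then apply quantum Goldreich--Levin to each register sequentially. This uses the fact that the threshold measurements are almost projective and so cause little disturbance, which lets us reduce to the search coset monogamy of \cite{C:CLLZ21}. Auxiliary information given after splitting would be handled by absorbing it into the $\auxsim$ simulator in the reduction. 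This relies on the conditional uniformity of the bits, which ensures that simulated auxiliary information is indistinguishable from the real one.
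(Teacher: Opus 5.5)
Your Step 3 is where the argument fails, and it is exactly the obstacle the paper is designed to avoid. You propose to prove \dcmoe in four moves: apply local threshold implementations, obtain a state on which $\bob$ and $\charlie$ are simultaneously good predictors, run quantum Goldreich--Levin on each register in turn, and reduce to the search monogamy of~\cite{C:CLLZ21}. That template needs the joint winning operator to factor as $P_\bob\otimes P_\charlie$, which holds only for independent challenges. In \dcmoe both parties receive the same $(u,v)$ and must output the same bit. The winning operator therefore averages $M^{k}_{\bob,m}\otimes M^{k}_{\charlie,m}$ over a shared key $k$ and message $m$, and is not a product of local tests. For entangled splits, a joint success of $1/2+\epsilon$ is not known to put non-negligible weight on $\mathbf 1[P_\bob\geq\frac12+\epsilon']\otimes\mathbf 1[P_\charlie\geq\frac12+\epsilon']$; the paper proves the analogous statement only for separable states (\Cref{lem:sde-separable-iden-implies-strong-break}). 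Extracting from $\bob$ can also destroy $\charlie$'s advantage, so ``almost projective'' thresholds do not rescue the argument.

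The missing idea is to avoid extraction entirely. Read \dcmoe as the unclonable bit encryption \cosetue. Then observe that, conditioned on the key $(u,v)$ and bit $m$, the ciphertext averaged over $A,s,t$ is exactly $\frac1N(I+(-1)^mX^uZ^v)$. This holds because, for every compatible $A$, the coset states of parity $m$ form an orthonormal basis of that eigenspace. Finally, apply the Ananth--Sahai Choi-state and operator-norm argument, using Hilbert--Schmidt orthogonality of distinct Paulis, to get the bound $\frac12+\frac12\sqrt{N\col(P)}$. The computational version then follows from the subspace-hiding hybrids of~\cite{C:CLLZ21} together with statistical simulation of the \emph{pre}-split auxiliary information. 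Post-split information is not simulated but absorbed, since the conditional ciphertext is independent of $A$ (\Cref{cor:dcmoe-post-aux}).

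Your Step 2 also does not go through as stated, and it is telling that your argument never uses LWE. Before the split, the adversary holds obfuscated programs that single out the challenge points, through hard-wired comparisons and a punctured key $K_{\{x_\bob,x_\charlie\}}$. Information-theoretically, then, $x_X$ has no min-entropy given the pre-split view, and $\Hmin(x_X\mid z_{\mathsf{aux}})\geq\lambda^c$ does not license a leftover-hash argument. The paper handles this in three steps. It replaces each test $x=x_X$ by $L_X(x)=\eta_X$ for an injective function. It restores the unpunctured PRF key. It then switches $L_X$ to lossy mode, which is the only place LWE enters, so the pre-split view leaks at most $1+2\ell_{\mathsf{img}}$ bits about the points. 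Only then does $h_X(x_X)$ statistically mask $\mathsf{msg}(u,v)$ (\Cref{lem:correlated-upo-masked-string-simulation}).

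Your Step 1 mask $\mathsf{PRF}(\text{coset rep})\oplus b_i$ has its own problems: it reintroduces the canonical-representative extraction the paper avoids, and it does not give a functionally equivalent program for $\io$. The paper instead writes the arbitrarily correlated bits as $(g,g\oplus\Delta)$ with $g=e\oplus\zeta$ and $e=\langle u,t\rangle+\langle v,s\rangle$. This is where conditional uniformity of the bits is used, not in auxiliary-information simulation. It then rewrites the hard-wired outputs so the primal branch uses $\langle v,a\rangle=\langle v,s\rangle$ and the dual branch uses $\langle u,a\rangle=\langle u,t\rangle$, both computed from the program's own input $a$.
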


This gives, in particular, the first provably secure identical-challenge generalized UPO in the plain model and, through the UPO framework, the first identical-challenge-secure copy protection for puncturable functionalities, resolving the question left open by \c{C}akan--Goyal and Ananth et al.~\cite{CG26,ABH+26}, as well as new feasibility results for copy protection for various evasive function classes like point functions and compute-and-compare functions.

Our definition of correlated UPO security not only allows correlation between challenge distributions but also allows for auxiliary information leakage to the adversaries, which can be critical for various applications. Indeed, the reduction for the application to copy protection of compute-and-compare functions crucially relies on the auxiliary information leakage.

\paragraph{Applications.}
As discussed before, identical-challenge UPO security is related to many fundamental questions. Thus, our construction yields the first plain-model copy protection for point functions with the natural challenge distribution, resolving a question posed by Aaronson when he introduced quantum copy protection in 2009~\cite{Aar09}. Our result holds for arbitrary high-min entropic distribution on point functions, and also generalizes to $k$-point functions, a generalization of point functions, considered in~\cite{AB24}.

Beyond point functions, we obtain the first plain-model copy protection for compute-and-compare programs with a natural challenge distribution just as in point-functions\footnote{This natural challenge distribution is an identical challenge distribution, as for the case of point functions. We also note that though related, the proof of feasibility of copy protection for compute-and-compare functions does not follow directly from the feasibility of point functions.}, a more expressive class whose quantum copy protection was introduced and previously achieved only in the quantum random-oracle model by Coladangelo, Majenz, and Poremba~\cite{EPRINT:ColMajPor20}. We note that for the application to copy protection of compute-and-compare functions, we rely on the strength of the new correlated UPO security, both in terms of the correlation in the challenge distribution and the auxiliary information leakage that the correlated UPO security definition allows for.

\begin{theorem}[Informal]
Assuming post-quantum indistinguishability obfuscation and quantum-hard LWE; point functions, $k$-point functions, and compute-and-compare programs admit plain-model copy protection with respect to natural identical challenge distributions, and arbitrary high min-entropic distribution on the respective function families. Moreover, for identical challenge distributions with uniform marginals for all three copy protection applications, post-quantum one-way functions suffice in place of LWE.
\end{theorem}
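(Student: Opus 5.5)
The plan is to derive all three applications from the correlated UPO theorem stated just above (iO plus LWE, with the security statement allowing correlated challenges and auxiliary information). The point-function case also follows the identical-challenge UPO $\Rightarrow$ copy-protection template of \cite{AB24}.

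\textbf{Construction.} To copy-protect $P_y$, I view the point function as a keyed circuit $C_y(x)=[x=y]$ and hand out the correlated UPO obfuscation of $C_y$. Correctness is immediate from UPO correctness.

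\textbf{Security reduction for point functions.} Assume a splitting adversary $(\adve,\bob,\charlie)$ wins the natural game: with probability $1/2$ both freeloaders get $y$, otherwise both get the same uniform $x$. I build a correlated UPO adversary as follows.
\begin{itemize}
\item The challenge points are identical, $x_1=x_2=y$, where $y$ is sampled from the high min-entropy function distribution. This meets the $\lambda^c$ min-entropy requirement on each point separately.
\item The punctured circuit is $C_y$ punctured at $y$, i.e.\ the all-zero function on every other input. The bit $b$ selects between the original circuit and one programmed so that $y$ maps to $0$.
\item The UPO challenge hands the freeloaders $y$. When $b=1$ the obfuscated program is functionally the zero function, so by iO (hybrid) it is indistinguishable from the program for a fresh uniform point.
\end{itemize}
The main work is matching the natural challenge distribution. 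I will route this through a hybrid in which the UPO bit $b$ decides between ``the challenge is the hidden point'' and ``the challenge is independent of the program''. The uniformity of $b$ conditioned on the auxiliary information is exactly what the correlated definition grants. For $k$-point functions the argument is the same, puncturing at the $k$ points and challenging at one of them.

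\textbf{Compute-and-compare.} The program is $CC_{f,y}(x)=[f(x)=y]$ with $y$ of high min-entropy given $f$. In the UPO game I pass $f$, or an obfuscation of it, as pre-split auxiliary information. The challenge point is an $x$ with $f(x)=y$, drawn from the natural distribution. To argue that $x$ keeps $\lambda^c$ average conditional min-entropy given the auxiliary information, I will use the min-entropy of $y$ via $\Hmin(x\mid f)\ge \Hmin(y\mid f)$.

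\textbf{Main obstacle.} I expect the compute-and-compare step to be the hardest. It requires simulating the freeloaders' challenges consistently while $f$ leaks in the auxiliary information, and exactly here the auxiliary-information leakage in the correlated definition is essential. My first attempt is to place the description of $f$ in the pre-split auxiliary information and $x$ as the post-split challenge.

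\textbf{One-way-function variant.} For the ``moreover'' part with uniform marginals, I would replace LWE by one-way functions. Under uniform marginals the needed decisional coset-monogamy instance reduces to the information-theoretic or iO-plus-OWF version, and the LWE-based extraction for nonuniform min-entropy is no longer needed.
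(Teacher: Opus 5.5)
Your point-function reduction does not fit the UPO experiment. There $\alice_0$ must commit to $(k,\mu_\bob,\mu_\charlie)$ seeing only $z_{\mathsf{aux}}$, and each challenge point must keep $\lambda^c$ min-entropy given $z_{\mathsf{aux}}$.

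You submit $C_y$ as the base circuit while $y$ is also the UPO challenge, so your reduction would need the hidden challenge. No such game can be secure anyway: an $\alice$ who knows the puncture point evaluates the obfuscation at $y$ before splitting and sends the result to both freeloaders.

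The orientation must be reversed, as in the paper. Submit the all-zero circuit $C_0$ with $\mu_\bob=\mu_\charlie\equiv 1$. The programmed branch is then $G_x\equiv P_x$ at the hidden $x\gets\mathcal D_\lambda$, and the other branch $\Obf(C_0)$ is independent of $x$.

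The ``challenge $=y$'' branch of the point-function game then matches the programmed branch, once UPO $\io$-security (via \Cref{thm:correlated-upo-composition}) replaces $\Obf(P_y)$ by $\Obf(G_y)$. The independent branch needs two steps you do not have:
\begin{enumerate}
\item Replace the target $P_y(x')$ by $0$, losing $\operatorname{cp}(\mathcal D_\lambda)\le 2^{-\Hmin(\mathcal D_\lambda)}$.
\item Switch $\Obf(G_y)$ to $\Obf(C_0)$. This cannot be justified ``by iO'', because a point function and the zero function are not functionally equivalent. It follows from UPO security itself, via an adversary that runs the whole distinguisher before the split and gives its guess to both freeloaders.
\end{enumerate}
The resulting experiment is exactly the $\ID_{\mathcal D}$-generalized game (\Cref{lem:ac-upo-implies-prior}), and \Cref{lem:point-function-trivial-success} gives the trivial-adversary comparison. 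The independent point must also be drawn from $\mathcal D_\lambda$, not uniformly.

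Your compute-and-compare plan has the same defect in a harder form. If the UPO challenge is the preimage $x$, no base circuit independent of $y$ becomes $\mathsf{CC}_{f,y}$ after programming the single point $x$, since $\mathsf{CC}_{f,y}$ equals $1$ on all of $f^{-1}(y)$. Putting $y$ into $z_{\mathsf{aux}}$ does not help, as it can destroy the entropy of $x$ (e.g.\ for injective $f$).

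The missing idea is to protect $(f,\Obf(P_y))$ and evaluate at $f(x)$, so the UPO challenge lies in the range of $f$. The sampler is $(z_{\mathsf{aux}},x_\bob,x_\charlie,b_\bob,b_\charlie,w_\bob,w_\charlie)=(f,z,z,b,b,x,x)$ with $x\gets\mathcal Q_{f,z}$. The preimage is thus \emph{post-split} auxiliary information, and the entropy condition is $\Hmin(z\mid f)=\Hmin(Y\mid F)$. A second sampler $(f,y,y,d,d,\bot,\bot)$ justifies the $\Obf(G_y)\to\Obf(C_0)$ switch while $f$ is public.

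Likewise, for $k$-point functions, ``puncturing at the $k$ points'' is not supported by the two-point UPO and is impossible without knowing the points. The paper instead uses $\ID_{\mathcal D^{\mathsf{pt}}}$-generalized UPO plus the preimage-samplability assumption of \cite{AB24}.

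Finally, LWE is not used for extraction. It supplies the injective/lossy functions that make the pre-split encoding of $(u,v)$ simulatable under arbitrary min-entropy. For identical uniform points it is replaced by key-robust non-committing encryption from $\io$ and one-way functions.
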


Beyond copy-protection for point functions, our feasibility result for correlated UPO extends the existing results of independent-challenge copy-protection for puncturable functionalities to correlated-challenge copy-protection. Moreover, in the identical-challenge setting, our results extend to stronger definitions of copy-protection including oracular security, introduced by~\cite{ABH+26}.

\begin{theorem}[Informal]
Assuming post-quantum indistinguishability obfuscation and quantum-hard LWE, puncturable functionalities admit unpredictability and pseudorandomness-style copy protection with respect to \emph{arbitrary correlated high min-entropic challenge distributions}. Moreover, for the case of \emph{identical high min-entropic challenge distributions}, the oracular variant of the respective copy protection notion (i.e., unpredictability or pseudorandomness-style) that gives adversaries oracle access to the protected function holds.
\end{theorem}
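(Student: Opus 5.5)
The plan is to derive this informal theorem from the correlated UPO construction of the previous theorem, using the Ananth--Behera template~\cite{AB24} that turns UPO into copy protection of puncturable functionalities.

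\textbf{Construction.} The copy-protected program for a keyed circuit $C_k$ is simply the UPO obfuscation $\Obf(C_k)$. The keyed function family must be puncturable; this is instantiated with a puncturable PRF, or more generally a puncturable functionality in the sense of~\cite{CG26,ABH+26}. Correctness then follows immediately from correctness of the UPO.

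\textbf{Non-oracular notions.} Suppose a pirate $(\alice,\bob,\charlie)$ succeeds on a correlated challenge $(x_1,x_2)$ drawn from a high min-entropic correlated distribution. The reduction to correlated UPO security proceeds as follows.
\begin{enumerate}
\item The reduction plays the correlated UPO game with the same correlated sampler for the challenge points. Each point separately has $\lambda^c$ average conditional min-entropy, which is exactly what correlated UPO allows.
\item The punctured keyed circuit is programmed to output an independent value at the punctured point. This value is a uniform string for pseudorandomness-style security, or an unrelated value for unpredictability-style security.
\item A UPO bit $b=0$ corresponds to a challenge answered by the real function, and $b=1$ to one answered by the punctured/programmed function. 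For unpredictability, each freeloader's correct output is checked against $C_k(x_i)$ to produce a guess for its bit; for pseudorandomness, the freeloader's distinguishing bit is forwarded.
\end{enumerate}
The one subtle point in this step is that the challenge values themselves, e.g. $y_i=C_k(x_i)$ or random strings, must be given to the freeloaders. These are supplied through the post-splitting auxiliary information that correlated UPO permits. They are conditionally uniform or consistent with the bits, so the advantage transfers with at most a factor-of-two loss (the standard guessing argument for the unpredictability case).

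\textbf{Oracular notions under identical challenges.} Here the freeloaders additionally receive an oracle for $C_k$. The obstacle is that the reduction does not know $k$ at the challenge point. Since the challenges are identical ($x_1=x_2=x$), a single puncture point suffices. The reduction answers oracle queries using a punctured key $k\{x\}$ sampled by the reduction itself, given as auxiliary information together with the key used to generate the circuit pair. A query on $x$ itself is either a win (in the unpredictability case) or occurs with negligible probability before the challenge, by min-entropy. After the challenge, such queries are disallowed by the definition.

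I expect this last step to be the main obstacle. One has to check that the auxiliary-information interface of correlated UPO is strong enough to hand out $k\{x\}$ without breaking the conditional-uniformity requirement. The approach I would try first is to show that puncturable PRF security makes $C_k(x)$ uniform given $k\{x\}$, and hence that the requirement is met.
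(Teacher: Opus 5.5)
Your non-oracular unpredictability reduction is essentially the paper's. It uses constant replacement circuits, each recipient outputs $0$ iff its guess equals $C_k(x_X)$, and puncturing security forces failure in the punctured world. There the recipients learn $k$ directly from the reduction's $\alice$, which chose it, so no auxiliary information is involved.

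\emph{Pseudorandomness-style.} This part has a genuine gap: the challenge values cannot be routed through the post-split auxiliary strings. The sampler runs before $\alice_0$ chooses $k$ and never sees it, so it cannot compute $C_k(x_X)$. If you instead move key generation into the sampler and hand $k$ to $\alice_0$ via $z_{\mathsf{aux}}$, then $(z_{\mathsf{aux}},x_X,w_X)$ essentially determines $b_X$: just test whether $w_X=C_k(x_X)$. That violates the conditional-uniformity requirement of \Cref{def:ac-upo-sampler}.

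The underlying mismatch is that the pseudorandomness game keeps the protected program fixed and varies the classical challenge value, while UPO varies the protected program. Your step~3 does not bridge this. The missing idea is the paper's swap:
\begin{enumerate}
\item Use the $\io$-security of the UPO (hence \Cref{thm:correlated-upo-composition}) to replace $\Obf(C)$ by an obfuscation of $C^{\setminus\mathcal S}$ reprogrammed with its true values.
\item Apply two-point pseudorandomness puncturing, in selective form on the set $T$ of points whose effective bit is $1$, to replace the programmed values on $T$ by uniform $U_T$.
\item Rename $U_T$ and the uniform challenge values $Y_T$.
\item Apply puncturing again to turn the challenge values back into $C(x_X)$.
\end{enumerate}
In the resulting hybrid each recipient is challenged with $C_k(x_X)$, which it computes itself. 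The program is correlated-punctured with constant uniform replacement circuits. You then land exactly in the correlated-UPO game with the original correlated bits and no auxiliary information.

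\emph{Oracular variants.} Here you misidentify the obstacle. The reduction chooses $k$ in the UPO game, so it can answer $\alice$'s $C$-oracle and the recipients' $C^{\setminus x}$-oracles itself (the recipients receive $C^{\setminus x}$, not $C_k$). No punctured key needs to pass through the auxiliary interface.

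The actual problem is the punctured world. To bound the recipients' success there via puncturing security, which only supplies $C^{\setminus\mathcal S}$, you must first switch $\alice$'s oracle from $C$ to $G_x$. That requires her total query weight $W$ on $x$ to be negligible. Your ``by min-entropy'' justification only holds while $\alice$'s view is independent of $x$. In the punctured world she holds $\Obf(G_x)$, which depends on $x$. Also, a pre-split query on $x$ is not a ``win'' for anyone.

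The paper handles this with a measured-query test. Measure a uniformly random one of $\alice$'s $q$ queries to get $x'$, and have both recipients output $1$ iff $x'=x$. In the unpunctured world this hits $x$ with probability at most $2^{-\Hmin(\mathcal D_\lambda)}$. $\ID_{\mathcal D}$-generalized UPO security (respectively, indistinguishability of the preceding hybrids in the pseudorandomness case) transfers this to the punctured world, giving $\mathbb E[W]\leq q\,2^{-\Hmin(\mathcal D_\lambda)}+\negl(\lambda)$. BBBV then makes the oracle switch negligible.

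Identical challenges matter because the recipients' oracles $C^{\setminus x}=G_x^{\setminus x}$ then coincide across worlds, so no cross-point oracle switches are needed.
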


\paragraph{A new decisional monogamy game for coset state.} We achieve our results through a new, simple-to-state \emph{decisional} monogamy-of-entanglement guarantee we prove for coset states.

\begin{theorem}[Informal; Decisional Coset Monogamy]
Let $A\leq\FF_2^{2n}$ be a uniformly random $n$-dimensional subspace,
let $s,t\in\FF_2^{2n}$ be uniform, and give an adversary one copy of
\begin{equation*}
 \ket{A_{s,t}}
 :=\frac{1}{\sqrt{|A|}}
   \sum_{a\in A}(-1)^{\langle a,t\rangle}\ket{a+s}.
\end{equation*}
Consider the game where an adversary splits this state between two noncommunicating adversaries, and a challenger samples uniform non-zero $u\in A$ and non-zero $v\in A^\perp$ sends it to both adversaries. The probability
that both users correctly output the bit
$e:=\langle u,t\rangle\oplus\langle v,s\rangle$ is at most
$1/2+2^{-\Omega(n)}$. 

Moreover, assuming post-quantum indistinguishability
obfuscation and subspace-hiding obfuscation, the corresponding computational version of the MOE guarantee holds even when the adversary receives the public obfuscated membership programs
\[
M_0=\iO(P_{A+s})
\qquad\text{and}\qquad
M_1=\iO(P_{A^\perp+t}),
\]
together with correlated simulatable auxiliary information, similar to the computation variant considered in~\cite{ABH+26}.
\end{theorem}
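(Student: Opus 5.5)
The plan is to prove the information-theoretic statement first and then derive the computational version by a hybrid argument.

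\textbf{Pauli reformulation.} The first step rewrites the challenge bit as a Pauli eigenvalue. Write $X^{w}$ and $Z^{z}$ for the Pauli operators on $2n$ qubits. For $u\in A$ and $v\in A^\perp$ one checks that
\[
X^{u}\ket{A_{s,t}}=(-1)^{\langle u,t\rangle}\ket{A_{s,t}},
\qquad
Z^{v}\ket{A_{s,t}}=(-1)^{\langle v,s\rangle}\ket{A_{s,t}}.
\]
Since $\langle u,v\rangle=0$, the operator $P_{u,v}:=X^{u}Z^{v}$ is a Hermitian Pauli lying in the stabilizer group of $\ket{A_{s,t}}$ (up to sign), with eigenvalue $(-1)^{e}$. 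So the game asks two non-communicating parties to both guess the outcome of measuring a random non-trivial stabilizer element $P_{u,v}$ of a random coset state.

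\textbf{Purification and the overlap bound.} I would purify the challenger. For fixed $A$, averaging over uniform $s,t$ means the challenger can prepare $2n$ EPR pairs, send one half to the adversary, and keep a reference register $\regalice$. The bit $e$ is then the outcome of measuring the transposed Pauli $P_{u,v}^{T}$ on $\regalice$. The winning probability becomes
\[
\EE_{A,u,v}\,\mathrm{tr}\Big[\rho\sum_{b}\Pi^{A,u,v}_{b}\otimes B^{u,v}_{b}\otimes C^{u,v}_{b}\Big],
\]
where $\Pi^{A,u,v}_{b}$ is a projector on $\regalice$ that also encodes the classical register $A$. I would then apply the Tomamichel--Fehr--Kaniewski--Wehner overlap lemma, exactly as in the search coset monogamy proofs. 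For any family of permutations $\{\pi_j\}$ of the challenge set, the value is at most $\frac1N\sum_j\max_k\|\Pi_k\Pi_{\pi_j(k)}\|$. Here $k=(A,u,v,b)$, and we insist $b\ne b'$ whenever $(A,u,v)=(A',u',v')$.

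\textbf{Main obstacle: the norm computations.} The hard step is choosing the permutations and computing these norms; I expect it to be the main obstacle. Pairs with the same $(A,u,v)$ but opposite $b$ give orthogonal projectors, so they contribute norm $0$. For pairs with different Paulis the norm is determined by the commutation relation:
\begin{itemize}[leftmargin=1.5em]
\item If the two Paulis anticommute, the norm is $1/\sqrt2$.
\item If they commute and are independent, the norm is $1$, but these pairs should be rare.
\end{itemize}
The difficulty is that a norm of $1/\sqrt2$ alone only gives the constant $\tfrac12+\tfrac1{2\sqrt2}$. To reach $\tfrac12+2^{-\Omega(n)}$ I would exploit that the hidden subspace $A$ is random, and bound the averaged operator
\[
\EE_{A}\,\EE_{u\in A\setminus\{0\},\,v\in A^\perp\setminus\{0\}}\big[\Pi^{A,u,v}_{b}\otimes\ket{A}\!\bra{A}\big]
\]
directly. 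The idea is to use the Fourier/Bernstein--Vazirani structure: summing the signed projectors $(-1)^{b}P_{u,v}$ over $u,v$ yields projectors onto cosets of $A$ and $A^\perp$. This would show that any predictor correlated on both sides, on a non-negligible fraction of $(u,v)$, would let both parties simultaneously recover $s+A$ and $t+A^\perp$. That contradicts the known search coset monogamy bound $2^{-\Omega(n)}$. The contradiction can be phrased either as a simultaneous quantum Goldreich--Levin extraction with shared challenges, or directly as a second-moment (Cauchy--Schwarz) bound on $\sum_{u,v}(-1)^{e}(B_{u,v}\otimes C_{u,v})$.

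\textbf{Computational version.} Finally, for the version where the adversary also gets $M_0,M_1$ and auxiliary information, I would follow the hybrid template of~\cite{ABH+26}. First replace $\iO(P_{A+s})$ and $\iO(P_{A^\perp+t})$ by subspace-hiding obfuscations of random superspaces $A\subseteq S_0$ and $A^\perp\subseteq S_1$ of dimension $n+\lambda^{\Theta(1)}$ (shifted appropriately), using iO and shO. Next, replace the auxiliary information by its simulated version. The remaining game reveals only $S_0,S_1$, and relative to them $A$ is still a uniformly random subspace of large dimension. Hence the information-theoretic bound above applies, with $n$ replaced by the residual dimension.
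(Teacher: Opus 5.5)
Your Pauli reformulation and EPR purification are correct. They are equivalent to the paper's key structural observation: conditioned on the key $(u,v)$ and the bit $e$, the state handed to Alice is $\frac1N\bigl(I+(-1)^eX^uZ^v\bigr)$ with $N=2^{2n}$, independent of $A$, since for each compatible $A$ the $2^{2n-1}$ coset states of parity $e$ form an orthonormal basis of that eigenspace. So $A$ can simply be averaged away, leaving a key uniform on the set $\mathcal K$ of nonzero orthogonal pairs.

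\textbf{The gap.} The step that must take you from a constant to $\frac12+2^{-\Omega(n)}$ is never carried out. You are right that the overlap lemma with maximal norms cannot do it, and for a stronger reason than you give: commuting pairs are not rare. For a fixed key, a constant fraction (about half) of the other keys $(u',v')$ satisfy $\langle u,v'\rangle+\langle v,u'\rangle=0$.

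Your replacement, reducing to search coset monogamy via Goldreich--Levin extraction with shared challenges, is exactly the step that is not available. Let $\widehat\Theta_k$ be the transposed Pauli on the reference $\mathsf A$, and $B_k,C_k$ the two parties' observables. Set $E_B:=\EE_k\widehat\Theta_k\otimes B_k\otimes I$, $E_C:=\EE_k\widehat\Theta_k\otimes I\otimes C_k$ and $E_{BC}:=\EE_k I\otimes B_k\otimes C_k$. The winning probability is then $\frac14\bigl(1+\operatorname{tr}[E_B\omega]+\operatorname{tr}[E_C\omega]+\operatorname{tr}[E_{BC}\omega]\bigr)$, where $\omega$ is the Choi state of Alice's channel. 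The $E_{BC}$ term is trivial on the reference and can be made equal to $1$ for free by having both parties output a shared random bit.

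So an advantage over $\frac12$ gives each party an individual predictor, but nothing forces the two extractions on entangled registers to succeed simultaneously. This is precisely the obstacle that independent-challenge simultaneous-extraction techniques cannot handle. Similarly, a second-moment bound on $\sum_{u,v}(-1)^eB_{u,v}\otimes C_{u,v}$ targets a term that does not occur, since the sign squares away on the $B\otimes C$ term. The Bernstein--Vazirani remark is also never turned into an inequality.

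\textbf{What is missing.} The paper imports an operator argument from Ananth--Sahai.
\begin{enumerate}
\item Distinct Paulis are Hilbert--Schmidt orthogonal. So for $U:=\sum_kp_k\widehat\Theta_k\otimes B_k$ one gets $\operatorname{tr}_{\mathsf A}[U^\dagger U]=N\sum_kp_k^2B_k^2\preceq N\col(P)\,I$, and likewise on Charlie's side.
\item The conditional-overlap lemma then gives $\|E_BE_C\|_\infty\leq N\col(P)$. Expanding $E_{BC}^\ell$ extends this to $\|E_BE_{BC}^\ell E_C\|_\infty\leq N\col(P)$ for every $\ell$.
\item The positive-part lemma uses positivity of $\frac14(I+aE_B+cE_C+acE_{BC})$ for $a,c\in\{\pm1\}$; this is what neutralizes the free $E_{BC}$ term. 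It converts the above into $\|G_+\|_\infty^2\leq N\col(P)$ for $G:=\frac12(E_B+E_C+E_{BC}-I)$.
\item Hence the success probability is at most $\frac12+\frac12\sqrt{N\col(P)}$. With $|\mathcal K|=(2^{2n}-1)(2^{2n-1}-1)$, this is $\frac12+O(2^{-n})$.
\end{enumerate}

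\textbf{Computational part.} Your outline matches the paper's hybrid template: a $3n$-dimensional superspace $U\supset A$ and an $n$-dimensional $L\subset A$ inside $\FF_2^{4n}$, shift rerandomization, simulated pre-split auxiliary information, then a change of basis. You would still need three steps:
\begin{itemize}
\item split the target bit into a publicly computable part plus a part living in the reduced space;
\item bound the bad event $u\in L$ or $v\in U^\perp$;
\item absorb the post-split auxiliary information into the recipients' local strategies; this information may even reveal $A$.
\end{itemize}
All of this rests on the information-theoretic bound, which your proposal does not yet establish.
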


We emphasize that, unlike the search-style coset monogamy theorem of Culf and Vidick~\cite{CulfVidick22}, our game is \emph{decisional}, i.e., both recipients $\bob$ and $\charlie$ output a \emph{single-bit answer}, rather than a full vector from an affine subspace. Secondly, our monogamy theorem has truly \emph{identical} challenges for both recipients, i.e., they receive the same challenge (as opposed to mildly correlated challenges) and output the \emph{same single-bit answer}. 

Furthermore, to the best of our knowledge, our monogamy of entanglement theorem is the first identical decisional monogamy of entanglement theorem with same-bit answers for an \emph{unlearnable} state family, which is crucial for our purposes. 
 Earlier decision-style monogamy of entanglement experiments with same bit answers (i.e., requires both recipients to output the same bit) including the celebrated work of \cite{EC:TFKW13} and the recent unclonable encryption result of \cite{AS26}, are based on \emph{learnable} state families such as BB84 or Pauli states. Thus, they cannot be used for copy protection, even in simpler forms such as \emph{point function copy protection}. Our result provides the missing
combination of unlearnability and decision-form monogamy, unifying the study of monogamy games, unclonable encryption and copy protection. Our decisional monogamy of entanglement theorem with same bit answers also significantly simplifies the copy-protection reductions: it avoids the rewinding and threshold-measurement techniques used for simultaneous extraction in previous works~\cite{STOC:Watrous06,TCC:Zhandry20,C:ALLZZ21} and removes the need to canonicalize recovered subspace vectors, thereby giving a simpler alternative route to achieving copy protection, that does not require any extraction techniques. Hence, we believe our new monogamy of entanglement result will have independent applications.

\subsection{Organization}\label{sec:organization}

\Cref{sec:notation} recalls notation and preliminaries.
\Cref{sec:sde-prior-defn,sec:sde-new-defn,sec:sde-hier-defn} define the SDE
notions and establish their hierarchy, after which
\cref{sec:upo-prior-defn,sec:upo-new-defn} define correlated-challenge UPO.
The next part proves the information-theoretic and computational decisional
coset-monogamy theorems.  Finally, \cref{sec:sde-construction} constructs a correlated-challenge SDE, and \cref{sec:upo-construction} constructs a correlated secure UPO, and \cref{sec:correlated-upo-copy-protection-applications} derives the copy protection applications.

%% file: 111-THIS-PROJECT/acknowledgement.tex
Alper \c{C}akan was supported by the following grants of Vipul Goyal: NSF award 1916939, DARPA SIEVE program, a gift from Ripple, a DoE NETL award, a JP Morgan Faculty Fellowship, a PNC center for financial services innovation award, and a Cylab seed funding award.

\subsection*{AI Acknowledgment}
Our information-theoretic MoE result and our SDE-CORR definition was proven with help of interaction with ChatGPT-5.6 Sol. Rest of the results were obtained directly by authors. ChatGPT-5.6 Sol was also used for drafting the initial versions of some text. The authors take full responsibility for the whole paper.

%% file: 112-CONFERENCE.tex
\input{111-THIS-PROJECT/abstract}
\input{intro}


\addrefs 

%% file: 112-EPRINT.tex
\input{111-THIS-PROJECT/abstract}
\newpage\tableofcontents

\newpage\input{intro}
\section*{Acknowledgments}

\input{111-THIS-PROJECT/acknowledgement}
\newpage\input{tech-overview}

\newpage\section{Preliminaries}
\input{000-include/boilerplate/notation}

\input{000-include/boilerplate/crypto-definitions}
\subsection{Quantum Information}
We recall some facts about quantum information that we use later.
\input{000-include/boilerplate/quantum-results/gentlemes}
\input{000-include/boilerplate/quantum-definitions/threshold-implementations}

\newpage\part{Definitional Work}
\input{defns/main}

\newpage\part{Quantum Information Tools}
\input{moe/it-moe}
\newpage\input{moe/comp-moe}

\newpage\part{Copy-Protection Constructions}
\input{sde/main}
\newpage\input{upo/main}
\newpage\input{upo/applications}

\newpage\addrefs 

%% file: tech-overview.tex
\section{Technical Overview}\label{sec:tech-overview}

In this section, we outline the key ideas and an overview of our proof techniques. We note that the constructions are largely the same as considered in previous works, namely, the construction of~\cite{TCC:KitYam25} for the single-decryptor encryption (SDE) construction, and the construction of~\cite{ABH+26,CG26} for the unclonable puncturable obfuscation (UPO) construction. We very briefly recall the definitions of SDE and UPO in the literature.
An SDE scheme~\cite{C:CLLZ21,TCC:KitYam25} is a public key encryption scheme with classical ciphers and public keys but equipped with a quantum decryption key that can decrypt any cipher generated using the public key, can be correctly decrypted using the quantum decryption key.
In the CPA security experiment, there is a triplet of adversaries $(\alice,\bob,\charlie)$ such that $\alice$ receives a quantum decryption key along with the public key from the challenger, and outputs two pairs of challenge messages, one for $\bob$ and one for $\charlie$, and then sends two (potentially entangled) quantum registers, one to $\bob$ and the other to $\charlie$ who cannot communicate among themselves. Then, the challenger sends each of $\bob$ and $\charlie$ the encryption of one of the two challenge messages chosen according to a randomly sampled challenge bit for each recipient, and then $\bob$ and $\charlie$ need to guess their respective challenge bits. In the identical challenge setting, $\alice$ only chooses a single pair of challenge messages, and the challenger uses the same uniformly random bit for both $\bob$ and $\charlie$, who receive the exact same cipher text, and needs to predict their common bit, whereas in the independent challenge setting, the challenge bits for $\bob$ and $\charlie$ are sampled uniformly at random independently. In this work, we also consider a generalization of the independent challenge setting, where we allow the challenge bits of $\bob$ and $\charlie$ to be arbitrarily correlated as long as the marginal distribution for each of the two challenge bits is uniform. 

An unclonable puncturable obfuscation (UPO)~\cite{AB24} is a tuple of QPT algorithms $\Obf,\Eval$ where $\Obf$ compiles a classical circuit $C$ into a quantum program $\rho_C$, and $\Eval$ takes as input a quantum program $\rho_C$ and an input $x$, and outputs $C(x)$. In the identical challenge UPO security experiment, again there are three adversaries $(\alice,\bob,\charlie)$. $\alice$ selects a circuit $C$ to the challenger, who then samples challenge bit $b$, and a single challenge input point $x$ sampled from a high min-entropy distribution, and if the bit $b=0$, the challenger runs $\Obf$ on $C$ else if $b=1$, it first punctures the circuit at the point $x$ using a puncturing algorithm equipped with the circuit class\footnote{By puncturing a circuit at a point $x$, we just mean outputting a circuit that outputs the true value $C(x')$ on all points $x'\neq x$ and outputs $\bot$ on input $x$. Clearly, any circuit class can be equipped with at least the canonical puncturing algorithm that given a circuit $C$ and point $x$ produces such a circuit by hardcoding $C$ and $x$ into a new circuit $C_x$.}, and sends the program to $\alice$ who then splits the state in to two quantum registers and sends one to $\bob$ and one to $\charlie$, after which the puncture point $x$ is revealed to both $\bob$ and $\charlie$ and then both $\bob$ and $\charlie$ need to output the bit $b$. In the independent challenge settings, the only difference is that there are two challenge points sampled independently for $\bob$ and $\charlie$, whereas in general correlated challenge setting, we allow the two challenge points to be arbitrarily correlated. 


\paragraph{Existing approaches and their limitations.} For a subspace $A\leq \FF_2^q$ of dimension $q/2$, and vector shifts $s,t\in \FF_2^q$, coset states refer to states of the form \[\ket{A_{s,t}}:=\frac{1}{\sqrt{|A|}}\sum_{a\in A}(-1)^{\langle a,t\rangle}\ket{a+s},\] and we refer to $H^{\otimes q}\ket{A_{s,t}}=\ket{A^\perp_{t,s}}:=\frac{1}{\sqrt{|A^\perp|}}\sum_{a\in A^\perp}(-1)^{\langle a,s\rangle}\ket{a+t},$ as the corresponding dual coset for the dual subspace $A^\perp$. Coset states are a natural candidate of state family to build SDE, and copy protection in general as well as UPO, because they have two desirable yet contrasting properties, namely, they can be verified publicly, but they still satisfy a strong form of unclonability, colloquially referred to as monogamy of entanglement (MOE)~\cite{CulfVidick22,C:CLLZ21}.
In particular, the monogamy of entanglement guarantees that, given a random coset state, $\alice$ cannot split the state into two registers and send them to $\bob$ and $\charlie$ then, given the subspace $A$, $\bob$ and $\charlie$ cannot recover the vector shifts $s$ and $t$, respectively.

\paragraph{How coset monogamy of entanglement was used in previous works.} The starting point in previous works is an information-theoretic search guarantee: after $\alice$ splits a coset state, $\bob$ and $\charlie$ cannot both output suitable elements of the primal coset $A+s$ and the dual coset $A^\perp+t$. Here ``search'' means that they must output complete coset elements, rather than one-bit answers. Coladangelo, Liu, Liu, and Zhandry~\cite{C:CLLZ21} proved a computational version in which the cosets are represented by obfuscated membership programs, namely, programs that check whether an input belongs to the primal or dual coset without revealing the cosets themselves. Their proof uses subspace-hiding obfuscation, which hides certain changes to the subspaces represented by these programs, together with information-theoretic properties of coset states.

The SDE and UPO security games ask $\bob$ and $\charlie$ to predict bits. Previous proofs therefore need a search-to-decision reduction connecting these one-bit predictions to the search guarantee above. After a sequence of hybrids using $\io$ and other cryptographic properties, these proofs canonicalize the accepting inputs: among the many coset elements accepted by an $\io$-obfuscated membership program, they specify one value to be extracted as the search output. The proof of~\cite{C:CLLZ21} performs this extraction using threshold implementations and compute-and-compare obfuscation. A threshold implementation is a quantum measurement that tests whether an algorithm's success probability is above a chosen threshold, while a compute-and-compare program accepts when a hidden computation equals a specified value. The proofs of~\cite{CG26,ABH+26} instead use the simultaneous Goldreich--Levin theorem of Ananth, Kaleoglu, and Yuen~\cite{ITCS:AnaKalYue25}, which turns sufficiently good inner-product predictors on both sides into the corresponding hidden values.

\paragraph{Why the previous techniques do not handle identical challenges.} Extraction is particularly difficult when the two registers may be entangled. Extracting a coset element from $\bob$'s register may disturb $\charlie$'s register and destroy the guarantee needed on his side. Previous works handle this issue in the independent-challenge setting by critically using the independence of the two challenges via threshold implementations of local tests. However, these techniques do not work when $\bob$ and $\charlie$ receive identical challenges.

One approach used in previous works to avoid search-to-decision reduction and hence extraction was to use MOE guarantees where $\bob$ and $\charlie$ only need to output a single-bit answer. However, the only MOE experiment with such a property, which was proved by Kitagawa and Yamakawa~\cite{TCC:KitYam25}, is not an \emph{identical-challenge} MOE guarantee, as the basis for it is the simultaneous Goldreich-Levin theorem~\cite{AKY25}, and hence gives the $\bob$ and $\charlie$ independently sampled random strings as part of their challenges. Thus, the existing coset monogamy of entanglement guarantees are either search-based~\cite {C:CLLZ21,CulfVidick22} or are not with respect to identical challenges, whereas to circumvent the extraction barrier, we need a decisional guarantee in which $\bob$ and $\charlie$ receive the same challenge, and both must output the same single bit.

\subsection{Our Solution: Identical-Challenge Decisional Coset Monogamy of Entanglement}

We prove the following truly identical-challenge decisional coset monogamy of entanglement theorem.
\begin{mdframed}
\noindent\textbf{Identical-Challenge Decisional Coset Monogamy of Entanglement}
\begin{enumerate}
\item The challenger samples a uniformly random $q/2$-dimensional subspace $A\leq\FF_2^{q}$, uniform shifts $s,t\in\FF_2^{q}$, and gives $\ket{A_{s,t}}$ to $\alice$.
\item $\alice$ splits the state into two registers and sends one to $\bob$ and the other to $\charlie$.
\item The challenger samples $u\getsr A\setminus\{0\}$ and $v\getsr A^\perp\setminus\{0\}$, and gives the same pair $(u,v)$ to both $\bob$ and $\charlie$.
\item $\bob$ and $\charlie$ output bits $e_\bob$ and $e_\charlie$, respectively. They win if $e_\bob=e_\charlie=e:=\langle u,t\rangle\oplus\langle v,s\rangle$.
\end{enumerate}
\end{mdframed}

We will show that for every adversary $(\alice,\bob,\charlie)$,
\begin{equation}\label{eq:overview-dcmoe-bound}
\Pr[e_\bob=e_\charlie=e]\leq\frac12+\frac12\sqrt{\frac{2^{q}}{(2^{q}-1)(2^{q-1}-1)}}=\frac12+2^{-\Omega(q)}.
\end{equation}
\begin{remark}
    A nice property of our decisional coset MOE guarantee is that the final answer given by $e:=\langle u,t\rangle\oplus\langle v,s\rangle$ remains unchanged even if we replace $s$ and $t$ with arbitrary coset elements in $A+s$ and $A^\perp+t$. This is especially useful while proving the security of SDE and UPO, since now we do not need the values of $s$ and $t$, or their canonical representatives, to calculate the required parity for the decisional coset MOE game. Hence, the reduction from SDE or UPO to the decisional coset MOE does not do any form of extraction of canonical representatives for the cosets.
\end{remark}

(For the formal game and theorem, see \cref{def:dcmoe,thm:dcmoe}.)
\subsubsection{Information-Theoretic Security Proof}\label{sec:overview-dcmoe-proof}

We first express the game above as a one-bit unclonable-encryption scheme based on coset states. For $a\in\FF_2^{q}$, define the Pauli operators $X^a$ and $Z^a$ by $X^a\ket{x}:=\ket{x\oplus a}$ and $Z^a\ket{x}:=(-1)^{\langle a,x\rangle}\ket{x}$.
\begin{mdframed}
\noindent\textbf{Coset-UE}
\begin{enumerate}
\item $\keygen$: Sample a uniformly random nonzero orthogonal pair $k=(u,v)$, meaning that $u,v\in\FF_2^{q}\setminus\{0\}$ and $\langle u,v\rangle=0$.
\item $\enc(k,e)$: Sample a uniformly random $q/2$-dimensional subspace $A$ satisfying $u\in A$ and $v\in A^\perp$. Sample $s,t\in\FF_2^{q}$ uniformly subject to $\langle u,t\rangle\oplus\langle v,s\rangle=e$, and output $\ket{A_{s,t}}$.
\item $\dec(k,\rho)$: Measure $X^uZ^v$ on $\rho$ and output the bit $e$ corresponding to the measured eigenvalue $(-1)^e$.
\item Security: Give an encryption of a uniform bit $e$ to $\alice$. After $\alice$ sends one register to each recipient, reveal the same key $(u,v)$ to both. They win only if both output $e$.
\end{enumerate}
\end{mdframed}
Every nonzero orthogonal pair $(u,v)$ belongs to the same number of subspaces $A$ satisfying $u\in A$ and $v\in A^\perp$, and for fixed $A,u,v$, the bit $\langle u,t\rangle\oplus\langle v,s\rangle$ is uniform over $s,t$. Consequently, sampling the coset state first and $(u,v)$ afterwards, as in the monogamy game, gives the same joint distribution as sampling $(u,v)$ first and then encrypting, as in Coset-UE. (For the formal equivalence, see \cref{lem:dcmoe-ue}.)

\paragraph{Conditional state of the ciphertext given the message and the key.} Let $\ket{A}:=\ket{A_{0,0}}$. Directly from the definition of a coset state and the Pauli commutation rule, we have
\begin{equation}\label{eq:overview-pauli-identities}
X^xZ^z\ket{A}=\ket{A_{x,z}}
\quad\text{and}\quad
Z^vX^s=(-1)^{\langle v,s\rangle}X^sZ^v.
\end{equation}
Since $\ket{A_{s,t}}=X^sZ^t\ket{A}$, these identities give
\begin{equation}\label{eq:overview-pauli-shifted-coset}
X^uZ^v\ket{A_{s,t}}=(-1)^{\langle v,s\rangle}\ket{A_{s+u,t+v}}.
\end{equation}
We now use $u\in A$ and $v\in A^\perp$. Expanding the state on the right and changing the summation variable from $a$ to $a+u$ gives
\begin{equation}\label{eq:overview-coset-change-variable}
\ket{A_{s+u,t+v}}
=\frac{1}{\sqrt{|A|}}\sum_{a\in A}(-1)^{\langle a,t+v\rangle}\ket{a+s+u}
=(-1)^{\langle u,t\rangle}\ket{A_{s,t}}.
\end{equation}
The condition $v\in A^\perp$ is used here to remove $\langle a,v\rangle$, while $u\in A$ ensures that $a\mapsto a+u$ is a change of variable within $A$. Combining the last two equations yields
\begin{equation}\label{eq:overview-coset-pauli-eigenstate}
X^uZ^v\ket{A_{s,t}}=(-1)^{\langle u,t\rangle\oplus\langle v,s\rangle}\ket{A_{s,t}}.
\end{equation}
Thus a Coset-UE encryption of $e$ is an eigenvector of $X^uZ^v$ with eigenvalue $(-1)^e$.

Next, fix the key $k=(u,v)$ and encrypted bit $e$. The equation above shows that every possible ciphertext lies in the $(-1)^e$-eigenspace of $X^uZ^v$. We next explain why the ciphertext is in fact uniformly distributed over this entire eigenspace.

Fix any subspace $A$ satisfying $u\in A$ and $v\in A^\perp$. Two shifts $s$ that differ by an element of $A$ describe the same shifted subspace $A+s$, while all other choices give disjoint sets of computational-basis vectors. For a fixed shifted subspace, choices of $t$ that differ by an element of $A^\perp$ give the same relative signs, while all other choices give orthogonal sign patterns over $A$. The resulting $2^{q}$ coset states therefore form an orthonormal basis of the full $2^{q}$-dimensional space. Since $X^uZ^v$ is a non-identity Pauli operator, each of its two eigenspaces has dimension $2^{q-1}$. Exactly half of this coset basis has eigenvalue $(-1)^e$, so the allowed ciphertexts form an orthonormal basis of that eigenspace. Moreover, the uniform choices of $s,t$ make the mixture uniform on this basis.

Writing $N:=2^{q}$ and letting $\Pi_{k,e}$ be the projector onto the $(-1)^e$-eigenspace, we obtain
\begin{equation}\label{eq:overview-coset-mixture}
\Pi_{k,e}=\frac{I+(-1)^eX^uZ^v}{2}
\quad\text{and}\quad
\rho_{k,e}=\frac{2}{N}\Pi_{k,e}=\frac1N\bigl(I+(-1)^eX^uZ^v\bigr).
\end{equation}
Crucially, the final expression does not depend on $A$. Thus, averaging over the randomly chosen subspaces satisfying $u\in A$ and $v\in A^\perp$ leaves the same state. This is our main technical observation: conditioned on the key and message bit, a Coset-UE ciphertext is simply the maximally mixed state on one eigenspace of $X^uZ^v$. (For the complete calculation, see \cref{claim:conditional-cipher-state}.)

\paragraph{Using the argument of Ananth and Sahai~\cite{AS26}.} The form of the conditional ciphertext in \eqref{eq:overview-coset-mixture} is exactly the structure used in their proof of unclonable encryption. We prove a generalization for any family of Hermitian operators $\{\Theta_k\}_k$ on an $N$-dimensional space such that each $\Theta_k$ has eigenvalues $+1$ and $-1$ with equally large eigenspaces, and $\Tr(\Theta_k\Theta_{k'})=0$ for distinct keys. If a key is sampled from a distribution $P$ and the encryption of $e$ is $N^{-1}(I+(-1)^e\Theta_k)$, then the probability that both recipients recover $e$ is at most
\begin{equation}\label{eq:overview-general-operator-bound}
\frac12+\frac12\sqrt{N\sum_k P(k)^2}.
\end{equation}
The quantity $\sum_kP(k)^2$ is the collision probability of the distribution on the key space, namely, the probability that two independently sampled keys are equal. The proof uses the same argument as~\cite{AS26}; it depends only on the fact that the ciphertext state is the normalized projection onto the eigenspace, that there are only ${1,-1}$ eigenvalues, and that both eigenspaces have equal dimensions, which follows from the abstractions mentioned above. It is easy to see that the unclonable-encryption scheme of~\cite{AS26} satisfies these conditions. Coset-UE also satisfies them because $\Tr((X^uZ^v)(X^{u'}Z^{v'}))=0$ for distinct Pauli operators. Finally, substituting the uniform distribution over nonzero orthogonal pairs gives the required bound in \eqref{eq:overview-dcmoe-bound}, proving our information-theoretic decisional coset monogamy of entanglement theorem. (For the general statement and proof, see \cref{lem:general-UE}.)

\subsubsection{Computational Decisional Hidden-Coset Monogamy}

The eigenspace identity above averages over the hidden subspace and shifts. In the security experiment for SDE and UPO, the adversary also receives membership programs and auxiliary strings correlated with these hidden values. We cannot replace the coset states by their conditional average without first handling the correlated information. Therefore, we follow the modular template of~\cite{C:CLLZ21,ABH+26} to first present an additional computational version of our decisional coset MOE guarantee that would help us prove the security of both SDE and UPO. 

The computational variant gives $\alice$ the coset state together with obfuscated membership programs for $A+s$ and $A^\perp+t$; these programs test membership while hiding $A,s,t$. It also allows $\alice$ to receive classical information before splitting the state, as long as this information can be generated without the later pair $(u,v)$, and it allows separate classical information to be given to $\bob$ and $\charlie$ after $(u,v)$ is sampled. Even with this information, no efficient adversary can make both parties recover $\langle u,t\rangle\oplus\langle v,s\rangle$ with probability noticeably greater than $1/2$. (For the formal computational game and theorem, see \cref{def:comp-dcmoe-aux,thm:comp-dcmoe-aux}.)

The proof mainly uses the previous computational coset monogamy of entanglement techniques of~\cite{ABH+26,C:CLLZ21}. In particular, subspace-hiding obfuscation lets the proof change the subspaces represented by the public membership programs without an efficient adversary noticing, and the algebraic properties of coset states used in those works remove the remaining dependence on the original hidden coset. The classical information given before the split is then replaced by a separately generated copy with the same distribution that does not use $(u,v)$. After these changes, the remaining experiment is the information-theoretic identical challenge-decisional game proved above. Thus, the proof of the computational MOE theorem follows by combining the earlier computational techniques with our new information-theoretic theorem.

\subsection{Correlated-Challenge Single-Decryptor Encryption}

Independent challenges do not capture the setting where a ciphertext is generated once and then copied to several users, or when the same or correlated messages are encrypted separately to get potentially different challenge distributions. We therefore want a single definition that includes independent encryption executions, literal copies of one ciphertext, and arbitrary correlations between the two challenge bits, subject to uniform marginals. We define such a notion of SDE security as follows.

\paragraph{Correlated-challenge SDE security.} The experiment begins as in the usual SDE game: the challenger gives $(\pk,\qsk)$ to $\alice$, who prepares one register for $\bob$ and one for $\charlie$. In addition, $\alice$ chooses \emph{ polynomially many} challenge-message pairs. The use of several pairs means that one challenge bit chooses the left or right message in every pair; it does not mean that a new challenge bit is sampled for each pair.

The experiment has two modes. In separate mode, $\alice$ may choose different message pairs for $\bob$ and $\charlie$. The challenger samples $(b_\bob,b_\charlie)$ from any efficiently samplable joint distribution such that the marginal distribution on each of the two individual bits is uniform, and independently encrypts one of the two messages from the polynomially many pairs of challenge messages according to the bit $b_X$, and gives the resulting ciphertexts to recipient $X$. Note that the two bits may be independent, or arbitrary correlated. In identical mode, $\alice$ chooses the same message pairs for both $\bob$ and $\charlie$, and the challenger samples one uniform bit $b$, generates one ciphertext for each message selected by $b$, and gives an exact copy of every resulting ciphertext to both recipients. The scheme is secure if no efficient $(\alice,\bob,\charlie)$ makes both recipients recover their respective challenge bits with probability noticeably greater than $1/2$. When there is only one challenge-message pair, the two modes reduce to the usual independent-CPA and identical-CPA experiments. (For the formal sampler and security game, see \cref{def:correlated-sde-sampler,def:sde-corr-game}.)

\paragraph{Construction.} We use the coset-based SDE construction of~\cite{TCC:KitYam25}. Key generation samples $A,s,t$, publishes obfuscated programs testing membership in $A+s$ and $A^\perp+t$, and uses $\ket{A_{s,t}}$ as the quantum decryption key. To encrypt a message $\mu$, the encryption algorithm chooses a uniformly random string $r$, called the pad, and obfuscates a program with the following behavior: on an input from $A+s$, it returns $\mu\oplus r$, while on an input from $A^\perp+t$, it returns $r$. An honest decryptor evaluates the first behavior using the coset state, applies Hadamard gates to obtain the dual coset state, evaluates the second behavior, and sum the two answers to recover $\mu$. Both evaluations are reversed, so the quantum decryption key is restored.

\paragraph{Security proof: reduction to computational decisional coset monogamy of entanglement.} 
We follow the proof template of~\cite{TCC:KitYam25}, but now in the setting of correlated ciphertext distribution. In order to handle arbitrary correlation among challenge bits $b_\bob$ and $b_\charlie$, we observe that it is possible to rewrite arbitrarily correlated challenge bits using one identical hidden bit, without changing the overall distribution. Indeed, two bits that are individually uniform can be sampled by first deciding whether they should be equal or opposite and then sampling their common random value. Let $\Delta$ record this first choice: $\Delta=0$ means that the bits are equal and $\Delta=1$ means that they are opposite. Their joint distribution can then be written as $(g,g\oplus\Delta)$ for a uniform bit $g$.

The computational coset monogamy game provides the hidden bit $e=\langle u,t\rangle\oplus\langle v,s\rangle$. We sample another uniform bit $\zeta$ and set
\begin{equation}\label{eq:overview-sde-correlated-bits}
b_\bob=e\oplus\zeta
\quad\text{and}\quad
b_\charlie=e\oplus\zeta\oplus\Delta.
\end{equation}
Because $e\oplus\zeta$ is uniform, these two bits have the desired joint distribution. Moreover, correct answers from $\bob$ and $\charlie$ immediately give two answers to the same monogamy challenge: Add $\bob$'s answer with $\zeta$, and add $\charlie$'s answer with $\zeta\oplus\Delta$. If both recipients answered correctly, both resulting bits equal $e$.

The remaining question is how to create the corresponding ciphertexts without knowing the shifts $s$ and $t$. We rewrite the ciphertext program so that its value on $A+s$ depends only on $\langle v,s\rangle$, while its value on $A^\perp+t$ depends only on $\langle u,t\rangle$. These two bits can be computed from the program's input: $\langle v,w\rangle=\langle v,s\rangle$ for $w\in A+s$, and $\langle u,w\rangle=\langle u,t\rangle$ for $w\in A^\perp+t$. The rewritten program agrees with an honestly generated ciphertext program on every input, so $\io$ allows us to replace one by the other. This is the same programming idea used in the proof of Kitagawa and Yamakawa~\cite{TCC:KitYam25}; the new point is that the equal-or-opposite representation above makes it work for every allowed correlation between the two challenge bits.

Thus, if both SDE recipients recover their challenge bits, the two summing corrections described above make both of them recover $e$, contradicting computational decisional coset monogamy of entanglement. In identical mode, we use the same message pair, pad, program, and obfuscation randomness for both recipients and copy the resulting classical ciphertext. The proof therefore preserves the fact that the two recipients receive exactly the same ciphertext. (For the formal construction and reduction, see \cref{sec:sde-construction,thm:sde-security}.)

\subsection{Copy Protection from Correlated UPO}

Given a UPO scheme \((\Obf,\Eval)\), we protect a circuit \(C\) by running \(\Obf(C)\) and evaluate it using \(\Eval\). The security proof changes the value of \(C\) at the challenge points. UPO ensures that two recipients cannot distinguish the protected original circuit from the protected modified circuit, while the classical assumptions on \(C\) ensure that its original values at those points remain hidden after the change. This follows the framework of~\cite{AB24,ABH+26}. The new point is that our proofs preserve the correlation between the two challenge points and modify the circuit only once when those points are equal.

\paragraph{Point functions and \(k\)-point functions.} A point function \(P_y\) outputs one only at the marked point \(y\) and zero everywhere else. If an independently sampled challenge point has high min-entropy, then it equals \(y\) only with negligible probability. In this independent case, we can therefore replace the required answer by zero and then use UPO security to replace the protected point function by the protected all-zero circuit. This extends the proof of~\cite{AB24} from uniformly random points to arbitrary high-min-entropy distributions. The same argument applies to functions that output one on \(k\) marked points, under the corresponding sampling and entropy requirements. (For the formal statements, see \cref{thm:idu-upo-to-point-functions,cor:k-point-function-copy-protection}.)

\paragraph{Compute-and-compare functions and auxiliary information.} A compute-and-compare function has the form \(\mathsf{CC}_{f,y}(x)=[f(x)=y]\). The natural construction publishes \(f\), protects the point function \(P_y\), and evaluates it on \(f(x)\), as in~\cite{CMP24}. Its security does not follow immediately from point-function copy protection: the public function \(f\) may be correlated with \(y\), and the recipients are challenged with a preimage \(x\), not directly with the point \(f(x)\). A reduction must therefore give \(f\) to \(\alice\) before she sends out the two registers, but give \(x\) to \(\bob\) and \(\charlie\) only afterwards.

Correlated UPO allows precisely this order of information. We give \(f\) to \(\alice\) first, use \(z=f(x)\) as the common UPO challenge point, and give the same preimage \(x\) to both recipients later. The required entropy condition says that \(z\) remains difficult to guess even after \(f\) is known. An attack on the resulting compute-and-compare copy protection would therefore give an attack on correlated UPO. This is why the auxiliary-information feature of our UPO definition is needed for this application. (For the formal result, see \cref{thm:compute-and-compare-copy-protection}.)

\paragraph{General function classes with correlated challenges.} The transformations of~\cite{AB24,ABH+26} also extend to general functions satisfying the required classical security properties. For unpredictability-style security, where both recipients must compute the correct function values, we change the circuit at the distinct challenge points. For pseudorandomness-style security, where each recipient must distinguish the real value from a random one, we change only the values selected to be random. Throughout both arguments, the two challenge points remain jointly distributed. If they are equal, the circuit is changed only once, following the same convention as in the correlated UPO experiment. (For the formal transformations, see \cref{thm:upo-to-non-oracular-unpredictability-copy-protection,thm:non-oracular-pr-copy-protection-from-correlated-upo}.)

\paragraph{Oracle access with an identical challenge.} Here \(\alice\) receives the protected program and oracle access to the original function \(C\). After she sends one register to each recipient, the challenger samples a high-min-entropy point \(x\) and gives the same \(x\) to both recipients. In addition, \(\bob\) receives oracle access to \(C^{\setminus x}\), which agrees with \(C\) everywhere except that it returns \(\bot\) at \(x\), and \(\charlie\) receives oracle access to this same function \(C^{\setminus x}\).

The only additional issue is \(\alice\)'s earlier access to \(C\). The point \(x\) is sampled only after she sends the two registers and is difficult to guess, so her polynomially many oracle queries are very unlikely to involve \(x\). A standard quantum-query argument therefore lets us replace her oracle by \(C^{\setminus x}\) with only a negligible change. Since \(\bob\) and \(\charlie\) already receive the same point and the same punctured oracle, the rest of the proof follows directly from correlated UPO. (For the formal results, see \cref{thm:upo-to-oracular-unpredictability-copy-protection,thm:oracular-pr-copy-protection-from-correlated-upo}.)


%% file: 000-include/boilerplate/notation.tex
\subsection{Notation and Conventions}\label{sec:notation}
 We refer the reader to \cite{Goldreich_2001,Goldreich_2004} for a preliminary on cryptography and to \cite{NC10} for a preliminary on quantum information and computation. We follow the notations and conventions commonly used in (quantum) cryptography, theoretical computer science and mathematics.  In this section, we non-exhaustively highlight some of them. All of our conventions are implicitly followed unless otherwise specified, e.g. our cryptographic assumptions are always post-quantum even though most of the time we will not write this explicitly, and if there is a rare occasion where we assume classical security only, we will specify this explicitly.

\paragraph{Functions} For a function $f: \nat^+ \to \nat^+$, we will say that it is an efficiently computable function if it is polynomially bounded, that is, if there is some $n_0 \in \nat^+$  and $c \in \R^+$ such that $f(n) \leq n^c$ for all $n > n_0$. Sometimes we will say polynomial to also mean polynomially bounded instead (e.g. we will say polynomial to refer to $O(\poly(\lambda))$, which includes, say, $\sqrt{n}$) - the precise meaning will be clear from context. $f(\lambda)$ is said to be negligible if for any $c \in \nat^+$, there is $n_0 \in \nat^+$ such that $f(\lambda) \leq \frac{1}{n^c}$ for all $n \geq n_0$. $f(\lambda)$ is said to superlogarithmic if $f = \omega(\log(\lambda))$ or equivalently, if $2^{-f(\lambda)}$ is negligible.

When we describe a function, such as a polynomial, if no input is mentioned, it will implicitly be the security parameter $\lambda$.

\paragraph{Classical and Quantum Information and Oracles} Variables written in lower-case,  such as $pk,sk$ or $\pk, \sk$, are classical objects. Variables written in the font $\qsk$ are quantum variables. We write $\regi$ to mean a quantum register, which keeps a quantum state that will evolve when the register is acted on, and it can be entangled with other registers. 

We write $\regi \samp \rho$ to mean that the register $\regi$ is initialized with a sample from the quantum distribution (i.e. mixed state) $\rho$. For simplicity, we will usually write non-normalized versions of the quantum states, however, they are always implicitly normalized.

For mixed states $\rho$ and $\sigma$, we write $\trd{\rho}{\sigma}:=\frac{1}{2}\norm{\rho-\sigma}_1$ for their trace distance.

We write $\ora$ to denote a classical oracle, and write $\adve^{\ora}$ to denote an algorithm that has quantum query-access to $\ora$. Similarly, for unitary oracles we write $U$ and $\adve^U$. All adversaries always have quantum access to all of the oracles.

\paragraph{Distributions} When $S$ is a set, we write $x \samp S$ to mean that $x$ is sampled uniformly at random from $S$. When $\mathcal{D}$ is a distribution, we write $x \samp \mathcal{D}$ to mean that $x$ is sampled from $\mathcal{D}$. Finally, we write $x \samp \mathcal{B}(\regi)$ or $x \samp \mathcal{B}(a)$ to mean that $x$ is a sample as output by the (quantum or classical randomized) algorithm $\mathcal{B}$ run on the quantum input register $\regi$ or classical input $a$. We use similar notation for quantum registers, e.g., $\regi \samp \mathcal{B}(a)$. 

\paragraph{Adversaries and Constructions/Algorithms} We write \emph{QPT} to mean \emph{quantum polynomial time} and \emph{PPT} to mean \emph{probabilistic (classical) polynomial time}. We say query-bounded to mean that the algorithm (e.g. construction, adversary) makes polynomially many queries to the oracle, but it can possibly take unbounded time outside the queries. All of these are polynomial in the security parameter $\lambda$. Note that for QPT and PPT, this also means that the inputs to these algorithms are also of polynomial size in $\lambda$. 

We say \emph{efficient} to mean QPT or PPT (will be clear from context) in the plain model, or query-bounded in the oracle model. For constructions, efficient means QPT or PPT, even in the oracle model (though the oracles themselves might not be efficiently sampleable or implementable).

In the plain model, adversaries will be implicitly QPT. In the oracle security model, adversaries will be implicitly \emph{query-bounded}. An admissible adversary means an adversary that conforms to the (implicit) requirements of the security game (e.g. polynomial time or query bounded, has the correct interaction model and has correct input-output format and so on). All of our cryptographical assumptions are post-quantum, e.g., \emph{one-way functions} means \emph{post-quantum secure one-way functions}. Adversaries are stateful uniform QPT.  

When we say that a primitive is subexponentially secure, we mean that any  polynomial time adversary has subexponentially small advantage. Sometimes we require security against subexponential time adversaries, in which case we will specify this explicitly.

All algorithms of a cryptographic scheme are uniform (stateless) PPT or QPT (will be clear from context).

\paragraph{Security Games} A security game is a binary random variable which is the outcome of an experiment between a \emph{challenger} and an adversary. For a security game, we say that the adversary has won if the experiment output is $1$, and otherwise we say that the adversary has lost. When we say that an adversary has negligible advantage, we mean that the probability of the adversary winning (or distinguishing, depending on context) is upper bounded by a negligible function of $\lambda$.

\paragraph{Domains} We write $\messpa$ to denote the message space for various cryptographic primitives, and it will be equal to $\zo^{p(\lambda)}$ for some polynomial $p(\cdot)$ (will be clear from context).

%% file: 000-include/boilerplate/crypto-definitions.tex
\newcommand{\prelimlev}{\subsection}

\prelimlev{Pseudorandom Functions}
In this section, we recall pseudorandom functions (PRF) and puncturable PRFs.
\begin{definition}[Pseudorandom Functions]\label{predef:prf}
    A pseudorandom function (PRF) scheme $\prf$ is a family of functions $\{F: \zo^{c(
\lambda)} \times \zo^{m(\lambda)} \to \zo^{n(\lambda)}\}_{\lambda \in \nat^+}$ along with the following efficient algorithms.
    \begin{itemize}
        \item $\prf.\gen(1^\lambda):$ Takes in the unary representation of the security parameter and outputs a key $K$ in $\zo^{c(\lambda)}$.
        \item $\prf.\ceval(K, x):$ Takes in a key $K \in \zo^{c(\lambda)}$ and an input $x \in \zo^{m(\lambda)}$, outputs an evaluation of the PRF in $\zo^{n(\lambda)}$.
    \end{itemize} 
    We require the following.
    \paragraph{Correctness.}
    \begin{equation*}
        \Pr[\forall x~\prf.\ceval(K, x) = F(K, x): \begin{array}{c}
              K \samp \prf.\gen(1^\lambda)
        \end{array}] = 1.
    \end{equation*}
    \paragraph{Security} We require that for any QPT adversary $\adve$,
    \begin{equation*}
        \left|\Pr_{H \samp \mathrm{H}}[\adve^{H}(1^\lambda) = 1] - \Pr_{K \samp \prf.\gen(1^\lambda)}[\adve^{F(K,\cdot)}(1^\lambda) = 1] \right| \leq \negl(\lambda).
    \end{equation*}
    where $\mathrm{H}$ is the uniform distribution over the functions $\zo^{m(\lambda)} \to \zo^{n(\lambda)}$.
\end{definition}
 We will usually write $F_K(x)$ or $F(K, x)$ in algorithms and this will implicitly mean $\prf.\ceval(K, x)$.

\begin{definition}[Puncturable PRF]
\label{predef:puncprf}
A puncturable PRF scheme $\prf$ is a PRF scheme together with the
following efficient algorithm.
\begin{itemize}
    \item
    $K_{\{\mathcal S\}}\gets
    \prf.\mathsf{Puncture}(K,\mathcal S)$ takes a key
    $K\in\zo^{c(\lambda)}$ and a polynomial-size set
    $\mathcal S\subseteq\zo^{m(\lambda)}$, and outputs a punctured key
    $K_{\{\mathcal S\}}$.
\end{itemize}

We require the following properties.

\paragraph{Punctured correctness.}
For every polynomial-size set
$\mathcal S\subseteq\zo^{m(\lambda)}$ and every
$x\in\zo^{m(\lambda)}\setminus\mathcal S$,
\[
\Pr\!\left[
F_{K_{\{\mathcal S\}}}(x)=F_K(x):
K\gets\prf.\gen(1^\lambda),\
K_{\{\mathcal S\}}\gets
\prf.\mathsf{Puncture}(K,\mathcal S)
\right]=1.
\]

\paragraph{Pseudorandom at punctured point.}
For every polynomial-size set
$\mathcal S\subseteq\zo^{m(\lambda)}$ and every QPT distinguisher
$\adve$,
\[
\begin{aligned}
\Bigl|
&\Pr\!\left[
\adve\!\left(
F_{K_{\{\mathcal S\}}},
\{F_K(x)\}_{x\in\mathcal S}
\right)=1
\right]\\
&\qquad -
\Pr\!\left[
\adve\!\left(
F_{K_{\{\mathcal S\}}},
(\U_{n(\lambda)})^{|\mathcal S|}
\right)=1
\right]
\Bigr|
\leq\negl(\lambda),
\end{aligned}
\]
where in both probabilities
\[
K\gets\prf.\gen(1^\lambda),
\qquad
K_{\{\mathcal S\}}\gets
\prf.\mathsf{Puncture}(K,\mathcal S),
\]
and $\U_{n(\lambda)}$ is the uniform distribution over
$\zo^{n(\lambda)}$.

If
$\mathcal S=\{x_1^\star,\ldots,x_q^\star\}$, we write
$K_{\{x_1^\star,\ldots,x_q^\star\}}$ for
$K_{\{\mathcal S\}}$. For brevity, we sometimes write
$K_{\mathcal S}$ as simply $K_{\{\mathcal S\}}$.
\end{definition}

It is easy to see that any puncturable PRF scheme that satisfies puncturing security also satisfies usual PRF security.

\begin{theorem}[\cite{STOC:SahWat14,FOCS:Zhandry12}]\label{prethm:puncprfexists}
Let $n(\cdot), m(\cdot)$ be polynomially bounded.
\begin{itemize}
    \item If (post-quantum) one-way functions exist, then there exists a (post-quantum) puncturable PRF with input space $\zo^{m(\lambda)}$ and output space $\zo^{n(\lambda)}$.

\item If subexponentially-secure (post-quantum) one-way functions exist, then for any $c > 0$, there exists a (post-quantum) $2^{-\lambda^c}$-secure\footnote{While the original results are for negligible security against polynomial time adversaries, it is easy to see that they carry over to subexponential security. Further, by scaling the security parameter by a polynomial and simple input/output conversions, subexponentially secure (for any exponent $c'$) one-way functions is sufficient to construct for any $c$ a puncturable PRF that is $2^{-\lambda^c}$-secure.} puncturable PRF with input space $\zo^{m(\lambda)}$ and output space $\zo^{n(\lambda)}$.
\end{itemize}
\end{theorem}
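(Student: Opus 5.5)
The plan is to instantiate the puncturable PRF with the Goldreich--Goldwasser--Micali (GGM) tree construction and to puncture it in the manner of Sahai--Waters and its contemporaries. First, from a post-quantum one-way function we obtain a post-quantum length-doubling pseudorandom generator $G:\zo^{\lambda}\to\zo^{2\lambda}$. This uses the HILL construction, whose security reduction is a black-box classical reduction that makes no rewinding-style use of the adversary, so it carries over verbatim to quantum adversaries. Write $G(k)=G_0(k)\|G_1(k)$. For an input $x=x_1\cdots x_m$, define $F_K(x)$ as a length-extended version of $G_{x_m}(\cdots G_{x_1}(K))$, where the leaf seed is stretched to $n(\lambda)$ bits by one more PRG application. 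Correctness is immediate. Plain PRF security against quantum \emph{superposition} queries is Zhandry's theorem~\cite{FOCS:Zhandry12}; as remarked after the definition, it also follows from puncturing security for the notion we actually need.

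For $\mathsf{Puncture}(K,\mathcal S)$, we consider the union of the root-to-leaf paths of the points of $\mathcal S$ in the depth-$m$ binary tree. The punctured key $K_{\{\mathcal S\}}$ consists of the seeds of every node that hangs off this union: nodes whose parent lies on some path but which do not themselves lie on any path. There are at most $m|\mathcal S|$ such nodes, so the key has polynomial size. Evaluation at $x\notin\mathcal S$ finds the unique such node that is a prefix of $x$ and continues the GGM computation from it. Because that node's seed equals the honest seed, punctured correctness holds with probability $1$.

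For pseudorandomness at the punctured points, I will run a level-by-level hybrid argument. The game is non-adaptive and has no oracle: the adversary just receives $(K_{\{\mathcal S\}},\{F_K(x)\}_{x\in\mathcal S})$. Hence the post-quantum setting only requires that each PRG step be secure against QPT distinguishers. In hybrid $i$, all seeds of nodes at depth $\le i$ in the union of paths are replaced by uniform strings, and all descendants are computed honestly from them. Going from hybrid $i$ to $i+1$ replaces the at most $|\mathcal S|$ PRG evaluations at the path nodes of depth $i$ by uniform outputs. This step is handled by a sub-hybrid over those nodes, each time embedding a single PRG challenge; the siblings off the path are likewise uniform, since they are halves of that replaced output. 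In the last hybrid the leaf values at $\mathcal S$ are outputs of independent uniform seeds, and they are independent of the off-path seeds given out. Using one more PRG step to stretch these leaves, they are indistinguishable from $(\U_{n(\lambda)})^{|\mathcal S|}$. The total loss is $(m+1)|\mathcal S|\cdot\epsilon_{G}$.

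For the subexponential bullet, I will rely on the same reduction: a $2^{-\lambda^{c'}}$-secure OWF gives a PRG with subexponential security, after the polynomial loss in HILL. The polynomial factor $(m+1)|\mathcal S|$ is absorbed by using a seed length $\lambda^{d}$ for a suitable constant $d$, chosen so that the resulting advantage is at most $2^{-\lambda^{c}}$. Inputs and outputs are adjusted by padding and truncation. The main point needing care is the hybrid bookkeeping when paths of different points in $\mathcal S$ share prefixes: one must replace each shared node exactly once and keep track of which siblings lie off the paths. This is handled by indexing the hybrids by the set of path nodes, in BFS order, rather than by the individual points.
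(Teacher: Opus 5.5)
Your GGM construction, the off-path-sibling puncturing, and the level-by-level PRG hybrid are the standard argument behind the results the paper cites. The paper itself gives no proof beyond the citation and its footnote. Together with Zhandry's theorem for superposition-query PRF security, this handles the first bullet up to one missing step.

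\textbf{Missing step in the hybrid chain.} Your chain ends at a distribution in which the punctured key consists of \emph{independent uniform} off-path seeds. The ideal distribution in the definition, however, is $(K_{\{\mathcal S\}},(\U_{n(\lambda)})^{|\mathcal S|})$, where $K_{\{\mathcal S\}}\gets\mathsf{Puncture}(K,\mathcal S)$ is the \emph{honest} punctured key. Those honest off-path seeds are only computationally, not statistically, close to uniform. You therefore need to run the same level-by-level hybrids a second time on the ideal side. There the uniform challenge values are independent of $K$ and are never touched. This connects the honest punctured key to uniform off-path seeds and matches your final hybrid. The total loss becomes roughly $(2m+1)|\mathcal S|\epsilon_G$ rather than $(m+1)|\mathcal S|\epsilon_G$.

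\textbf{Gap in the subexponential bullet.} The problem is the step ``a $2^{-\lambda^{c'}}$-secure OWF gives a PRG with subexponential security, after the polynomial loss in HILL.'' The HILL/Goldreich--Levin reduction is polynomial only for inverse-polynomial advantage. From a distinguisher with advantage $\epsilon$ it builds an inverter running in time $\poly(\lambda,1/\epsilon)$. For $\epsilon=2^{-\lambda^{c}}$, this inverter runs in subexponential time. It therefore contradicts nothing if the OWF is only assumed to give subexponentially small advantage to \emph{polynomial-time} adversaries, which is exactly how the paper's notation section defines ``subexponentially secure.''

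To close the gap, assume the OWF is hard against $2^{\lambda^{c'}}$-time quantum adversaries with success at most $2^{-\lambda^{c'}}$, which is the usual meaning in the literature. Alternatively, take a subexponentially secure PRG directly as the hypothesis. From there, the GGM reduction runs the distinguisher once and loses only the polynomial factor above. Your seed-length rescaling and padding/truncation then go through. The paper's footnote (``it is easy to see'') glosses over this same point.
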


\prelimlev{Digital Signature Schemes}
In this section we recall the definition of signatures schemes.

\begin{definition}\label{predef:digsig}
A digital signature scheme with message space $\messpa$ consists of the following efficient algorithms that satisfy the correctness and security guarantees below.
\begin{itemize}
    \item $\setup(1^\lambda):$ Takes the unary representation of the security parameter and outputs a signing key $sk$ and a verification key $vk$.
    \item $\sign(sk, m):$ Takes a signing key $sk$ and a message $m \in \messpa$, returns a signature for $m$.
    \item $\ver(vk, m, sig):$ Takes the public verification key $vk$, a message $m \in \messpa$ and an alleged signature $sig$ for $m$, outputs $1$ if $s$ is a valid signature for $m$.
\end{itemize}
\paragraph{Correctness}
We require the following for all messages $m \in \messpa$.
\begin{equation*}
    \Pr[\ver(vk, m, s) = 1 : \begin{array}{c}
         sk, vk \samp \mathsf{Setup}(1^\lambda) \\
         s \samp \mathsf{Sign}(sk, m)
    \end{array}] = 1.
\end{equation*}
\paragraph{Adaptive existential-unforgeability security under chosen message attack (EUF-CMA)}
We require that any QPT adversary $\adve$ wins the following game with negligible probability.
\begin{enumerate}
    \item The challenger samples the keys $sk, vk \samp \mathsf{Setup}(1)$.
    \item The adversary $\adve$ receives $vk$, and interacts with the signing oracle by sending classical messages $m \in \messpa$ and receiving the corresponding signatures that is computed by the challenger as $sig \samp \sign(sk, m)$.
    \item The adversary $\adve$ outputs a message $m^* \in \messpa$ that it has not queried the oracle with and a forged signature $sig^*$.
    \item The challenger outputs $1$ if and only if $\ver(vk, m^*, sig^*) = 1$.
\end{enumerate}
If the adversary $\adve$ is required to output the message $m^*$ before receiving $vk$, we call it \emph{selective EUF-CMA} security.
\end{definition}

\prelimlev{Indistinguishability Obfuscation}
In this section, we recall indistinguishability obfuscation (iO).
\begin{definition}[Indistinguishability Obfuscation]\label{predef:io}
    An indistinguishability obfuscation (iO) scheme for a class of circuits $\mathcal{C} = \{\mathcal{C}_\lambda\}_\lambda$ is an efficient algorithm $\io$ that satisfies the following.
    \paragraph{Correctness.} For all $\lambda \in \nat^+, C \in \mathcal{C}_\lambda$ and all inputs $x$ to $C$,
    $$\Pr[\Tilde{C}(x) = C(x): \Tilde{C} \samp \io(1^\lambda, C)] = 1.$$

    \paragraph{Security.} Let $\mathcal{B}$ be any QPT algorithm that outputs two circuits $C_0, C_1 \in \mathcal{C}$ of the same size, along with quantum auxiliary information $\regis{aux}$, such that $\Pr[\forall x ~ C_0(x)=C_1(x) : (C_0, C_1, \regis{aux}) \samp \mathcal{B}(1^\lambda)] \geq 1 - \negl(\lambda)$. Then, for any QPT adversary $\mathcal{A}$,
    \begin{align*}
      \bigg|&\Pr[\adve(\io(1^\lambda, C_0), \regis{aux}) = 1 :  (C_0, C_1, \regis{aux}) \samp \mathcal{B}(1^\lambda)] -\\ &\Pr[\adve(\io(1^\lambda, C_1), \regis{aux}) = 1 : (C_0, C_1, \regis{aux}) \samp \mathcal{B}(1^\lambda)]\bigg| \leq \negl(\lambda).  
    \end{align*}

    For subexponentially secure obfuscation, we require security for circuit samplers $\mathcal{B}$ such that $\Pr[\forall x ~ C_0(x)=C_1(x) : (C_0, C_1, \regis{aux}) \samp \mathcal{B}(1^\lambda)] = 1$.
\end{definition}
    
Fix an arbitrary constant $c>0$ and set $n:=\lceil\lambda^{c/3}\rceil$.

\prelimlev{Subspace-Hiding Obfuscation}

\begin{definition}[Subspace-hiding obfuscation]
\label{def:subspace-hiding-obfuscation}
Let $N=N(\lambda)$ and $d_0=d_0(\lambda)<d_1=d_1(\lambda)\leq N(\lambda)$ be polynomially bounded. A subspace-hiding obfuscator $\shO$ is a PPT algorithm that, on input $1^\lambda$ and a subspace $S\leq\FF_2^N$ of dimension $d_0$ or $d_1$, outputs a circuit $\widehat S\gets\shO(1^\lambda,S)$.

\paragraph{Correctness.} For every such subspace $S$, except with negligible probability over the randomness of $\shO$, $\widehat S(x)=1$ if and only if $x\in S$ for every $x\in\FF_2^N$.

\paragraph{Security.} For every QPT algorithm $\mathcal B$ that outputs a $d_0$-dimensional subspace $S_0\leq\FF_2^N$ together with quantum auxiliary information $R$, the following distributions are computationally indistinguishable:
\begin{equation}
\left\{(R,\widehat S_0):(S_0,R)\gets\mathcal B(1^\lambda),\ \widehat S_0\gets\shO(1^\lambda,S_0)\right\}
\approx_c
\left\{(R,\widehat S_1):
\begin{array}{l}
(S_0,R)\gets\mathcal B(1^\lambda),\\
S_1\getsr\{S\leq\FF_2^N:\dim(S)=d_1,\ S_0\subseteq S\},\\
\widehat S_1\gets\shO(1^\lambda,S_1)
\end{array}
\right\}.
\label{eq:subspace-hiding-security}
\end{equation}
\end{definition}

\begin{theorem}[Subspace-hiding obfuscation from $\io$ and one-way functions{~\cite{Zha21,C:CLLZ21,ABH+26}}]
\label{thm:subspace-hiding-instantiation}
Assume the existence of post-quantum polynomially secure $\io$ and post-quantum one-way functions. Then, for every polynomially bounded $n=n(\lambda)\in\omega(\log\lambda)$, there exists a post-quantum secure subspace-hiding obfuscator for $N=4n$, $d_0=2n$, and $d_1=3n$. These are the parameters used in \Cref{thm:comp-dcmoe-aux}.
\end{theorem}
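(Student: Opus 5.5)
The plan is to instantiate $\shO$ with the construction of Zhandry~\cite{Zha21}, as adapted in~\cite{C:CLLZ21,ABH+26}, and to check that the parameter choice $N=4n$, $d_0=2n$, $d_1=3n$ satisfies that theorem's hypotheses. The construction is as follows. Given a subspace $S\leq\FF_2^N$ of dimension $d\in\{d_0,d_1\}$, sample a full-rank matrix $H_S\in\FF_2^{(N-d)\times N}$ with $\ker H_S=S$. Output $\io(1^\lambda,P_S)$, where $P_S(x)=1$ iff $H_Sx=0$, and $P_S$ is padded to a fixed size that does not depend on $d$. Correctness is then immediate from the correctness of $\io$.

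For security, I first rewrite the $d_1$-side distribution. Let $R\in\FF_2^{(N-d_1)\times(N-d_0)}$ be a uniformly random full-rank matrix. Then $\ker(RH_{S_0})$ is a uniformly random $d_1$-dimensional superspace of $S_0$, because $R$ is independent of everything else. The right-hand distribution in \eqref{eq:subspace-hiding-security} is therefore identical to the one in which the obfuscated circuit tests $RH_{S_0}x=0$. Writing $y=H_{S_0}x$, it remains to show that obfuscations of the following two circuits are indistinguishable:
\[
x\mapsto[H_{S_0}x=0]
\qquad\text{and}\qquad
x\mapsto[H_{S_0}x\in T],
\]
where $T=\ker R\leq\FF_2^{N-d_0}$ is a random hidden subspace of dimension $d_1-d_0$. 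This holds even given the auxiliary register produced by $\mathcal B$, which may depend on $S_0$ but is independent of $R$. This is exactly the form of Zhandry's core lemma.

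That lemma is proved from $\io$ and one-way functions by a hybrid argument that enlarges the accepted subspace $T$ one dimension at a time. Each step requires that the newly added direction be infeasible to hit. This is guaranteed when the codimension of the final subspace, here $(N-d_0)-(d_1-d_0)=N-d_1=n$, is $\omega(\log\lambda)$, since then $2^{-(N-d_1)}$ is negligible. This is precisely the hypothesis $n\in\omega(\log\lambda)$ in the statement. The hybrids use only $\io$, puncturable PRFs, and injective or one-way function based hiding of a single vector. Because the reduction treats $\mathcal B$'s output as a non-uniform advice state that it never rewinds or measures, the argument goes through with quantum auxiliary information. This is the version stated in~\cite{ABH+26}.

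The main obstacle is this last point: confirming that every hybrid in Zhandry's dimension-by-dimension argument is compatible with an arbitrary quantum auxiliary register that is correlated with $S_0$. To handle it, I would check that each hybrid's sampling of fresh randomness (the new direction of $T$ and the PRF keys) is independent of $R$ and of the auxiliary register. Each $\io$ or one-way-function step then becomes a reduction that simply forwards the auxiliary register. With that verified, the theorem follows for the stated parameters.
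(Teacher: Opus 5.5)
Your proposal takes essentially the same route as the paper, which gives no argument of its own: it instantiates $\shO$ with Zhandry's padded $\iO$-of-membership-program construction~\cite{Zha21}, as restated for quantum auxiliary input in~\cite{C:CLLZ21,ABH+26}. The one link you leave vague is injectivity: an $\iO$ hybrid cannot exploit the statistical bound $2^{-(N-d_1)}$ directly, because the circuits being swapped are not functionally equivalent. The cited argument therefore hides each newly added direction behind an \emph{injective} one-way function, and the paper obtains these from $\iO$ and one-way functions via the injective keyed one-way functions of~\cite{TCC:BitPanWic16}.
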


\prelimlev{Universal Hash Functions and Average Conditional Min-Entropy}

\begin{definition}[Average conditional min-entropy]
\label{def:average-conditional-min-entropy}
For jointly distributed classical random variables $X,Z$, define
\begin{equation}
\Hmin(X\mid Z):=-\log\left(\sum_z\Pr[Z=z]\max_x\Pr[X=x\mid Z=z]\right).
\label{eq:average-conditional-min-entropy}
\end{equation}
When $Z$ is empty, this is the usual min-entropy of $X$.
\end{definition}

\begin{definition}[Universal hash family]
A family $\mathcal H$ of functions from $\zo^\ell$ to $\zo^m$ is universal if it is efficiently sampleable and, for every distinct $x,x'\in\zo^\ell$, $\Pr_{h\getsr\mathcal H}[h(x)=h(x')]\leq 2^{-m}$.
\end{definition}

\begin{theorem}[Universal hash families]
\label{thm:universal-hash-instantiation}
For every polynomially bounded $\ell=\ell(\lambda)$ and $m=m(\lambda)$, there exists an efficiently sampleable universal hash family from $\zo^\ell$ to $\zo^m$. In \Cref{thm:correlated-upo-construction}, we use such a family from $\mathcal X_\lambda=\zo^{\ell_{\mathsf{in}}(\lambda)}$ to $\zo^{8n}$.
\end{theorem}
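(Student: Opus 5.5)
The plan is to instantiate the family directly with random linear maps over $\FF_2$, identifying $\zo^\ell$ with $\FF_2^\ell$ and $\zo^m$ with $\FF_2^m$. Let $\mathcal H:=\{h_M : M\in\FF_2^{m\times \ell}\}$ with $h_M(x):=Mx$. To sample $h\getsr\mathcal H$ we draw the $m\ell$ entries of $M$ uniformly, and to evaluate we compute a matrix--vector product over $\FF_2$. Both take $O(m\ell)$ bit operations, which is polynomial in $\lambda$ since $\ell$ and $m$ are polynomially bounded. This gives efficient sampleability and evaluation.

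For the universality bound, fix distinct $x,x'\in\FF_2^\ell$ and set $d:=x-x'\neq 0$. By linearity, $h_M(x)=h_M(x')$ holds if and only if $Md=0$. I will show that $Md$ is uniformly distributed on $\FF_2^m$ when $M$ is uniform.

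Pick an index $j$ with $d_j=1$. Then
\[
Md \;=\; M_{\cdot j} \;+\; \sum_{i\neq j} d_i M_{\cdot i},
\]
where $M_{\cdot i}$ denotes the $i$-th column of $M$. The column $M_{\cdot j}$ is uniform on $\FF_2^m$ and independent of the remaining columns. So, conditioned on any fixed values of the other columns, $Md$ is a fixed shift of a uniform vector and hence uniform. It follows that
\[
\Pr_M[Md=0]=2^{-m},
\]
which is exactly the universality condition, with equality.

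There is no real obstacle here. The only point needing care is the conditioning argument that $Md$ is uniform, and that is a one-line computation. For the particular instantiation mentioned in \Cref{thm:correlated-upo-construction}, we simply take $\ell=\ell_{\mathsf{in}}(\lambda)$ and $m=8n$, both polynomially bounded, so the same family applies. If a family with uniform outputs were ever needed, one could use affine maps $x\mapsto Mx+b$ with $b$ uniform instead; the universality bound is unchanged.
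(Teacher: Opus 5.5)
Your proof is correct: uniformly random linear maps $x\mapsto Mx$ over $\FF_2$ are efficiently sampleable and collide on any fixed distinct pair with probability exactly $2^{-m}$. This is the standard Carter--Wegman-style instantiation behind the fact, which the paper states without proof as a preliminary. The affine variant you mention is unnecessary, since the paper only uses the universality bound: once in the leftover hash lemma and once in the collision-probability estimate of the masked-string simulation lemma.
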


\begin{lemma}[Leakage bound and leftover hash lemma]
\label{lem:average-min-entropy-facts}
Let $X,Y,Z$ be jointly distributed classical random variables. If $|\operatorname{Supp}(Y)|\leq 2^q$, then $\Hmin(X\mid Z,Y)\geq\Hmin(X\mid Z)-q$. Moreover, let $\mathcal H$ be a universal hash family from the support of $X$ to $\zo^m$, let $h\getsr\mathcal H$ be independent of $(X,Z)$, and let $U_m\getsr\zo^m$ be independent of $(h,X,Z)$. If $\Hmin(X\mid Z)\geq m+2\log(1/\varepsilon)$, then $(h,Z,h(X))$ and $(h,Z,U_m)$ are within statistical distance at most $\varepsilon$.
\end{lemma}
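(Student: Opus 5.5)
The plan is to prove the two parts separately, each by working pointwise in $z$ and then averaging over $Z$.

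\emph{Leakage bound.} Fix $z$. For every $y$ we have $\max_x\Pr[X=x,Y=y\mid Z=z]\leq\max_x\Pr[X=x\mid Z=z]$, so
\begin{equation*}
\sum_y\Pr[Y=y\mid Z=z]\max_x\Pr[X=x\mid Z=z,Y=y]=\sum_y\max_x\Pr[X=x,Y=y\mid Z=z]\leq 2^q\max_x\Pr[X=x\mid Z=z].
\end{equation*}
The inequality holds because the sum has at most $|\operatorname{Supp}(Y)|\leq 2^q$ nonzero terms. Multiplying by $\Pr[Z=z]$ and summing over $z$ shows that the guessing probability defining $\Hmin(X\mid Z,Y)$ is at most $2^q$ times the one defining $\Hmin(X\mid Z)$. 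Taking $-\log$ gives the claim.

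\emph{Leftover hash lemma.} Again fix $z$, and let $p_z:=\max_x\Pr[X=x\mid Z=z]$. Let $P$ be the distribution of $(h,h(X))$ conditioned on $Z=z$, and let $U$ be the uniform distribution on $\mathcal H\times\zo^m$.

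First I compute the collision probability of $P$. Two independent samples collide only if they use the same $h$, which happens with probability $1/|\mathcal H|$. Given the same $h$, either the two $X$-values coincide, with probability $\mathrm{CP}(X\mid z)\leq p_z$, or they are distinct and collide under $h$ with probability at most $2^{-m}$ by universality. Hence
\begin{equation*}
\mathrm{CP}(P)\leq\frac{1}{|\mathcal H|}\left(p_z+2^{-m}\right).
\end{equation*}
Since $U$ is uniform, $\|P-U\|_2^2=\mathrm{CP}(P)-\frac{1}{|\mathcal H|2^m}\leq p_z/|\mathcal H|$.

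Next I convert to $L_1$ by Cauchy--Schwarz over the support of size $|\mathcal H|2^m$:
\begin{equation*}
\|P-U\|_1\leq\sqrt{|\mathcal H|2^m\cdot p_z/|\mathcal H|}=\sqrt{2^m p_z},
\end{equation*}
so the conditional statistical distance is at most $\tfrac12\sqrt{2^m p_z}$.

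Since $h$ and $U_m$ are independent of $Z$, the joint statistical distance between $(h,Z,h(X))$ and $(h,Z,U_m)$ equals $\E_z$ of the conditional distances. By concavity of the square root (Jensen),
\begin{equation*}
\E_z\left[\tfrac12\sqrt{2^m p_z}\right]\leq\tfrac12\sqrt{2^m\,\E_z[p_z]}=\tfrac12\sqrt{2^{m-\Hmin(X\mid Z)}}\leq\tfrac12\varepsilon,
\end{equation*}
using the hypothesis $\Hmin(X\mid Z)\geq m+2\log(1/\varepsilon)$. This is at most $\varepsilon$.

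The only step requiring care is this final averaging. Jensen has to be applied to the averaged guessing probability $\E_z[p_z]$, which is exactly the quantity defining $\Hmin$, and not to a worst-case min-entropy over $z$. Everything else is the routine collision-probability calculation.
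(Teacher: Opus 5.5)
Your proof is correct as written: the pointwise-in-$z$ leakage argument and the collision-probability, Cauchy--Schwarz and Jensen argument for the leftover hash lemma both go through, and you even get a bound of $\varepsilon/2$. The paper gives no proof of this lemma and treats it as a standard fact. Your argument is the standard one, and it is the same collision-probability bound followed by Jensen over the conditioning variable that the paper carries out directly in the proof of \Cref{lem:correlated-upo-masked-string-simulation}.
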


\prelimlev{Injective/Lossy Functions}

\begin{definition}[Injective/lossy functions]
\label{def:injective-lossy-functions}
An injective/lossy function family $\mathsf{LF}=(\mathsf{LF.Gen}_{\mathsf{inj}},\mathsf{LF.Gen}_{\mathsf{loss}})$ with domain $\mathcal X_\lambda$ and range $\mathcal T_\lambda$ consists of two PPT algorithms that output descriptions of deterministic functions $L:\mathcal X_\lambda\rightarrow\mathcal T_\lambda$. We require the following.
\begin{enumerate}
    \item Every $L\gets\mathsf{LF.Gen}_{\mathsf{inj}}(1^\lambda)$ is injective.
    \item There exists a polynomially bounded function $\ell_{\mathsf{img}}=\ell_{\mathsf{img}}(\lambda)$ such that every $L\gets\mathsf{LF.Gen}_{\mathsf{loss}}(1^\lambda)$ satisfies $|\operatorname{Im}(L)|\leq 2^{\ell_{\mathsf{img}}}$.
    \item The distributions $\{\mathsf{LF.Gen}_{\mathsf{inj}}(1^\lambda)\}_\lambda$ and $\{\mathsf{LF.Gen}_{\mathsf{loss}}(1^\lambda)\}_\lambda$ are computationally indistinguishable against QPT algorithms.
\end{enumerate}
\end{definition}

\begin{theorem}[Injective/lossy functions from LWE~\cite{ABH+26}]
\label{thm:lossy-function-instantiation}
Assume the quantum hardness of LWE. Then, for every constant $c>0$ and every polynomially bounded input length $\ell_{\mathsf{in}}=\ell_{\mathsf{in}}(\lambda)$ satisfying $\ell_{\mathsf{in}}(\lambda)\geq\lambda^c$, there exists a post-quantum secure injective/lossy function family with domain $\zo^{\ell_{\mathsf{in}}(\lambda)}$ for which $\ell_{\mathsf{img}}\leq\lambda^{2c/3}$. This is the parameter setting used in the proof of \Cref{thm:correlated-upo-construction}.
\end{theorem}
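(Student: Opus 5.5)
The plan is to take a standard LWE-based lossy function family, in the style of Peikert--Waters or Alwen--Krenn--Pietrzak--Wichs (as used in~\cite{ABH+26}), and \emph{decouple} its LWE dimension from the input length. The point is that lossiness only needs the image size to be governed by the LWE secret dimension, not by $\ell_{\mathsf{in}}$. So I will pick an LWE dimension $n'=\lceil\lambda^{c/2}\rceil$ and a polynomial modulus $q=\poly(\lambda)$. Then $n'\log q+O(\log\lambda)\leq\lambda^{2c/3}$ for all sufficiently large $\lambda$, and the smaller $\lambda$ are absorbed by the "polynomially bounded" slack. Quantum hardness of LWE in dimension $n'$, which is a polynomial in $\lambda$, gives advantage negligible in $n'$ and hence negligible in $\lambda$.

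Concretely, I would use matrices $M\in\mathbb{Z}_q^{\ell_{\mathsf{in}}\times m}$, inputs $x\in\zo^{\ell_{\mathsf{in}}}$, and an evaluation map of the form $L_M(x)=\lfloor x^{T}M\rceil_p$, possibly with extra structure as described next.

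In \emph{lossy mode} I set $M=SA+E$, with $S\in\mathbb{Z}_q^{\ell_{\mathsf{in}}\times n'}$ uniform, $A\in\mathbb{Z}_q^{n'\times m}$ uniform, and $E$ a noise matrix truncated to norm $\beta$. Truncation is only a negligible change to the error distribution, and it makes the later bounds hold for \emph{every} key, as required by \cref{def:injective-lossy-functions}.

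In \emph{injective mode} I set $M=SA+E+G$, where $G$ embeds $x$ recoverably. For example, one block of columns of $G$ is $\lfloor q/2\rfloor I_{\ell_{\mathsf{in}}}$. Injectivity then holds deterministically: on that block, $x^{T}M=(x^{T}S)A'+x^{T}E'+\lfloor q/2\rfloor x^{T}$. One first removes the $(x^{T}S)A'$ part using a trapdoor for $A$ (or by the gadget structure). Afterwards $|x^{T}E'|\leq\ell_{\mathsf{in}}\beta\ll q/4$, so each bit of $x$ is decoded exactly.

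Indistinguishability of the two modes is a hybrid over the columns of $SA+E$. Each row of $SA+E$ is an LWE sample matrix with a common public $A$ and fresh secret rows, so under LWE it can be replaced by a uniform matrix $U$. The same holds with $G$ added, since $U+G$ is again uniform. Hence both key distributions are indistinguishable from $\lfloor\cdot\rceil$ applied to a uniform $M$.

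The main obstacle is a \emph{deterministic} bound $|\operatorname{Im}(L)|\leq2^{\ell_{\mathsf{img}}}$ in lossy mode. We have $x^{T}M=(x^{T}S)A+x^{T}E$. The first term takes at most $q^{n'}$ values. The error term, however, can flip rounding boundaries independently in each of the $m$ coordinates, which a priori multiplies the image size by $2^{m}$.

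I would first try the AKPW analysis. Choose the rounding modulus $p$ with $q/p\gg\ell_{\mathsf{in}}\beta$. Then, for a fixed value $y=x^{T}S$, every coordinate of $\lfloor yA+x^{T}E\rceil_p$ lies in the set $\{\lfloor (yA)_j\rceil_p,\lfloor (yA)_j\rceil_p\pm1\}$. Moreover, it can differ from the center only on coordinates where $(yA)_j$ is within $\ell_{\mathsf{in}}\beta$ of a boundary. I would then argue that, for every $y$ simultaneously (a union bound over the $q^{n'}$ values of $y$), at most $O(n'\log q)$ coordinates are near a boundary. This holds with overwhelming probability over $A$, and I would resample or truncate $A$ so that it holds for every issued key. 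With this, the image has size at most $q^{n'}\cdot m^{O(n'\log q)}$, i.e.\ $\ell_{\mathsf{img}}=O(n'\log^{2}\lambda)\leq\lambda^{2c/3}$ for large $\lambda$.

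If that counting proves delicate, the fallback is to make lossy mode exactly low-rank. That is, take $M=SA$ in lossy mode, so the image is determined by $x^{T}S$, and $M=SA+E$ with a trapdoor-recoverable $E$ component in injective mode. Indistinguishability is then pushed onto LWE with the error moved into the injective branch, following~\cite{ABH+26}.
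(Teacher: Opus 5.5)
The paper gives no proof here; it imports the statement from~\cite{ABH+26}, so your construction has to stand on its own. Much of it is sound. You decouple the LWE dimension $n'=\lceil\lambda^{c/2}\rceil$ from $\ell_{\mathsf{in}}$. You observe that in lossy mode $\lfloor x^{T}(SA+E)\rceil_p$ is determined by $y=x^{T}S$, except on the coordinates where $yA$ lies within $\ell_{\mathsf{in}}\beta$ of a rounding boundary. You then union-bound over the $q^{n'}$ values of $y$. This is the standard AKPW-style argument, and your LWE hybrid for mode indistinguishability is fine.

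The genuine gap is the ``every key'' requirement. \Cref{def:injective-lossy-functions} requires \emph{every} key output by the PPT algorithm $\mathsf{LF.Gen}_{\mathsf{loss}}$ to satisfy $|\operatorname{Im}(L)|\leq 2^{\ell_{\mathsf{img}}}$. Your count only gives this except with probability about $q^{n'}(em\delta/K)^{K}$ over $A$, where $\delta\approx p(2\ell_{\mathsf{in}}\beta+1)/q$ is the density of near-boundary residues. Your proposed repair does not work. Rejection-sampling $A$ would require efficiently deciding whether every one of the superpolynomially many $y\neq 0$ has at most $K$ near-boundary coordinates in $yA$, and you give no test for this (none is evident). ``Truncating'' a uniform $A$ has no meaning.

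So what you actually prove is the overwhelming-probability version. That is enough for the paper's only use of the primitive: in \Cref{lem:correlated-upo-masked-string-simulation} one conditions on both lossy keys being good, which adds only a negligible term to the statistical distance. But you must state it as the relaxed version rather than claim the deterministic bound.

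Also make the parameters explicit. You need $q$ prime, so that $yA$ is uniform for every $y\neq 0$. You need $\ell_{\mathsf{in}}\beta<q/(2p)$, so that $y=0$ contributes a single output. And you need $m\delta\leq 1/2$, i.e.\ $q$ larger than $m\,p\,\ell_{\mathsf{in}}\beta$ by a constant factor, so the binomial tail beats the factor $q^{n'}$; merely $q/p\gg\ell_{\mathsf{in}}\beta$ is not enough. With these, each $y$ contributes at most $2^{N(y)}$ outputs, where $N(y)$ is its number of near-boundary coordinates. No $\binom{m}{K}$ factor is needed, since the bad coordinates are determined by $y$.

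Two further points. First, your injectivity argument decodes from $x^{T}M$, but the function outputs $\lfloor x^{T}M\rceil_p$. Both the trapdoor inversion recovering $x^{T}S$ and the $\lfloor q/2\rfloor I_{\ell_{\mathsf{in}}}$ block must therefore tolerate an extra error of up to $q/(2p)$ per coordinate. This forces $p$ to exceed, up to a constant, the norm of the trapdoor used for inversion. It also means $A$ must actually be sampled with such a trapdoor in injective mode; this is harmless, since the trapdoored $A$ is statistically close to uniform and $A$ is never published. These constraints are compatible with the lossy ones for polynomial $q$, but they have to be imposed.

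Second, drop the fallback. If $\operatorname{rank}(M)\leq n'$, then $x\mapsto x^{T}M$ (and hence its rounding) takes at most $q^{n'}<2^{\ell_{\mathsf{in}}}$ values. So every injective key must have rank greater than $n'$, while $SA$ always has rank at most $n'$. Computing the rank over $\mathbb{Z}_q$ therefore separates the two modes with advantage one, and no LWE argument can repair that.
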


\subsection{Unclonable Encryption}
Next, we define (one-time) unclonable encryption for single bit messages.
\begin{definition}[Unclonable encryption for bits]
An unclonable encryption scheme for bits is a tuple
$\mathsf{UE}=(\mathsf{KeyGen},\mathsf{Enc},\mathsf{Dec})$ with the following syntax.
\begin{itemize}
    \item $k\gets\mathsf{KeyGen}(1^\secparam)$ outputs a classical key $k\in\mathcal K_\secparam$.
    \item $\rho_{\mathsf{ct}}\gets\mathsf{Enc}(k,m)$, on input $m\in\bit$, outputs a quantum ciphertext in a finite-dimensional Hilbert space $\mathcal H_{\mathsf{ct}}$.
    \item $m'\gets\mathsf{Dec}(k,\rho_{\mathsf{ct}})$ outputs a bit.
\end{itemize}
For every $m\in\bit$, correctness requires
\begin{equation}
\Pr\left[
m'=m:
\begin{array}{c}
k\gets\mathsf{KeyGen}(1^\secparam),\\
\rho_{\mathsf{ct}}\gets\mathsf{Enc}(k,m),\\
m'\gets\mathsf{Dec}(k,\rho_{\mathsf{ct}})
\end{array}
\right]
\geq 1-\negl(\secparam).
\end{equation}
\end{definition}

For arbitrary finite-dimensional Hilbert spaces $\mathcal H_\bob$ and $\mathcal H_\charlie$, let an adversary $\advue$ consist of a quantum channel
\begin{equation}
\Phi_\alice:\mathcal L(\mathcal H_{\mathsf{ct}})\longrightarrow
\mathcal L(\mathcal H_\bob\otimes\mathcal H_\charlie)
\end{equation}
and, for every $k\in\mathcal K_\secparam$, binary POVMs
$\{B_0^k,B_1^k\}$ on $\mathcal H_\bob$ and
$\{C_0^k,C_1^k\}$ on $\mathcal H_\charlie$. Consider the following experiment.

\par\noindent
$\gam{\mathsf{UE\mbox{-}Unclonable\mbox{-}Indistinguishable}_{\mathsf{UE},\advue}}(1^\secparam)$:
\begin{enumerate}
    \item $\ch$ samples $k\gets\mathsf{KeyGen}(1^\secparam)$ and $m\getsr\bit$, computes $\rho_{\mathsf{ct}}\gets\mathsf{Enc}(k,m)$, and sends $\rho_{\mathsf{ct}}$ to $\alice$ while retaining $k$.
    \item $\alice$ applies $\Phi_\alice$ and sends the resulting registers to $\bob$ and $\charlie$.
    \item After the split, $\ch$ sends $k$ to both $\bob$ and $\charlie$.
    \item $\bob$ measures with $\{B_0^k,B_1^k\}$ and outputs $\widehat m_\bob$, while $\charlie$ measures with $\{C_0^k,C_1^k\}$ and outputs $\widehat m_\charlie$.
    \item The output of the experiment is $1$ if $\widehat m_\bob=\widehat m_\charlie=m$.
\end{enumerate}

\begin{definition}[$\epsilon$-unclonable indistinguishable security]\label{def:UE-security}
For $\epsilon=\epsilon(\secparam)\geq0$, define the winning advantage of $\advue$ by
\begin{equation}
\operatorname{Adv}^{\mathsf{UI}}_{\mathsf{UE},\advue}(\secparam)
:=
\Pr\left[
\gam{\mathsf{UE\mbox{-}UI}_{\mathsf{UE},\advue}}(1^\secparam)=1
\right]-\frac{1}{2}.
\end{equation}
The scheme $\mathsf{UE}$ satisfies $\epsilon$-unclonable indistinguishable security if, for every (potentially unbounded) adversary $\advue$ as above,
\begin{equation}
\operatorname{Adv}^{\mathsf{UI}}_{\mathsf{UE},\advue}(\secparam)
\leq\epsilon(\secparam),
\end{equation}
or equivalently,
\begin{equation}
\Pr\left[
\gam{\mathsf{UE\mbox{-}UI}_{\mathsf{UE},\advue}}(1^\secparam)=1
\right]
\leq\frac{1}{2}+\epsilon(\secparam).
\end{equation}
\end{definition}

%% file: 000-include/boilerplate/quantum-results/gentlemes.tex
\paragraph{Gentle measurement and rewinding.}
\label{sec:gentle-measurement}
The almost-as-good-as-new lemma shows that a measurement that accepts with
high probability can be gently rewound.

\begin{lemma}[Almost As Good As New Lemma \cite{Aar16}; see also \cite{awinter}]
\label{prelem:gentlemes}
Let $d,d'\in\nat^+$, let $\rho$ be a mixed state over $\C^d$, and let $U$ be a unitary over $\C^d\otimes\C^{d'}$. Let $(\Pi_0,\Pi_1=I-\Pi_0)$ be a projective measurement over $\C^d\otimes\C^{d'}$. Consider the measurement that appends an ancillary register initialized to $\ketbra{0}{0}$, applies $U$, and then measures $(\Pi_0,\Pi_1)$. Suppose that, for some $\epsilon\in[0,1]$,
\begin{equation*}
    \Tr\left[\Pi_0 U\left(\rho\otimes\ketbra{0}{0}\right)U^\dagger\right]=1-\epsilon.
\end{equation*}
Define
\begin{equation*}
\widetilde{\rho}:=
\Tr_{\C^{d'}}\left[
U^\dagger\left(
\sum_{b\in\zo}\Pi_b U\left(\rho\otimes\ketbra{0}{0}\right)U^\dagger\Pi_b
\right)U
\right].
\end{equation*}
Then,
\begin{equation*}
    \trd{\rho}{\widetilde{\rho}}\leq\sqrt{\epsilon}.
\end{equation*}
\end{lemma}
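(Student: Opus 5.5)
The plan is to reduce to a pure state and then compute the trace distance exactly on a two-dimensional subspace. Two standard facts carry the rest: the trace norm is invariant under unitaries, and it does not increase under partial trace.

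First, purify $\rho$. Let $\ket{\psi}\in\C^d\otimes\C^{d''}$ be a purification, with the reference register $\C^{d''}$ left untouched by $U$ and by $(\Pi_0,\Pi_1)$. Set
\[
\ket{\varphi}:=(U\otimes I)\bigl(\ket{\psi}\otimes\ket{0}\bigr).
\]
The two states to compare are $\rho\otimes\ketbra{0}{0}$ (obtained as $U^\dagger\ketbra{\varphi}{\varphi}U$ after tracing out the reference) and $\widetilde\rho\otimes(\cdot)$. Define the post-measurement mixture
\[
\sigma:=\sum_{b}(\Pi_b\otimes I)\ketbra{\varphi}{\varphi}(\Pi_b\otimes I).
\]
It suffices to show $\trd{\ketbra{\varphi}{\varphi}}{\sigma}\leq\sqrt{\epsilon}$. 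Indeed, $\rho$ is obtained from $U^\dagger\ketbra{\varphi}{\varphi}U$ by tracing out the ancilla $\C^{d'}$ and the reference. Likewise $\widetilde\rho$ is obtained from $U^\dagger\sigma U$ by the same partial trace. Unitary invariance handles the conjugation by $U^\dagger$, and contractivity handles the partial trace.

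For the key computation, decompose $\ket{\varphi}=\alpha\ket{a}+\beta\ket{b}$. Here $\ket{a}$ and $\ket{b}$ are unit vectors in the ranges of $\Pi_0\otimes I$ and $\Pi_1\otimes I$ respectively. The hypothesis gives $|\alpha|^2=\Tr[\Pi_0U(\rho\otimes\ketbra00)U^\dagger]=1-\epsilon$ and $|\beta|^2=\epsilon$. Then $\sigma=|\alpha|^2\ketbra aa+|\beta|^2\ketbra bb$, so
\[
\ketbra{\varphi}{\varphi}-\sigma=\alpha\beta^*\ketbra ab+\alpha^*\beta\ketbra ba .
\]
This is a Hermitian operator supported on the span of the orthonormal pair $\ket a,\ket b$, with eigenvalues $\pm|\alpha||\beta|$. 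Its trace norm is therefore $2|\alpha\beta|$. The trace distance is $\sqrt{\epsilon(1-\epsilon)}\leq\sqrt\epsilon$.

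The only step needing care is the bookkeeping that $\widetilde\rho$, as defined in the statement, is exactly the marginal of $U^\dagger\sigma U$. This uses the facts that the measurement acts only on the non-reference registers and that tracing out the reference commutes with everything else. Once that identification is checked, the rest is the two-dimensional calculation above (the degenerate cases $\epsilon\in\{0,1\}$ are immediate).
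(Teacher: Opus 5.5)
Your proof is correct. Purifying, splitting $\ket{\varphi}=\alpha\ket{a}+\beta\ket{b}$ across the two measurement subspaces, and diagonalizing the rank-two difference gives trace distance $\sqrt{\epsilon(1-\epsilon)}\leq\sqrt{\epsilon}$, and unitary invariance plus contractivity under partial trace transfer this bound to $\rho,\widetilde\rho$. The paper gives no proof of its own and simply quotes the lemma from \cite{Aar16}, whose standard argument is essentially this purification-plus-$2\times 2$ computation; your loose phrase ``$\widetilde\rho\otimes(\cdot)$'' is harmless, since you only use that $\widetilde\rho$ is the marginal of $U^\dagger\sigma U$.
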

We refer to the operation of undoing $U$ after the measurement and then tracing out the ancillary register as \emph{rewinding} the measured register. Thus, $\widetilde{\rho}$ in \cref{prelem:gentlemes} is the state obtained after rewinding.

%% file: 000-include/boilerplate/quantum-definitions/threshold-implementations.tex
\paragraph{Projective and threshold implementations.}
\label{sec:threshold-implementations}

\begin{definition}[Projective implementation \cite{TCC:Zhandry20}]
Let $\mathcal{P}=(P,I-P)$ be a binary-outcome POVM. A projective
implementation of $\mathcal{P}$ is a projective measurement
$\mathcal{E}=(\Pi_p)_{p\in S}$, for a finite set $S\subseteq[0,1]$, such that
the following measurement is equivalent to $\mathcal{P}$: apply $\mathcal{E}$
to obtain $p$, and output $1$ with probability $p$ and $0$ otherwise. We
denote the unique projective implementation of $\mathcal{P}$ by
$\projimp(\mathcal{P})$.
\end{definition}

We use the following efficient, gapped form of approximate threshold
implementation for a general efficient POVM.  The fixed running-time and
reference-system guarantees are recorded explicitly because both are needed
in our reductions.

\begin{lemma}[Fixed-time gapped approximate threshold instrument
\cite{TCC:Zhandry20,C:ALLZZ21}]
\label{lem:gapped-threshold-instrument}
Let $\mathcal P=(P,I-P)$ be an efficient binary-outcome POVM on a register
$\mathsf A$, where an efficient implementation means a uniform
polynomial-size unitary dilation (with its inverse) followed by a one-qubit
projective measurement.  Let $0\leq s<c\leq1$, where
$\Delta:=c-s$ is inverse polynomial, and let $\eta\in(0,1]$ be inverse
polynomial.
There is a uniform, fixed-polynomial-time two-outcome instrument
\begin{equation*}
  \mathsf{GATI}^{c,s}_{\eta}(\mathcal P)
  =\bigl(\mathcal A,\mathcal R\bigr)
\end{equation*}
with the following properties.  Write
\begin{equation*}
  \Pi_{\mathsf{hi}}:=\mathbf 1[P\geq c],
  \qquad
  \Pi_{\mathsf{lo}}:=\mathbf 1[P\leq s].
\end{equation*}
For every subnormalized state $\rho_{\mathsf{AR}}$, with an arbitrary
reference register $\mathsf R$, and
$\sigma:=(\mathcal A\otimes\mathrm{Id}_{\mathsf R})(\rho)$, we have
\begin{align}
 \Tr\!\left[(\Pi_{\mathsf{hi}}\otimes I)\rho\right]
       -\eta\Tr(\rho)
 &\leq \Tr(\sigma),
 \label{eq:gati-completeness}\\
 \Tr(\sigma)
 &\leq
 \Tr\!\left[(\mathbf 1[P>s]\otimes I)\rho\right]
       +\eta\Tr(\rho),
 \label{eq:gati-soundness}\\
 \Tr\!\left[(\Pi_{\mathsf{lo}}\otimes I)\sigma\right]
 &\leq \eta\Tr(\rho).
 \label{eq:gati-poststate}
\end{align}
Consequently, for
\begin{equation*}
 \overline\sigma
 :=(\mathbf 1[P>s]\otimes I)\sigma
   (\mathbf 1[P>s]\otimes I),
\end{equation*}
we have
\begin{equation}
 \Tr(\overline\sigma)\geq\Tr(\sigma)-\eta\Tr(\rho),
 \qquad
 \|\sigma-\overline\sigma\|_1
 \leq3\sqrt\eta\,\Tr(\rho).
 \label{eq:gati-gentle-poststate}
\end{equation}
The running time is
\begin{equation*}
 T_{\mathcal P}\cdot
 \poly\!\left(\Delta^{-1},\log(1/\eta),\eta^{-1}\right),
\end{equation*}
where $T_{\mathcal P}$ is the running time of the dilation of $\mathcal P$.
The same circuit and bounds apply uniformly for every reference register.
\end{lemma}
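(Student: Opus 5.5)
The plan is to instantiate Zhandry's approximate threshold implementation through Marriott--Watrous alternating projections, with a hard cap on the number of steps so that the running time is fixed. First, dilate $\mathcal P$: there is a unitary $U$ on $\mathsf A\otimes\mathsf W$, with an ancilla initialized to $\ket0$, such that $P=(I\otimes\bra0)U^\dagger(\Pi\otimes I)U(I\otimes\ket0)$. Here $\Pi$ is the one-qubit accept projector. On $\mathsf A\otimes\mathsf W$ I would work with two projectors, $\Pi_0:=I\otimes\ketbra00$ and $\Pi_1:=U^\dagger(\Pi\otimes I)U$.

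Jordan's lemma splits $\mathsf A\otimes\mathsf W$ into invariant blocks of dimension at most two. On each block, $\Pi_0$ and $\Pi_1$ are rank-one projectors at angle $\theta_j$. The vectors $\ket{v_j}\otimes\ket0$ range over an eigenbasis of $P$ with eigenvalues $p_j=\cos^2\theta_j$. Each block is invariant under every Kraus operator of an alternating $\Pi_0/\Pi_1$ measurement, and all these operators act trivially on $\mathsf R$. The whole instrument will therefore be block-diagonal with respect to the spectral decomposition of $P$, and all bounds can be proven block by block and then extended by linearity to an arbitrary reference register.

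The instrument $\mathsf{GATI}^{c,s}_\eta(\mathcal P)$ works as follows. Append the ancilla and perform $T=O(\Delta^{-2}\log(1/\eta))$ alternating measurements $\Pi_1,\Pi_0,\Pi_1,\dots$. Within a block with parameter $p$, consecutive outcomes agree with probability $p$ (for the $\Pi_0\to\Pi_1$ transitions) and form a Markov chain whose statistics are determined by $p$ alone. So the empirical fraction $\hat p$ of agreeing transitions is a concentrated estimator of $p$. Accept (outcome $\mathcal A$) iff $\hat p\geq(c+s)/2$.

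Next comes a restoration phase: keep alternating until a $\Pi_0$ outcome is obtained, but for at most $T'=\poly(1/\eta)$ steps. Then discard the ancilla; if restoration failed, just trace it out anyway. By a Chernoff bound for this two-state chain, blocks with $p\geq c$ are accepted with probability at least $1-\eta/2$, and blocks with $p\leq s$ with probability at most $\eta/2$. Summing over blocks with weights $\Tr[(\Pi_{\mathsf{hi}}\otimes I)\rho]$ and so on gives \eqref{eq:gati-completeness} and \eqref{eq:gati-soundness}. Because of block-diagonality, the mass of $\sigma$ on $\Pi_{\mathsf{lo}}$ comes only from low blocks that were accepted, which gives \eqref{eq:gati-poststate}.

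The consequences \eqref{eq:gati-gentle-poststate} are then formal. Since $\mathbf 1[P>s]=I-\Pi_{\mathsf{lo}}$, we have $\Tr(\overline\sigma)=\Tr(\sigma)-\Tr(\Pi_{\mathsf{lo}}\sigma)$. A subnormalized gentle-measurement bound (as in \cref{prelem:gentlemes}) gives $\|\sigma-\overline\sigma\|_1\leq 2\sqrt{\Tr(\sigma)\Tr(\Pi_{\mathsf{lo}}\sigma)}+\Tr(\Pi_{\mathsf{lo}}\sigma)\leq3\sqrt\eta\,\Tr(\rho)$. The running time is $(T+T')$ uses of $U,U^\dagger$, which is $T_{\mathcal P}\cdot\poly(\Delta^{-1},\log(1/\eta),\eta^{-1})$.

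I expect the main obstacle to be the fixed-time truncation of the restoration phase. Zhandry's version runs for expected polynomial time, whereas here a block whose $\Pi_0$-return probability per step is small could fail to restore within $T'$ steps. I would first try to show that, within a block, the probability of not having returned to $\Pi_0$ after $k$ steps decays like $O(1/k)$ uniformly in $p$, using the explicit rotation structure. Choosing $T'=\poly(1/\eta)$ then makes the unrestored mass at most $\eta/2$, and that loss can be absorbed into the $\eta$ slack in all three inequalities. The point to check is that the unrestored branches, where the ancilla is simply traced out, still preserve the Jordan blocks, so that no cross-block coherences contaminate \eqref{eq:gati-poststate}.
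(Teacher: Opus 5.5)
Your proposal is correct, but it proves the lemma from scratch instead of by citation. The paper treats the approximate threshold implementation of Zhandry and Aaronson et al.\ as a black box. It runs that implementation at threshold $c-\Delta/4$ with shift and almost-projectivity errors $\eta/20$, and reads completeness, soundness and the post-state bound off those two guarantees. It gets a fixed running time by cutting off at $20T_{\mathsf{exp}}/\eta$ steps and rejecting at the cutoff, with Markov's inequality applied to the expected running time. You instead open the box. Under Marriott--Watrous alternation, the agreement bits within a Jordan block are i.i.d.\ Bernoulli$(p)$, so they are independent rather than merely a Markov chain. Hence the acceptance effect on $\mathsf A$ is exactly $f(P)$ with $f(p)=\Pr[\mathrm{Bin}(T,p)\geq T(c+s)/2]$. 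Completeness, soundness and the reference-register claim then follow from Hoeffding's inequality and the fact that every Kraus operator preserves the blocks. Your route is self-contained and makes the constants and the entangled-reference case transparent. The paper's route is shorter but depends on the exact form of the cited guarantees.

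Two adjustments are needed in the restoration step you flagged.

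First, the failure probability does not decay like $O(1/k)$ independently of $\Delta$. Let $q:=2p(1-p)$ be the per-round probability that the $\Pi_0$ status flips. After $r$ estimation rounds you are outside $\operatorname{im}\Pi_0$ with probability at most $qr$. You then stay outside for $k$ further rounds with probability $(1-q)^k$. So the failure probability is at most $rq\,e^{-qk}\leq r/(ek)$, uniformly in $p$. However, for $k\ll r$ and $q\approx(rk)^{-1/2}$ it is close to $1/2$. You therefore need $T'=\Theta(T/\eta)$, which is polynomial in $\Delta^{-1}$ and not just in $1/\eta$. This still fits the allowed running time and matches the $T/\eta$ scaling of the paper's Markov cutoff.

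Second, the check you propose for unrestored branches would actually fail. Off $\operatorname{im}\Pi_0$, the operator $\Pi_{\mathsf{lo}}\otimes I_{\mathsf W}$ is not block-diagonal in the Jordan decomposition. So after tracing out $\mathsf W$, a branch coming from a high block can land in the low eigenspace of $P$. You do not need that check, though. The total weight of the unrestored branches is $\Tr[g(P)\rho]\leq\frac{\eta}{2}\Tr(\rho)$, because the trace is block-diagonal even though $\Pi_{\mathsf{lo}}\otimes I_{\mathsf W}$ is not. That weight already bounds their contribution to $\Tr[(\Pi_{\mathsf{lo}}\otimes I)\sigma]$. Alternatively, reject whenever restoration fails, as the paper does at its cutoff.
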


\begin{proof}
Purify all classical coins and deferred measurements in the implementation
of $\mathcal P$.  To recall why the cited threshold construction applies to
a general efficient POVM, let $\Pi_0$ project the dilation workspace onto its
initial state and let $\Pi_1$ be the accepting projector pulled back through
the dilation unitary.  On the image of $\Pi_0$,
\begin{equation*}
  \Pi_0\Pi_1\Pi_0=P.
\end{equation*}
The standard two-projector (Jordan-block) construction underlying the
approximate projective implementation of Zhandry~\cite{TCC:Zhandry20} and
the approximate threshold implementation of Aaronson et
al.~\cite{C:ALLZZ21} therefore measures the spectral parameter of $P$ and
returns the dilation workspace to its initial state.  This gives an
instrument on $\mathsf A$, not merely on the dilation space.

Set $\xi:=\Delta/4$ and run that approximate threshold implementation at
threshold $c-\xi$, with shift and almost-projectivity errors at most
$\eta/20$.  Its shift guarantee compares its acceptance probability from
below with the exact threshold projector $\mathbf 1[P\geq c]$ and from
above with $\mathbf 1[P>c-2\xi]\preceq\mathbf 1[P>s]$, up to error
$\eta/20$.  Its post-measurement guarantee places all but $\eta/10$ of the
accepted subnormalized state above threshold $c-3\xi>s$.  These are
operator and local post-state guarantees on $\mathsf A$.  For an input
entangled with $\mathsf R$, the reduced accepted state on $\mathsf A$ is
exactly $\mathcal A(\rho_{\mathsf A})$; hence the same probability and
low-subspace bounds hold after tensoring the construction with
$I_{\mathsf R}$.

The cited implementation has expected running time
$T_{\mathcal P}\poly(\Delta^{-1},\log(1/\eta))$, uniformly over its input
state.  Fix a corresponding polynomial upper bound $T_{\mathsf{exp}}$ and
truncate after $20T_{\mathsf{exp}}/\eta$ steps, outputting reject if the
cutoff is reached.  Markov's inequality bounds the cutoff probability by
$\eta/20$ for every input, including one entangled with a reference.
Absorbing the approximation and truncation errors proves
\cref{eq:gati-completeness,eq:gati-soundness,eq:gati-poststate} and gives the
claimed fixed polynomial running time.  Finally,
\cref{eq:gati-gentle-poststate} follows from
\cref{eq:gati-poststate} and the gentle-measurement inequality, after
normalizing $\rho$; the displayed constant also absorbs the discarded
low-subspace block.
\end{proof}

\begin{definition}[Threshold implementation \cite{TCC:Zhandry20,C:ALLZZ21}]
Let $\mathcal{P}$ and $\projimp(\mathcal{P})=(\Pi_p)_{p\in S}$ be as above.
For a threshold $\tau\in[0,1]$, the threshold implementation of
$\mathcal{P}$ with threshold $\tau$ is the binary projective measurement
\begin{equation*}
    \TI_\tau(\mathcal{P})
    :=\left(\sum_{p\geq\tau}\Pi_p,
    I-\sum_{p\geq\tau}\Pi_p\right).
\end{equation*}
When clear from context, we also use $\TI_\tau(\mathcal{P})$ to denote its
projector corresponding to outcome $1$.
\end{definition}

%% file: defns/main.tex
\section{Single-Decryptor Encryption Security Definitions}
In this section, we present security definitions for single-decryptor
encryption (SDE). We first define syntax and correctness. We then recall the prior security definitions in \cref{sec:sde-prior-defn} and present our new definition in \cref{sec:sde-new-defn}. Finally, in
\cref{sec:sde-hier-defn}, we discuss the relationships between these notions.

Now we present the syntax.
\begin{definition}[Single-Decryptor Encryption (SDE)]
A public-key single-decryptor encryption scheme
$\sde=(\keygen,\enc,\dec)$ for the message space $\messpa$ consists of the
following efficient, possibly quantum, algorithms.
\begin{itemize}
    \item $\keygen(1^\secparam)$: Takes as input a security parameter and outputs a classical public key $\pk$ and a quantum decryption key $\qsk$.

    \item $\enc(\pk, \mes)$: Takes as input a public-key $\pk$ and a message $\mes \in \messpa$ and outputs a classical ciphertext $\ct$.

    \item $\dec(\qsk, \ct)$: Takes as input a decryption key $\qsk$ and a ciphertext $\ct$, and outputs a message $\mes \in \messpa$ or $\bot$.
\end{itemize}
We require the following correctness guarantee.
\paragraph{Correctness:} For any message $\mes \in \messpa$, there exists a negligible function $\negl(\secparam)$ such that \[\Pr[\mes' = \mes : \begin{array}{c}
      (\pk,\qsk)\gets\keygen(1^\secparam)  \\
      \ct \samp \enc(\pk, \mes) \\
\mes' \samp \dec(\qsk,\ct)
\end{array}]\geq 1-\negl(\secparam).\]
\end{definition}

\subsection{Prior Definitions}\label{sec:sde-prior-defn}
We recall the prior security definitions for SDE. Each game is between a
challenger and an adversary tuple $\adve=(\alice,\bob,\charlie)$. The
adversary $\alice$ receives the quantum decryption key and splits it into two,
possibly entangled, registers for $\bob$ and $\charlie$, who do not
communicate after receiving their registers.

We begin with the two search-type notions.

\begin{definition}[Independent-Challenge Search Anti-Piracy
\cite{C:CLLZ21,TCC:KitYam25}]
\label{def:sde-ind-search}
Let $\sde$ be an SDE scheme with message space $\messpa$, and let
$\adve=(\alice,\bob,\charlie)$ be an adversary tuple. Consider the following
game.

\paragraph{\underline{$\gam{SDE\mbox{-}IND\mbox{-}SEARCH}_{\sde,\adve}
(1^\secparam)$}}
\begin{enumerate}
    \item The challenger samples
    $(\pk,\qsk)\gets\sde.\keygen(1^\secparam)$ and submits $(\pk,\qsk)$ to
    $\alice$.
    \item $\alice$ outputs a pair of registers $(\regbob,\regcharlie)$. The
    challenger submits $\regbob$ to $\bob$ and $\regcharlie$ to $\charlie$.
    \item The challenger independently samples
    $\mes_\bob,\mes_\charlie\samp\messpa$ and, in two independent executions
    of encryption, computes
    $\ct_\bob\samp\sde.\enc(\pk,\mes_\bob)$ and
    $\ct_\charlie\samp\sde.\enc(\pk,\mes_\charlie)$.
    \item The challenger submits $\ct_\bob$ to $\bob$ and $\ct_\charlie$ to
    $\charlie$. Then, $\bob$ outputs $\widetilde{\mes}_\bob$ and $\charlie$
    outputs $\widetilde{\mes}_\charlie$.
    \item The challenger outputs $1$ if and only if
    $\widetilde{\mes}_\bob=\mes_\bob$ and
    $\widetilde{\mes}_\charlie=\mes_\charlie$.
\end{enumerate}

$\sde$ satisfies \emph{independent-challenge search anti-piracy security} if,
for every QPT adversary $\adve$,
\begin{equation*}
    \Pr\left[\gam{SDE\mbox{-}IND\mbox{-}SEARCH}_{\sde,\adve}
    (1^\secparam)=1\right]
    \leq \frac{1}{|\messpa|}+\negl(\secparam).
\end{equation*}
\end{definition}

\begin{definition}[Identical-Challenge Search Anti-Piracy
\cite{TCC:KitYam25}]
\label{def:sde-iden-search}
Let $\sde$ be an SDE scheme with message space $\messpa$, and let
$\adve=(\alice,\bob,\charlie)$ be an adversary tuple. Consider the following
game.

\paragraph{\underline{$\gam{SDE\mbox{-}IDEN\mbox{-}SEARCH}_{\sde,\adve}
(1^\secparam)$}}
\begin{enumerate}
    \item The challenger samples
    $(\pk,\qsk)\gets\sde.\keygen(1^\secparam)$ and submits $(\pk,\qsk)$ to
    $\alice$.
    \item $\alice$ outputs a pair of registers $(\regbob,\regcharlie)$. The
    challenger submits $\regbob$ to $\bob$ and $\regcharlie$ to $\charlie$.
    \item The challenger samples $\mes\samp\messpa$ and computes a single
    ciphertext $\ct\samp\sde.\enc(\pk,\mes)$.
    \item The challenger submits this same ciphertext $\ct$ to both $\bob$
    and $\charlie$. Then,
    $\bob$ outputs $\widetilde{\mes}_\bob$ and $\charlie$ outputs
    $\widetilde{\mes}_\charlie$.
    \item The challenger outputs $1$ if and only if
    $\widetilde{\mes}_\bob=\mes$ and
    $\widetilde{\mes}_\charlie=\mes$.
\end{enumerate}

$\sde$ satisfies \emph{identical-challenge search anti-piracy security} if,
for every QPT adversary $\adve$,
\begin{equation*}
    \Pr\left[\gam{SDE\mbox{-}IDEN\mbox{-}SEARCH}_{\sde,\adve}
    (1^\secparam)=1\right]
    \leq \frac{1}{|\messpa|}+\negl(\secparam).
\end{equation*}
\end{definition}

We next recall the two CPA-type notions. In the independent-challenge game,
the adversary may choose a different message pair for each party
(\cite{TCC:KitYam25}). The original definition of Coladangelo et
al. (\cite{C:CLLZ21}) requires the two pairs to be equal.

\begin{definition}[Independent-Challenge CPA Anti-Piracy
\cite{C:CLLZ21,TCC:KitYam25}]
\label{def:sde-ind-cpa}
Let $\sde$ be an SDE scheme with message space $\messpa$, and let
$\adve=(\alice,\bob,\charlie)$ be an adversary tuple. Consider the following
game.

\paragraph{\underline{$\gam{SDE\mbox{-}IND\mbox{-}CPA}_{\sde,\adve}
(1^\secparam)$}}
\begin{enumerate}
    \item The challenger samples
    $(\pk,\qsk)\gets\sde.\keygen(1^\secparam)$ and submits $(\pk,\qsk)$ to
    $\alice$.
    \item $\alice$ outputs message pairs
    $(\mes_{\bob,0},\mes_{\bob,1})\in\messpa^2$ and
    $(\mes_{\charlie,0},\mes_{\charlie,1})\in\messpa^2$, and a pair of
    registers $(\regbob,\regcharlie)$. The challenger submits $\regbob$ to
    $\bob$ and $\regcharlie$ to $\charlie$.
    \item The challenger samples $b_\bob,b_\charlie\samp\bit$ independently.
    In two independent executions of encryption, it computes
    $\ct_\bob\samp\sde.\enc(\pk,\mes_{\bob,b_\bob})$ and
    $\ct_\charlie\samp\sde.\enc(\pk,\mes_{\charlie,b_\charlie})$.
    \item The challenger submits $\ct_\bob$ to $\bob$ and $\ct_\charlie$ to
    $\charlie$. Then, $\bob$ outputs $\widetilde{b}_\bob$ and $\charlie$
    outputs $\widetilde{b}_\charlie$.
    \item The challenger outputs $1$ if and only if
    $\widetilde{b}_\bob=b_\bob$ and
    $\widetilde{b}_\charlie=b_\charlie$.
\end{enumerate}

$\sde$ satisfies \emph{independent-challenge CPA anti-piracy security} if,
for every QPT adversary $\adve$,
\begin{equation*}
    \Pr\left[\gam{SDE\mbox{-}IND\mbox{-}CPA}_{\sde,\adve}
    (1^\secparam)=1\right]
    \leq \frac{1}{2}+\negl(\secparam).
\end{equation*}
\end{definition}

\begin{definition}[Identical-Challenge CPA Anti-Piracy
\cite{EPRINT:GeoZha20}]
\label{def:sde-iden-cpa}
Let $\sde$ be an SDE scheme with message space $\messpa$, and let
$\adve=(\alice,\bob,\charlie)$ be an adversary tuple. Consider the following
game.

\paragraph{\underline{$\gam{SDE\mbox{-}IDEN\mbox{-}CPA}_{\sde,\adve}
(1^\secparam)$}}
\begin{enumerate}
    \item The challenger samples
    $(\pk,\qsk)\gets\sde.\keygen(1^\secparam)$ and submits $(\pk,\qsk)$ to
    $\alice$.
    \item $\alice$ outputs a message pair
    $(\mes_0,\mes_1)\in\messpa^2$ and a pair of registers
    $(\regbob,\regcharlie)$. The challenger submits $\regbob$ to $\bob$ and
    $\regcharlie$ to $\charlie$.
    \item The challenger samples $b\samp\bit$ and computes a single
    ciphertext $\ct\samp\sde.\enc(\pk,\mes_b)$.
    \item The challenger submits this same ciphertext $\ct$ to both $\bob$
    and $\charlie$. Then,
    $\bob$ outputs $\widetilde{b}_\bob$ and $\charlie$ outputs
    $\widetilde{b}_\charlie$.
    \item The challenger outputs $1$ if and only if
    $\widetilde{b}_\bob=b$ and $\widetilde{b}_\charlie=b$.
\end{enumerate}

$\sde$ satisfies \emph{identical-challenge CPA anti-piracy security} if, for
every QPT adversary $\adve$,
\begin{equation*}
    \Pr\left[\gam{SDE\mbox{-}IDEN\mbox{-}CPA}_{\sde,\adve}
    (1^\secparam)=1\right]
    \leq \frac{1}{2}+\negl(\secparam).
\end{equation*}
\end{definition}

CPA${}^+$ anti-piracy uses the independent-challenge CPA experiment with a
stronger winning condition.

\begin{definition}[CPA${}^+$ Anti-Piracy Security \cite{TCC:KitYam25}]
\label{def:sde-cpa-plus}
Let $\sde$ be an SDE scheme and let $\adve=(\alice,\bob,\charlie)$ be an
adversary tuple. The game
$\gam{SDE\mbox{-}CPA^{+}}_{\sde,\adve}(1^\secparam)$ is
identical to the experiment in \cref{def:sde-ind-cpa}, except that the
challenger outputs $1$ if and only if
\begin{equation*}
    \widetilde{b}_\bob\oplus\widetilde{b}_\charlie
    =b_\bob\oplus b_\charlie.
\end{equation*}
$\sde$ satisfies \emph{CPA${}^+$ anti-piracy security} if, for every QPT
adversary $\adve$,
\begin{equation*}
    \Pr\left[\gam{SDE\mbox{-}CPA^{+}}_{\sde,\adve}(1^\secparam)=1\right]
    \leq \frac{1}{2}+\negl(\secparam).
\end{equation*}
\end{definition}

We finally recall strong anti-piracy security
(\cite{C:CLLZ21,TCC:KitYam25}). A \emph{quantum decryptor} is
a pair $\mathsf{D}=(\rho,U)$, where $\rho$ is a quantum state and $U$ is a
unitary circuit that takes a ciphertext and $\rho$ as input and produces a
classical output. We define the two tests used in the strong definitions.

\paragraph{CPA test.}
For a public key $\pk$, a message pair $(\mes_0,\mes_1)$, and a quantum
decryptor $\mathsf{D}$, let
$\mathcal{P}^{\mathsf{cpa}}_{\mathsf{D}}[\pk;\mes_0,\mes_1]$ denote the binary-outcome POVM corresponding to the following test: sample
$b\samp\bit$, compute
$\ct\samp\sde.\enc(\pk,\mes_b)$, run $\mathsf{D}(\ct)$, and accept if and only
if the output is $b$. The $\gamma$-good CPA test applies
\begin{equation*}
    \TI_{1/2+\gamma}
    (\mathcal{P}^{\mathsf{cpa}}_{\mathsf{D}}[\pk;\mes_0,\mes_1])
\end{equation*}
to the quantum state of $\mathsf{D}$, where for any $\epsilon$ and binary outcome POVM $P$, $\TI_\epsilon(P)$ denotes the threshold implementation (with threshold $\epsilon$) of $P$, as defined in \Cref{sec:threshold-implementations}.

\paragraph{Search test.}
Fix a public key $\pk$ and a quantum decryptor $\mathsf{D}$. Let
$\mathcal{P}^{\mathsf{search}}_{\mathsf{D}}[\pk]$ denote the binary-outcome POVM corresponding to the following test. Sample $\mes\samp\messpa$, compute
$\ct\samp\sde.\enc(\pk,\mes)$, and run $\mathsf{D}(\ct)$. Accept if and only
if the output is $\mes$. The $\gamma$-good search test applies
\begin{equation*}
    \TI_{1/|\messpa|+\gamma}
    (\mathcal{P}^{\mathsf{search}}_{\mathsf{D}}[\pk])
\end{equation*}
to the quantum state of $\mathsf{D}$.

\begin{definition}[Strong CPA Anti-Piracy Security
\cite{C:CLLZ21,TCC:KitYam25}]
\label{def:sde-strong-cpa}
Let $\gamma:\NN\rightarrow[0,1]$, let $\sde$ be an SDE scheme with message
space $\messpa$, and let $\alice$ be an adversary. Consider the following
game.

\paragraph{\underline{$\gam{SDE\mbox{-}STRONG\mbox{-}CPA}_{\sde,\alice}
(1^\secparam,\gamma)$}}
\begin{enumerate}
    \item The challenger samples
    $(\pk,\qsk)\gets\sde.\keygen(1^\secparam)$ and submits $(\pk,\qsk)$ to
    $\alice$.
    \item $\alice$ outputs message pairs
    $(\mes_{\bob,0},\mes_{\bob,1})\in\messpa^2$ and
    $(\mes_{\charlie,0},\mes_{\charlie,1})\in\messpa^2$, and two possibly
    entangled quantum decryptors
    $\mathsf{D}_\bob$ and $\mathsf{D}_\charlie$.
    \item The challenger applies the $\gamma$-good CPA test with respect to
    $(\pk,\mes_{X,0},\mes_{X,1})$ to $\mathsf{D}_X$, for each
    $X\in\{\bob,\charlie\}$. The challenger outputs $1$ if and only if both
    tests pass.
\end{enumerate}

$\sde$ satisfies \emph{strong CPA anti-piracy security} if, for every inverse
polynomial $\gamma$ and every QPT adversary $\alice$,
\begin{equation*}
    \Pr\left[\gam{SDE\mbox{-}STRONG\mbox{-}CPA}_{\sde,\alice}
    (1^\secparam,\gamma(\secparam))=1\right]
    \leq\negl(\secparam).
\end{equation*}
\end{definition}

\begin{definition}[Strong Search Anti-Piracy Security \cite{C:CLLZ21}]
\label{def:sde-strong-search}
Let $\gamma:\NN\rightarrow[0,1]$, let $\sde$ be an SDE scheme with message
space $\messpa$, and let $\alice$ be an adversary. Consider the following
game.

\paragraph{\underline{$\gam{SDE\mbox{-}STRONG\mbox{-}SEARCH}_{\sde,\alice}
(1^\secparam,\gamma)$}}
\begin{enumerate}
    \item The challenger samples
    $(\pk,\qsk)\gets\sde.\keygen(1^\secparam)$ and submits $(\pk,\qsk)$ to
    $\alice$.
    \item $\alice$ outputs two possibly entangled quantum decryptors
    $\mathsf{D}_\bob$ and $\mathsf{D}_\charlie$.
    \item The challenger applies the $\gamma$-good search test with respect
    to $\pk$ to each decryptor. The challenger outputs $1$ if and only if
    both tests pass.
\end{enumerate}

$\sde$ satisfies \emph{strong search anti-piracy security} if, for every
inverse polynomial $\gamma$ and every QPT adversary $\alice$,
\begin{equation*}
    \Pr\left[\gam{SDE\mbox{-}STRONG\mbox{-}SEARCH}_{\sde,\alice}
    (1^\secparam,\gamma(\secparam))=1\right]
    \leq\negl(\secparam).
\end{equation*}
\end{definition}

\subsection{Correlated-Challenge Decision Security}\label{sec:sde-new-defn}
We generalize the prior definition of SDE in two ways. First we allow the challenge bits for $\bob$ and $\charlie$ to be arbitrarily correlated as long as their marginal distributions are uniform. Secondly, we follow a Left-or-Right version of CPA game where for each freeloader $X\in \{\bob,\charlie\}$, there are polynomially many pairs (as opposed to a single pair) of challenge messages that the adversary $\alice$ can specify, and then the challenger provides either the encryptions of the left half of the challenge message pairs or the encryptions of the right half of the challenge message pairs, depending on the challenge bit for $X$.
Next we define an admissible sampler that specifies this allowed correlation between the challenge bits. 

\begin{definition}[Admissible Challenge Sampler]
\label{def:correlated-sde-sampler}
Fix an SDE scheme $\sde$. An admissible challenge sampler $\sampler$ is an
efficient, possibly quantum, algorithm with classical output such that, for
every public key $\pk$ that $\sde.\keygen(1^\secparam)$ can output:
on input $(\pk,\sep)$ it returns $(b_\bob,b_\charlie)$ with uniform
marginals, and on input $(\pk,\idmode)$ it returns one uniform bit $b$.
Equivalently, for every $X\in\{\bob,\charlie\}$ and $\beta\in\bit$,
\begin{align}
    \Pr_{(b_\bob,b_\charlie)\gets\sampler(\pk,\sep)}
        [b_X=\beta]&=\frac12,
    &\Pr_{b\gets\sampler(\pk,\idmode)}[b=\beta]&=\frac12.
\end{align}
The two bits returned in separate mode may otherwise be arbitrarily
correlated. Identical mode has one common bit by syntax.
\end{definition}

\begin{definition}[SDE-CORR Security]
\label{def:sde-corr-game}
\label{defn:sde-new-defn}
Let $\sde$ be an SDE scheme, $\sampler$ an admissible challenge sampler, and
$\adve=(\alice,\bob,\charlie)$ a QPT adversary tuple. Consider the following
game, in which one selector (or selector pair) is used throughout the batch
but the message pairs may vary by coordinate.

\paragraph{\underline{${\corrsde_{\sde,\adve,\sampler}}(1^\secparam)$}}
\begin{enumerate}
    \item The challenger samples $(\pk,\qsk)\gets \sde.\keygen(1^\secparam)$
    and submits $(\pk,\qsk)$ to $\alice$.
    \item $\alice$ outputs $1^N$, for $N\geq1$, a mode
    $\mathsf{mode}\in\{\sep,\idmode\}$, registers
    $(\regbob,\regcharlie)$, and challenge-message pairs:
    \begin{itemize}
        \item if $\mathsf{mode}=\sep$, it outputs $4N$ messages
        \begin{equation*}
          \bigl(\mes_{\bob,i,0},\mes_{\bob,i,1},
          \mes_{\charlie,i,0},\mes_{\charlie,i,1}\bigr)_{i\in[N]}
          \in(\messpa^4)^N;
        \end{equation*}
        \item if $\mathsf{mode}=\idmode$, it outputs $2N$ messages
        \begin{equation*}
          \bigl(\mes_{i,0},\mes_{i,1}\bigr)_{i\in[N]}
          \in(\messpa^2)^N.
        \end{equation*}
    \end{itemize}\label{it:unary-output-Alice}
    \item The challenger sends $\regbob$ to $\bob$ and $\regcharlie$ to
    $\charlie$, and creates the challenge batches as follows.
    \begin{itemize}
        \item If $\mathsf{mode}=\sep$, sample
        $(b_\bob,b_\charlie)\gets\sampler(\pk,\sep)$. For
        every $X\in\{\bob,\charlie\}$ and $i\in[N]$, independently compute
        \begin{equation*}
            \ct_{X,i}\gets\sde.\enc(\pk,\mes_{X,i,b_X}).
        \end{equation*}
        \item If $\mathsf{mode}=\idmode$, sample
        $b\gets\sampler(\pk,\idmode)$ and set
        $b_\bob=b_\charlie:=b$. For every $i\in[N]$, independently compute
        one ciphertext
        \begin{equation*}
            \ct_i\gets\sde.\enc(\pk,\mes_{i,b}),
        \end{equation*}
        and give that same ciphertext to both parties by setting
        $\ct_{\bob,i}=\ct_{\charlie,i}:=\ct_i$.
    \end{itemize}
    \item The challenger sends
    $\overline{\ct}_X:=(\ct_{X,1},\ldots,\ct_{X,N})$ to each
    $X\in\{\bob,\charlie\}$.

    \item After receiving their batches, $\bob$ and $\charlie$ output
    $\widetilde b_\bob$ and $\widetilde b_\charlie$, respectively. The
    challenger outputs $1$ if and only if
    $\widetilde b_\bob=b_\bob$ and $\widetilde b_\charlie=b_\charlie$.
\end{enumerate}

We demand the output of $\alice$ in \Cref{it:unary-output-Alice} to be unary, in order to make $N$ polynomially bounded. Clearly, $N=1$ is the usual CPA security with a single-ciphertext. The scheme $\sde$ is \emph{SDE-CORR secure} if, for every admissible $\sampler$ and every QPT $\adve$,
\begin{equation*}
\Pr\left[
        \corrsde_{\sde,\adve,\sampler}(1^\secparam) = 1
    \right]
    \leq
    \frac{1}{2}+\negl(\secparam).
\end{equation*}
\end{definition}

\clearpage
\subsection{Hierarchy of Definitions}\label{sec:sde-hier-defn}

All statements below concerning search security assume bit-string messages of length $\meslen(\secparam)=\omega(\log\secparam)$. Equivalently, $1/|\messpa|$ is negligible, so uniform guessing succeeds with only negligible probability in the search games.

\Cref{fig:sde-hierarchy} summarizes the implications and separations proved or recalled below. For notions $\mathsf X$ and $\mathsf Y$, the notation $\mathsf X\nRightarrow\mathsf Y$ means that some SDE scheme satisfies $\mathsf X$ but not $\mathsf Y$.

\input{defns/sde-hierarchy-figure}

\clearpage

We begin with the only implication that uses more than one challenge-message pair in SDE-CORR security (\cref{def:sde-corr-game}). The proof estimates a quantum test from independently generated ciphertexts and then invokes the approximate threshold measurement of \cref{lem:gapped-threshold-instrument}.

\paragraph{Preliminaries} We first prove a perturbation bound that we need for the results below. Let $A,B$ be Hermitian, let $\kappa>\ell$, and set $\Pi:=\mathbf 1[A\geq\kappa]$ and $Q:=\mathbf 1[B\leq\ell]$. Then
\begin{equation}
\|Q\Pi\|\leq\frac{\|A-B\|}{\kappa-\ell}.
\label{eq:spectral-cross-projection}
\end{equation}
Here $B_Q$ and $A_\Pi$ denote the restrictions of $B$ and $A$ to the ranges of $Q$ and $\Pi$, respectively. They satisfy $B_Q(Q\Pi)-(Q\Pi)A_\Pi=-Q(A-B)\Pi$, and their spectra are separated by $\kappa-\ell$, which proves the bound. It follows that every positive operator $\rho$ such that $\Tr(\rho)\leq1$ and $\rho=\Pi\rho\Pi$ satisfies $\Tr(Q\rho)\leq(\|A-B\|/(\kappa-\ell))^2\Tr(\rho)$. The same conclusion holds when $\rho$ is entangled with an additional register.

We first record two immediate consequences of the definition.

\begin{lemma}
\label{lem:sde-corr-immediate}
$\mathsf{SDE\mbox{-}CORR}$ security implies $\mathsf{SDE\mbox{-}IND\mbox{-}CPA}$ and $\mathsf{SDE\mbox{-}IDEN\mbox{-}CPA}$ security.
\end{lemma}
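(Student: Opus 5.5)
Both implications follow by direct embedding: an adversary against the weaker game is repackaged as an adversary against $\corrsde$ that uses a single challenge-message pair and a suitable admissible sampler. The two experiments are then identical in distribution, so the winning probabilities coincide.

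\emph{Independent-challenge CPA.} Let $\adve=(\alice,\bob,\charlie)$ be a QPT adversary for $\gam{SDE\mbox{-}IND\mbox{-}CPA}$. Define $\adve'=(\alice',\bob,\charlie)$ as follows. $\alice'$ runs $\alice$ on $(\pk,\qsk)$ to obtain $(\mes_{\bob,0},\mes_{\bob,1}),(\mes_{\charlie,0},\mes_{\charlie,1})$ and $(\regbob,\regcharlie)$. It then outputs $1^1$, mode $\sep$, the same registers, and the $4$ messages with $N=1$.

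Take $\sampler(\pk,\sep)$ to output two independent uniform bits, and $\sampler(\pk,\idmode)$ to output one uniform bit. This sampler is admissible under \cref{def:correlated-sde-sampler}, since each marginal is uniform.

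With this choice, $\corrsde_{\sde,\adve',\sampler}$ samples $b_\bob,b_\charlie$ independently and uniformly. It computes $\ct_{X,1}\gets\sde.\enc(\pk,\mes_{X,1,b_X})$ in independent executions and hands $\ct_{X,1}$ to $X$. It accepts iff $\widetilde b_X=b_X$ for both $X\in\{\bob,\charlie\}$. This is verbatim the experiment of \cref{def:sde-ind-cpa}. The joint distribution of everything the adversaries see, and of the win event, is therefore the same in both games, so
\[
\Pr[\gam{SDE\mbox{-}IND\mbox{-}CPA}_{\sde,\adve}=1]=\Pr[\corrsde_{\sde,\adve',\sampler}=1]\leq\tfrac12+\negl(\secparam).
\]

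\emph{Identical-challenge CPA.} The same construction works in identical mode. $\alice'$ runs $\alice$, obtains $(\mes_0,\mes_1)$ and the registers, and outputs $1^1$, mode $\idmode$, and the single pair. The same $\sampler$ yields one uniform bit $b$. The challenger computes one ciphertext $\ct_1\gets\sde.\enc(\pk,\mes_{1,b})$ and gives that exact ciphertext to both parties. It wins iff both output $b$. This matches \cref{def:sde-iden-cpa} exactly.

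No step is a real obstacle. The only things to check are bookkeeping points:
\begin{itemize}
\item $\alice'$ is QPT and its unary output is of constant length.
\item The chosen sampler satisfies the uniform-marginal condition.
\item The ciphertext-generation steps coincide, with independent encryptions in separate mode and a literally copied ciphertext in identical mode.
\end{itemize}
Since the reduction is perfect, the negligible bound transfers with no loss.
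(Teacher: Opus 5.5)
Your proof is correct and is the same argument as the paper's: set $N=1$, use separate mode with an admissible sampler that outputs independent uniform bits to recover the IND-CPA experiment, and use identical mode (one uniform bit, one copied ciphertext) to recover the IDEN-CPA experiment exactly. Your bookkeeping (unary output of length one, the sampler satisfying the uniform-marginal condition, independent encryptions versus a copied ciphertext) covers the only points that need checking.
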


\begin{proof}
Set $N=1$ in the SDE-CORR game of \cref{def:sde-corr-game}. In separate mode, choosing independent uniform bits $b_\bob,b_\charlie$ gives exactly the independent-challenge CPA experiment of \cref{def:sde-ind-cpa}. In identical mode, the admissible distribution returns one uniform bit and the challenger gives a copy of the resulting ciphertext to each party, which is exactly the identical-challenge CPA experiment of \cref{def:sde-iden-cpa}.
\end{proof}

We also recall the implications from prior works. 

\begin{theorem}[Prior relations \cite{TCC:KitYam25,C:AnaKalLiu23}]
\label{thm:sde-prior-relations}
The following relations hold.
\begin{enumerate}
\item $\mathsf{SDE\mbox{-}CPA}^{+}$ and $\mathsf{SDE\mbox{-}STRONG\mbox{-}CPA}$ security are equivalent.
\item $\mathsf{SDE\mbox{-}CPA}^{+}$ security implies $\mathsf{SDE\mbox{-}IND\mbox{-}CPA}$ and $\mathsf{SDE\mbox{-}IND\mbox{-}SEARCH}$ security.
\item $\mathsf{SDE\mbox{-}STRONG\mbox{-}SEARCH}$ and $\mathsf{SDE\mbox{-}IND\mbox{-}SEARCH}$ security are equivalent.
\item $\mathsf{SDE\mbox{-}IND\mbox{-}SEARCH}$ security implies $\mathsf{SDE\mbox{-}IDEN\mbox{-}SEARCH}$ security.
\end{enumerate}
\end{theorem}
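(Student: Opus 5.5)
Since this theorem collects relations already established in the literature, the plan is mostly to point to the sources. For each item I also sketch the reduction, so that the hierarchy figure is self-contained.

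\textbf{Item 1.} The equivalence of $\mathsf{SDE\mbox{-}CPA}^{+}$ and $\mathsf{SDE\mbox{-}STRONG\mbox{-}CPA}$ is cited from \cite{TCC:KitYam25}.
\begin{itemize}
\item For strong $\Rightarrow$ $\mathrm{CPA}^+$: apply the projective implementations of the two CPA tests of \cref{def:sde-strong-cpa} to the split state.
\item Write the $\mathrm{CPA}^+$ success probability as $\frac12+\frac12\,\mathbb{E}\bigl[(2p_\bob-1)(2p_\charlie-1)\bigr]$, where $p_X$ are the measured eigenvalues.
\item An advantage $\gamma$ in this expression forces both $p_X\geq\frac12+\gamma/2$ with probability $\Omega(\gamma)$.
\item This is then turned into an efficient test using \cref{lem:gapped-threshold-instrument}.
\item The converse direction is the standard fact that a $\gamma$-good decryptor pair passes both tests and hence wins with advantage $\Omega(\gamma^2)$.
\end{itemize}

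\textbf{Item 2.} The implication $\mathrm{CPA}^+\Rightarrow\mathsf{SDE\mbox{-}IND\mbox{-}CPA}$ is immediate. The event $\widetilde b_\bob=b_\bob\wedge\widetilde b_\charlie=b_\charlie$ is contained in the event $\widetilde b_\bob\oplus\widetilde b_\charlie=b_\bob\oplus b_\charlie$. Hence its probability is at most $\frac12+\negl$.

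For $\mathrm{CPA}^+\Rightarrow\mathsf{SDE\mbox{-}IND\mbox{-}SEARCH}$, take a search adversary with joint success $p$.
\begin{itemize}
\item $\alice$ samples uniform pairs $(\mes_{X,0},\mes_{X,1})$ for each $X$.
\item Each recipient decrypts its ciphertext to some $\widetilde\mes_X$. It outputs $\beta$ if $\widetilde\mes_X=\mes_{X,\beta}$, and otherwise outputs a fresh local coin.
\item The other message $\mes_{X,1-b_X}$ is uniform and independent of the recipient's view. So the event that a wrong bit is output deterministically has probability at most $2/|\messpa|=\negl$.
\item Whenever at least one side uses a fresh coin, the XOR is uniform.
\item Hence the $\mathrm{CPA}^+$ success is at least $\frac12+\frac p2-\negl$, which forces $p$ to be negligible.
\end{itemize}

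\textbf{Item 3.} The equivalence of strong search and independent search is taken from \cite{C:AnaKalLiu23,TCC:KitYam25}.
\begin{itemize}
\item Apply the projective implementations $\projimp(\mathcal P^{\mathsf{search}}_{\mathsf D_X}[\pk])$. Because the two ciphertexts are independent, the independent-search success equals $\mathbb E[p_\bob p_\charlie]$ over the joint eigenvalue distribution.
\item If $\mathbb E[p_\bob p_\charlie]\geq\epsilon$, then with probability at least $\epsilon/2$ both $p_X\geq\epsilon/2$ hold. Approximate threshold instruments (\cref{lem:gapped-threshold-instrument}) make this efficient, which refutes strong search.
\item Conversely, passing both threshold tests yields success $\geq\gamma^2$ in the independent game.
\end{itemize}

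\textbf{Item 4 (main obstacle).} Here $\mathsf{SDE\mbox{-}IND\mbox{-}SEARCH}\Rightarrow\mathsf{SDE\mbox{-}IDEN\mbox{-}SEARCH}$. The difficulty is that the identical game correlates the two POVMs through the common ciphertext, so success is $\mathbb E_{\ct}\Tr[(B_{\ct}\otimes C_{\ct})\rho]$. This cannot be directly compared with $\Tr[(\overline B\otimes\overline C)\rho]$, where $\overline B,\overline C$ are the ciphertext-averaged POVMs.

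I would route this through item 3 and follow \cite{TCC:KitYam25}. If the identical success is $p$, then in particular each marginal success is at least $p$. The plan is then:
\begin{enumerate}
\item Measure $\bob$'s threshold projector $\TI_{p/2}$ first; this succeeds with probability $\Omega(p)$.
\item Argue that conditioned on success, $\charlie$'s post-measurement register still succeeds on a fresh ciphertext with noticeable probability.
\end{enumerate}
Step 2 is where identical success matters. A $\charlie$ that fails on fresh ciphertexts whenever $\bob$ passes would make the joint identical-challenge success on $\bob$'s high-probability branch too small. This is quantified via Jordan decomposition of $\bob$'s projectors and the uniformity of the message space, so that guessing is negligible. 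The result is both strong-search tests passing with inverse-polynomial threshold, contradicting strong search and hence independent search.

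This step is the one I expect to require the most care. For it I would rely on the argument in the cited work rather than reprove it.
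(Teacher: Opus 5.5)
Deferring to \cite{TCC:KitYam25,C:AnaKalLiu23} is what the paper does, since its proof of this theorem is purely by citation, and your sketches for items 2 and 3 are correct. The sketches for items 1 and 4, however, do not work as written.

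\textbf{Item 1.} A $\mathrm{CPA}^{+}$ advantage $\gamma$ only gives, with probability $\Omega(\gamma)$, that both eigenvalues are $\Omega(\gamma)$-far from $1/2$ \emph{on the same side}. It does not give that both are above $1/2$. The $(-,-)$ sector must be converted by letting $\alice$ flip both decryptors according to a shared fair coin, which loses a factor $2$; this is the sign issue noted in \cref{rem}.

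You have also placed the threshold instrument in the wrong direction. Going from a $\mathrm{CPA}^{+}$ attack to a strong-CPA attack needs no efficient test, because the strong-CPA challenger applies the exact $\TI$. The converse is where it is needed. A pair passing both tests with probability $\varepsilon$ can have no $\mathrm{CPA}^{+}$ advantage at all, because the rejected part of the state may carry negative correlation. For example, take a weight-$(1-\varepsilon)$ component in which one party is always right and the other always wrong. So the $\mathrm{CPA}^{+}$ splitter must itself run the gapped instruments of \cref{lem:gapped-threshold-instrument} before splitting, and fall back to a $1/2$-strategy on rejection, as in the proof of \cref{thm:sde-corr-implies-strong-cpa}.

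\textbf{Item 4.} Your step 2 carries all the content, and your heuristic does not establish it. Bob's threshold projector is a spectral projector of the averaged effect, so it does not commute with the individual effects $B_{\ct}$. Hence the identical-challenge success does not split along Bob's outcome, and nothing in your argument lower-bounds $\Tr[(\Pi_{\bob}\otimes\overline C)\rho]$.

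What is missing is a direct comparison of the correlated success operator with the product one. The paper attributes this step to the product-to-correlated comparison of \cite{C:AnaKalLiu23}. Such a comparison makes thresholds and the detour through item 3 unnecessary. One way to get it is as follows.

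Let $B_x,C_x$ be the two success effects for $x=(\mes,r)$ sampled with probability $p_x$. Set $\overline B:=\sum_xp_xB_x$ and $\overline C:=\sum_xp_xC_x$. Then $p_{\mathrm{iden}}=\sum_xp_x\Tr[(B_x\otimes C_x)\rho]$ and $p_{\mathrm{ind}}=\Tr[(\overline B\otimes\overline C)\rho]$.

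Each $p_xB_x\otimes C_x$ is the operator geometric mean of the commuting operators $p_xB_x^2\otimes I$ and $I\otimes p_xC_x^2$. Use superadditivity and monotonicity of the geometric mean (equivalently, Ando's joint concavity of $(A,B)\mapsto A^{1/2}\otimes B^{1/2}$), together with $B_x^2\preceq B_x$ and $C_x^2\preceq C_x$. This gives $\sum_xp_xB_x\otimes C_x\preceq\overline B^{1/2}\otimes\overline C^{1/2}$.

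Apply Jensen's inequality first to $\rho$, and then to the average over $\pk$ and $\alice$'s classical output. This yields $p_{\mathrm{iden}}\leq\sqrt{p_{\mathrm{ind}}}$, which proves item 4 directly for superlogarithmic message length.
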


\begin{proof}
The first three statements follow from \cite[Theorem~6.24, Corollary~6.32, Remark~6.19]{TCC:KitYam25}. The last is the independent-to-identical search implication from \cite[Section~6.1]{TCC:KitYam25}, which uses the product-to-correlated comparison of Ananth, Kaleoglu, and Liu~\cite{C:AnaKalLiu23}.
\end{proof}

The above-mentioned relations show that strong CPA implies strong search. For superlogarithmic message length, this implication is strict, in the sense that not every strong search secure SDE scheme is strong CPA secure.

\begin{theorem}[Strong CPA strictly implies strong search]
\label{thm:sde-strong-cpa-strictly-implies-strong-search}
For superlogarithmic message length, strong-CPA security implies strong-search security. Assuming post-quantum-secure indistinguishability obfuscation (iO; \cref{predef:io}) and post-quantum-secure one-way functions, the converse does not hold. More generally, every strong-search-secure SDE scheme over a message space $\mathcal M_0$ satisfying $1/|\mathcal M_0|=\negl(\lambda)$ yields a strong-search-secure scheme that is not strong-CPA secure.
\end{theorem}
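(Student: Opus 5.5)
The forward implication follows from results already recorded; the separation needs a transformation of an arbitrary strong-search-secure scheme.

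\textbf{Forward direction.} By \cref{thm:sde-prior-relations}, strong CPA is equivalent to $\mathsf{SDE\mbox{-}CPA}^{+}$. This in turn implies $\mathsf{SDE\mbox{-}IND\mbox{-}SEARCH}$, which is equivalent to strong search. Chaining items 1--3 of that theorem therefore gives the forward direction, using superlogarithmic message length so that $1/|\messpa|$ is negligible.

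\textbf{Separation, construction.} Start from a strong-search-secure scheme $\sde_0$ over $\mathcal M_0$. I would build $\sde$ over message space $\messpa:=\zo\times\mathcal M_0$, in which the first bit of every message is leaked in the clear. Concretely:
\[
\sde.\enc(\pk,(\beta,m)) := (\beta,\sde_0.\enc(\pk,m)).
\]
Key generation and decryption are inherited, with decryption reading off $\beta$ directly. Correctness is immediate, and $1/|\messpa|=1/(2|\mathcal M_0|)$ is negligible.

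\textbf{Separation, CPA attack.} Strong CPA fails as follows. $\alice$ chooses $\mes_{X,0}=(0,m)$ and $\mes_{X,1}=(1,m)$ for both $X$. She hands each party a decryptor that outputs the first bit of the ciphertext, keeping the real key or discarding it. Both decryptors win the CPA test with probability $1$, so the threshold test at $1/2+\gamma$ passes with probability $1$.

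\textbf{Separation, search security.} This is the main step. Given a strong-search adversary $\alice$ against $\sde$, with decryptors $\mathsf D_\bob,\mathsf D_\charlie$, I would build $\alice'$ against $\sde_0$. Each decryptor $\mathsf D'_X$ wraps $\mathsf D_X$: on input $\ct_0$ it samples $\beta\samp\zo$, runs $\mathsf D_X$ on $(\beta,\ct_0)$, and outputs the $\mathcal M_0$-component.

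Since a uniform $(\beta,m)\in\messpa$ has $\beta$ uniform and independent of $m$, the POVMs satisfy
\[
P^{\mathsf{search}}_{\mathsf D_X}\preceq P^{\mathsf{search}}_{\mathsf D'_X}.
\]
Success for $\sde$ requires both components correct, while $\mathsf D'_X$ needs only the second. Threshold implementations are monotone in this sense. Hence if $\mathsf D_X$ passes $\TI_{1/|\messpa|+\gamma}$, we want $\mathsf D'_X$ to pass $\TI_{1/|\mathcal M_0|+\gamma/2}$, which is possible since $1/|\mathcal M_0|$ is negligible and can be absorbed.

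The obstacle is that operator inequality $P\preceq P'$ does not give projector containment for $\mathbf 1[P\geq\tau]$ and $\mathbf 1[P'\geq\tau']$ when $P,P'$ do not commute. The joint probability of both tests passing therefore cannot be compared directly.

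I would handle this in one of two ways. The first option is to pass through $\mathsf{SDE\mbox{-}IND\mbox{-}SEARCH}$, which is equivalent to strong search by \cref{thm:sde-prior-relations}. There the reduction is a plain game reduction: sample $\beta_\bob,\beta_\charlie$ independently, prepend them to the challenge ciphertexts, and forward the answers. Winning for $\sde$ implies winning for $\sde_0$, with the $1/|\messpa|$ versus $1/|\mathcal M_0|$ baselines both negligible. The second option is to use the perturbation bound \cref{eq:spectral-cross-projection} together with the gapped instrument \cref{lem:gapped-threshold-instrument}.

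I would take the first route. It avoids threshold technicalities entirely and uses only the stated equivalence. The iO and one-way-function assumptions are needed only to have some strong-search-secure $\sde_0$ to start from, for instance the construction of Kitagawa and Yamakawa.
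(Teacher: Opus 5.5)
Your proposal is correct. The forward direction, the construction that puts the first bit in the clear, and the strong-CPA attack all coincide with the paper's.

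The difference is in how you prove strong-search security of the new scheme. You route through $\mathsf{SDE\mbox{-}IND\mbox{-}SEARCH}$ and invoke the cited equivalence in both directions: strong search $\Rightarrow$ IND-SEARCH for $\sde_0$, a plain game reduction in the middle, and then IND-SEARCH $\Rightarrow$ strong search for the new scheme.

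The paper instead stays with the threshold tests. It sidesteps the non-commutativity obstacle you identified by a small change to the wrapper. After running the given decryptor on $(\beta,\ct_0)$ with a locally sampled $\beta$, it returns the $\mathcal M_0$-component only if the decoded first component equals $\beta$, and returns $\bot$ otherwise. The two search-success operators are then exactly equal, not merely ordered. So with $K=|\mathcal M_0|$ and $\lambda$ large, one only needs $\mathbf 1[P\geq 1/(2K)+\gamma]\preceq\mathbf 1[P\geq 1/K+\gamma/2]$ for a single operator $P$. These nested projectors tensor correctly for entangled pairs of decryptors, with no gapped-threshold or perturbation argument.

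Your route avoids threshold reasoning altogether. However, it leans on the nontrivial IND-SEARCH $\Rightarrow$ strong-search direction of the cited result applying to the transformed scheme. The paper's argument is self-contained for the separation and uses the cited relations only for the forward implication.
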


\begin{proof}
By \cref{thm:sde-prior-relations}, strong CPA is equivalent to CPA$^{+}$, CPA$^{+}$ implies IND-SEARCH, and IND-SEARCH is equivalent to strong search.

For strictness, let $\Pi_0$ be strong-search secure over $\mathcal M_0$. Define $\Pi_{\mathsf{clr}}$ over $\bit\times\mathcal M_0$ using the same key generation. Its encryption of $(a,m)$ is $(a,\Pi_0.\enc(\pk,m))$, and decryption of $(a,\ct)$ returns $(a,\Pi_0.\dec(\qsk,\ct))$.

Fix a decryptor $\mathsf D'$ for $\Pi_{\mathsf{clr}}$. Construct a decryptor $\mathsf D$ for $\Pi_0$ as follows. On input $\ct$, it prepares a uniform bit $a$, runs $\mathsf D'(a,\ct)$ reversibly, and measures only the final output. It returns $m'$ if the result is $(a,m')$, and returns a fixed symbol $\bot\notin\mathcal M_0$ otherwise. The two decryptors have exactly the same search-success operator. Let $K:=|\mathcal M_0|$. Because $1/K$ is negligible, for every inverse-polynomial $\gamma$ and all sufficiently large $\lambda$, we have $1/(2K)+\gamma\geq 1/K+\gamma/2$. Hence, every state that reaches the strong-search threshold for $\Pi_{\mathsf{clr}}$ with gap $\gamma$ also reaches the threshold for $\Pi_0$ with gap $\gamma/2$; see \cref{def:sde-strong-search}. Applying this conversion separately to the two decryptors, including when their states are entangled, transforms any strong-search attack on $\Pi_{\mathsf{clr}}$ into one on $\Pi_0$ without decreasing the probability that both threshold tests accept.

To violate strong CPA, sample $m^\star\getsr\mathcal M_0$ and give each receiver the pair $((0,m^\star),(1,m^\star))$. Each decryptor reads the first ciphertext component and determines the challenge bit perfectly without using the quantum key. Thus, both CPA acceptance operators are the identity, and both threshold tests accept with probability one, for example when $\gamma(\lambda)=1/\lambda$ and $\lambda$ is sufficiently large. Therefore, $\Pi_{\mathsf{clr}}$ is not strong-CPA secure. Under post-quantum-secure iO and one-way functions, the SDE-CORR-secure scheme from \cref{thm:sde-security} supplies the required base scheme.
\end{proof}

Next we show that our notion of CORR-SDE security notion implies strong CPA security.
\begin{theorem}[Correlated challenges imply strong CPA]
\label{thm:sde-corr-implies-strong-cpa}
Every $\mathsf{SDE\mbox{-}CORR}$-secure scheme satisfies $\mathsf{SDE\mbox{-}STRONG\mbox{-}CPA}$ security.
\end{theorem}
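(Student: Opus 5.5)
Assume $\alice$ breaks strong CPA with inverse-polynomial $\gamma$: with non-negligible probability $p$, both $\gamma$-good CPA tests on $\mathsf D_\bob,\mathsf D_\charlie$ accept. We turn $\alice$ into an adversary against the SDE-CORR game of \cref{def:sde-corr-game} in separate mode, with independent uniform bits $(b_\bob,b_\charlie)$ (which is admissible) and $N=M+1$. Here $M=\poly(\secparam)$ is chosen so that empirical estimates over $M$ fresh samples are accurate to within $\gamma/8$ except with tiny probability. The new $\alice'$ runs $\alice$ and gets the pairs $(\mes_{X,0},\mes_{X,1})$ and decryptors. For every coordinate $i\in[N]$ and each $X$, she outputs the same pair $(\mes_{X,0},\mes_{X,1})$ for $X$. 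This choice is the reason multiple pairs are needed: each party receives $N$ independent encryptions of $\mes_{X,b_X}$ under one selector bit.

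Each freeloader $X$ now holds $\mathsf D_X$ and $\overline\ct_X$. It cannot run the threshold test itself, since it does not know $b_X$. Instead, it uses the fact that CPA success decomposes as $\tfrac12(P_{X,0}+P_{X,1})$, where $P_{X,\beta}$ is the operator ``on a fresh encryption of $\mes_{X,\beta}$, output $\beta$''. Note that $\pk$ and the messages are known, so $X$ can generate fresh encryptions itself. The plan for each freeloader is as follows.
\begin{enumerate}
\item Apply the gapped threshold instrument of \cref{lem:gapped-threshold-instrument} for the CPA POVM with $c=1/2+\gamma$, $s=1/2+\gamma/2$, using self-generated ciphertexts. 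Simultaneous acceptance on both sides occurs with probability at least about $p-2\eta$. The post-state lies, up to $3\sqrt\eta$, in the span where $\tfrac12(P_{X,0}+P_{X,1})>1/2+\gamma/2$.
\item Run $\mathsf D_X$ on the challenge ciphertext $\ct_{X,N}$, and output its answer.
\end{enumerate}

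The challenge is that, conditioned on $b_X$, the success probability is $\langle P_{X,b_X}\rangle$ rather than the average, and this could be below $1/2$ for one bit value. The $M$ extra challenge ciphertexts fix this, since they reveal information about $b_X$ through $\mathsf D_X$ itself. So I would refine the strategy: use the first $M$ challenge ciphertexts to estimate how often $\mathsf D_X$ outputs $0$ versus $1$ on encryptions of $\mes_{X,b_X}$, and compare this with self-generated encryptions of $\mes_{X,0}$ and $\mes_{X,1}$. By the perturbation bound \eqref{eq:spectral-cross-projection} combined with gentle measurement (\cref{prelem:gentlemes}), these estimation measurements disturb the projected state only slightly.

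A cleaner route is to test $P_{X,0}$ and $P_{X,1}$ separately. The post-threshold state has $\tfrac12(\langle P_{X,0}\rangle+\langle P_{X,1}\rangle)>1/2+\gamma/2$, so a decryptor that outputs the estimated majority bias distinguishes the two cases with advantage about $\gamma/2$. A single-shot advantage is not enough, however: we need each freeloader's guess to be correct with probability close to $1$ on the accepted branch. The parties obtain this by amplification. Each repeatedly runs the approximate projective measurement of ``output-$b$ rate on the challenge ciphertexts'' and compares it with the self-generated rates for $\beta=0,1$. The threshold post-state guarantees these two rates are separated by about $\gamma$, which makes decoding $b_X$ essentially deterministic.

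The success probability is then about $p\cdot(1-\negl)$ on the accepted branch plus $\tfrac14$ on the rest, which exceeds $1/2$ noticeably when $p$ is non-negligible. This assumes the rest contributes at least random guessing. That holds if each party guesses uniformly when its own threshold rejects, and the guesses remain independent and uniform. The contradiction gives the theorem.

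The main obstacle is the last step. We must argue, with entangled registers, that sequential threshold measurements and repeated estimation on one side do not spoil the other side's guarantee. We must also handle that the estimation uses the actual challenge ciphertexts, whose messages are correlated with the secret bit. I would handle this by working with commuting local operators, $\mathsf A_\bob\otimes I$ and $I\otimes\mathsf A_\charlie$, and bounding the joint error via \cref{eq:gati-gentle-poststate} on each side in turn.
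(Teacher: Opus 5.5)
There is a genuine gap in your final accounting, and it sits exactly where the correlated-challenge feature of SDE-CORR has to be used. You fix independent uniform bits $(b_\bob,b_\charlie)$. On any branch where the filters leave a party without information, that party's bit is independent of its view, so independent bits give joint success $\frac14$ there, not $\frac12$. Your total is then $p+\frac14(1-p)=\frac14+\frac34p$, which exceeds $\frac12$ only if $p>\frac13$; a strong-CPA break only guarantees a non-negligible $p$, say $1/\secparam$. Crediting the branch where exactly one filter accepts (success about $\frac12$) does not rescue it: the advantage becomes about $\frac p2-\frac{q_0}{4}$, where $q_0$ is the probability that both filters reject, and this can be negative. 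Nor can you fall back on handing the key to one party, because $\alice$ has already consumed $\qsk$.

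The fix, as in the paper, has two parts. First, choose the admissible separate-mode sampler with $b_\bob=b_\charlie=b$. Second, give both parties a shared uniform bit $g$, which they output whenever the filters reject; the paper has $\alice$ run all filters before the split and broadcast the outcomes. Then every non-accepting branch wins with probability exactly $\Pr[g=b]=\frac12$, even when one party decodes $b$ and the other outputs $g$. The advantage is then about half the probability that all filters accept and both decodings succeed.

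Two further steps need repair, though they are technical.

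\emph{The rate-separation claim.} You claim the first threshold post-state makes the challenge and self-generated output rates differ by about $\gamma$. This is correct only when $\mathsf D_X$ always outputs a bit. In general, the bias operator $H_X=(Z_{X,0}-Z_{X,1})/4$ only satisfies $H_X\succeq P_X-I/2$, as in \eqref{eq:sde-corr-cpa-effect}, and it does not commute with $P_X$. The paper therefore applies a second gapped threshold filter to $I/2+H_X$. This filter accepts with probability only $\Omega(\gamma^2)$ relative to the first, and that loss is affordable only because rejection falls back to the $\frac12$ baseline.

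\emph{The estimation step.} Your ``estimate and compare'' step mixes non-commuting measurements. Also, scalar accuracy of empirical rates does not give the operator-norm bound that \eqref{eq:spectral-cross-projection} requires. The paper instead folds everything into one observable $\widehat H_{X,b}$: the challenge-ciphertext average of $Z_X$ minus the fresh-encryption average, implemented as a single coin-flip binary test. Matrix concentration gives $\|\widehat H_{X,b}-(-1)^bH_X\|\le\gamma/100$ with overwhelming probability. A single gapped threshold measurement of $I/2+\widehat H_{X,b}/2$ then decodes $b$ with high probability on the filtered state.
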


\begin{proof}
We prove the contrapositive. Suppose that there are inverse polynomials $\gamma$ and $\varepsilon$ and a QPT algorithm $\alice$, which outputs the two decryptors in the strong-CPA experiment of \cref{def:sde-strong-cpa}, such that the experiment accepts with probability at least $\varepsilon$ for infinitely many security parameters. We construct an adversary for the SDE-CORR game of \cref{def:sde-corr-game} with non-negligible advantage on this infinite set. Set
\begin{equation}
\eta:=\left(\frac{\varepsilon\gamma^2}{10^6}\right)^2.
\label{eq:sde-corr-ati-error}
\end{equation}
Because $\eta$ is inverse polynomial, every invocation of \cref{lem:gapped-threshold-instrument} below runs in polynomial time. The constructed adversary runs $\alice$ internally and controls all subsequent operations by its classical output; we suppress the corresponding classical registers.

\paragraph{Local tests.}
For a binary quantum test, its acceptance operator is the POVM element $M$ for which the acceptance probability on state $\rho$ is $\Tr(M\rho)$. For $X\in\{\bob,\charlie\}$, let $\mathsf D_X$ be the decryptor produced by $\alice$. Given a ciphertext $\ct$, let $D_{X,y}(\ct)$ be the measurement operator corresponding to the exact output $y\in\bit$, and define $Z_X(\ct):=D_{X,0}(\ct)-D_{X,1}(\ct)$. For $j\in\bit$, let $Z_{X,j}:=\E_{\ct\gets\sde.\enc(\pk,\mes_{X,j})}[Z_X(\ct)]$, and set $H_X:=(Z_{X,0}-Z_{X,1})/4$.

Let $P_X$ be the acceptance operator of the CPA test in \cref{def:sde-strong-cpa}. Let $D_{X,j,y}$ be the average of $D_{X,y}(\ct)$ over $\ct\gets\sde.\enc(\pk,\mes_{X,j})$, and set $R_{X,j}:=I-D_{X,j,0}-D_{X,j,1}$. Thus, $R_{X,j}$ is the average measurement operator for outputs outside $\bit$. Direct calculation gives
\begin{equation}
H_X-\left(P_X-\frac I2\right)=\frac{R_{X,0}+R_{X,1}}4\succeq0.
\label{eq:sde-corr-cpa-effect}
\end{equation}
Moreover, $\|H_X\|\leq1/2$, so $F_X:=I/2+H_X$ is a valid acceptance operator. It corresponds to the following binary version of the CPA test: choose $j\getsr\bit$, encrypt $\mes_{X,j}$, and run $\mathsf D_X$; accept if the output is $j$, reject if it is $1-j$, and accept with probability $1/2$ on every other output. For $j=0$ and $j=1$, the acceptance operators are $I/2+Z_{X,0}/2$ and $I/2-Z_{X,1}/2$, respectively, and their average is $F_X$.

\paragraph{Repeating the challenge pair.}
In the CORR attack, $\alice$ uses the same strong-CPA message pair in every coordinate: $\mes_{X,i,j}:=\mes_{X,j}$ for every $i\in[N]$ and $j\in\bit$. If the common challenge bit is $b$, party $X$ receives $\overline\ct_X=(\ct_{X,1},\ldots,\ct_{X,N})$, where each $\ct_{X,i}$ encrypts $\mes_{X,b}$. Define
\begin{equation}
\widehat H_{X,b}:=\frac12\left(\frac1N\sum_{i=1}^N Z_X(\ct_{X,i})-\frac{Z_{X,0}+Z_{X,1}}2\right).
\label{eq:sde-corr-empirical-observable}
\end{equation}
The second term is implemented by an independent execution of public-key encryption, and $\E[\widehat H_{X,b}]=(-1)^bH_X$. Define the following acceptance operator for a test on the $N$ received ciphertexts:
\begin{equation}
G_X(\overline\ct_X):=\frac I2+\frac{\widehat H_{X,b}}2.
\label{eq:sde-corr-batch-effect}
\end{equation}
Although $b$ appears in the notation because it determines the distribution of the received ciphertexts, the test does not use $b$. To implement it, flip a fair coin. On heads, choose $i\getsr[N]$, run $\mathsf D_X$ on $\ct_{X,i}$, accept on output $0$, reject on output $1$, and accept with probability $1/2$ on every other output. The acceptance operator in this case is $(I+S_{X,b})/2$, where $S_{X,b}:=N^{-1}\sum_i Z_X(\ct_{X,i})$. On tails, choose $j\getsr\bit$, independently encrypt $\mes_{X,j}$, run $\mathsf D_X$, and reverse the acceptance rule. The average acceptance operator in this case is $(I-\overline Z_X)/2$, where $\overline Z_X:=(Z_{X,0}+Z_{X,1})/2$. Averaging the two cases gives $I/2+(S_{X,b}-\overline Z_X)/4=G_X(\overline\ct_X)$. Thus, $G_X(\overline\ct_X)$ defines an efficient binary test, and $\E[G_X(\overline\ct_X)]=I/2+(-1)^bH_X/2$.

We may assume that $\gamma\leq1/2$, because otherwise the strong-CPA threshold exceeds one and the premise is vacuous. Let $q_X$ be the number of qubits in the state and initialized workspace of $\mathsf D_X$. Applying Tropp's matrix concentration bound to the independent Hermitian operators above gives~\cite{Tropp12}
\begin{equation}
\Pr\!\left[\|\widehat H_{X,b}-(-1)^bH_X\|>\frac\gamma{100}\right]\leq 2^{q_X+1}\exp\!\left(-\frac{N\gamma^2}{25000}\right).
\label{eq:sde-corr-matrix-concentration}
\end{equation}
Choose the following common power-of-two number of challenge pairs:
\begin{equation}
N:=2^{\left\lceil\log_2\!\left(10^5(\max_Xq_X+\secparam)/\gamma^2\right)\right\rceil}.
\label{eq:sde-corr-batch-size}
\end{equation}
The probability that \cref{eq:sde-corr-matrix-concentration} fails for either party is then negligible.

The remainder of the proof applies the approximate threshold measurement of \cref{lem:gapped-threshold-instrument}, with error $\eta$, to the efficient tests with acceptance operators $P_X$, $F_X$, and $G_X(\overline\ct_X)$. Each test has a reversible implementation using fresh randomness, the public encryption circuit, and the decryptor circuit.

\paragraph{First threshold measurement.}
Let $\rho$ be the joint decryptor state, define $\Pi_X^+:=\mathbf 1[P_X\geq1/2+\gamma]$, and set $p_{\mathrm{str}}:=\Tr[(\Pi_\bob^+\otimes\Pi_\charlie^+)\rho]$. Averaging over the classical output of $\alice$, we have $\E[p_{\mathrm{str}}]\geq\varepsilon$.

For each party, apply the threshold measurement of \cref{lem:gapped-threshold-instrument} to $P_X$, with completeness threshold $1/2+\gamma$ and soundness threshold $1/2+3\gamma/4$. Let $\mathcal A_X^P$ be the map corresponding to acceptance, let $\sigma_P:=(\mathcal A_\bob^P\otimes\mathcal A_\charlie^P)(\rho)$, and let $p_P:=\Tr(\sigma_P)$. Completeness remains valid when the measured register is entangled with the other party; applying it successively to the two parties gives
\begin{equation}
p_P\geq p_{\mathrm{str}}-2\eta.
\label{eq:sde-corr-first-filter-mass}
\end{equation}
Set $\Pi_X:=\mathbf 1[P_X>1/2+3\gamma/4]$, $\Pi:=\Pi_\bob\otimes\Pi_\charlie$, and $\overline\sigma_P:=\Pi\sigma_P\Pi$. The post-measurement guarantee of \cref{lem:gapped-threshold-instrument} places at most $\eta$ probability outside each local projector. Therefore, $\Tr(\overline\sigma_P)\geq p_P-2\eta$, and the gentle measurement bound from \cref{sec:gentle-measurement} gives $\|\sigma_P-\overline\sigma_P\|_1\leq3\sqrt{2\eta}$.

\paragraph{Second threshold measurement.}
We now pass from states supported where $P_X>1/2+3\gamma/4$ to states with non-negligible support where $H_X\geq\gamma/2$. Let $\Omega_X:=\mathbf 1[H_X\geq\gamma/2]$. By \cref{eq:sde-corr-cpa-effect}, $\Pi_XH_X\Pi_X\succeq(3\gamma/4)\Pi_X$. Since $\|H_X\|\leq1/2$, we also have $\Pi_XH_X\Pi_X\preceq(\gamma/2)\Pi_X+(1/2-\gamma/2)\Pi_X\Omega_X\Pi_X$. Setting $c:=\gamma/(2(1-\gamma))$, which is at least $\gamma/2$, gives
\begin{equation}
\Pi_X\Omega_X\Pi_X\succeq c\Pi_X.
\label{eq:sde-corr-overlap-compression}
\end{equation}

Apply the same threshold measurement to $F_X=I/2+H_X$, now with completeness threshold $1/2+\gamma/2$ and soundness threshold $1/2+\gamma/4$. Let $\mathcal A_X^H$ be its acceptance map, let $\sigma_H:=(\mathcal A_\bob^H\otimes\mathcal A_\charlie^H)(\sigma_P)$, and let $p_H:=\Tr(\sigma_H)$. Tensoring \cref{eq:sde-corr-overlap-compression} gives $(\Pi_\bob\Omega_\bob\Pi_\bob)\otimes(\Pi_\charlie\Omega_\charlie\Pi_\charlie)\succeq c^2\Pi$, including for entangled inputs. The two completeness guarantees therefore imply $p_H\geq\Tr[(\Omega_\bob\otimes\Omega_\charlie)\sigma_P]-2\eta\geq c^2\Tr(\overline\sigma_P)-2\eta-\|\sigma_P-\overline\sigma_P\|_1$. Combining this inequality with \cref{eq:sde-corr-first-filter-mass} yields
\begin{equation}
p_H\geq\frac{\gamma^2}{4}p_{\mathrm{str}}-6\eta-3\sqrt{2\eta}.
\label{eq:sde-corr-second-filter-mass}
\end{equation}
Define $\Theta_X:=\mathbf 1[H_X>\gamma/4]$, $\Theta:=\Theta_\bob\otimes\Theta_\charlie$, and $\overline\sigma_H:=\Theta\sigma_H\Theta$. Soundness of the second pair of threshold measurements gives $\Tr(\overline\sigma_H)\geq p_H-2\eta$, and \cref{sec:gentle-measurement} gives $\|\sigma_H-\overline\sigma_H\|_1\leq3\sqrt{2\eta}$.

\paragraph{Recovering the challenge bit.}
Condition on the event in which \cref{eq:sde-corr-matrix-concentration} holds. Apply \cref{eq:spectral-cross-projection} with $A=H_X$, $B=(-1)^b\widehat H_{X,b}$, $\kappa=\gamma/4$, and $\ell=\gamma/8$. Since $\|A-B\|\leq\gamma/100$, any state supported where $H_X>\gamma/4$ has weight at most
\begin{equation}
\left(\frac{\gamma/100}{\gamma/4-\gamma/8}\right)^2=0.0064<\frac1{100}
\label{eq:sde-corr-sign-error}
\end{equation}
on the subspace where $(-1)^b\widehat H_{X,b}\leq\gamma/8$.

Each party now applies
\begin{equation}
\mathsf{GATI}^{\,1/2+\gamma/16,\,1/2-\gamma/16}_{\eta}\bigl(G_X(\overline\ct_X),I-G_X(\overline\ct_X)\bigr)
\label{eq:sde-corr-final-ati}
\end{equation}
and guesses $0$ on acceptance and $1$ on rejection. If $b=0$, then outside the subspace bounded in \cref{eq:sde-corr-sign-error}, we have $G_X>1/2+\gamma/16$, so completeness bounds the rejection probability by the weight of that subspace plus $\eta$. If $b=1$, then outside that subspace, we have $G_X<1/2-\gamma/16$, so soundness gives the same bound on the acceptance probability; explicitly, $\mathbf 1[G_X>1/2-\gamma/16]\preceq\mathbf 1[-\widehat H_{X,1}\leq\gamma/8]$. These bounds remain valid when the measured state is entangled with an additional register. Hence, on $\overline\sigma_H$, both guesses are correct with probability at least
\begin{equation}
\left(1-\frac2{100}\right)\Tr(\overline\sigma_H)-2\eta.
\label{eq:sde-corr-final-ati-success}
\end{equation}
Replacing $\overline\sigma_H$ by $\sigma_H$ costs at most $\|\sigma_H-\overline\sigma_H\|_1$, and averaging over the ciphertext lists incurs only the negligible failure probability from \cref{eq:sde-corr-matrix-concentration}.

\paragraph{The constructed CORR adversary.}
The algorithm $\alice$ chooses the separate mode of \cref{def:sde-corr-game}, repeats each strong-CPA message pair in all $N$ coordinates, and uses the admissible distribution from \cref{def:correlated-sde-sampler} that produces a common uniform challenge bit $b_\bob=b_\charlie=b$. Before it sends the two registers to the receivers, it performs the four threshold measurements above and gives both parties the four outcomes and the same uniform random bit $g$. If all four measurements accept, each party applies \cref{eq:sde-corr-final-ati} to its ciphertexts; otherwise, both output $g$.

Let $s_H$ be the probability that all four measurements accept and both final guesses are correct. The preceding bounds and \cref{eq:sde-corr-final-ati-success} give $s_H\geq(4/5)p_H-4\eta-3\sqrt{2\eta}-\negl(\secparam)$, after weakening the constant. When at least one threshold measurement rejects, the shared uniform guess is correct with probability $1/2$. The total success probability is therefore $(1-p_H)/2+s_H$, which is at least $1/2+(3/10)p_H-4\eta-3\sqrt{2\eta}-\negl(\secparam)$. After averaging \cref{eq:sde-corr-second-filter-mass}, the CORR advantage is at least $(3/40)\varepsilon\gamma^2-6\eta-6\sqrt{2\eta}-\negl(\secparam)$. By \cref{eq:sde-corr-ati-error}, this quantity is at least $\varepsilon\gamma^2/20$ for all sufficiently large security parameters in the infinite set. This contradicts SDE-CORR security because $\varepsilon\gamma^2$ is inverse polynomial.
\end{proof}

\begin{remark}[A stronger form of strong CPA]
\label{rem}
The theorem also holds for a stronger test that accepts whenever both local acceptance eigenvalues are inverse-polynomially far from $1/2$, irrespective of their signs. More precisely, for $X\in\{\bob,\charlie\}$, let $H_X:=P_X-I/2$. For an inverse polynomial $\gamma$, the joint acceptance projector of this test is $\mathbf 1[|H_\bob|\geq\gamma]\otimes\mathbf 1[|H_\charlie|\geq\gamma]$.

Decompose this projector according to the four sign choices $(s_\bob,s_\charlie)\in\{+1,-1\}^2$, where $s_XH_X\geq\gamma$. If the stronger test accepts with non-negligible probability, then one of these four choices occurs with non-negligible probability. The reduction applies the corresponding threshold measurements, sends $s_X$ to party $X$, and asks that party to use the sign-corrected operator $s_X\widehat H_{X,b}$. Since $s_X\widehat H_{X,b}\approx(-1)^bs_XH_X$ and $s_XH_X$ is positive on the accepted state, the same argument with $N$ ciphertexts recovers $b$. When $s_X=-1$, party $X$ simply reverses its final output.

For comparison, let $W_+$ be the acceptance operator for the test that checks whether the two parties' outputs agree. Then $W_+=I/2+2H_\bob\otimes H_\charlie$. A positive threshold for $W_+-I/2$ covers the equal-sign cases $(+1,+1)$ and $(-1,-1)$, whereas a two-sided threshold for $|W_+-I/2|$ also covers the two opposite-sign cases. Up to an inverse-polynomial change in the threshold, the latter is equivalent to requiring both $|H_\bob|$ and $|H_\charlie|$ to be inverse-polynomially bounded away from zero. Thus, the stronger test combines ordinary strong CPA, the threshold form of CPA$^{+}$, and the variants obtained by reversing either party's output; we do not introduce a separate security definition.
\end{remark}

Next, we show that the separate mode of CORR-SDE security, used in \cref{thm:sde-corr-implies-strong-cpa} is also implied by and hence (by \Cref{thm:sde-corr-implies-strong-cpa}) equivalent to  strong CPA security (and equivalently CPA$^{+}$ security). 

\begin{theorem}[Separate mode is exactly CPA$^{+}$]
\label{thm:sde-sep-corr-iff-cpa-plus}
Let $\mathsf{SDE\mbox{-}CORR}|_{\sep}$ denote the SDE-CORR game of \cref{def:sde-corr-game} restricted to adversaries that choose separate mode. Then $\mathsf{SDE\mbox{-}CORR}|_{\sep}$ security is equivalent to CPA$^{+}$ security from \cref{def:sde-cpa-plus}, and hence to strong-CPA security from \cref{def:sde-strong-cpa}. 
\end{theorem}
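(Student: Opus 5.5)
The plan is to prove the two directions separately, using \cref{thm:sde-prior-relations} to move freely between CPA$^{+}$ and strong-CPA security.

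\textbf{From $\mathsf{SDE\mbox{-}CORR}|_{\sep}$ to CPA$^{+}$.} This direction should essentially be free. The contrapositive argument in the proof of \cref{thm:sde-corr-implies-strong-cpa} only ever uses separate mode. There $\alice$ repeats the strong-CPA message pair in all $N$ coordinates, and the sampler outputs a common uniform bit $b_\bob=b_\charlie=b$, which is admissible in $\sep$ mode because both marginals are uniform. So that proof already shows that $\mathsf{SDE\mbox{-}CORR}|_{\sep}$ security implies strong-CPA security, and item~1 of \cref{thm:sde-prior-relations} turns this into CPA$^{+}$ security.

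\textbf{From CPA$^{+}$ (equivalently strong CPA) to $\mathsf{SDE\mbox{-}CORR}|_{\sep}$.} I will work with operators. Fix $\pk$, the output of $\alice$, and the joint state $\rho$. Write the admissible sampler's output as $(g,g\oplus\Delta)$, where $g$ is uniform but may be correlated with $\Delta$; this is exactly the rewriting from the technical overview, and I would condition on $\Delta$. For party $X$, let $T_X:=\tfrac12\bigl(Z_X^{(0)}-Z_X^{(1)}\bigr)$, where $Z_X^{(j)}$ is the averaged $\pm1$ observable of $\mathsf D_X$ on a freshly generated batch encrypting $(\mes_{X,i,j})_{i\in[N]}$, exactly as $H_X$ was built in the proof of \cref{thm:sde-corr-implies-strong-cpa}. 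Then the probability that both parties are correct should expand as
\[
\tfrac14\Bigl(1+\Tr[(T_\bob\otimes I)\rho]+\Tr[(I\otimes T_\charlie)\rho]+\Tr[(S_\bob\otimes S_\charlie)\rho]\Bigr),
\]
up to the $\Delta$-dependent signs and a cross term with the $\pm$ structure. The key point is that every operator here has norm at most one. Consequently, on any state where $|T_\bob|\leq\delta$ or $|T_\charlie|\leq\delta$, the expression is at most $1/2+O(\delta)$; the worst case is a party that decrypts perfectly alone, where the other party's term must then be small. So an advantage $\varepsilon$ forces $\mathbf 1[|T_\bob|\geq\delta]\otimes\mathbf 1[|T_\charlie|\geq\delta]$ to have weight $\Omega(\varepsilon)$ for $\delta=\varepsilon/10$, which I will get by a spectral split of each local operator.

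\textbf{Contradicting strong CPA.} That weight is precisely the ``stronger form of strong CPA'' of \cref{rem}, except that the tests are batch tests. By \cref{rem} and the threshold machinery of \cref{lem:gapped-threshold-instrument}, a noticeable weight on both $|T_X|$ being bounded away from zero yields a strong-CPA or sign-flipped attack. Such an attack is excluded by CPA$^{+}$ security, since reversing one party's output maps CPA$^{+}$ to itself.

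\textbf{Main obstacle.} The hardest step is turning the batch operator $T_X$, over $N$ message pairs, into single-ciphertext tests. I would do this with a local hybrid over coordinates $i=0,\dots,N$: at step $i$, the first $i$ ciphertexts encrypt right messages and the rest encrypt left messages. The public key lets the reduction generate all non-hybrid ciphertexts itself, so $T_X$ telescopes into $N$ single-ciphertext differences. If both $|T_\bob|$ and $|T_\charlie|$ exceed $\delta$ on a state, then some pair of indices $(i_\bob,i_\charlie)$, guessed uniformly with loss $1/N^2$, yields a local single-pair distinguisher with gap at least $\delta/N$ on each side, with the other coordinates hardwired into the decryptor. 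Choosing these indices is a classical choice made before the split, so entanglement is not disturbed. I expect the delicate part to be making this averaging argument rigorous at the operator level: because the two parties' pieces do not commute, I would apply the threshold projections one side at a time, as in the proof of \cref{thm:sde-corr-implies-strong-cpa}, instead of trying to diagonalize jointly.
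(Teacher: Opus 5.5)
Your first direction is exactly the paper's: the proof of \cref{thm:sde-corr-implies-strong-cpa} only uses separate mode with a common uniform bit, so it shows that separate-mode CORR security implies strong CPA, and hence CPA$^{+}$. The converse has a genuine gap, at the step you flag as the main obstacle, and your route through thresholds is what creates it.

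\textbf{Why your batch-to-single-pair step fails.} Let $\Delta_{X,j}$ be the difference of the $\pm1$ observables when the first $j-1$, versus the first $j$, coordinates encrypt their index-$1$ messages. Then $T_X=\frac12\sum_{j=1}^N\Delta_{X,j}$. The $\Delta_{X,j}$ commute neither with $T_X$ nor with each other, so spectral or threshold information about the batch operators does not descend to single coordinates.
\begin{itemize}
\item A state supported where $|T_X|\geq\delta$ need not have any spectral gap for a fixed $\Delta_{X,j}$. Averaging over $j$ gives only expectation bounds.
\item Per-side marginal bounds for independently guessed indices do not by themselves give joint acceptance of two threshold tests on an entangled state.
\item The two-sided projector $\mathbf 1[|T_{\bob}|\geq\delta]\otimes\mathbf 1[|T_{\charlie}|\geq\delta]$ throws away the signs, which are exactly what recombines across coordinates. \cref{rem}, which concerns single-ciphertext tests and the opposite implication, does not bridge this.
\end{itemize}

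\textbf{Why Step A is not justified as written.} ``Norm at most one'' does not give your bound. With $\theta=\Pr[b_{\bob}=b_{\charlie}]$, the success operator is
\[
W_{\sep}=\tfrac14\bigl(I+T_{\bob}\otimes I+I\otimes T_{\charlie}+T_{\bob}\otimes T_{\charlie}+(2\theta-1)\,S_{\bob}\otimes S_{\charlie}\bigr),\qquad S_X:=\tfrac12\bigl(Z_X^{(0)}+Z_X^{(1)}\bigr).
\]
Norm bounds alone give only about $3/4$, even on the subspace where $|T_{\bob}|\leq\delta$. Moreover, $S_{\bob}\otimes S_{\charlie}$ has off-diagonal blocks between the spectral subspaces of $T_{\bob}$ and $T_{\charlie}$, so a spectral split does not localize the success probability.

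\textbf{The missing idea is to stay linear.} Let $P_{X,b}$ be the batch acceptance operators and $W_+=\frac12(I+T_{\bob}\otimes T_{\charlie})$ the batch CPA$^{+}$ operator. Then $W_+-W_{\sep}$ is a convex combination of
\[
\tfrac12\bigl((I-P_{\bob,0})\otimes(I-P_{\charlie,1})+(I-P_{\bob,1})\otimes(I-P_{\charlie,0})\bigr)
\quad\text{and}\quad
\tfrac12\bigl((I-P_{\bob,0})\otimes(I-P_{\charlie,0})+(I-P_{\bob,1})\otimes(I-P_{\charlie,1})\bigr),
\]
hence positive semidefinite. This domination is also what actually makes your Step A true.

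Consequently, a separate-mode advantage $\varepsilon$ is directly a batch CPA$^{+}$ advantage $\frac12\Tr[(T_{\bob}\otimes T_{\charlie})\rho]\geq\varepsilon$. This quantity is bilinear, so it telescopes exactly into $\frac18\sum_{j,k}\Tr[(\Delta_{\bob,j}\otimes\Delta_{\charlie,k})\rho]$.

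Now let $N_{\max}$ be a polynomial bound on $N$ and guess $(j,k)$ uniformly from $[N_{\max}]^2$, using fixed answers when an index exceeds $N$. The receivers embed their single challenge ciphertext in coordinate $j$ (resp.\ $k$), with index-$1$ encryptions before it and index-$0$ encryptions after it. This gives an ordinary single-pair CPA$^{+}$ adversary with advantage at least $\varepsilon/N_{\max}^2$.

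This is the paper's proof. It needs no threshold measurements, sign sectors, or detour through strong CPA.
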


\begin{proof}
The proof of \cref{thm:sde-corr-implies-strong-cpa} uses only separate mode and therefore proves that $\mathsf{SDE\mbox{-}CORR}|_{\sep}$ implies strong CPA. Strong CPA and CPA$^{+}$ are equivalent by \cref{thm:sde-prior-relations}. It remains to prove that CPA$^{+}$ implies SDE-CORR in separate mode.

Fix a separate-mode CORR adversary $\adve=(\alice,\bob,\charlie)$ and condition on the public classical information produced by $\alice$ before the challenge; the reduction runs $\alice$ honestly and finally averages over this information. For $X\in\{\bob,\charlie\}$ and $b\in\bit$, let $P_{X,b}$ be the average acceptance operator when $X$ receives independent encryptions of $(\mes_{X,1,b},\ldots,\mes_{X,N,b})$ and outputs $b$. Because the encryptions for Bob and Charlie use independent randomness, the operator for both parties to answer a fixed pair of challenge bits correctly is $P_{\bob,b_\bob}\otimes P_{\charlie,b_\charlie}$.

Every joint distribution on $(b_\bob,b_\charlie)$ with uniform marginals satisfies $\Pr[00]=\Pr[11]=\theta/2$ and $\Pr[01]=\Pr[10]=(1-\theta)/2$ for some $\theta\in[0,1]$, which may depend on the public key. Define
\begin{align*}
W_{\mathsf{diag}}&:=\frac12\bigl(P_{\bob,0}\otimes P_{\charlie,0}+P_{\bob,1}\otimes P_{\charlie,1}\bigr),\\
W_{\mathsf{off}}&:=\frac12\bigl(P_{\bob,0}\otimes P_{\charlie,1}+P_{\bob,1}\otimes P_{\charlie,0}\bigr).
\end{align*}
The CORR acceptance operator is $W_{\sep}=\theta W_{\mathsf{diag}}+(1-\theta)W_{\mathsf{off}}$.

Now use the same state and receivers in CPA$^{+}$, where the challenge bits are independent. The acceptance operator for the CPA$^{+}$ condition that the two receivers' correctness bits agree is
\begin{equation*}
W_+:=\frac14\sum_{b,c\in\bit}\bigl(P_{\bob,b}\otimes P_{\charlie,c}+(I-P_{\bob,b})\otimes(I-P_{\charlie,c})\bigr).
\end{equation*}
Expanding this expression gives
\begin{align}
W_+-W_{\mathsf{diag}}&=\frac12\bigl((I-P_{\bob,0})\otimes(I-P_{\charlie,1})+(I-P_{\bob,1})\otimes(I-P_{\charlie,0})\bigr)\succeq0,\label{eq:sde-sep-corr-diag-dominance}\\
W_+-W_{\mathsf{off}}&=\frac12\bigl((I-P_{\bob,0})\otimes(I-P_{\charlie,0})+(I-P_{\bob,1})\otimes(I-P_{\charlie,1})\bigr)\succeq0.\label{eq:sde-sep-corr-off-dominance}
\end{align}
Thus, $W_{\sep}\preceq W_+$. Every successful separate-mode CORR attack therefore yields a successful CPA$^{+}$ attack with $N$ challenge pairs. We next reduce the latter attack to the usual game with one pair per receiver.

For a ciphertext list $\overline\ct$, let $Z_X(\overline\ct)$ be the difference between the acceptance operators for outputs $0$ and $1$. For $j\in\{0,\ldots,N\}$, let $Z_X^{(j)}$ be its expectation when the first $j$ coordinates encrypt their messages indexed by $1$ and all later coordinates encrypt their messages indexed by $0$. For $j\in[N]$, set $\Delta_{X,j}:=Z_X^{(j-1)}-Z_X^{(j)}$. The advantage of the $N$-pair CPA$^{+}$ attack over $1/2$ is exactly
\begin{equation}
\frac18\Tr\!\left[(Z_\bob^{(0)}-Z_\bob^{(N)})\otimes(Z_\charlie^{(0)}-Z_\charlie^{(N)})\rho\right]=\frac18\sum_{j,k\in[N]}\Tr[(\Delta_{\bob,j}\otimes\Delta_{\charlie,k})\rho].
\label{eq:sde-sep-corr-telescoping}
\end{equation}

Let $T(\secparam)$ be a polynomial upper bound on the number $N$ chosen by the fixed adversary. An ordinary CPA$^{+}$ adversary runs $\alice$ and samples $j,k\getsr[T]$. If $j>N$ or $k>N$, it submits arbitrary fixed valid message pairs and has both receivers return fixed answers; over the independent challenge bits, the success probability is exactly $1/2$. Otherwise, it submits Bob's coordinate-$j$ pair and Charlie's coordinate-$k$ pair. It gives the receivers copies of $\pk$, $N$, both message lists, the selected indices, and descriptions of the original receiver algorithms. Each receiver places its one challenge ciphertext in the selected coordinate, independently encrypts the message indexed by $1$ in each earlier coordinate and the message indexed by $0$ in each later coordinate, and then runs the original receiver on the resulting list. Conditioned on valid $j,k$, the difference between the receiver's average output operators for challenge bits $0$ and $1$ is exactly $\Delta_{\bob,j}$ or $\Delta_{\charlie,k}$. By \cref{eq:sde-sep-corr-telescoping}, averaging over $j,k$ gives an ordinary CPA$^{+}$ advantage equal to $1/T^2$ times the original $N$-pair advantage. Thus, a non-negligible $N$-pair advantage contradicts CPA$^{+}$ security.
\end{proof}

It is not known whether CPA$^{+}$ security implies full SDE-CORR security. By \cref{thm:sde-sep-corr-iff-cpa-plus}, separate-mode SDE-CORR is already equivalent to CPA$^{+}$. Therefore, any separation between full SDE-CORR and CPA$^{+}$, or equivalently strong CPA, must use identical mode, in which both parties receive the same ciphertext. Even for $N=1$, where identical-mode SDE-CORR is IDEN-CPA, this copied-ciphertext setting is not covered by the implications of Kitagawa and Yamakawa~\cite{TCC:KitYam25}.

In particular, it is not known whether CPA$^{+}$ implies IDEN-CPA security. The reverse implication is ruled out by the standard two-key construction of \cite[Theorem~13]{EPRINT:CGLR23a}.

\begin{theorem}[IDEN-CPA does not imply CPA$^{+}$]
\label{thm:sde-iden-cpa-not-cpa-plus}
Let $\Pi$ be a perfectly correct IDEN-CPA-secure SDE scheme whose message space contains two distinct messages. There is a perfectly correct IDEN-CPA-secure two-key scheme $\Pi_{\mathsf{2key}}$ that is not CPA$^{+}$ secure. If $\Pi.\enc$ is classical, then so is $\Pi_{\mathsf{2key}}.\enc$. In particular, this separation holds under post-quantum-secure iO and post-quantum-secure one-way functions.
\end{theorem}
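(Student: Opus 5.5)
The plan is to use the two-key construction: run $\Pi$ twice and give the two halves of the key to Bob and Charlie. Concretely, $\Pi_{\mathsf{2key}}.\keygen$ samples $(\pk_1,\qsk_1)$ and $(\pk_2,\qsk_2)$ independently from $\Pi.\keygen$. It outputs $\pk:=(\pk_1,\pk_2)$ and $\qsk:=\qsk_1\otimes\qsk_2$. Encryption of $\mes$ outputs $(\ct_1,\ct_2)$ with $\ct_i\gets\Pi.\enc(\pk_i,\mes)$, using independent randomness. Decryption decrypts $\ct_1$ with $\qsk_1$ and outputs the result. Perfect correctness and classicality of encryption are inherited directly from $\Pi$.

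\emph{Attack on CPA$^{+}$.} $\alice$ sends $\qsk_1$ to Bob and $\qsk_2$ to Charlie. Both receivers get the same pair $(\mes_0,\mes_1)$ of distinct messages. Each receiver decrypts its own component with its own key, which is correct by perfect correctness. So both guess their bits perfectly and $\widetilde b_\bob\oplus\widetilde b_\charlie=b_\bob\oplus b_\charlie$ with probability $1$. This violates \cref{def:sde-cpa-plus}.

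\emph{IDEN-CPA security.} This is the main step. Given an IDEN-CPA adversary $(\alice,\bob,\charlie)$ against $\Pi_{\mathsf{2key}}$, build an adversary $(\alice',\bob',\charlie')$ against $\Pi$ by embedding the challenge key in slot $1$:
\begin{itemize}
\item $\alice'$ receives $(\pk,\qsk)$, samples $(\pk_2,\qsk_2)$ itself, sets $\pk_1:=\pk$, and runs $\alice$.
\item $\alice'$ forwards $\alice$'s message pair and its two registers. It also hands copies of $\pk_2$ and the pair to both receivers; these are classical, so they can be copied.
\item On receiving the common challenge $\ct$, each of $\bob',\charlie'$ must produce a common $\ct_2$.
\end{itemize}
This is the obstacle. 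Two separated receivers cannot produce the same fresh encryption of the unknown $\mes_b$ under $\pk_2$. Also, embedding in slot $2$ instead is symmetric and does not help.

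I will handle this with a hybrid through a second game. First, use IDEN-CPA of $\Pi$ under $\pk_2$, where the reduction keeps $\qsk_1$ and its own challenge embedded in slot $2$. The reduction itself generates $\ct_1$ as an encryption of $\mes_0$ with shared randomness that $\alice'$ samples and forwards to both receivers, so the receivers' two copies agree. This switches the experiment to $\ct_1$ encrypting the fixed $\mes_0$ and $\ct_2$ encrypting $\mes_b$. In that hybrid, slot $1$ is independent of $b$, so the second IDEN-CPA reduction for slot $2$ is exactly the one just described.

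The delicate point is the measure of the first switch. I will measure the event ``both parties output $b$'' across the hybrids, and I must check that the hybrid is still a valid game whose winning probability is bounded by $1/2+\negl$. I expect this works by a direct two-step telescoping, in which each step is an IDEN-CPA game for $\Pi$ with shared challenge-independent classical ciphertexts. If the direct telescoping fails for the ``both correct'' event, which is not linear in $b$, the fallback is to define $\Pi_{\mathsf{2key}}$ so that the ciphertext component for key $2$ encrypts a fixed dummy message, with decryption using key $1$. Then IDEN-CPA reduces to slot $1$ alone, since the reduction can compute the dummy component with shared randomness.

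Finally, the corresponding theorem gives an IDEN-CPA-secure base scheme $\Pi$ from post-quantum iO and one-way functions. Perfect correctness can be ensured by standard modifications.
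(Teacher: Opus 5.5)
\noindent Your main construction is not IDEN-CPA secure, so it cannot witness the separation. The split you use against CPA$^{+}$ ($\qsk_1$ to \bob, $\qsk_2$ to \charlie) also wins the identical-challenge game with probability $1$. The common ciphertext $(\ct_1,\ct_2)$ contains an encryption of $\mes_b$ under \emph{each} key, so \bob decrypts $\ct_1$, \charlie decrypts $\ct_2$, and both output $b$. The obstacle you ran into (the receivers cannot jointly produce the second component) was a symptom of this.

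This also explains why your hybrid step cannot be justified. IDEN-CPA security of $\Pi$ only bounds the probability that two parties who split one key are \emph{both} correct. It does not say that a holder of $\qsk_1$ cannot tell an encryption of $\mes_b$ from one of $\mes_0$ under $\pk_1$. Against the attack above, switching $\ct_1$ to an encryption of $\mes_0$ drops the winning probability from $1$ to $1/2$.

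Your fallback, where the second component encrypts a fixed dummy, restores IDEN-CPA security but destroys the CPA$^{+}$ attack. \charlie's key now decrypts only the dummy, so his correctness bit is a fair coin independent of \bob's, and the agreement probability is exactly $1/2$. More generally, that scheme is CPA$^{+}$ secure whenever $\Pi$ is, since a reduction can simulate the extra key and the dummy component. So it separates nothing when $\Pi$ is, e.g., the SDE-CORR-secure scheme of \cref{thm:sde-security}.

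\noindent The missing idea is to make each ciphertext decryptable under only \emph{one} of the two keys, selected by a public random index. Encrypt $\mes$ as $(t,\Pi.\enc(\pk_t,\mes))$ with $t\getsr\bit$, and decrypt with $\qsk_t$.

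For IDEN-CPA security, both parties see the same index, so the reduction can embed its single challenge there. It places its challenge key pair at a hidden uniform position $J$, generates the other pair itself, forwards \alice's message pair, and has both receivers run on $(J,\ct^\star)$. Because the two key pairs are i.i.d., $J$ is independent of \alice's view. The simulation is therefore perfect, and no hybrid over ciphertext contents is needed.

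For the CPA$^{+}$ attack, the two indices are independent. Give $\qsk_0$ to \bob and $\qsk_1$ to \charlie; each decrypts when the index matches its key and otherwise outputs a private fair bit. Then each is correct with probability $3/4$, independently. Their correctness bits agree with probability $(3/4)^2+(1/4)^2=5/8$, which violates CPA$^{+}$.
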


\begin{proof}
The scheme $\Pi_{\mathsf{2key}}$ generates independent key pairs $(\pk_t,\qsk_t)\gets\Pi.\keygen(1^\lambda)$ for $t\in\bit$, publishes $(\pk_0,\pk_1)$, and uses $\qsk_0\otimes\qsk_1$ as its quantum key. To encrypt $m$, it samples $t\getsr\bit$, computes $\ct\gets\Pi.\enc(\pk_t,m)$, and outputs $(t,\ct)$. Decryption reads $t$ and uses the corresponding key. Perfect correctness is immediate.

Suppose that an adversary $\adve=(\alice,\bob,\charlie)$ violates IDEN-CPA security of $\Pi_{\mathsf{2key}}$. We construct an adversary against $\Pi$. Given $(\pk^\star,\qsk^\star)$, it samples $J\getsr\bit$ and generates an independent key pair. It places the given key pair in position $J$ and the independently generated pair in the other position, and then runs $\alice$ for $\Pi_{\mathsf{2key}}$. It forwards the common message pair output by $\alice$ to the challenger for $\Pi$ and gives each receiver a classical copy of $J$. Upon receiving the copied ciphertext $\ct^\star$, each receiver runs its original circuit on $(J,\ct^\star)$. The two key pairs are independent and identically distributed, so the ordered public and quantum keys reveal no information about the uniform position $J$. The simulation is therefore identical to the real IDEN-CPA experiment for $\Pi_{\mathsf{2key}}$, even if $\alice$ jointly processes the two component keys. IDEN-CPA security of $\Pi$ proves that $\Pi_{\mathsf{2key}}$ is also IDEN-CPA secure.

To violate CPA$^{+}$ security, fix distinct messages $(\mu_0,\mu_1)$ and use this pair for both receivers. The algorithm $\alice$ gives $\qsk_0$ to Bob and $\qsk_1$ to Charlie. Bob decrypts when the first ciphertext component is $0$ and otherwise returns a private fair bit; Charlie acts analogously when that component is $1$. Each receiver is correct with probability $3/4$ and incorrect with probability $1/4$, independently of the other receiver. Their correctness bits therefore agree with probability $(3/4)^2+(1/4)^2=5/8$, so $\Pi_{\mathsf{2key}}$ is not CPA$^{+}$ secure. Taking $\Pi$ to be the SDE-CORR-secure scheme of \cref{thm:sde-security} gives the final assertion in the plain model under post-quantum iO and post-quantum one-way functions.
\end{proof}

The preceding CPA$^{+}$ attack gives $\qsk_0$ to Bob and $\qsk_1$ to Charlie, so their joint state is a product state. The next lemma shows that an IDEN-CPA attack whose joint output is a classical mixture of product states cannot separate CPA$^{+}$ from IDEN-CPA.

\begin{lemma}[Separable IDEN-CPA attacks violate strong CPA]
\label{lem:sde-separable-iden-implies-strong-break}
Let $\sde$ be an SDE scheme and let $\adve=(\alice,\bob,\charlie)$ be an adversary in the IDEN-CPA experiment of \cref{def:sde-iden-cpa}. Let $u$ contain the public key and all classical information output by $\alice$ before the challenge. For every value $u$ that occurs with nonzero probability, let $\rho^u_{\bob\charlie}$ be the corresponding normalized state on the registers sent to $\bob$ and $\charlie$. Suppose that this state is separable for every $u$, meaning that $\rho^u_{\bob\charlie}=\sum_z p_{u,z}\rho_{u,z}\otimes\sigma_{u,z}$ for probabilities $p_{u,z}$ and local states $\rho_{u,z},\sigma_{u,z}$. If, for some $\varepsilon=\varepsilon(\lambda)\in(0,1/2]$,
\begin{equation}
\Pr\left[\gam{SDE\mbox{-}IDEN\mbox{-}CPA}_{\sde,\adve}(1^\lambda)=1\right]\geq\frac12+\varepsilon,
\label{eq:sde-separable-iden-success}
\end{equation}
then there is an adversary $\alice_{\mathsf{str}}$ in the strong-CPA experiment of \cref{def:sde-strong-cpa} such that
\begin{equation}
\Pr\left[\gam{SDE\mbox{-}STRONG\mbox{-}CPA}_{\sde,\alice_{\mathsf{str}}}(1^\lambda,\varepsilon/4)=1\right]\geq\frac{\varepsilon^3}{4}.
\label{eq:sde-separable-strong-success}
\end{equation}
Thus, if $\sde$ is strong-CPA secure, every IDEN-CPA adversary of this form has negligible advantage.
\end{lemma}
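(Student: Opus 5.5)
The plan is to exploit the fact that, on a product state, the two receivers' success events in the identical-challenge game are each bounded by the corresponding \emph{individual} success probability. Then we argue by averaging that some product component has both individual CPA success probabilities bounded away from $1/2$. Feeding that component to the strong-CPA test, a Markov-type argument on the eigenvalues of the CPA operators gives the claimed $\varepsilon^3/4$.

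\textbf{Step 1 (pointwise bound on a product component).} Fix $u$ and a component $z$ of the decomposition $\rho^u_{\bob\charlie}=\sum_z p_{u,z}\rho_{u,z}\otimes\sigma_{u,z}$. Let $\mes_0,\mes_1$ be the common message pair. For $b\in\bit$ and a ciphertext $\ct$, let $D_{\bob,b}(\ct)$ and $D_{\charlie,b}(\ct)$ be the measurement operators for output $b$. Since both receivers get the same $\ct$, the winning probability on $\rho_{u,z}\otimes\sigma_{u,z}$ equals
\begin{equation*}
\E_{b}\E_{\ct\gets\sde.\enc(\pk,\mes_b)}\bigl[f(\ct)g(\ct)\bigr],
\end{equation*}
where $f(\ct):=\Tr[D_{\bob,b}(\ct)\rho_{u,z}]$ and $g(\ct):=\Tr[D_{\charlie,b}(\ct)\sigma_{u,z}]$ lie in $[0,1]$. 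Because $fg\le\min(f,g)$, this is at most $\min(p^{u,z}_\bob,p^{u,z}_\charlie)$. Here $p^{u,z}_\bob=\Tr[P_\bob\rho_{u,z}]$ is the acceptance probability of the CPA test $\mathcal P^{\mathsf{cpa}}$ of \cref{def:sde-strong-cpa} on Bob's state with pair $(\mes_0,\mes_1)$, and $p^{u,z}_\charlie$ is defined analogously for Charlie. The correlation through the shared ciphertext thus only helps the bound.

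\textbf{Step 2 (averaging).} Let $m:=\min(p_\bob,p_\charlie)\in[0,1]$ be viewed as a random variable over $(u,z)$ drawn with weights $p_{u,z}$. By Step 1 and \cref{eq:sde-separable-iden-success}, $\E[m]\ge1/2+\varepsilon$. Set $q:=\Pr[m\ge1/2+\varepsilon/2]$. Then $\E[m]\le(1/2+\varepsilon/2)(1-q)+q$, which gives $q\ge(\varepsilon/2)/(1/2-\varepsilon/2)\ge\varepsilon$.

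\textbf{Step 3 (the strong-CPA adversary).} $\alice_{\mathsf{str}}$ runs $\alice$ to obtain $u$ and samples $z$ with probability $p_{u,z}$. It outputs the product decryptors $\mathsf D_\bob=(\rho_{u,z},\bob)$ and $\mathsf D_\charlie=(\sigma_{u,z},\charlie)$, with the common message pair used for both receivers. Equivalently, it outputs $\rho^u$ itself together with the classical label $z$.

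On a good component, $\Tr[P_\bob\rho_{u,z}]\ge1/2+\varepsilon/2$. Write $\rho_{u,z}$ in the spectral decomposition of $\projimp(P_\bob)$ and let $\lambda$ be the measured eigenvalue. Since $\E[\lambda]\ge1/2+\varepsilon/2$ and $\lambda\le1$, the same Markov argument gives
\begin{equation*}
\Pr\bigl[\lambda\ge1/2+\varepsilon/4\bigr]\ge\frac{\varepsilon/4}{1/2-\varepsilon/4}\ge\frac{\varepsilon}{2}.
\end{equation*}
This is exactly the probability that $\TI_{1/2+\varepsilon/4}$ accepts. The same bound holds for Charlie. Because the state is a product, the two threshold tests accept independently, so both pass with probability at least $\varepsilon^2/4$. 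Multiplying by $q\ge\varepsilon$ from Step 2 gives \cref{eq:sde-separable-strong-success}.

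The final sentence of the lemma then follows, since $\varepsilon/4$ is inverse polynomial whenever $\varepsilon$ is.

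The main obstacle I expect is efficiency of $\alice_{\mathsf{str}}$. Sampling $z$ from a separable decomposition need not be efficient, and Step 3 genuinely needs a product state, not just a separable one: tests on a separable mixture do not factor per component unless $z$ is known. My first approach is to note that the threshold tests act locally, so on the mixture $\sum_z p_{u,z}\rho_{u,z}\otimes\sigma_{u,z}$ the joint acceptance probability is exactly the $z$-average of the per-component products. Thus $\alice_{\mathsf{str}}$ can simply output $\alice$'s registers unchanged, and the bound still holds with $q$ replaced by its average. This removes the need to know $z$, so $\alice_{\mathsf{str}}$ is as efficient as $\alice$.
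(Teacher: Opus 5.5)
Your proposal is correct and follows the paper's proof. Both use the pointwise bound $fg\le\min(f,g)$, then an averaging step that gives a good set of weight at least $\varepsilon/(1-\varepsilon)$. Both then apply the Markov-type bound $\Tr(\mathbf 1[P\geq 1/2+\varepsilon/4]\,\rho)\geq\varepsilon/(2-\varepsilon)$ on each side and factorize over the product components.

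Your final resolution of the efficiency concern is exactly the paper's construction: $\alice_{\mathsf{str}}$ forwards $\alice$'s registers unchanged, and the tensor product of the two local threshold projectors acts linearly on the separable mixture. You should therefore drop the detour through sampling $z$ or attaching a label $z$. Neither step can be carried out without knowing the decomposition, and neither is needed.
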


\begin{proof}
The adversary $\alice_{\mathsf{str}}$ runs $\alice$, uses its common message pair for both strong-CPA decryptors, and equips the two output registers with the circuits of $\bob$ and $\charlie$. We prove that these decryptors satisfy \cref{eq:sde-separable-strong-success}.

\smallskip
\noindent\emph{Claim.}
Let $u$ be a classical random variable with distribution $(q_u)_u$. For every $u$, let $i$ be drawn from a distribution $\mathcal D_u$, and let $0\preceq A_{u,i},B_{u,i}\preceq I$ be operators on two registers. Define $P_u:=\E_{i\gets\mathcal D_u}[A_{u,i}]$, $Q_u:=\E_{i\gets\mathcal D_u}[B_{u,i}]$, and $W_u:=\E_{i\gets\mathcal D_u}[A_{u,i}\otimes B_{u,i}]$. Suppose that $\rho_u=\sum_zp_{u,z}\rho_{u,z}\otimes\sigma_{u,z}$ is separable and $\sum_uq_u\Tr(W_u\rho_u)\geq1/2+\varepsilon$. For $\Pi_u:=\mathbf 1[P_u\geq1/2+\varepsilon/4]$ and $\Gamma_u:=\mathbf 1[Q_u\geq1/2+\varepsilon/4]$, we have
\begin{equation}
\sum_uq_u\Tr\bigl((\Pi_u\otimes\Gamma_u)\rho_u\bigr)\geq\frac{\varepsilon^3}{4}.
\label{eq:sde-separable-operator-claim}
\end{equation}

To prove the claim, write $\tau_{u,z}:=\rho_{u,z}\otimes\sigma_{u,z}$, and define $a_{u,z}:=\Tr(P_u\rho_{u,z})$, $b_{u,z}:=\Tr(Q_u\sigma_{u,z})$, and $s_{u,z}:=\Tr(W_u\tau_{u,z})$. Since $s_{u,z}=\E_i[\Tr(A_{u,i}\rho_{u,z})\Tr(B_{u,i}\sigma_{u,z})]$ and both factors lie in $[0,1]$, we have $s_{u,z}\leq\min\{a_{u,z},b_{u,z}\}$. Give each pair $(u,z)$ probability $q_up_{u,z}$, and let $G$ be the set of pairs for which $s_{u,z}\geq1/2+\varepsilon/2$. Since $0\leq s_{u,z}\leq1$, the premise of the claim implies that $G$ has probability at least $\varepsilon/(1-\varepsilon)$.

For every $(u,z)\in G$, both $a_{u,z}$ and $b_{u,z}$ are at least $1/2+\varepsilon/2$. Moreover, $P_u\preceq(1/2+\varepsilon/4)I+(1/2-\varepsilon/4)\Pi_u$, and hence $\Tr(\Pi_u\rho_{u,z})\geq\varepsilon/(2-\varepsilon)$. The same argument gives $\Tr(\Gamma_u\sigma_{u,z})\geq\varepsilon/(2-\varepsilon)$. It follows that
\begin{equation}
\sum_uq_u\Tr\bigl((\Pi_u\otimes\Gamma_u)\rho_u\bigr)\geq\frac{\varepsilon}{1-\varepsilon}\left(\frac{\varepsilon}{2-\varepsilon}\right)^2\geq\frac{\varepsilon^3}{4},
\end{equation}
where the last inequality uses $0<\varepsilon\leq1/2$. This proves the claim.

We apply the claim to the IDEN-CPA attack. Let $u$ contain $\pk$, the common message pair, and all classical information produced before the challenge ciphertext. Conditioned on $u$, let $i=(b,\ct)$, where $b\getsr\bit$ and $\ct\gets\sde.\enc(\pk,\mes_b)$. Let $A_{u,i}$ be the measurement operator for $\bob$ to output $b$ on input $\ct$, and define $B_{u,i}$ analogously for $\charlie$. Then $\sum_uq_u\Tr(W_u\rho^u_{\bob\charlie})$ is exactly the success probability in \cref{eq:sde-separable-iden-success}. The operators $P_u$ and $Q_u$ are the two local CPA acceptance operators, so $\Pi_u$ and $\Gamma_u$ are exactly the accepting projectors in the strong-CPA experiment with gap $\varepsilon/4$. Thus, \cref{eq:sde-separable-operator-claim} proves \cref{eq:sde-separable-strong-success}.
\end{proof}

\Cref{lem:sde-separable-iden-implies-strong-break} rules out every separable IDEN-CPA attack arising from a classical mixture over public indices or challenge bits. Thus, separating CPA$^{+}$ from IDEN-CPA would require an entangled attack that specifically exploits the fact that both parties receive the same ciphertext. Constructing such an attack remains open.

We next compare the two identical-challenge notions.

\begin{theorem}[Identical CPA implies identical search~\cite{C:AnaKalLiu23}]
\label{thm:sde-iden-cpa-implies-search}
For superlogarithmic message length, identical-CPA security implies identical-search security.
\end{theorem}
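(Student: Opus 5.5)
We prove the contrapositive. Suppose $\adve=(\alice,\bob,\charlie)$ wins the identical-search game of \cref{def:sde-iden-search} with probability $\delta$ for some inverse polynomial $\delta$, infinitely often. We build an IDEN-CPA adversary whose advantage over $1/2$ is inverse polynomial.

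The new $\alice'$ runs $\alice$ and samples $\mes_0,\mes_1\getsr\messpa$ independently. It outputs $(\mes_0,\mes_1)$ as the common message pair and gives both receivers classical copies of the pair. On the copied ciphertext $\ct$, each receiver $X$ runs the original search receiver to obtain $\widetilde{\mes}_X$. It outputs $j$ if $\widetilde{\mes}_X=\mes_j$, and otherwise outputs a bit $g$ sampled by $\alice'$ and shared by both receivers.

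For the analysis, fix the challenge bit $b$ and condition on $\mes_b$. Then $\ct$ has exactly the identical-search distribution, so both receivers output $\mes_b$ with probability at least $\delta$. The other message $\mes_{1-b}$ is uniform and independent of everything the receivers use except the classical copy of the pair, which they ignore when computing $\widetilde{\mes}_X$. Hence $\Pr[\widetilde{\mes}_\bob=\mes_{1-b}\lor\widetilde{\mes}_\charlie=\mes_{1-b}]\leq 2/|\messpa|$, which is negligible for superlogarithmic message length.

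The outcomes split into three cases.
\begin{itemize}
\item Both receivers hit $\mes_b$ (probability at least $\delta-\negl$): both are correct.
\item Both miss both messages: both output the shared $g$, which is correct with probability exactly $1/2$ because $g$ is independent of $b$.
\item One receiver hits $\mes_b$ and the other misses: joint success is at least $1/2$ when the missing receiver's fallback is $g$.
\end{itemize}
The remaining outcomes involve a hit on $\mes_{1-b}$ and have negligible probability.

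Let $p_2$ be the probability that both hit $\mes_b$, $p_1$ the probability that exactly one does, and $p_0$ the probability that neither does. The success probability is at least
\[
p_2+\tfrac12p_1+\tfrac12p_0-\negl=\tfrac12+\tfrac12p_2-\negl\geq\tfrac12+\tfrac{\delta}{2}-\negl .
\]
The key step is the $p_1$ case. There one party knows $b$ while the other guesses $g$, so joint success equals $\Pr[g=b]=1/2$. This is exactly why the fallback bit must be a single $g$ shared by both receivers rather than independent private coins: with private coins the no-hit case would succeed only with probability $1/4$.

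The main obstacle is the ciphertext length. The original search game uses a uniform message, but in the CPA game the encrypted message is $\mes_b$ with $\mes_0,\mes_1$ chosen by $\alice'$. Choosing them uniformly at random makes $\mes_b$ uniform, so the simulation is exact. The remaining subtlety is that the receivers also see $\mes_{1-b}$, which could in principle influence their outputs. We avoid this by having each receiver run the original algorithm on $\ct$ alone and only afterward compare its output against the pair. Its output is then independent of $\mes_{1-b}$, and the $2/|\messpa|$ collision bound holds. This yields a contradiction with identical-CPA security (\cref{def:sde-iden-cpa}).
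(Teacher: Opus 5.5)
Your proposal is correct and follows essentially the same reduction as the paper. In both, $\mes_0,\mes_1\getsr\messpa$ are sampled independently, the search receivers are run on the copied ciphertext, their outputs are compared against the pair, and a $2/|\messpa|$ union bound handles the unencrypted message. The paper uses the simpler asymmetric rule ``output $1$ if and only if the recovered message equals $\mes_1$,'' which is effectively your construction with the shared fallback fixed to $g=0$. This avoids your three-case analysis and directly gives success probability at least $\frac12+\frac p2-\frac1{|\messpa|}$.
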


\begin{proof}
Suppose that an identical-search adversary succeeds with probability $p$. To construct an identical-CPA adversary, first run the search adversary to obtain its two receiver registers and circuits. Independently sample $\mes_0,\mes_1\getsr\messpa$ and submit $(\mes_0,\mes_1)$ as the challenge pair. Each CPA receiver runs the corresponding search receiver and outputs $1$ exactly when the recovered message equals $\mes_1$.

If the challenge bit is $1$, both parties are correct with probability $p$. If it is $0$, each CPA receiver first runs its search receiver on the ciphertext and then compares the recovered value with $\mes_1$. The recovered value is independent of the fresh uniform message $\mes_1$, so a union bound shows that both parties output $0$ with probability at least $1-2/|\messpa|$. The resulting CPA success probability is at least $1/2+p/2-1/|\messpa|$. If $p\geq1/|\messpa|+\varepsilon$ for non-negligible $\varepsilon$, this lower bound is $1/2+\varepsilon/2-1/(2|\messpa|)$. The final term is negligible for superlogarithmic message length, contradicting identical-CPA security.
\end{proof}

Identical-challenge search and CPA security do not imply their independent-challenge counterparts; the corresponding two-key constructions appear in~\cite[Theorems~12 and~13]{EPRINT:CGLR23a}. The CPA construction also fails independent-search security: assigning one component key to each receiver allows both to decrypt whenever their independently sampled ciphertext indices identify the assigned keys, an event of constant probability. We omit this additional separation between identical and independent challenges from the figure for readability.

Search security does not imply CPA security either. Given a search-secure scheme over $\mathcal M_0$, enlarge its message space to $\bit\times\mathcal M_0$ and include the first component in the clear in every ciphertext. Exact recovery still requires recovering the $\mathcal M_0$ component, and superlogarithmic message length makes the resulting change in the uniform-guessing probability negligible. By contrast, messages that differ only in the first component are perfectly distinguishable. This construction gives the search-to-CPA separations in the figure, including the analogous separations between identical- and independent-challenge notions.

It remains to prove the separations with IND-CPA on the left. One scheme suffices: if it is IND-CPA secure but not IDEN-SEARCH secure, then the implications above show that it is neither IDEN-CPA nor IND-SEARCH secure. It is also neither CPA$^{+}$ nor strong-CPA secure, since both imply IND-SEARCH. The next construction, due to Coladangelo, Liu, Liu, and Zhandry, uses quantum states associated with shifted subspaces and a logarithmic-length public index in each ciphertext. The proof must also accommodate the unrelated message pairs allowed for the two receivers in IND-CPA.

\begin{theorem}[Independent CPA does not imply strong CPA]
\label{thm:sde-ind-cpa-not-strong-cpa}
Assuming post-quantum-secure indistinguishability obfuscation and post-quantum-secure one-way functions, there is a perfectly correct SDE scheme that is $\mathsf{SDE\mbox{-}IND\mbox{-}CPA}$ secure but not $\mathsf{SDE\mbox{-}STRONG\mbox{-}CPA}$ secure. The same scheme is not $\mathsf{SDE\mbox{-}IDEN\mbox{-}SEARCH}$ secure and establishes every displayed separation in \cref{fig:sde-hierarchy} whose left-hand notion is $\mathsf{SDE\mbox{-}IND\mbox{-}CPA}$.
\end{theorem}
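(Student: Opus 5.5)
The plan is to use a CLLZ-style coset-state scheme that is IND-CPA secure but leaks enough to break IDEN-SEARCH. This works because a single such scheme gives every separation with IND-CPA on the left. If a scheme is IND-CPA secure but not IDEN-SEARCH secure, then \cref{thm:sde-iden-cpa-implies-search} shows it is not IDEN-CPA secure. \Cref{thm:sde-prior-relations} shows it is not IND-SEARCH secure, since IND-SEARCH implies IDEN-SEARCH. The same theorem then shows it is neither CPA$^{+}$ nor strong-CPA secure, since both of these imply IND-SEARCH. So everything reduces to building such a scheme.

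\textbf{Construction.} For a logarithmic parameter $\ell=\lceil\log_2\secparam\rceil$, the key generation samples $L=2^{\ell}$ independent coset states $\ket{(A_j)_{s_j,t_j}}$ together with iO-obfuscated membership programs, following the CLLZ single-decryptor encryption. The secret key is $\bigotimes_j\ket{(A_j)_{s_j,t_j}}$. To encrypt, one samples a uniform index $j\getsr[L]$ and outputs $j$ together with an obfuscated program keyed to the $j$-th coset pair. Decryption uses only the $j$-th state.

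\textbf{Breaking IDEN-SEARCH.} The attack $\alice$ samples a uniform $j^\ast\in[L]$. It gives the $j^\ast$-th coset state to both receivers by sending its primal copy to $\bob$ and its dual, or a correctly decrypting form, to $\charlie$. If this duplication is not possible, I would instead modify the scheme so that ciphertext index $j$ has a designated classical-leak mode, for example a trapdoor component revealing $(s_j,t_j)$ only when $j$ equals a hidden index sampled at key generation. The simplest reliable variant is the following. Include in the public key an obfuscated program that, on input $j$, reveals $(A_j,s_j,t_j)$ for exactly one uniformly chosen $j^\star$, with that $j^\star$ sampled from $[L]$. $\alice$ finds $j^\star$ by polynomially many queries, learns the corresponding coset classically, and hands it to both receivers. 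With probability $1/L=1/\poly$ the single shared ciphertext uses index $j^\star$, so both receivers decrypt correctly. This is a non-negligible excess over $1/|\messpa|$.

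\textbf{Proving IND-CPA.} The two receivers get independent ciphertexts with independent indices $j_\bob,j_\charlie$. Conditioned on the event $j_\bob=j_\charlie=j^\star$, which has probability $1/L^2$, the adversary wins trivially. I therefore need the leak to be confined so that IND-CPA still holds. This is the main obstacle: a $1/\poly$ leak probability would also break IND-CPA. The fix I would try is to use \emph{different} leak indices per receiver in the ciphertext structure. Each ciphertext carries a fresh index, and the trapdoor leaks only the \emph{primal} coset for one index and only the \emph{dual} coset for that same index. In the identical game one receiver needs the primal and the other the dual of the same index, which $\alice$ supplies classically. In the independent game, each receiver must instead decrypt a full ciphertext, whose program requires both primal and dual membership ($\mu\oplus r$ and $r$ as in the KY construction). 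A receiver holding only classical primal information plus a shared coset state is then reduced to coset MOE via a hybrid argument over the indices. I expect the main work to be formulating the ciphertext so that both halves are needed by each party in IND-CPA but are split between the parties in IDEN-SEARCH. The security proof then follows the KY/CLLZ reduction with an extra guess of the indices, losing a $1/L^2$ factor, which is polynomial, while handling the unrelated message pairs by a standard bit-rewriting argument, as in the correlated-bits trick $(g,g\oplus\Delta)$ from the overview.
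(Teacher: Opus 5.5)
Your reduction of all the separations to one scheme that is IND-CPA secure but not IDEN-SEARCH secure is correct and matches the paper. The gap is in the scheme. If an index $j\in[L]$ selects one of $L$ independent coset states, then ciphertexts with distinct indices do not conflict, and IND-CPA fails even without any leak. $\alice$ gives half of the coset states to $\bob$ and the other half to $\charlie$. With independent indices, each receiver decrypts with probability $1/2$ and guesses otherwise, so both are correct with probability $(3/4)^2=9/16$. This is the multi-key construction that separates in the opposite direction (cf.\ the two-key scheme in \cref{thm:sde-iden-cpa-not-cpa-plus}).

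Your leak variants do not repair this. A public program revealing $(A_{j^\star},s_{j^\star},t_{j^\star})$ lets $\bob$ keep the whole key while $\charlie$ uses the leak, giving success $1/2+1/(2L)$. The primal/dual splitting idea misreads IDEN-SEARCH: there each receiver must recover the entire message alone from the common ciphertext, exactly as in the independent game. If the program needs both primal and dual membership, a receiver holding only one half cannot decrypt. You correctly name the property you need, but give no mechanism for it.

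The only difference between the two games is that in IDEN the receivers see the same index. So the scheme must make any two distinct indices mutually exclusive by monogamy, while keeping the index space polynomial. The paper does this as follows.
\begin{itemize}
\item Take $L=\lceil c\log\lambda\rceil$ coset states and an index $r\in\bit^L$ that chooses primal or dual in each coordinate. The program outputs $\mes$ exactly on the product coset $S_r$.
\item To break IDEN-SEARCH, measure the key according to a fixed $r^\star$ to get a classical $z^\star\in S_{r^\star}$. Both receivers then decrypt whenever the common index equals $r^\star$, which happens with probability $2^{-L}=1/\poly$.
\item In IND-CPA, decrypting two distinct indices $r\neq r'$ means producing points in $A_J+s_J$ and $A_J^\perp+t_J$ for a coordinate $J$ where $r_J\neq r'_J$. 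This contradicts CLLZ coset monogamy, with a factor $L$ lost for guessing $J$.
\end{itemize}

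Even with this construction, the IND-CPA proof is not a routine ``guess the indices'' reduction. Because the baseline is $1/2$, you must use the product form $P_\bob\otimes P_\charlie$ and successive threshold measurements. These isolate distinct indices $r_W\neq r_H$ on which both receivers are simultaneously good. Then, for an arbitrary message pair, you convert bit prediction into a point of $S_r$ via iO (the linear program $Q_{X,r,q,d}$) and quantum Goldreich--Levin. Your correlated-bits rewriting plays no role here, since the IND-CPA bits are already independent.
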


\begin{proof}
Let $n=2\lambda$, let $L=\lceil c\log\lambda\rceil$ for any fixed constant $c>0$, set $R:=2^L$, and take $\mathcal M=\zo^\lambda$. For a subspace $A\leq\FF_2^n$ and shifts $s,t\in\FF_2^n$, define the coset state
\begin{equation*}
\ket{A_{s,t}}:=\frac1{\sqrt{|A|}}\sum_{a\in A}(-1)^{\langle a,t\rangle}\ket{a+s}.
\end{equation*}
For all sufficiently large $\lambda$, we have $R\geq4$. For each $i\in[L]$, key generation independently samples an $n/2$-dimensional subspace $A_i\leq\FF_2^n$ and shifts $s_i,t_i\getsr\FF_2^n$. It publishes obfuscated programs that test membership in $A_i+s_i$ and $A_i^\perp+t_i$, and it sets $\qsk:=\bigotimes_{i=1}^L\ket{(A_i)_{s_i,t_i}}$.

For $a\in\bit$, define $B_{i,0}:=A_i$, $B_{i,1}:=A_i^\perp$, $c_{i,0}:=s_i$, and $c_{i,1}:=t_i$. For a public index $r\in\bit^L$, let $B_r:=\prod_i B_{i,r_i}$, let $c_r:=(c_{1,r_1},\ldots,c_{L,r_L})$, and set $S_r:=B_r+c_r$. The public programs determine membership in $S_r$. For $m\in\mathcal M$, define
\begin{equation*}
P_{r,m}(z):=
\begin{cases}
m,&z\in S_r,\\
\bot,&z\notin S_r.
\end{cases}
\end{equation*}
Encryption samples $r\getsr\bit^L$ and outputs $\ct=(r,\iO(P_{r,m}))$. To decrypt, apply a Hadamard transform to the register containing $\ket{(A_i)_{s_i,t_i}}$ exactly when $r_i=1$, evaluate the obfuscated program reversibly, and undo these operations. The transformed key state is supported on $S_r$, so correctness is perfect. This is the SDE construction of~\cite{C:CLLZ21}, with the program enlarged to output $\lambda$ bits.

We use the following known no-cloning property of coset states. Given the public membership programs and the tensor product of coset states above, no QPT algorithm $\alice$ that distributes two registers can enable two noncommunicating parties, after descriptions of the $A_i$ are revealed, to output $(r,z)$ and $(r',z')$ satisfying
\begin{equation}
r\neq r',\ z\in S_r,\text{ and }z'\in S_{r'}.
\label{eq:sde-separation-signature-monogamy}
\end{equation}
The computational statement follows from \cite[Theorem~4.18]{C:CLLZ21}; the information-theoretic claim assumed there was later proved by Culf and Vidick~\cite{CulfVidick22}. The one-coordinate game in the cited theorem asks one party for a point in $A_J+s_J$ and the other for a point in $A_J^\perp+t_J$. To reduce the statement above for two distinct indices to that game, place its challenge in a uniformly hidden coordinate $J$ and generate every other coordinate honestly. The reduction neither reveals nor uses $J$ while it runs $\alice$, performs the threshold measurements, or chooses the two indices. Because the coordinates are independently and identically distributed, conditioned on $r\neq r'$, the probability that $r_J\neq r'_J$ is $\operatorname{dist}(r,r')/L\geq1/L$. After $r$ and $r'$ have been chosen, but before the two registers are sent, the reduction sends the register whose $J$th index bit is $0$ to the party required to output a point in $A_J+s_J$, and it sends the other register to the party required to output a point in $A_J^\perp+t_J$. The reduction loses a factor of at most $L$.

The scheme is not IDEN-SEARCH secure. Fix $r^\star\in\bit^L$, apply Hadamard transforms according to $r^\star$, and measure the quantum key to obtain $z^\star\in S_{r^\star}$. Give both parties a classical copy of $z^\star$ and the same uniform value $g\in\mathcal M$. On a common ciphertext whose first component is $r$, both parties recover the message if $r=r^\star$, and otherwise both output $g$. Their joint success probability is
\begin{equation}
\frac1R+\left(1-\frac1R\right)\frac1{|\mathcal M|}=\frac1{|\mathcal M|}+\frac1R\left(1-\frac1{|\mathcal M|}\right),
\label{eq:sde-separation-iden-search-attack}
\end{equation}
which exceeds the uniform-guessing probability by an inverse polynomial.

We now prove IND-CPA security in the generalized game in which the parties may choose unrelated message pairs. Suppose, toward a contradiction, that an adversary $\adve=(\alice,\bob,\charlie)$ succeeds with probability at least $1/2+\varepsilon$ for inverse-polynomial $\varepsilon$. The adversary's classical message-pair and circuit-output registers remain part of the state, and all operations below are controlled by them; we suppress these registers. For $X\in\{\bob,\charlie\}$ and $r\in\bit^L$, let $P_{X,r}$ be $X$'s acceptance operator when the first ciphertext component is fixed to $r$, the challenge bit is uniform, and encryption is otherwise honest. Set $P_X:=R^{-1}\sum_{r\in\bit^L}P_{X,r}$. The two parties receive independent challenge bits, public indices, and encryptions, so their joint acceptance operator is $P_\bob\otimes P_\charlie$. Hence, the state $\rho$ distributed by $\alice$ satisfies
\begin{equation}
\Tr[(P_\bob\otimes P_\charlie)\rho]\geq\frac12+\varepsilon.
\label{eq:sde-separation-ind-win}
\end{equation}
If either party chooses the same message twice, its challenge bit is independent of its view and its acceptance operator is $I/2$, which makes the joint success probability at most $1/2$. Thus, both message pairs are distinct.

We first find distinct public indices for which the two receivers have nontrivial prediction probabilities. Every threshold measurement below is the measurement from \cref{lem:gapped-threshold-instrument}, whose guarantees remain valid when the measured register is entangled with another register. Choose the inverse-polynomial error parameters sufficiently small, as fixed polynomials in $\varepsilon/L$, that all resulting errors are absorbed by the inverse-polynomial lower bounds below. These measurements have reversible implementations using the public encryption and receiver circuits while preserving their randomness and work registers; no cryptographic assumption is used in this step. The algorithm $\alice$ performs every threshold measurement and chooses both indices before distributing the two registers. It uses only the public key, the message pairs, and the public encryption algorithm. Descriptions of the hidden subspaces are revealed only later, when the two receivers run the point-producing algorithms described below.

Apply the threshold measurement to $P_\bob\otimes P_\charlie$, with completeness threshold $1/2+3\varepsilon/4$ and soundness threshold $1/2+2\varepsilon/3$. The inequality $M\preceq\tau I+(1-\tau)\mathbf 1[M\geq\tau]$ and \cref{eq:sde-separation-ind-win} imply that this measurement accepts with probability $\Omega(\varepsilon)$. Apart from the chosen approximation error, every pair of local eigenvalues $(\lambda_\bob,\lambda_\charlie)$ in the accepted state satisfies
\begin{equation}
\lambda_\bob\lambda_\charlie>\frac12+\frac{2\varepsilon}{3}.
\label{eq:sde-separation-product-sector}
\end{equation}
Both eigenvalues exceed $1/2+2\varepsilon/3$, and at least one exceeds $1/\sqrt2>11/16$. Choose one party uniformly, denote it by $H$, and denote the other by $W$; the letter $H$ merely identifies the party tested against the higher constant threshold. Apply the threshold measurement to $P_H$ with completeness threshold $11/16$ and soundness threshold $43/64$. With constant probability, $H$ is a party whose eigenvalue exceeds $11/16$, and the measurement accepts. Apart from the chosen approximation error, the resulting state is supported on local spectral subspaces on which
\begin{equation}
P_H\succeq\frac{43}{64}I\text{ and }P_W\succeq\left(\frac12+\frac{2\varepsilon}{3}\right)I.
\label{eq:sde-separation-high-low}
\end{equation}
Here and below, such inequalities are understood after compression to the stated post-measurement support, and $I$ denotes the identity on that support.

Sample $r_W\getsr\bit^L$ and apply the threshold measurement to $P_{W,r_W}$ with completeness threshold $1/2+\varepsilon/4$ and soundness threshold $1/2+7\varepsilon/32$. Averaging the threshold bound over $r_W$ and using \cref{eq:sde-separation-high-low} gives acceptance probability $\Omega(\varepsilon)$. Next sample $r_H\getsr\bit^L\setminus\{r_W\}$. On the accepted support, $P_H\succeq(43/64)I$, and $P_{H,r_W}\preceq I$; hence
\begin{equation}
\frac1{R-1}\sum_{r\neq r_W}P_{H,r}=\frac{RP_H-P_{H,r_W}}{R-1}\succeq\frac{R(43/64)-1}{R-1}I\succeq\frac9{16}I,
\label{eq:sde-separation-excluded-tag}
\end{equation}
where the last inequality uses $R\geq4$. Applying the threshold measurement to $P_{H,r_H}$ with completeness threshold $17/32$ and soundness threshold $67/128$ therefore accepts with constant probability. A measurement on one party does not disturb spectral support already established on the other party's register. Thus, before the two registers are distributed, with probability $\Omega(\varepsilon^2)$ we obtain distinct indices $r_W\neq r_H$ and a state that, apart from the chosen approximation error, is supported where
\begin{equation}
P_{W,r_W}>\frac12+\frac{7\varepsilon}{32}\text{ and }P_{H,r_H}>\frac12+\frac3{128}.
\label{eq:sde-separation-fixed-tag-sectors}
\end{equation}

It remains to show that a decryptor that predicts the challenge bit for a fixed public index $r$ can produce a point in $S_r$, even for an arbitrary local message pair $(\mu_{X,0},\mu_{X,1})$. Let $\Delta_X:=\mu_{X,0}\oplus\mu_{X,1}$, where $a\cdot\Delta_X$ denotes $\Delta_X$ when $a=1$ and the all-zero string when $a=0$. For $q\getsr B_r^\perp$ and $d\getsr\bit$, define
\begin{equation}
Q_{X,r,q,d}(z):=
\begin{cases}
\mu_{X,0}\oplus(d\oplus\langle q,z\rangle)\cdot\Delta_X,&z\in S_r,\\
\bot,&z\notin S_r.
\end{cases}
\label{eq:sde-separation-linear-program}
\end{equation}
For $z\in S_r=B_r+c_r$, we have $\langle q,z\rangle=\langle q,c_r\rangle$. Thus, for $b=d\oplus\langle q,c_r\rangle$, the circuits satisfy $Q_{X,r,q,d}\equiv P_{r,\mu_{X,b}}$. Moreover, $b$ is uniform because $d$ is uniform. By iO security from \cref{predef:io}, replacing the honest fixed-$r$ program by $Q_{X,r,q,d}$ changes the prediction probability by at most a negligible amount. The following claim states the required conclusion when the receiver may be entangled with an additional register.

\medskip
\noindent\emph{Fixed-index recovery claim.}
Let $\tau$ be an efficiently preparable subnormalized state, that is, a positive operator satisfying $\Tr(\tau)\leq1$, that contains the honestly generated classical setup, a public index $r$, a distinct message pair $(\mu_0,\mu_1)$, a receiver register, and an arbitrary additional register. Suppose that $p:=\Tr(\tau)$ is inverse polynomial and that, in the honest fixed-$r$ challenge-bit experiment, the receiver is correct with total probability at least $(1/2+\delta)p$, where $\delta$ is inverse polynomial. After descriptions of the $A_i$ are revealed, there is a local QPT algorithm that leaves the additional register unchanged and outputs $z\in S_r$ with probability $\Omega(p\delta^2)$.

To prove the claim, an adversary in the iO security experiment of \cref{predef:io} generates the honest setup and prepares $\tau$, and thus knows $c_r$ for this comparison. It samples $q\getsr B_r^\perp$ and $d\getsr\bit$, sets $b=d\oplus\langle q,c_r\rangle$, and gives its iO challenger equally padded descriptions of $P_{r,\mu_b}$ and $Q_{r,q,d}$, which compute the same function. It runs the receiver on the returned obfuscation and tests whether the output is $b$; if preparation of $\tau$ fails, it instead outputs a fair bit. With $P_{r,\mu_b}$, this is exactly the honest fixed-$r$ test for uniform $b$. With $Q_{r,q,d}$, XORing the receiver's output with $d$ predicts $\langle q,c_r\rangle$. Post-quantum iO therefore preserves the total prediction advantage on $\tau$ up to $\negl(\lambda)$, or up to $\negl(\lambda)/p$ after conditioning on successful preparation of $\tau$. The argument works directly with $\tau$; it neither prepares the normalized state nor requires inverse-polynomial probability for each individual value of $r$.

After the $A_i$ are revealed, let $G_r$ be a full-rank matrix whose rows span $B_r^\perp$, and write $q=G_r^{\mathsf T}y$. The predictor now guesses $\langle q,c_r\rangle=\langle y,G_rc_r\rangle$. The quantum Goldreich--Levin lemma~\cite[Lemma~B.12]{C:CLLZ21} allows the predictor to hold an arbitrary quantum register and recovers $G_rc_r$ with probability $\Omega(\delta^2)$, conditioned on successful preparation of $\tau$. More precisely, let $\Gamma_Q$ be twice the prediction advantage when using $Q$, weighted by the trace $p$ of $\tau$. Then iO security and the averaging identity in the cited lemma give $\Gamma_Q\geq2\delta p-\negl(\lambda)$, and Cauchy--Schwarz gives $\Pr[\mathsf{recover}]\geq\Gamma_Q^2/p=\Omega(p\delta^2)$. These bounds average over the honest setup and the public index. Solving $G_rz=G_rc_r$ yields $z\in B_r+c_r=S_r$.

The algorithm producing $z$ computes $q$, the program in \cref{eq:sde-separation-linear-program}, and the classical obfuscator reversibly, keeping every random coin in a work register. The iO comparison samples $q$ classically and never invokes iO on a superposition of circuits. The prediction-to-recovery guarantee of the cited Goldreich--Levin lemma remains valid when the receiver register is entangled with the additional register. This proves the claim.

Apply the claim first to $W$, treating $H$'s register as the additional register that must remain unchanged. The first inequality in \cref{eq:sde-separation-fixed-tag-sectors} gives conditional probability $\Omega(\varepsilon^2)$ of producing a valid point. Verify the output publicly and condition on successful verification, thereby obtaining a subnormalized state. Because this operation acts only on $W$, the second spectral-support condition in \cref{eq:sde-separation-fixed-tag-sectors} remains valid up to the chosen measurement error; see~\cite[Claim~6.18]{C:CLLZ21}. The subnormalized state still has inverse-polynomial trace, so applying the claim to $H$ gives constant conditional probability of producing a valid point for $r_H$. The two local algorithms may run concurrently; we describe them sequentially only to analyze their joint success. Including the $\Omega(\varepsilon^2)$ probability of reaching the state in \cref{eq:sde-separation-fixed-tag-sectors}, the algorithms produce valid points for the distinct indices $r_W$ and $r_H$ with probability $\Omega(\varepsilon^4)$ after accounting for those errors. The additional $1/L$ loss in the one-coordinate reduction is inverse polynomial, contradicting \cref{eq:sde-separation-signature-monogamy}. Thus, the scheme is IND-CPA secure.

Together with \cref{eq:sde-separation-iden-search-attack}, this result separates IND-CPA from IDEN-SEARCH. Since IDEN-CPA implies IDEN-SEARCH and IND-SEARCH implies IDEN-SEARCH, it also gives the corresponding two separations. Finally, if the scheme were strong-CPA secure, then \cref{thm:sde-prior-relations} would imply CPA$^{+}$ security, then IND-SEARCH security, and finally IDEN-SEARCH security, contradicting \cref{eq:sde-separation-iden-search-attack}. Hence, the scheme is not strong-CPA secure and, by the same equivalence, is not CPA$^{+}$ secure.
\end{proof}

\subsubsection{Summary of the Main Relations}\label{sec:sde-hierarchy-summary}
For convenience, we summarize the main relations established above. As in \cref{sec:sde-hier-defn}, every statement involving search security assumes superlogarithmic message length.
\begin{enumerate}
    \item SDE-CORR security implies strong-CPA security (and hence CPA${}^+$ security by~\cite{TCC:KitYam25}) by \cref{thm:sde-corr-implies-strong-cpa}, and it directly implies IND-CPA and IDEN-CPA security by \cref{lem:sde-corr-immediate}.
    \item The restriction of SDE-CORR to separate mode is equivalent to CPA${}^+$ security and hence to strong-CPA security, even with polynomially many challenge-message pairs and every admissible challenge sampler; see \cref{thm:sde-sep-corr-iff-cpa-plus}.
    \item Strong-CPA security, or equivalently CPA${}^+$ security, implies IND-CPA and IND-SEARCH security, while IND-SEARCH and strong-search security are equivalent and imply IDEN-SEARCH security; see \cref{thm:sde-prior-relations}.
    \item Strong-CPA security strictly implies strong-search security under the assumptions stated in \cref{thm:sde-strong-cpa-strictly-implies-strong-search}. Moreover, IDEN-CPA security implies IDEN-SEARCH security by \cref{thm:sde-iden-cpa-implies-search}.
    \item IDEN-CPA security does not imply CPA${}^+$ security by \cref{thm:sde-iden-cpa-not-cpa-plus}. Assuming post-quantum-secure iO and post-quantum-secure one-way functions, IND-CPA security does not imply IDEN-CPA, IND-SEARCH, IDEN-SEARCH, CPA${}^+$, or strong-CPA security; all these separations are witnessed by the scheme in \cref{thm:sde-ind-cpa-not-strong-cpa}.
    \item IDEN-SEARCH security does not imply IND-SEARCH security, and IDEN-CPA security implies neither IND-CPA nor IND-SEARCH security \cite[Theorems~12 and~13]{EPRINT:CGLR23a}. Moreover, placing one component of every message in the clear shows that IND-SEARCH implies neither IND-CPA nor IDEN-CPA, and that IDEN-SEARCH implies neither IDEN-CPA nor IND-CPA.
\end{enumerate}

\clearpage

\section{Unclonable Puncturable Obfuscation Security Definitions and Copy-Protection of General Functionalities}
In this section, we present security definitions for unclonable puncturable obfuscation and copy-protection of general functionalities. 

\subsection{Unclonable Puncturable Obfuscation definitions}
First, we recall the syntax and correctness of $\upo$ from~\cite{AB24,ABH+26}, then in \cref{sec:upo-prior-defn} we recall the prior security definitions. Then, in \cref{sec:upo-new-defn} we present our new definition and show that our new definition implies all prior UPO security notions considered in previous works, except UPO plus.
Let $\mathcal C=\{\mathcal C_\secparam\}_{\secparam\in\NN}$ be a
keyed circuit class, where
$\mathcal C_\secparam=\{C_k:\mathcal X_\secparam\rightarrow
\mathcal Y_\secparam\}_{k\in\mathcal K_\secparam}$,
$\mathcal X_\secparam:=\{0,1\}^{\ell_{\mathsf{in}}(\secparam)}$,
and
$\mathcal Y_\secparam:=\{0,1\}^{\ell_{\mathsf{out}}(\secparam)}$.

\begin{definition}[Unclonable puncturable obfuscation]
A UPO scheme $\Pi=(\Obf,\Eval)$ for $\mathcal C$ consists of a pair
of efficient algorithms as follows.
\begin{enumerate}
    \item $\rho_k\gets\Obf(1^\secparam,C_k):$ is a QPT algorithm
    that takes as input a circuit $C_k\in\mathcal C_\secparam$ and
    outputs a quantum obfuscation $\rho_k$.

    \item $(\rho'_k,y)\gets\Eval(\rho_k,x):$ is a QPT algorithm that takes as
    input a quantum obfuscation $\rho_k$ and an input
    $x\in\mathcal X_\secparam$, and outputs
    an updated quantum obfuscation $\rho'_k$ and $y\in\mathcal Y_\secparam$.
\end{enumerate}

\paragraph{Correctness} We require that there exists a negligible function $\negl(\cdot)$ such that, for every $\secparam\in\NN$, every $k\in\mathcal K_\secparam$, and every $x\in\mathcal X_\secparam$,
\begin{equation}
\Pr[ y=C_k(x)\mid \rho_k\gets\Obf(1^\secparam,C_k),  (\rho'_k,y)\gets\Eval(\rho_k,x)]\geq 1-\negl(\secparam).
\end{equation}
As in~\cite[Remark~8]{AB24}, the almost-as-good-as-new lemma and a quantum union bound give polynomial reusability. We say that the scheme is perfectly correct when the probability above is one.
\end{definition}

Throughout this section, whenever a UPO security experiment
obfuscates a circuit produced by one of its specified puncturing,
programming, or compilation algorithms, the UPO scheme and its
correctness requirement are understood to apply to that circuit as
well.
\subsubsection{Prior Definitions:Generalized UPO Security}\label{sec:upo-prior-defn}
We next recall generalized puncturing from~\cite{AB24,ABH+26}.
A keyed circuit class $\mathcal C$ is generalized puncturable if
there exists a deterministic PPT algorithm $\genpuncture$ with the
following syntax.

\begin{definition}[Generalized puncturing]
$G^\star_{k,x_\bob,x_\charlie,\mu_\bob,\mu_\charlie}
\gets\genpuncture(k,x_\bob,x_\charlie,\mu_\bob,\mu_\charlie):$
takes as input a key $k\in\mathcal K_\secparam$, points
$x_\bob,x_\charlie\in\mathcal X_\secparam$, and circuits
$\mu_\bob,\mu_\charlie:\mathcal X_\secparam\rightarrow
\mathcal Y_\secparam$, and outputs a circuit
$G^\star_{k,x_\bob,x_\charlie,\mu_\bob,\mu_\charlie}$ satisfying
\begin{equation}
G^\star_{k,x_\bob,x_\charlie,\mu_\bob,\mu_\charlie}(x)
:=
\begin{cases}
\mu_\bob(x),
    &x=x_\bob,\\
\mu_\charlie(x),
    &x=x_\charlie\text{ and }x_\charlie\neq x_\bob,\\
C_k(x),
    &\text{otherwise.}
\end{cases}
\label{eq:generalized-puncturing}
\end{equation}
All compared circuits are padded to the same size.
\end{definition}

For any distribution $\mathcal D_X$ over
$\mathcal X_\secparam\times\mathcal X_\secparam$ and any QPT
adversarial triplet $\advupo=(\alice,\bob,\charlie)$, consider the
following experiment.

\par\noindent
$\gam{\mathsf{GenUPO}_{\Pi,\advupo,\mathcal{D_X},\mathcal{C}}}(1^\secparam)$:
\begin{enumerate}
    \item $\alice(1^\secparam)$ sends
    $(k,\mu_\bob,\mu_\charlie)$ to $\ch$, where
    $k\in\mathcal K_\secparam$ and
    $\mu_\bob,\mu_\charlie:\mathcal X_\secparam\rightarrow
    \mathcal Y_\secparam$.

    \item $\ch$ samples
    $(x_\bob,x_\charlie)\gets\mathcal D_X(1^\secparam)$ and
    $b\getsr\bit$, and computes
    $G^\star\gets\genpuncture
    (k,x_\bob,x_\charlie,\mu_\bob,\mu_\charlie)$.

    \item $\ch$ generates
    $\rho_b\gets\Obf(1^\secparam,C_k)$ if $b=0$, and
    $\rho_b\gets\Obf(1^\secparam,G^\star)$ if $b=1$, and sends
    $\rho_b$ to $\alice$.

    \item $\alice$ outputs a joint state
    $\sigma_{\regbob,\regcharlie}$ and sends register $\regbob$ to
    $\bob$ and register $\regcharlie$ to $\charlie$.

    \item After the split, $\ch$ sends $x_\bob$ to $\bob$ and
    $x_\charlie$ to $\charlie$.

    \item $\bob$ and $\charlie$ output
    $\widehat b_\bob,\widehat b_\charlie$.

    \item The output of the experiment is $1$ if
    $\widehat b_\bob=\widehat b_\charlie=b$.
\end{enumerate}

\begin{definition}[$\mathcal D_X$-generalized UPO security]
A UPO scheme $\Pi$ satisfies $\mathcal D_X$-generalized UPO
security with respect to $\cal{C}$ if, for every QPT adversarial triplet
$\advupo=(\alice,\bob,\charlie)$, there exists a negligible
function $\negl(\cdot)$ such that
\begin{equation}
\Pr[
    \gam{\mathsf{GenUPO}_{\Pi,\advupo,\mathcal{D_X},\mathcal{C}}}
    (1^\secparam)=1
]
\leq
\frac{1}{2}+\negl(\secparam).
\end{equation}
\end{definition}

We use the following standard instantiations of the challenge
distribution. 

\begin{remark}\label{rem:instantiations}
    For a distribution $\mathcal D$ over
$\mathcal X_\secparam$, let $\mathcal D^2$ denote the product
distribution obtained by sampling $x_\bob,x_\charlie\gets\mathcal D$
independently. Let $\ID_{\mathcal D}$ denote the diagonal
distribution obtained by sampling $x\gets\mathcal D$ and outputting
$(x,x)$. Let $\unifdist$ denote the uniform distribution on
$\mathcal X_\secparam^2$. In particular,
$\IDU:=\ID_{\mathsf U}$ denotes the identical-uniform distribution,
which samples $x\getsr\mathcal X_\secparam$ and outputs $(x,x)$.
Thus $\unifdist$-generalized UPO is the independent-uniform or
product-challenge notion, while $\ID_{\mathcal D}$-generalized UPO
is the diagonal notion and $\IDU$-generalized UPO is the
identical-uniform notion.
\end{remark}

\subsubsection{Correlated-Challenge Security}\label{sec:upo-new-defn}
We generalize the prior definition of UPO in three ways: first, we allow the challenge bits for $\bob$ and $\charlie$ to be arbitrarily correlated; next, we allow both the challenge points and the two puncturing bits to be sampled jointly, thus allowing them to be correlated; finally, we allow the sampler to output classical auxiliary information $z_{\mathsf{aux}}$ that is given to $\alice$ before she chooses the circuit and classical auxiliary information $w_\bob,w_\charlie$ that is given to $\bob,\charlie$, respectively, after the split\footnote{We note that auxiliary information leakage is a natural generalization which can become crucial for some of the applications of UPO (in fact, this generalization is crucial for the application to copy protection for compute-and-compare functions, see \Cref{thm:compute-and-compare-copy-protection}).}. We require the challenge bits to have uniform marginals conditioned on all the auxiliary information and the challenge points, and require each challenge point to have high average conditional min-entropy, as defined in \Cref{def:average-conditional-min-entropy}, conditioned on the pre-split auxiliary information.\footnote{Arbitrary correlation without these restrictions is not tolerable, since the auxiliary information or the challenge points may trivially reveal a challenge bit.}

\begin{definition}[Correlated High min-entropy UPO sampler]
\label{def:ac-upo-sampler}
Let $\kappa:=\kappa(\secparam)=\secparam^c$ for some constant $c>0$.
A correlated UPO sampler with $\kappa$-entropic marginals for a keyed-circuit class $\mathcal{C}=\{C_\secparam\}_{\secparam\in \NN}$ with input space $\cal{X}$ is a QPT algorithm $\sampler$ that, on input $1^\secparam$, outputs
$(z_{\mathsf{aux}},x_\bob,x_\charlie,b_\bob,b_\charlie,w_\bob,w_\charlie)
\gets\sampler(1^\secparam)$, where
$z_{\mathsf{aux}}\in\mathcal Z^{\mathsf{aux}}_\secparam$ is a polynomial-length classical auxiliary string,
$x_\bob,x_\charlie\in\mathcal X_\secparam$ and
$b_\bob,b_\charlie\in\bit$, and $w_\bob,w_\charlie$ are polynomial-length classical auxiliary strings, such that
\begin{enumerate}
    \item For every $(\bar z_{\mathsf{aux}},\bar x_\bob,\bar x_\charlie,\bar w_\bob,\bar w_\charlie)$ in the support of $(z_{\mathsf{aux}},x_\bob,x_\charlie,w_\bob,w_\charlie)$, every $X\in\{\bob,\charlie\}$, and every $a\in\bit$,
\begin{equation}
\Pr[
    b_X=a
    \mid
    \substack{
    z_{\mathsf{aux}}=\bar z_{\mathsf{aux}},\ x_\bob=\bar x_\bob,\ x_\charlie=\bar x_\charlie,\\
    w_\bob=\bar w_\bob,\ w_\charlie=\bar w_\charlie
    }
]
=
\frac{1}{2}.
\label{eq:ac-upo-uniform-marginals}
\end{equation}
    \item For every $X\in\{\bob,\charlie\}$,
    $\Hmin(x_X\mid z_{\mathsf{aux}})\geq\kappa(\secparam)$.
\end{enumerate}
We identify a sampler with no explicit auxiliary output with the special case $z_{\mathsf{aux}}:=w_\bob:=w_\charlie:=\bot$.
\end{definition}

For $k,x_\bob,x_\charlie,b_\bob,b_\charlie,
\mu_\bob,\mu_\charlie$ as above, define
\begin{equation}
G^{\mathsf{ac}}_{k,x_\bob,x_\charlie,b_\bob,b_\charlie,
\mu_\bob,\mu_\charlie}(x)
:=
\begin{cases}
\mu_\bob(x),
    &x=x_\bob,\ b_\bob=1,\\
\mu_\charlie(x),
    &x=x_\charlie,\ x_\charlie\neq x_\bob,\ b_\charlie=1,\\
C_k(x),
    &\text{otherwise.}
\end{cases}
\label{eq:ac-upo-puncturing}
\end{equation}
Equivalently,
$G^{\mathsf{ac}}\gets\acpuncture
(k,x_\bob,x_\charlie,b_\bob,b_\charlie,
\mu_\bob,\mu_\charlie)$.
If $x_\bob=x_\charlie$ and $b_\bob\neq b_\charlie$, we follow the convention used in~\cite{ABH+26} that $\bob$'s challenge bit takes priority, i.e., the common point $x_\bob=x_\charlie$ is punctured using $\mu_\bob$ if $b_\bob=1$, and is left unpunctured if $b_\bob=0$, oblivious of $\charlie$'s challenge bit $b_\charlie$.

\begin{definition}[Correlated UPO security]
\label{def:ac-upo-security}
For any QPT adversarial triplet
$\advupo=(\alice,\bob,\charlie)$ against a UPO scheme
$\Pi=(\Obf,\Eval)$ and any correlated UPO sampler
$\sampler$ with $\kappa$-entropic marginals, write $\alice=(\alice_0,\alice_1)$, where $\alice_0$ may retain an arbitrary quantum state for $\alice_1$, and consider the following experiment.

\par\noindent
$\gam{\acupo_{\Pi,\advupo,\sampler,\mathcal{C}}}(1^\secparam)$:
\begin{enumerate}
    \item $\ch$ samples
    \begin{equation*}
    (z_{\mathsf{aux}},x_\bob,x_\charlie,b_\bob,b_\charlie,w_\bob,w_\charlie)
    \gets\sampler(1^\secparam)
    \end{equation*}
    and sends $z_{\mathsf{aux}}$ to $\alice_0$.

    \item $\alice_0(1^\secparam,z_{\mathsf{aux}})$ sends
    $(k,\mu_\bob,\mu_\charlie)$ to $\ch$, where
    $k\in\mathcal K_\secparam$ and
    $\mu_\bob,\mu_\charlie:\mathcal X_\secparam\rightarrow
    \mathcal Y_\secparam$, while retaining an arbitrary quantum state.

    \item $\ch$ computes
    $G^{\mathsf{ac}}\gets\acpuncture
    (k,x_\bob,x_\charlie,b_\bob,b_\charlie,
    \mu_\bob,\mu_\charlie)$, generates
    $\rho\gets\Obf(1^\secparam,G^{\mathsf{ac}})$, and sends
    $\rho$ to $\alice_1$.

    \item $\alice_1$ outputs a joint state
    $\sigma_{\regbob,\regcharlie}$ and sends register $\regbob$ to
    $\bob$ and register $\regcharlie$ to $\charlie$.

    \item After the split, $\ch$ sends $(x_\bob,w_\bob)$ to $\bob$ and
    $(x_\charlie,w_\charlie)$ to $\charlie$.

    \item $\bob$ and $\charlie$ output
    $\widehat b_\bob,\widehat b_\charlie$.

    \item The output of the experiment is $1$ if
    $\widehat b_\bob=b_\bob$ and
    $\widehat b_\charlie=b_\charlie$.
\end{enumerate}

A UPO scheme $\Pi$ satisfies $\kappa$-entropic \acupo security with respect to $\cal{C}$ if, for every correlated UPO sampler $\sampler$ with $\kappa$-entropic marginals for $\cal{C}$, and every QPT adversarial triplet $\advupo$, there exists a negligible function $\negl(\cdot)$ such that
\begin{equation}
\Pr[
    \gam{\acupo_{\Pi,\advupo,\sampler,\mathcal{C}}}(1^\secparam)=1
]
\leq
\frac{1}{2}+\negl(\secparam).
\end{equation}
\end{definition}

\begin{definition}[$\io$-security for UPO, {adapted from~\cite[Definition 3.15]{ABH+26}}]
\label{def:upo-io-security}
A UPO scheme $\Pi=(\Obf,\Eval)$ for a circuit class $\mathcal C$ satisfies $\io$-security if, for every QPT adversarial pair $\adv_{\io}=(\adv_{\io,0},\adv_{\io,1})$, where $\adv_{\io,0}(1^\secparam)$ outputs same-size circuits $C_0,C_1\in\mathcal C_\secparam$ and a quantum state $\sigma_{\mathsf{aux}}$ such that $C_0(x)=C_1(x)$ for every input $x$, there exists a negligible function $\negl(\cdot)$ such that
\begin{equation}
\Pr\left[
\widehat b=b:
\begin{array}{l}
(C_0,C_1,\sigma_{\mathsf{aux}})\gets\adv_{\io,0}(1^\secparam),\\
b\getsr\bit,\quad
\rho_b\gets\Obf(1^\secparam,C_b),\\
\widehat b\gets\adv_{\io,1}(\sigma_{\mathsf{aux}},\rho_b)
\end{array}
\right]
\leq
\frac{1}{2}+\negl(\secparam).
\end{equation}
\end{definition}
Next, we show the following composition theorem for correlated security of UPO using the same argument as in the proof of~\cite[Theorem~12]{AB24}.
\begin{theorem}[Adding $\io$ security, adapted from
{\cite[Theorem~12]{AB24},\cite[Lemma~3.16]{ABH+26}}]
\label{thm:correlated-upo-composition}
Let
$\mathsf{Circ}_{\ell_{\mathsf{in}},\ell_{\mathsf{out}}}$
denote the full class of polynomial-size circuits from
$\zo^{\ell_{\mathsf{in}}(\secparam)}$ to
$\zo^{\ell_{\mathsf{out}}(\secparam)}$, regarded as a keyed circuit
class whose keys are circuit descriptions. If there exist a
post-quantum secure indistinguishability obfuscation scheme and a UPO
scheme for
$\mathsf{Circ}_{\ell_{\mathsf{in}},\ell_{\mathsf{out}}}$
satisfying $\kappa$-entropic correlated UPO security, then there
exists a UPO scheme for this circuit class satisfying both
$\kappa$-entropic correlated UPO security and $\io$-security.
Consequently, the same holds for every polynomial-size keyed subclass
$\mathcal C\subseteq
\mathsf{Circ}_{\ell_{\mathsf{in}},\ell_{\mathsf{out}}}$.
\end{theorem}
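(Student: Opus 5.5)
The construction is the standard ``obfuscate-then-protect'' composition of~\cite[Theorem~12]{AB24}. Let $\Pi=(\Obf,\Eval)$ be the given $\kappa$-entropic correlated-secure UPO for $\mathsf{Circ}_{\ell_{\mathsf{in}},\ell_{\mathsf{out}}}$. Define $\Pi'=(\Obf',\Eval')$ by
\[
\Obf'(1^\secparam,C):=\Obf\bigl(1^\secparam,\io(1^\secparam,C)\bigr)
\]
and $\Eval'=\Eval$. Here the classical obfuscated circuit $\io(C)$, padded to a fixed polynomial size bound, is regarded as a key of the circuit class. Correctness follows directly from correctness of $\io$ and of $\Pi$.

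\textbf{$\io$-security.} Given $C_0\equiv C_1$ of equal size, we switch from $\Obf(\io(C_0))$ to $\Obf(\io(C_1))$ by a single invocation of $\io$ security (\cref{predef:io}). The reduction receives $\widetilde C_b=\io(C_b)$ together with $\sigma_{\mathsf{aux}}$, computes $\Obf(\widetilde C_b)$ itself, and forwards the result to $\adv_{\io,1}$. Since $\Obf$ is a public QPT algorithm, this is immediate.

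\textbf{Correlated UPO security.} Fix a correlated sampler $\sampler$ and an adversary $(\alice_0,\alice_1,\bob,\charlie)$ against $\Pi'$. In the real game the challenger obfuscates
\[
\io\bigl(G^{\mathsf{ac}}_{k,x_\bob,x_\charlie,b_\bob,b_\charlie,\mu_\bob,\mu_\charlie}\bigr),
\]
which we write as $\io(G^{\mathsf{ac}})$ with $k=C$ the circuit chosen by $\alice_0$. We build an adversary $(\alice_0',\alice_1',\bob',\charlie')$ against $\Pi$ with the same sampler $\sampler$:
\begin{enumerate}
\item $\alice_0'$ runs $\alice_0(z_{\mathsf{aux}})$ and obtains $(C,\mu_\bob,\mu_\charlie)$.
\item It computes $k':=\io(C)$ and uses $k'$ as its key. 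Its puncturing circuits are $\mu_\bob,\mu_\charlie$, unchanged.
\item $\alice_1'$, $\bob'$ and $\charlie'$ simply run $\alice_1$, $\bob$ and $\charlie$.
\end{enumerate}
The challenger for $\Pi$ then obfuscates $G^{\mathsf{ac}}_{k',\ldots}$. This is $\io(C)$ with the challenge points hardwired, i.e.\ a fixed ``puncturing wrapper'' around $\io(C)$, whereas the real game obfuscates $\io$ applied to the punctured $C$ itself.

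\textbf{Main obstacle: matching the two distributions.} Bridging this gap is the delicate step, because it must be done without knowing $x_\bob,x_\charlie,b_\bob,b_\charlie$, which are hidden inside the challenge. Following~\cite{AB24}, I would not perform the switch inside the reduction. Instead I would compare two hybrid experiments against $\Pi'$ itself:
\begin{enumerate}
\item $\mathsf{H}_0$: the real game, where the challenger obfuscates $\io(G^{\mathsf{ac}}[C])$.
\item $\mathsf{H}_1$: the challenger obfuscates $G^{\mathsf{ac}}[\io(C)]$, i.e.\ the canonical wrapper applied to an $\io$-obfuscated key.
\end{enumerate}
The two classical circuits $\io(G^{\mathsf{ac}}[C])$ and $G^{\mathsf{ac}}[\io(C)]$ are not themselves functionally equivalent objects in the sense $\io$ needs, so I would wrap once more. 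Redefine
\[
\Obf'(C):=\Obf\bigl(\io(C)\bigr)
\]
and arrange that the keyed puncturing operation on keys $k$ is itself implemented as $\io(\text{wrapper}(k))$. Then $G^{\mathsf{ac}}[C]$ and $\text{wrapper}(\io(C))$ are functionally equivalent circuits, and $\io$ security shows the two hybrids are close.

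To apply $\io$ security here, the circuit sampler $\mathcal B$ of \cref{predef:io} runs $\sampler$ and $\alice_0$ and outputs the two equivalent circuits. It passes $z_{\mathsf{aux}}$, $x_\bob,x_\charlie$, $w_\bob,w_\charlie$, $b_\bob,b_\charlie$ and $\alice_0$'s retained state as auxiliary information. The distinguisher applies $\Obf$, runs $\alice_1,\bob,\charlie$, and checks the winning condition. This is legitimate because $\mathcal B$ is allowed to know all of these values.

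\textbf{Conclusion.} Once the hybrids are shown close, $\mathsf{H}_1$ is exactly the correlated game against $\Pi$ with key $\io(C)$. The entropy and marginal-uniformity conditions on $\sampler$ are unaffected, since $\sampler$ is unchanged. Correlated security of $\Pi$ therefore bounds the winning probability by $1/2+\negl$.

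Equal-size padding must be fixed globally, so that $\io$ security applies and every key has the same length. The final claim, for every polynomial-size keyed subclass $\mathcal C$, follows by restriction: correlated security against arbitrary $\alice_0$ over the full class implies it in particular for adversaries choosing keys in $\mathcal C$.
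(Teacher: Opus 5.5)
Your construction $\Obf'(C)=\Obf(\io(C))$ and your $\io$-security argument are fine, and you locate the real obstacle correctly. Your resolution of it, however, does not go through. The puncturing in the hypothesis is the class's fixed $\acpuncture$, applied by the $\Pi$-challenger and not by your reduction. Under that puncturing, your $\mathsf{H}_1=\Obf(G^{\mathsf{ac}}[\io(C)])$ is indeed the $\Pi$-game with key $\io(C)$, but nothing justifies $\mathsf{H}_0\approx\mathsf{H}_1$.

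After you ``wrap once more'', $\mathsf{H}_0\approx\mathsf{H}_1$ does follow from $\io$. But now $\mathsf{H}_1=\Obf(\io(G^{\mathsf{ac}}[\io(C)]))$, which is not what the $\Pi$-challenger produces on key $\io(C)$, and you cannot ``arrange'' how that challenger punctures. Either way one link is missing. Supplying it means showing $\Obf(\io(W))\approx\Obf(W)$ for a circuit $W$, i.e.\ $\io$-security of $\Obf$ itself, which is exactly what is not assumed.

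Two smaller points. First, $\io(G^{\mathsf{ac}}[C])$ and $G^{\mathsf{ac}}[\io(C)]$ \emph{are} functionally equivalent; the real obstruction is that $\io$ security only compares two $\io$ outputs. Second, a puncturing ``$\io(\text{wrapper}(k))$'' with fresh coins is not the deterministic algorithm the definition requires.

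The missing idea is to change the keyed class instead of the challenger. Use keys $(C,r)$ with $r$ an $\io$ tape, and circuits $\io(C;r)$. Take the deterministic correlated puncturing $\bigl((C,r),x_\bob,x_\charlie,b_\bob,b_\charlie,\mu_\bob,\mu_\charlie\bigr)\mapsto\io\bigl(\acpuncture(C,x_\bob,x_\charlie,b_\bob,b_\charlie,\mu_\bob,\mu_\charlie);r\bigr)$, which meets the functional specification \eqref{eq:ac-upo-puncturing} by $\io$ correctness.

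Let the reduction's first stage run $\alice_0$ and submit $(C,r)$ for a fresh uniform $r$. Then the correlated experiment for $\Pi'$ is \emph{identically distributed} to the correlated experiment for $\Pi$ on this class, with the sampler's entire output tuple passed through unchanged. No $\io$ hybrid is needed for this part.

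This composition step is what the paper's appeal to \cite[Theorem~12]{AB24} and \cite[Lemma~3.16]{ABH+26} supplies. What you must make explicit is that the correlated security of $\Pi$ is applied to this reparametrized keyed class with its own puncturing algorithm, not merely to the canonical wrapper on key $\io(C)$.
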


\begin{proof}
This follows from the add-$\io$ transformation of
\cite[Lemma~3.16]{ABH+26}, using the composition theorem of
\cite[Theorem~12]{AB24}; the same reduction applies to the correlated
experiment by using the sampler's entire output tuple, including the
pre-split and post-split auxiliary strings, unchanged.
\end{proof}

\begin{lemma}[Relation to prior UPO notions]
\label{lem:ac-upo-implies-prior}
For any $\kappa:=\kappa(\secparam)=\secparam^c$ for some constant $c> 0$, $\kappa$-entropic \acupo security implies $\mathcal D_X$-generalized UPO security for every QPT-samplable distribution $\mathcal D_X$ over challenge pairs, that induces marginal challenge distribution on each of the two challenges with min-entropy at least $\kappa(\secparam)$. In particular, for every QPT-samplable distribution $\mathcal D$ satisfying $\Hmin(\mathcal D)\geq \kappa(\secparam)$  (such as the uniform  random distribution), it implies $\mathcal D^2$-generalized UPO security and $\ID_{\mathcal D}$-generalized UPO security.\footnote{The only known UPO security notion that is not covered by correlated UPO security is the UPO+ notion defined in~\cite{ABH+26}, which can be seen as a completely different flavor of security notion. In particular, UPO${}^+$ has a different winning
condition and is not an instance of the correlated-UPO experiment.}.

\end{lemma}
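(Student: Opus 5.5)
We will prove the lemma by a direct embedding: given a QPT-samplable $\mathcal D_X$ whose marginals each have min-entropy at least $\kappa$, we build a correlated UPO sampler $\sampler$ with $\kappa$-entropic marginals such that the experiment $\gam{\acupo_{\Pi,\advupo,\sampler,\mathcal C}}$ coincides exactly with $\gam{\mathsf{GenUPO}_{\Pi,\advupo,\mathcal D_X,\mathcal C}}$. The sampler runs $(x_\bob,x_\charlie)\gets\mathcal D_X(1^\secparam)$ and samples a single $b\getsr\bit$. It sets $b_\bob=b_\charlie:=b$ and $z_{\mathsf{aux}}=w_\bob=w_\charlie:=\bot$, and outputs the resulting tuple.

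We then check admissibility per \cref{def:ac-upo-sampler}. First, $b$ is sampled independently of $(x_\bob,x_\charlie)$ and the auxiliary strings are constant, so \cref{eq:ac-upo-uniform-marginals} holds: $b_X=b$ is uniform conditioned on every value in the support. Second, $\Hmin(x_X\mid z_{\mathsf{aux}})=\Hmin(x_X)\geq\kappa$, because $z_{\mathsf{aux}}$ is empty and the hypothesis bounds each marginal's min-entropy. The sampler is QPT because $\mathcal D_X$ is.

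Next we verify that the punctured circuits agree, which is the only point needing care. With $b_\bob=b_\charlie=b$, \cref{eq:ac-upo-puncturing} yields $C_k$ when $b=0$, and when $b=1$ it yields exactly the case split of \cref{eq:generalized-puncturing}, including the case $x_\bob=x_\charlie$ where $\bob$'s $\mu_\bob$ takes priority. Hence $G^{\mathsf{ac}}$ equals $C_k$ if $b=0$ and $G^\star$ if $b=1$ as functions; with the paper's padding convention they are the same circuit, or we can simply define $\acpuncture$ and $\genpuncture$ to output the same canonical circuit. If only functional equality were available, an $\io$-security step (\cref{def:upo-io-security}, available via \cref{thm:correlated-upo-composition}) would bridge the gap.

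Given this, any adversary $\advupo=(\alice,\bob,\charlie)$ for GenUPO is run unchanged in the correlated experiment: $\alice_0$ ignores $z_{\mathsf{aux}}=\bot$, and the recipients ignore $w_X=\bot$. The views are identically distributed and the winning conditions coincide, since $b_\bob=b_\charlie=b$. Therefore the success probabilities are equal and bounded by $1/2+\negl(\secparam)$.

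The special cases then follow by instantiating $\mathcal D_X$. For $\mathcal D^2$, both marginals equal $\mathcal D$; for $\ID_{\mathcal D}$, both marginals equal $\mathcal D$. Each has min-entropy at least $\kappa$, and each is QPT-samplable when $\mathcal D$ is; the uniform distribution on $\zo^{\ell_{\mathsf{in}}}$ qualifies whenever $\ell_{\mathsf{in}}\geq\kappa$.
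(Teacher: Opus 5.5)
Your proposal is correct and matches the paper's proof: the same sampler with $z_{\mathsf{aux}}=w_\bob=w_\charlie=\bot$ and one shared bit $b_\bob=b_\charlie=b$, the same admissibility check, and the same observation that correlated puncturing gives $C_k$ for $b=0$ and the generalized punctured circuit (with Bob priority) for $b=1$, so the two experiments coincide. The paper additionally notes that sampling the tuple before $\alice_0$'s message is harmless because it is hidden from $\alice_0$. Your $\io$ fallback would not prove the lemma for the same scheme, since \cref{thm:correlated-upo-composition} produces a different scheme, so the argument should rest on the canonical-circuit convention, as the paper's does.
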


\begin{proof}
Given $\mathcal D_X$, define a sampler that sets $z_{\mathsf{aux}}:=w_\bob:=w_\charlie:=\bot$, samples $(x_\bob,x_\charlie)\gets\mathcal D_X$, samples $b\getsr\bit$, and sets $b_\bob=b_\charlie:=b$.
By the hypothesis of the lemma, $\mathcal D_X$, the resulting sampler has $\kappa$-entropic marginals on the individual challenges. Moreover, clearly, this sampler satisfies \Cref{eq:ac-upo-uniform-marginals}.
Since every nonempty sampler output is hidden from $\alice_0$ before it sends $(k,\mu_\bob,\mu_\charlie)$, sampling the tuple before this message does not change the joint distribution.

If $b=0$, the punctured circuit is $C_k$, while if $b=1$, it is exactly the generalized
punctured circuit from \Cref{eq:generalized-puncturing}, including the Bob-priority convention used in the generalized UPO security definition when the points coincide. Hence the two security experiments are identically distributed. Taking
$\mathcal D_X=\mathcal D^2$, $\mathcal D_X=\ID_{\mathcal D}$,
$\mathcal D_X=\unifdist$, or $\mathcal D_X=\IDU$ gives the stated
instantiations. Vanilla UPO security follows since it is implied by generalized UPO security. 
\end{proof}

\subsection{Copy-Protection Security Definitions}
\label{sec:copy-protection-security-definitions}

We first recall the standard copy-protection security notions, including the identical-challenge point-function experiment of~\cite{AB24}. We then define the stronger oracular variants of~\cite{ABH+26}. Except in the point-function, $k$-point-function, and compute-and-compare experiments stated separately below, the circuit and the challenge sampler are independent, while the two challenge points themselves may be arbitrarily correlated.

Let $\mathsf{Circ}=\{\mathsf{Circ}_\secparam\}_{\secparam\in\NN}$ be a class of polynomial-size circuits $C:\mathcal X_\secparam\rightarrow\mathcal Y_\secparam$, where $\mathcal X_\secparam=\zo^{\ell_{\mathsf{in}}(\secparam)}$ and $\mathcal Y_\secparam=\zo^{\ell_{\mathsf{out}}(\secparam)}$. Let $\mathcal D_{\mathsf{Circ}}$ be a QPT-samplable distribution over $\mathsf{Circ}_\secparam$.

\begin{definition}[Copy protection, adapted from~{\cite[Appendix~A.1]{AB24}} and~{\cite[Definition~3.1]{ABH+26}}]
A copy-protection scheme $\mathsf{CP}=(\mathsf{CopyProtect},\mathsf{Eval})$ for $\mathsf{Circ}$ consists of a pair of efficient algorithms as follows.
\begin{enumerate}
    \item $\rho_C\gets\mathsf{CopyProtect}(1^\secparam,C):$ is a QPT algorithm that takes as input a circuit $C\in\mathsf{Circ}_\secparam$ and outputs a quantum program $\rho_C$.
    \item $y\gets\mathsf{Eval}(\rho_C,x):$ is a QPT algorithm that takes as input a quantum program $\rho_C$ and an input $x\in\mathcal X_\secparam$, and outputs $y\in\mathcal Y_\secparam$.
\end{enumerate}
\paragraph{Correctness.} We require that there exists a negligible function $\negl(\cdot)$ such that, for every $\secparam\in\NN$, every $C\in\mathsf{Circ}_\secparam$, and every $x\in\mathcal X_\secparam$,
\begin{equation}
\Pr\left[
\mathsf{Eval}(\rho_C,x)=C(x):
\rho_C\gets\mathsf{CopyProtect}(1^\secparam,C)
\right]
\geq
1-\negl(\secparam).
\end{equation}
When polynomial reusability is required, the evaluation algorithm is understood to return its residual program state; the concrete scheme used below restores that state perfectly.
\end{definition}


\begin{definition}[Correlated copy-protection challenge sampler]
\label{def:correlated-copy-protection-sampler}
A correlated copy-protection challenge sampler $\mathsf{Samp}_{\mathsf{CP}}$ is a QPT algorithm that, on input $1^\secparam$, outputs $(x_\bob,x_\charlie)\gets\mathsf{Samp}_{\mathsf{CP}}(1^\secparam)$, where $x_\bob,x_\charlie\in\mathcal X_\secparam$. The two challenge points may be arbitrarily correlated. We say that $\mathsf{Samp}_{\mathsf{CP}}$ has $\kappa$-entropic marginals if $\Hmin(x_X)\geq\kappa(\secparam)$ for every $X\in\{\bob,\charlie\}$.
\end{definition}

We use the notion of trivial adversaries from previous works~\cite[Definition~3.1]{ABH+26}. The first-stage adversary $\alice^{\mathsf{triv}}_\bob$ sends the entire protected program $\rho_C$ to $\bob$ and the fixed dummy state $\ketbra{\bot}{\bot}$ to $\charlie$. The first-stage adversary $\alice^{\mathsf{triv}}_\charlie$ is defined symmetrically. Define
\begin{equation}
\mathsf{Triv}:=
\left\{(\alice^{\mathsf{triv}}_\bob,\bob,\charlie):
\bob,\charlie\text{ are QPT}\right\}\mathbin{\cup}
\left\{(\alice^{\mathsf{triv}}_\charlie,\bob,\charlie):
\bob,\charlie\text{ are QPT}\right\}.
\label{eq:copy-protection-trivial-adversaries}
\end{equation}
Thus, the recipient not receiving the protected program receives no auxiliary state prepared by $\alice$.

For any QPT adversarial triplet $\mathcal A_{\mathsf{CP}}=(\alice,\bob,\charlie)$, consider the following experiment.

\par\noindent
$\gam{\mathsf{UnpCP}_{\mathsf{CP},\mathcal A_{\mathsf{CP}},\mathcal D_{\mathsf{Circ}},\mathsf{Samp}_{\mathsf{CP}}}}(1^\secparam)$:
\begin{enumerate}
    \item $\ch$ samples $C\gets\mathcal D_{\mathsf{Circ}}(1^\secparam)$ and $\rho_C\gets\mathsf{CopyProtect}(1^\secparam,C)$, and gives $\rho_C$ to $\alice$.
    \item $\alice$ outputs a joint state $\sigma_{\regbob,\regcharlie}$ and sends register $\regbob$ to $\bob$ and register $\regcharlie$ to $\charlie$.
    \item After the split, $\ch$ samples $(x_\bob,x_\charlie)\gets\mathsf{Samp}_{\mathsf{CP}}(1^\secparam)$, gives $x_\bob$ to $\bob$, and gives $x_\charlie$ to $\charlie$.
    \item $\bob$ and $\charlie$ output $\widehat y_\bob,\widehat y_\charlie$.
    \item The output of the experiment is $1$ if $\widehat y_\bob=C(x_\bob)$ and $\widehat y_\charlie=C(x_\charlie)$.
\end{enumerate}

\begin{definition}[Unpredictability-style copy-protection security with correlated challenges, adapted from~{\cite[Appendix~A.1]{AB24}} and~{\cite[Definition~3.1]{ABH+26}}]
\label{def:unpredictability-copy-protection}
\label{def:non-oracular-unpredictability-copy-protection}
A copy-protection scheme $\mathsf{CP}$ satisfies unpredictability-style copy-protection security with respect to $(\mathcal D_{\mathsf{Circ}},\mathsf{Samp}_{\mathsf{CP}})$ if, for every QPT adversarial triplet $\mathcal A_{\mathsf{CP}}$, there exist a trivial adversary $\mathcal A_{\mathsf{triv}}\in\mathsf{Triv}$ and a negligible function $\negl(\cdot)$ such that
\begin{equation}
\Pr\left[
\gam{\mathsf{UnpCP}_{\mathsf{CP},\mathcal A_{\mathsf{CP}},\mathcal D_{\mathsf{Circ}},\mathsf{Samp}_{\mathsf{CP}}}}
(1^\secparam)=1
\right]
\leq
\Pr\left[
\gam{\mathsf{UnpCP}_{\mathsf{CP},\mathcal A_{\mathsf{triv}},\mathcal D_{\mathsf{Circ}},\mathsf{Samp}_{\mathsf{CP}}}}
(1^\secparam)=1
\right]
+\negl(\secparam).
\end{equation}
\end{definition}

\begin{definition}[Identical-challenge unpredictability-style copy protection for point functions, adapted from~{\cite[Theorem~89 and Corollary~90]{AB24}}]
\label{def:point-function-copy-protection}
Let $\mathcal X_\secparam=\zo^{\ell_{\mathsf{in}}(\secparam)}$, let $\mathcal D=\{\mathcal D_\secparam\}_{\secparam\in\NN}$ be a QPT-samplable distribution over $\mathcal X_\secparam$, and for every $y\in\mathcal X_\secparam$ define $P_y(x):=[x=y]$. For a copy-protection scheme $\mathsf{CP}$ for point functions and any QPT adversarial triplet $\mathcal A_{\mathsf{PF}}=(\alice,\bob,\charlie)$, consider the following experiment.

\par\noindent
$\gam{\mathsf{PFCP}_{\mathsf{CP},\mathcal A_{\mathsf{PF}},\mathcal D}}(1^\secparam)$:
\begin{enumerate}
    \item $\ch$ samples $y\gets\mathcal D_\secparam$ and $\rho_y\gets\mathsf{CopyProtect}(1^\secparam,P_y)$, and sends $\rho_y$ to $\alice$.
    \item $\alice$ outputs a joint state $\sigma_{\regbob,\regcharlie}$ and sends register $\regbob$ to $\bob$ and register $\regcharlie$ to $\charlie$.
    \item After the split, $\ch$ samples $b\getsr\bit$. If $b=0$, it samples an independent $x\gets\mathcal D_\secparam$; if $b=1$, it sets $x:=y$. It sends the same $x$ to both $\bob$ and $\charlie$.
    \item $\bob$ and $\charlie$ output $\widehat v_\bob,\widehat v_\charlie\in\bit$.
    \item The output of the experiment is $1$ if $\widehat v_\bob=\widehat v_\charlie=P_y(x)$.
\end{enumerate}

A copy-protection scheme for point functions satisfies identical-challenge point-function security with respect to $\mathcal D$ if, for every QPT adversarial triplet $\mathcal A_{\mathsf{PF}}$, there exist a trivial adversary $\mathcal A_{\mathsf{triv}}\in\mathsf{Triv}$ and a negligible function $\negl(\cdot)$ such that
\begin{equation}
\Pr[
\gam{\mathsf{PFCP}_{\mathsf{CP},\mathcal A_{\mathsf{PF}},\mathcal D}}(1^\secparam)=1
]
\leq
p_{\mathsf{triv}}^{\mathsf{PF},\mathcal D}(\secparam)
+\negl(\secparam),
\end{equation}
where
\begin{equation}
p_{\mathsf{triv}}^{\mathsf{PF},\mathcal D}(\secparam):=
\sup_{\mathcal A_{\mathsf{triv}}\in\mathsf{Triv}}
\Pr\left[
\gam{\mathsf{PFCP}_{\mathsf{CP},\mathcal A_{\mathsf{triv}},\mathcal D}}(1^\secparam)=1
\right].
\label{eq:point-function-trivial-success}
\end{equation}
The uniform choice of $\mathcal D_\secparam$ recovers the experiment of~\cite[Theorem~89 and Corollary~90]{AB24}.
\end{definition}

We note that, assuming the point distribution has sufficiently high min-entropy, the trivial success probability is negligibly close to $1/2$.
\begin{lemma}[Trivial success in the point-function experiment]
\label{lem:point-function-trivial-success}
Let $\operatorname{cp}(\mathcal D_\secparam):=\sum_x\Pr[\mathcal D_\secparam=x]^2$. For the experiment in \Cref{def:point-function-copy-protection},
\begin{equation}
p_{\mathsf{triv}}^{\mathsf{PF},\mathcal D}(\secparam)
=
\frac{1}{2}+\frac{1}{2}\operatorname{cp}(\mathcal D_\secparam)
\leq
\frac{1}{2}+\frac{1}{2}\,2^{-\Hmin(\mathcal D_\secparam)}.
\end{equation}
Consequently, if $\Hmin(\mathcal D_\secparam)\geq\secparam^c$ for some constant $c>0$, then
\begin{equation}
p_{\mathsf{triv}}^{\mathsf{PF},\mathcal D}(\secparam)
=\frac{1}{2}+\negl'(\secparam).
\end{equation}
\end{lemma}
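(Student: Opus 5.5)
We must compute the supremum of the trivial adversaries' success probability. By symmetry of $\mathsf{Triv}$, we analyze $\alice^{\mathsf{triv}}_\bob$. In that case $\charlie$ holds only $\ketbra{\bot}{\bot}$, so his output depends only on the challenge $x$ and his private coins. The plan is to bound the joint success from above by $\charlie$'s success alone, then show this bound is achieved. For the upper bound, it therefore suffices to maximize the probability that $\charlie$'s guess equals $P_y(x)$. Here $\bob$ is assumed perfect, since honest evaluation of $\rho_y$ gives $P_y(x)$ with probability $1-\negl$, or exactly $1$ for the perfectly correct scheme used later.

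Next we compute the optimal predictor for $\charlie$ by Bayes' rule. The experiment generates the pair $(b,x)$ as follows: with probability $1/2$ we have $b=1$ and $x=y\sim\mathcal D$, and with probability $1/2$ we have $b=0$ and $x\sim\mathcal D$ independent of $y$. In the case $b=0$, the target bit is $P_y(x)=[x=y]$, which equals $1$ with probability $\operatorname{cp}(\mathcal D_\secparam)$. The probability that the target is $1$ given $x$ is therefore
\[
\frac{\tfrac12 D(x)+\tfrac12 D(x)^2}{D(x)}=\frac{1+D(x)}{2}\geq\frac12 ,
\]
where $D(x):=\Pr[\mathcal D_\secparam=x]$. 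So outputting the constant $1$ is optimal for $\charlie$. Averaging over $x\sim\mathcal D$, its success is $\sum_x D(x)(1+D(x))/2=\tfrac12+\tfrac12\operatorname{cp}(\mathcal D_\secparam)$.

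For the matching lower bound, let $\bob$ run $\Eval$ honestly and let $\charlie$ output $1$. Then both are correct exactly when the target bit is $1$, which happens with probability $\tfrac12+\tfrac12\operatorname{cp}$ (up to the correctness error of $\Eval$, which vanishes under perfect correctness). This gives equality in the supremum.

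Finally, $\operatorname{cp}(\mathcal D_\secparam)=\sum_x D(x)^2\leq\max_x D(x)=2^{-\Hmin(\mathcal D_\secparam)}$. Under $\Hmin\geq\secparam^c$ the bound $2^{-\secparam^c}$ is negligible, which gives the consequence. The main subtlety is equality versus negligible slack from imperfect correctness of $\Eval$; I would state the bound as an upper bound and note that equality holds for perfectly correct schemes.
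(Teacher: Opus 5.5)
Your argument is correct and is essentially the paper's. The recipient holding $\ketbra{\bot}{\bot}$ sees only $x$, for which $\Pr[P_y(x)=1\mid x]=\frac12\bigl(1+\Pr[\mathcal D_\secparam=x]\bigr)\geq\frac12$, so outputting the constant $1$ is optimal. This gives $\frac12+\frac12\operatorname{cp}(\mathcal D_\secparam)$, and then $\operatorname{cp}(\mathcal D_\secparam)\leq 2^{-\Hmin(\mathcal D_\secparam)}$.

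The one refinement concerns the matching lower bound. As the paper does, let \emph{both} recipients output the constant $1$, which is a strategy in $\mathsf{Triv}$. This attains $\frac12+\frac12\operatorname{cp}(\mathcal D_\secparam)$ exactly, independently of the correctness of $\Eval$. Your caveat restricting equality to perfectly correct schemes is therefore unnecessary, and the lemma's equality holds unconditionally.
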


\begin{proof}
Both events induce the marginal distribution $x\gets\mathcal D_\secparam$. For every $x$ in its support,
\begin{equation}
\Pr[P_y(x)=1\mid x]
=
\frac{1}{2}\bigl(1+\Pr[\mathcal D_\secparam=x]\bigr).
\end{equation}
Under a trivial strategy, the recipient receiving the dummy register has no information about $y$ beyond $x$, and hence cannot do better than always outputting $1$. This strategy is attained when both recipients output $1$ and has average success $1/2+\operatorname{cp}(\mathcal D_\secparam)/2$. Finally, $\operatorname{cp}(\mathcal D_\secparam)\leq\max_x\Pr[\mathcal D_\secparam=x]=2^{-\Hmin(\mathcal D_\secparam)}$.
\end{proof}
\begin{remark}[Copy protection definition for $k$-point functions]\label{rem:k-point-functions-def}
The copy protection definition for point functions, as outlined above, also extends to the fixed-$k$-point functions considered in~\cite[Theorem~85 and Corollary~90]{AB24}. Let $\mathcal D^{(k)}=\{\mathcal D^{(k)}_\secparam\}_{\secparam\in\NN}$ be any QPT-samplable distribution over the $k=k(\secparam)$ element subsets of $\mathcal X_\secparam$ such that $\Hmin(\mathcal D^{(k)}_\secparam)\geq\secparam^c$ for some constant $c>0$. For a set $S\subseteq\mathcal X_\secparam$ of size $k$, define $P_S(x):=[x\in S]$, and let $\mathcal D^{\mathsf{pt}}_\secparam$ denote the distribution obtained by sampling $S\gets\mathcal D^{(k)}_\secparam$ and then $x\getsr S$, such that $\Hmin(\mathcal D^{\mathsf{pt}}_\secparam)\geq\secparam^{c'}$ for some constant $c'>0$. The identical-challenge experiment samples $S\gets\mathcal D^{(k)}_\secparam$ and protects $P_S$. After the split, it samples $b\getsr\bit$. If $b=0$, it samples an independent $S'\gets\mathcal D^{(k)}_\secparam$ and then $x\getsr S'$; if $b=1$, it samples $x\getsr S$. In either case, it gives the same $x$ to both recipients. Security uses the same trivial-adversary comparison as above.

For $q_\secparam(x):=\Pr_{S\gets\mathcal D^{(k)}_\secparam}[x\in S]$, the trivial success probability is
\begin{equation}
p_{\mathsf{triv}}^{\mathsf{PF}(k),\mathcal D^{(k)}}(\secparam)
=\frac{1}{2}+\frac{1}{2k(\secparam)}\sum_{x\in\mathcal X_\secparam}q_\secparam(x)^2
=\frac{1}{2}+\frac{k(\secparam)}2\operatorname{cp}(\mathcal D^{\mathsf{pt}}_\secparam),
\label{eq:k-point-function-trivial-success}
\end{equation}
which is $1/2+\negl(\secparam)$ whenever $k$ is polynomially bounded. 
\end{remark}

\begin{definition}[Identical-challenge copy protection for compute-and-compare functions, adapted from~\cite{CMP24}]
\label{def:compute-and-compare-functions-copy-protection}
\label{rem:compute-and-compare-functions-def}
Following~\cite[Construction~4 and Theorem~7]{CMP24}, the definition above extends to compute-and-compare functions
\[
\mathsf{CC}_{f,y}(x):=[f(x)=y],
\]
where $f:\mathcal X_\secparam\rightarrow\mathcal Z_\secparam$, $\mathcal Z_\secparam=\zo^{\ell_{\mathsf{cc}}(\secparam)}$, and $y\in\mathcal Z_\secparam$. 
Let $\mathsf{Samp}$ be a QPT algorithm that samples $(f,y)\gets\mathsf{Samp}(1^\secparam)$, interpreted as sampling $\mathsf{CC}_{f,y}$, and let $(F,Y)$ denote its output distribution. We require $\Hmin(Y\mid F)\geq\secparam^c$ for some constant $c>0$.

For every $f$ in the support of $F$ and every $z$ in the support of the conditional distribution of $Y$ given $F=f$, fix a distribution $\mathcal Q_{f,z}$ supported on $f^{-1}(z)$. The identical-challenge copy-protection experiment for compute-and-compare functions samples $(f,y)\gets\mathsf{Samp}(1^\secparam)$ and protects $\mathsf{CC}_{f,y}$ as $(f,\rho_y)$, where $\rho_y$ is a copy protection of $P_y$. After the split, it samples $b\getsr\bit$. If $b=0$, it samples a fresh $z$ independently according to the conditional distribution of $Y$ given $F=f$; if $b=1$, it sets $z:=y$. It then samples $x\gets\mathcal Q_{f,z}$ and gives the same $x$ to both recipients. The same family of distributions $\{\mathcal Q_{f,z}\}_{f,z}$ is used for both values of $b$. We say that the scheme satisfies identical-challenge compute-and-compare copy-protection security if, for every QPT adversarial triplet, there exists a negligible function $\negl$ such that its success probability in the experiment above is at most
\[
p_{\mathsf{triv}}^{\mathsf{CC},\mathsf{Samp}}(\secparam)+\negl(\secparam),
\]
where $p_{\mathsf{triv}}^{\mathsf{CC},\mathsf{Samp}}(\secparam)$ denotes the supremum success probability in this experiment over the trivial adversaries in $\mathsf{Triv}$, as defined in Definition~\ref{def:point-function-copy-protection}.

We say that this identical-challenge experiment is efficiently realizable if there exist QPT procedures that:
\begin{enumerate}
    \item on input $(1^\secparam,f)$, sample a fresh $z$ according to the conditional distribution of $Y$ given $F=f$; and
    \item on input $(1^\secparam,f,z)$, sample $x\gets\mathcal Q_{f,z}$.
\end{enumerate}
\end{definition}

Next, we recall the pseudorandomness-style copy protection notion from previous works. 
\begin{definition}[Correlated pseudorandomness-style copy-protection challenge sampler]
\label{def:correlated-pr-copy-protection-sampler}
A correlated pseudorandomness-style copy-protection challenge sampler $\mathsf{Samp}_{\mathsf{PRCP}}$ with $\kappa$-entropic marginals is a QPT algorithm that, on input $1^\secparam$, outputs
$(x_\bob,x_\charlie,b_\bob,b_\charlie)\gets\mathsf{Samp}_{\mathsf{PRCP}}(1^\secparam)$,
where $x_\bob,x_\charlie\in\mathcal X_\secparam$ and $b_\bob,b_\charlie\in\bit$, such that, for every $(\bar x_\bob,\bar x_\charlie)$ in the support of $(x_\bob,x_\charlie)$ and every $X\in\{\bob,\charlie\}$,
\begin{equation}
\Pr[
b_X=0\mid
x_\bob=\bar x_\bob,x_\charlie=\bar x_\charlie
]
=
\Pr[
b_X=1\mid
x_\bob=\bar x_\bob,x_\charlie=\bar x_\charlie
]
=
\frac{1}{2},
\end{equation}
and $\Hmin(x_X)\geq\kappa(\secparam)$ for every $X\in\{\bob,\charlie\}$.
\end{definition}

We write $\mathsf{Samp}^{\mathsf{pr}}_{\IDU}$ for the sampler that samples $x\getsr\mathcal X_\secparam$ and $b\getsr\bit$, and outputs $(x_\bob,x_\charlie,b_\bob,b_\charlie):=(x,x,b,b)$.

Given $C$, a sampled tuple $(x_\bob,x_\charlie,b_\bob,b_\charlie)$, and independent $y_\bob,y_\charlie\getsr\mathcal Y_\secparam$, define
\begin{equation}
z_\bob:=
\begin{cases}
C(x_\bob),&b_\bob=0,\\
y_\bob,&b_\bob=1,
\end{cases}
\qquad
z_\charlie:=
\begin{cases}
z_\bob,&x_\charlie=x_\bob,\\
C(x_\charlie),&x_\charlie\neq x_\bob\text{ and }b_\charlie=0,\\
y_\charlie,&x_\charlie\neq x_\bob\text{ and }b_\charlie=1.
\end{cases}
\label{eq:pr-copy-protection-challenge-values}
\end{equation}
Thus, just as in the correlated UPO security case (see \Cref{eq:ac-upo-puncturing}), a potential tie (i.e., $x_\bob=x_\charlie$ but $b_\bob\neq b_\charlie$) is resolved using $\bob$'s challenge bit.

For any QPT adversarial triplet $\mathcal A_{\mathsf{PRCP}}=(\alice,\bob,\charlie)$, consider the following experiment.

\par\noindent
$\gam{\mathsf{PRCP}_{\mathsf{CP},\mathcal A_{\mathsf{PRCP}},\mathcal D_{\mathsf{Circ}},\mathsf{Samp}_{\mathsf{PRCP}}}}(1^\secparam)$:
\begin{enumerate}
    \item $\ch$ samples $C\gets\mathcal D_{\mathsf{Circ}}(1^\secparam)$ and $\rho_C\gets\mathsf{CopyProtect}(1^\secparam,C)$, and gives $\rho_C$ to $\alice$.
    \item $\alice$ outputs a joint state $\sigma_{\regbob,\regcharlie}$ and sends register $\regbob$ to $\bob$ and register $\regcharlie$ to $\charlie$.
    \item After the split, $\ch$ samples $(x_\bob,x_\charlie,b_\bob,b_\charlie)\gets\mathsf{Samp}_{\mathsf{PRCP}}(1^\secparam)$ and independent $y_\bob,y_\charlie\getsr\mathcal Y_\secparam$. It defines $z_\bob,z_\charlie$ according to \Cref{eq:pr-copy-protection-challenge-values}.
    \item $\ch$ gives $(x_\bob,z_\bob)$ to $\bob$ and $(x_\charlie,z_\charlie)$ to $\charlie$.
    \item $\bob$ and $\charlie$ output $\widehat b_\bob,\widehat b_\charlie$.
    \item The output of the experiment is $1$ if $\widehat b_\bob=b_\bob$ and $\widehat b_\charlie=b_\charlie$.
\end{enumerate}

\begin{definition}[Pseudorandomness-style copy-protection security with correlated challenges, adapted from~{\cite[Definition~3.3]{ABH+26}}]
\label{def:pr-copy-protection}
\label{def:non-oracular-pr-copy-protection}
A copy-protection scheme $\mathsf{CP}$ satisfies $\kappa$-entropic pseudorandomness-style copy-protection security with respect to $(\mathcal D_{\mathsf{Circ}},\mathsf{Samp}_{\mathsf{PRCP}})$ if, for every QPT adversarial triplet $\mathcal A_{\mathsf{PRCP}}$, there exists a negligible function $\negl(\cdot)$ such that
\begin{equation}
\Pr[
\gam{\mathsf{PRCP}_{\mathsf{CP},\mathcal A_{\mathsf{PRCP}},\mathcal D_{\mathsf{Circ}},\mathsf{Samp}_{\mathsf{PRCP}}}}(1^\secparam)=1
]
\leq
\frac{1}{2}+\negl(\secparam).
\end{equation}
\end{definition}

\subsubsection{Oracular security notions}
Next we recall the oracular versions of the copy-protection security definitions defined in~\cite{ABH+26}.
The experiments below are identical to the experiments above, except that $\alice$ receives oracle access to the sampled circuit before the split and each recipient receives oracle access punctured at that recipient's challenge point after the split.

For every circuit $C:\mathcal X_\secparam\rightarrow\mathcal Y_\secparam$ and $x^\star\in\mathcal X_\secparam$, define
\begin{equation}
C^{\setminus x^\star}(x):=
\begin{cases}
\bot,&x=x^\star,\\
C(x),&x\neq x^\star.
\end{cases}
\label{eq:single-point-punctured-oracle}
\end{equation}
All oracle accesses below are quantum superposition accesses.

For any QPT adversarial triplet $\mathcal A_{\mathsf{CP}}=(\alice,\bob,\charlie)$, consider the following experiment.

\par\noindent
$\gam{\mathsf{OraUnpCP}_{\mathsf{CP},\mathcal A_{\mathsf{CP}},\mathcal D_{\mathsf{Circ}},\mathsf{Samp}_{\mathsf{CP}}}}(1^\secparam)$:
\begin{enumerate}
    \item $\ch$ samples $C\gets\mathcal D_{\mathsf{Circ}}(1^\secparam)$ and $\rho_C\gets\mathsf{CopyProtect}(1^\secparam,C)$, and gives $\rho_C$ and oracle access to $C$ to $\alice$.
    \item $\alice^C$ outputs a joint state $\sigma_{\regbob,\regcharlie}$ and sends register $\regbob$ to $\bob$ and register $\regcharlie$ to $\charlie$.
    \item After the split, $\ch$ samples $(x_\bob,x_\charlie)\gets\mathsf{Samp}_{\mathsf{CP}}(1^\secparam)$, gives $x_\bob$ and oracle access to $C^{\setminus x_\bob}$ to $\bob$, and gives $x_\charlie$ and oracle access to $C^{\setminus x_\charlie}$ to $\charlie$.
    \item $\bob^{C^{\setminus x_\bob}}$ and $\charlie^{C^{\setminus x_\charlie}}$ output $\widehat y_\bob,\widehat y_\charlie$.
    \item The output of the experiment is $1$ if $\widehat y_\bob=C(x_\bob)$ and $\widehat y_\charlie=C(x_\charlie)$.
\end{enumerate}

The notion of trivial strategies in this experiment is defined similarly to the case of no oracle access. For $X\in \{\bob,\charlie\}$, rhe first-stage adversary $\alice^{\mathsf{triv}}_X$ with access to the oracle sends the entire protected program $\rho_C$ to $X$ and the fixed dummy state $\ketbra{\bot}{\bot}$ to the other party, and, we define
\begin{equation}
\mathsf{Triv}:=
\left\{(\alice^{\mathsf{triv}}_\bob,\bob,\charlie):
\bob,\charlie\text{ are QPT}\right\}\mathbin{\cup}
\left\{(\alice^{\mathsf{triv}}_\charlie,\bob,\charlie):
\bob,\charlie\text{ are QPT}\right\}.
\label{eq:copy-protection-trivial-adversaries-oracular}
\end{equation}

\begin{definition}[Oracular unpredictability-style copy-protection security with correlated challenges, adapted from~{\cite[Definition~3.1]{ABH+26}}]
\label{def:oracular-unpredictability-copy-protection}
\label{def:oracular-copy-protection-security}
A copy-protection scheme $\mathsf{CP}$ satisfies oracular unpredictability-style copy-protection security with respect to $(\mathcal D_{\mathsf{Circ}},\mathsf{Samp}_{\mathsf{CP}})$ if, for every QPT adversarial triplet $\mathcal A_{\mathsf{CP}}$, there exist a trivial adversary $\mathcal A_{\mathsf{triv}}\in\mathsf{Triv}$ (defined in \Cref{eq:copy-protection-trivial-adversaries-oracular}) and a negligible function $\negl(\cdot)$ such that
\begin{equation}
\Pr\left[
\gam{\mathsf{OraUnpCP}_{\mathsf{CP},\mathcal A_{\mathsf{CP}},\mathcal D_{\mathsf{Circ}},\mathsf{Samp}_{\mathsf{CP}}}}
(1^\secparam)=1
\right]
\leq
\Pr\left[
\gam{\mathsf{OraUnpCP}_{\mathsf{CP},\mathcal A_{\mathsf{triv}},\mathcal D_{\mathsf{Circ}},\mathsf{Samp}_{\mathsf{CP}}}}
(1^\secparam)=1
\right]
+\negl(\secparam).
\end{equation}
\end{definition}

For any QPT adversarial triplet $\mathcal A_{\mathsf{PRCP}}=(\alice,\bob,\charlie)$, consider the following experiment.

\par\noindent
$\gam{\mathsf{OraPRCP}_{\mathsf{CP},\mathcal A_{\mathsf{PRCP}},\mathcal D_{\mathsf{Circ}},\mathsf{Samp}_{\mathsf{PRCP}}}}(1^\secparam)$:
\begin{enumerate}
    \item $\ch$ samples $C\gets\mathcal D_{\mathsf{Circ}}(1^\secparam)$ and $\rho_C\gets\mathsf{CopyProtect}(1^\secparam,C)$, and gives $\rho_C$ and oracle access to $C$ to $\alice$.
    \item $\alice^C$ outputs a joint state $\sigma_{\regbob,\regcharlie}$ and sends register $\regbob$ to $\bob$ and register $\regcharlie$ to $\charlie$.
    \item After the split, $\ch$ samples $(x_\bob,x_\charlie,b_\bob,b_\charlie)\gets\mathsf{Samp}_{\mathsf{PRCP}}(1^\secparam)$ and independent $y_\bob,y_\charlie\getsr\mathcal Y_\secparam$. It defines $z_\bob,z_\charlie$ according to \Cref{eq:pr-copy-protection-challenge-values}.
    \item $\ch$ gives $(x_\bob,z_\bob)$ and oracle access to $C^{\setminus x_\bob}$ to $\bob$, and gives $(x_\charlie,z_\charlie)$ and oracle access to $C^{\setminus x_\charlie}$ to $\charlie$.
    \item $\bob^{C^{\setminus x_\bob}}$ and $\charlie^{C^{\setminus x_\charlie}}$ output $\widehat b_\bob,\widehat b_\charlie$.
    \item The output of the experiment is $1$ if $\widehat b_\bob=b_\bob$ and $\widehat b_\charlie=b_\charlie$.
\end{enumerate}

\begin{definition}[Oracular pseudorandomness-style copy-protection security with correlated challenges, adapted from~{\cite[Definition~3.3]{ABH+26}}]
\label{def:oracular-pr-copy-protection}
A copy-protection scheme $\mathsf{CP}$ satisfies $\kappa$-entropic oracular pseudorandomness-style copy-protection security with respect to $(\mathcal D_{\mathsf{Circ}},\mathsf{Samp}_{\mathsf{PRCP}})$ if, for every QPT adversarial triplet $\mathcal A_{\mathsf{PRCP}}$, there exists a negligible function $\negl(\cdot)$ such that
\begin{equation}
\Pr[
\gam{\mathsf{OraPRCP}_{\mathsf{CP},\mathcal A_{\mathsf{PRCP}},\mathcal D_{\mathsf{Circ}},\mathsf{Samp}_{\mathsf{PRCP}}}}(1^\secparam)=1
]
\leq
\frac{1}{2}+\negl(\secparam).
\end{equation}
\end{definition}

%% file: defns/sde-hierarchy-figure.tex
\begin{center}
    \centering
    \definecolor{sdenew}{RGB}{111,45,189}
    \definecolor{sdenewdark}{RGB}{78,23,140}
    \makebox[\textwidth][c]{%
    \begin{tikzpicture}[
        x=1cm,
        y=.88cm,
        >=Stealth,
        notion/.style={
            draw=black!42,
            fill=white,
            rounded corners=1pt,
            minimum height=9mm,
            inner xsep=5pt,
            font=\small
        },
        corr notion/.style={
            draw=sdenewdark,
            fill=sdenewdark,
            text=white,
            rounded corners=1.5pt,
            minimum height=13mm,
            inner xsep=10pt,
            line width=1.35pt,
            font=\small
        },
        cross color/.initial=black!82,
        prior/.style={
            draw=black!82,
            line width=.48pt,
            line cap=round,
            cross color=black!82
        },
        new/.style={
            draw=sdenew,
            line width=1.35pt,
            line cap=round,
            cross color=sdenew
        },
        implication/.style={
            solid,
            -{Stealth[length=2mm,width=1.5mm]}
        },
        equivalence/.style={solid,{Stealth[length=1.8mm,width=1.35mm]}-{Stealth[length=1.8mm,width=1.35mm]}
        },
        separation/.style={
            dash pattern=on 3.8pt off 2.3pt,
            -{Stealth[length=1.8mm,width=1.35mm]},
            postaction={decorate},
            decoration={
                markings,
                mark=at position .50 with {
                    \node[
                        inner sep=.35pt,
                        outer sep=0pt,
                        fill=white,
                        text=\pgfkeysvalueof{/tikz/cross color},
                        font=\scriptsize\bfseries
                    ] {$\times$};
                }
            }
        }
    ]
        \node[
            corr notion,
            align=center,
            minimum width=52mm
        ] (corr) at (0,11.0)
        {
            \textsf{SDE-CORR}\\[-1pt]
            {\footnotesize\bfseries implies all notions below}
        };

        \node[notion,minimum width=26mm]
            (plus) at (-6.2,7.8)
            {$\textsf{SDE-CPA}^{+}$};

        \node[notion,minimum width=37mm]
            (strongcpa) at (-2.4,7.3)
            {\textsf{SDE-STRONG-CPA}};

        \node[notion,minimum width=33mm]
            (indcpa) at (-1.3,4.1)
            {\textsf{SDE-IND-CPA}};

        \node[notion,minimum width=35mm]
            (idencpa) at (6.0,5.1)
            {\textsf{SDE-IDEN-CPA}};

        \node[notion,minimum width=40mm]
            (strongsearch) at (-6.2,-3.2)
            {\textsf{SDE-STRONG-SEARCH}};

        \node[notion,minimum width=35mm]
            (indsearch) at (-1.0,-.1)
            {\textsf{SDE-IND-SEARCH}};

        \node[notion,minimum width=36mm]
            (idensearch) at (4.2,-2.8)
            {\textsf{SDE-IDEN-SEARCH}};

        \draw[
            prior,
            separation,
            preaction={draw=white,line width=3.0pt}
        ]
            ([xshift=-4mm]idencpa.north)
            .. controls (4.7,9.35) and (-4.8,9.35) ..
            (plus.north);

        \draw[new,implication]
            ([xshift=-13mm]corr.south)
            to[bend right=10]
            (strongcpa.north);

        \draw[new,implication]
            ([xshift=-2mm]corr.south)
            to[bend left=3]
            ([xshift=8mm]indcpa.north);

        \draw[new,implication]
            ([xshift=13mm]corr.south)
            to[bend left=7]
            ([xshift=5mm]idencpa.north);

        \draw[prior,equivalence]
            (plus.east) -- (strongcpa.west);

        \draw[prior,implication]
            ([xshift=-2mm]strongcpa.south)
            to[bend left=3]
            ([xshift=-2mm]indcpa.north);

        \draw[new,separation]
            ([xshift=2mm]indcpa.north)
            to[bend left=3]
            ([xshift=2mm]strongcpa.south);

        \draw[prior,implication]
            ([yshift=-2mm]strongcpa.west)
            to[bend right=4]
            ([xshift=-2mm]strongsearch.north);

        \draw[prior,separation]
            ([xshift=2mm]strongsearch.north)
            to[bend right=4]
            ([yshift=2mm]strongcpa.west);

        \draw[prior,equivalence]
            (strongsearch.east) -- (indsearch.west);

        \draw[new,separation]
            (indcpa.east)
            to[bend left=4]
            (idencpa.west);

        \draw[prior,separation]
            (idencpa.west)
            to[bend left=4]
            (indcpa.east);

        \draw[new,separation]
            ([xshift=-6mm]indcpa.south)
            to[bend left=4]
            (indsearch.north west);

        \draw[prior,separation]
            (indsearch.north west)
            to[bend left=4]
            ([xshift=-6mm]indcpa.south);

        \draw[
            prior,
            separation,
            rounded corners=5pt
        ]
            ([yshift=-2mm]idencpa.east)
            -- (8.25,4.9)
            -- (8.25,-3.65)
            -- (-1.9,-3.65)
            -- ([xshift=6mm]indsearch.south);

        \draw[
            prior,
            separation,
            rounded corners=5pt
        ]
            ([xshift=-6mm]indsearch.south)
            -- (-3.1,-4.0)
            -- (8.55,-4.0)
            -- (8.55,5.3)
            -- ([yshift=2mm]idencpa.east);

        \draw[new,separation]
            ([xshift=7mm]indcpa.south)
            to[bend left=4]
            ([xshift=-6mm]idensearch.north);

        \draw[prior,separation]
            ([xshift=-6mm]idensearch.north)
            to[bend left=4]
            ([xshift=7mm]indcpa.south);

        \draw[prior,implication]
            ([xshift=7mm]idencpa.south)
            to[bend left=4]
            (idensearch.north east);

        \draw[prior,separation]
            (idensearch.north east)
            to[bend left=4]
            ([xshift=7mm]idencpa.south);

        \draw[prior,implication]
            (indsearch.south east)
            to[bend left=4]
            (idensearch.west);

        \draw[prior,separation]
            (idensearch.west)
            to[bend left=4]
            (indsearch.south east);

        \coordinate (legone) at (-7.2,-4.75);

        \node[
            anchor=west,
            font=\footnotesize\bfseries,
            text=black!72
        ] at (legone) {Provenance:};

        \draw[new]
            ($(legone)+(2.3,0)$) -- ++(.8,0);

        \node[
            anchor=west,
            font=\footnotesize,
            text=sdenewdark
        ] at ($(legone)+(3.25,0)$)
            {new in this work};

        \draw[prior]
            ($(legone)+(5.7,0)$) -- ++(.8,0);

        \node[
            anchor=west,
            font=\footnotesize,
            text=black!82
        ] at ($(legone)+(6.65,0)$)
            {prior or elementary};

        \coordinate (legtwo) at (-7.2,-5.38);

        \node[
            anchor=west,
            font=\footnotesize\bfseries,
            text=black!72
        ] at (legtwo) {Relation:};

        \draw[prior,implication]
            ($(legtwo)+(2.3,0)$) -- ++(.8,0);

        \node[anchor=west,font=\footnotesize]
            at ($(legtwo)+(3.25,0)$)
            {implies};

        \draw[prior,equivalence]
            ($(legtwo)+(4.55,0)$) -- ++(.8,0);

        \node[anchor=west,font=\footnotesize]
            at ($(legtwo)+(5.5,0)$)
            {equivalent};

        \draw[prior,separation]
            ($(legtwo)+(7.35,0)$) -- ++(.9,0);

        \node[anchor=west,font=\footnotesize]
            at ($(legtwo)+(8.4,0)$)
            {does not imply};
    \end{tikzpicture}
    }%

    \captionof{figure}{
        Hierarchy of SDE security notions under the standing assumptions of
        this subsection. Solid arrows denote implication, solid double-headed
        lines denote equivalence, and dashed directed arrows marked by a
        midpoint $\times$ denote failure of the implication in the arrow
        direction. Violet heavy lines are new in this work; black lines are
        prior or elementary. Unmarked directions are not claimed; in
        particular, the implication from SDE-CPA${}^+$ to SDE-IDEN-CPA
        remains open.
    }
    \label{fig:sde-hierarchy}
\end{center}


%% file: moe/it-moe.tex
\section{A New Decisional Coset Monogamy-of-Entanglement Game}

In this section, we state and prove a new information theoretic decisional Coset \moe game, that implies (with other cryptographic assumptions) all the feasibility results in this article, i.e., all the security proofs finally reduce to this \moe game. We note that all adversaries in this section are computationally unbounded, and can be modeled as arbitrary (finite-dimensional) quantum channels, and all the results hold unconditionally, without any computational assumptions.

\par\noindent
\begin{definition}[\dcmoe]\label{def:dcmoe}
Let $n:=n(\secparam)$ be a polynomial function in $\secparam$.  For any triplet of adversaries $\advmoe=(\alice,\bob,\charlie)$, consider the following experiment between a challenger $\ch$ and $\advmoe$.
\par\noindent $\underline{\gam{\dcmoe_\advmoe}(1^\secparam)}$:
\begin{enumerate}
    \item $\ch$ samples a uniformly random $n$-dimensional subspace $A\leq \FF_2^{2n}$ and samples $s,t\getsr \FF_2^{2n}$ independently, and prepares and sends the state $\ket{A_{s,t}}=\frac{1}{\sqrt{|A|}}\sum_{a\in A}(-1)^{\langle a,t\rangle }\ket{a+ s}$.
    \item $\alice$ outputs a joint state $\sigma_{\regbob,\regcharlie}$ and sends registers $\regbob$ to $\bob$ and $\regcharlie$ to $\charlie$.
    \item After the split, $\ch$ samples $u\getsr A\setminus \{0\}$ and $v\getsr A^\perp\setminus \{0\}$ and sends the pair $(u,v)$ to $\bob$ and $\charlie$.
    \item $\bob$ and $\charlie$ output $b_\bob,b_\charlie$.
    \item The output of the experiment $1$ if $b_\bob=b_\charlie=\langle u,t\rangle+ \langle v, s\rangle$.
\end{enumerate}
\end{definition}
It is easy to see that the $\dcmoe$ experiment can be seen as an unclonable-indistinguishable experiment (see \Cref{def:UE-security}) as follows.
\begin{lemma}\label{lem:dcmoe-ue}
    The \dcmoe experiment is equivalent to the unclonable-indistinguishable experiment (see \Cref{def:UE-security}) for the following unclonable encryption scheme.
\par\noindent \underline{$\cosetue$}:     
\begin{enumerate}
    \item $\keygen(1^\secparam)$: Sample $u,v\gets \FF_2^{2n}\setminus \{0\}$ subject to $\langle u,v\rangle=0$. Output $k=(u,v)$.
    \item $\enc(k,m)$: Sample a uniformly random subspace $ A\leq \FF_2^{2n}$ subject to $u\in A$ and $v\in A^\perp$ where $k=(u,v)$. Sample $s,t\getsr \FF_2^{2n}$ subject to $m=\langle u,t\rangle+ \langle v, s\rangle$. Prepare and output the state $\rho_{ct}=\ket{A_{s,t}}$.
    \item $\dec(k,\rho_{ct})$: Interpret $k:=(u,v)$, compute $m_1=\langle v,s\rangle$, by computing the inner product of $v$ coherently with all strings in superposition in the state $\ket{A_{s,t}}$ and then measure the outcome, which does not collapse the state (since $\langle a+s, v=\langle s,v\rangle$ for every $a\in A$). Then apply the $QFT$ on the state to get the state $\ket{A^\perp_{t,s}}$, and then compute $m_2=\langle u,t\rangle$ similarly using the state $\ket{A^\perp_{t,s}}$ and $u$. Output $m=m_1+ m_2$.
\end{enumerate}
\end{lemma}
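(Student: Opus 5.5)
The plan is to show that the joint distribution of all classical and quantum objects the adversary interacts with is identical in the two experiments. Concretely, I will show that the distribution of the triple $(\rho_{ct},k,m)$ is the same in both: in \dcmoe, take $\rho_{ct}=\ket{A_{s,t}}$, $k=(u,v)$, and $m=\langle u,t\rangle+\langle v,s\rangle$; in the UI experiment for \cosetue, take the ciphertext, the key, and a uniform message. Since $(\alice,\bob,\charlie)$ only see $\rho_{ct}$ before the split and $k$ after, and the winning predicate is $b_\bob=b_\charlie=m$ in both, equality of these joint distributions gives equal winning probabilities for every adversary. The equivalence then holds in both directions with the same adversary.

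First I compare the distributions of $(A,u,v)$. In \dcmoe, $A$ is uniform of dimension $n$, and then $u\in A\setminus\{0\}$ and $v\in A^\perp\setminus\{0\}$ are uniform. Because $A^\perp$ is also $n$-dimensional and $A\subseteq (A^\perp)^\perp$, we get $\langle u,v\rangle=0$ automatically. I then count: the number of $n$-dimensional $A$ with $u\in A$ and $v\in A^\perp$ (equivalently $u\in A\subseteq v^\perp$) is the same for every nonzero orthogonal pair $(u,v)$. This holds because $GL_{2n}(\FF_2)$-type symmetry (the stabilizer structure of the inner product) acts transitively on such pairs. More concretely, such $A$ correspond to $(n-1)$-dimensional subspaces of the $(2n-2)$-dimensional quotient $v^\perp/\langle u\rangle$, and this count is independent of $(u,v)$. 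In addition, $|A\setminus\{0\}|\cdot|A^\perp\setminus\{0\}|=(2^n-1)^2$ is constant. Hence $(u,v)$ is uniform over nonzero orthogonal pairs, and $A$ given $(u,v)$ is uniform among the subspaces with $u\in A$ and $v\in A^\perp$, matching $\keygen$ and the first step of $\enc$.

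Next I handle $(s,t)$ and $m$. In \dcmoe, $s,t$ are uniform and independent of $(A,u,v)$. The map $(s,t)\mapsto\langle u,t\rangle+\langle v,s\rangle$ is a nonzero linear functional (since $u\neq 0$), so $m$ is uniform and independent of $(A,u,v)$. Conditioned on $m$, the pair $(s,t)$ is uniform on the affine hyperplane where the functional equals $m$. This is exactly the UI experiment, which picks a uniform $m$ and then samples $(s,t)$ uniformly subject to $\langle u,t\rangle+\langle v,s\rangle=m$. The state in both cases is $\ket{A_{s,t}}$, so the joint distributions of $(A,s,t,u,v,m)$ coincide, and therefore so do those of $(\rho_{ct},k,m)$.

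For completeness, I also check that $\dec$ is correct. For every $a\in A$ we have $\langle a+s,v\rangle=\langle s,v\rangle$ because $v\in A^\perp$, so computing $\langle v,\cdot\rangle$ coherently does not disturb the state. After the Hadamard (QFT) transform, the state becomes $\ket{A^\perp_{t,s}}$, and the same reasoning with $u\in A=(A^\perp)^\perp$ yields $\langle u,t\rangle$. The only genuinely nontrivial step is the uniformity of $(u,v)$ in the second paragraph, which rests on the counting argument. If the quotient-space count seems delicate, I will fall back on the transitivity of the isometry group of the symplectic/orthogonal-type form on $\FF_2^{2n}$ acting on pairs.
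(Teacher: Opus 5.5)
Your proposal is correct and follows the paper's proof essentially step for step: the number of compatible $A$ is the same for every nonzero orthogonal pair, which makes $(u,v)$ uniform and $A$ uniform given $(u,v)$, and the nonzero linear functional $(s,t)\mapsto\langle u,t\rangle+\langle v,s\rangle$ is balanced; your count via $(n-1)$-dimensional subspaces of $v^\perp/\langle u\rangle$ makes explicit what the paper leaves as ``easy to see.'' Drop your stated fallback, though, because the isometry group of the dot product on $\FF_2^{2n}$ preserves $\langle u,u\rangle$ (the weight parity of $u$) and so is not transitive on these pairs, whereas $g\in GL_{2n}(\FF_2)$ acting by $(u,v,A)\mapsto(gu,g^{-\mathsf T}v,gA)$ is.
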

\begin{proof}
    For every fixed nonzero orthogonal pair $(u,v)\in \FF_2^{2n}\times\FF_2^{2n}$, let the set of $n$-dimensional subspaces $A$ satisfying $u\in A$ and $v\in A^\perp$ be defined as $\comp(u,v)$. It is easy to see that $|\comp(u,v)|$ depends only on $n$ and is independent of the choice of $(u,v)$. Hence the marginal distribution on the pair $(u,v)$ induced by the $\dcmoe$ experiment is uniform over the key space of $\cosetue$, given as $\keyspace:=\{(u,v)\mid u,v \in \FF_2^{2n}\setminus\{0\}, \langle u,v\rangle =0\}$. Thus sampling $A$ uniformly at random, and then sampling $u\getsr A\setminus\{0\}, v\getsr A^\perp\setminus\{0\}$, is the same as first sampling a uniformly random nonzero orthogonal pair $(u,v)\in \FF_2^{2n}\times\FF_2^{2n}$, and then sampling $A\in \comp(u,v)$, and therefore the marginal distribution of $(A,u,v))$ in the \dcmoe experiment and the unclonable-indistinguishable experiment (see \Cref{def:UE-security}) for \cosetue, is the same. Next, for a fixed non-zero orthogonal pair $(u,v)$ and a fixed subspace $A\in \comp(u,v)$, the map \[(s,t)\longmapsto\langle u,t\rangle+\langle v,s\rangle\] is a nonzero linear function and hence is balanced; therefore, sampling $s,t$ uniformly and setting the message to this parity is equivalent to first sampling a uniform message $m$ and then sampling $s,t$ uniformly conditioned on \[m=\langle u,t\rangle+\langle v,s\rangle.\] Therefore, conditioned on $A,(u,v)$, the distribution of the subspace $s,t$ and the bit $\langle u,t\rangle+ \langle v, s\rangle$ in the \dcmoe experiment, is the same as that of $s,t$ and the message bit $m$ in the unclonable indistinguishable experiment for \cosetue, which concludes the proof of the lemma.
\end{proof}

\begin{theorem}[Decisional Coset Monogamy]\label{thm:dcmoe}
For any $\secparam\in \NN$, and any triplet of adversary $\advmoe=(\alice,\bob,\charlie)$,
\[\Pr[\gam{\dcmoe_\advmoe}(1^\secparam)=1]\leq \frac{1}{2}+\frac{1}{2}\sqrt{\frac{2^{2n}}{(2^{2n}-1)(2^{2n-1}-1)}}.\]
\end{theorem}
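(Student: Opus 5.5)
The plan is to go through the unclonable-encryption reformulation of \Cref{lem:dcmoe-ue}. That lemma shows the winning probability in $\gam{\dcmoe_\advmoe}(1^\secparam)$ equals the winning probability of the same adversary in the unclonable-indistinguishable experiment for $\cosetue$. It therefore suffices to bound the latter by $\frac12+\frac12\sqrt{N\sum_kP(k)^2}$, where $N:=2^{2n}$ and $P$ is the uniform distribution on the key space $\keyspace$ of nonzero orthogonal pairs.

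\textbf{Step 1 (conditional ciphertext).} Fix $k=(u,v)\in\keyspace$ and $m\in\bit$, and compute the ciphertext state $\rho_{k,m}$ averaged over $A\in\comp(u,v)$ and $(s,t)$. Using $\ket{A_{s,t}}=X^sZ^t\ket{A_{0,0}}$, the commutation $Z^vX^s=(-1)^{\langle v,s\rangle}X^sZ^v$, and the change of variables $a\mapsto a+u$ inside $A$, we get
\[
X^uZ^v\ket{A_{s,t}}=(-1)^{\langle u,t\rangle+\langle v,s\rangle}\ket{A_{s,t}},
\]
so every ciphertext of $m$ lies in the $(-1)^m$-eigenspace of $X^uZ^v$. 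For fixed $A$, the states $\{\ket{A_{s,t}}\}$ with $s$ ranging over $\FF_2^{2n}/A$ and $t$ over $\FF_2^{2n}/A^\perp$ form an orthonormal basis of $\CC^N$. Exactly half of this basis satisfies the parity constraint, and uniform $(s,t)$ weight these basis vectors uniformly. Hence
\[
\rho_{k,m}=\frac{2}{N}\cdot\frac{I+(-1)^mX^uZ^v}{2}=\frac1N\bigl(I+(-1)^m\Theta_k\bigr),\qquad \Theta_k:=X^uZ^v .
\]
This expression does not depend on $A$, so the average over $\comp(u,v)$ is trivial.

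\textbf{Step 2 (abstract bound, in the style of \cite{AS26}).} Prove that for any family of Hermitian $\pm1$ operators $\Theta_k$ with balanced eigenspaces and $\Tr(\Theta_k\Theta_{k'})=N\delta_{kk'}$, and encryption $\rho_{k,m}=N^{-1}(I+(-1)^m\Theta_k)$, every splitting channel $\Phi$ with POVMs $\{B^k_m\},\{C^k_m\}$ wins with probability at most $\frac12+\frac12\sqrt{N\sum_kP(k)^2}$.

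\emph{Splitting the success probability.} Expand the success probability as
\[
\frac12\sum_kP(k)\sum_m\Tr\bigl[(B^k_m\otimes C^k_m)\Phi(\rho_{k,m})\bigr].
\]
The $I/N$ part contributes at most $\frac12$, because $B^k_0\otimes C^k_0+B^k_1\otimes C^k_1\preceq I$. The remaining term is
\[
\frac1{2N}\sum_kP(k)\Tr\bigl[\Phi^\dagger(M_k)\Theta_k\bigr],\qquad M_k:=B^k_0\otimes C^k_0-B^k_1\otimes C^k_1 .
\]

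\emph{Using the purification.} Next, rewrite this via the Choi state of $\Phi$: purify the maximally mixed input with a reference register $\mathsf R$, so that $\Theta_k$ acts on $\mathsf R$ by the transpose trick. The correlation term then becomes $\langle\psi|\sum_kP(k)\,\Theta_k^{T}\otimes M_k|\psi\rangle$ for a unit vector $\psi$ on $\mathsf R\otimes\regbob\otimes\regcharlie$, up to normalization.

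\emph{Cauchy--Schwarz with both freeloaders.} The key point is to use \emph{both} recipients: $M_k$ factors through Bob's and Charlie's binary observables $\beta_k:=B^k_0-B^k_1$ and $\gamma_k:=C^k_0-C^k_1$, since $2M_k=\beta_k\otimes I+I\otimes\gamma_k$ up to a term that is $\preceq$ the identity part already counted. I then apply Cauchy--Schwarz in the form
\[
\Bigl|\sum_kP(k)\langle\psi|\Theta_k^T\otimes\beta_k\otimes\gamma_k|\psi\rangle\Bigr|\le\Bigl\|\sum_kP(k)^2\,\Theta_k^{T}\Theta_{k}^{T\dagger}\text{-type Gram term}\Bigr\|^{1/2}.
\]
The orthogonality $\Tr(\Theta_k\Theta_{k'})=N\delta_{kk'}$ collapses the Gram term to $N\sum_kP(k)^2$.

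\textbf{Main obstacle.} Step 2 is where I expect the real difficulty. A single party is \emph{not} bounded this way: Bob alone can take $\beta_k=\Theta_k$ and win with probability $1$. So the inequality must genuinely exploit that $\Theta_k$ is applied on the one reference register while the $k$-dependent measurements sit on two separate registers. My first attempt is to bound $\sum_kP(k)\,\Theta_k^T\otimes\beta_k\otimes\gamma_k$ in operator norm by writing it as $L^\dagger R$, with $L=\sum_k\sqrt{P(k)}\ket{k}\otimes(\Theta_k^T\otimes\beta_k\otimes I)$ and $R=\sum_k\sqrt{P(k)}\ket{k}\otimes(I\otimes I\otimes\gamma_k)$. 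I would then bound $\|L\|\,\|R\|$, using the partial trace over $\mathsf R$ and the orthogonality of the $\Theta_k$ to extract the factor $\sqrt{N\sum_kP(k)^2}$, following \cite{AS26}.

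\textbf{Step 3 (counting).} The key space has size
\[
|\keyspace|=(2^{2n}-1)(2^{2n-1}-1),
\]
since for each nonzero $u$ there are $2^{2n-1}-1$ nonzero $v\in u^\perp$. Distinct keys give distinct non-identity Paulis $X^uZ^v$, which are trace-orthogonal. With $P$ uniform we get $\sum_kP(k)^2=1/|\keyspace|$. Substituting $N=2^{2n}$ into the bound of Step 2 yields exactly the claimed $\frac12+\frac12\sqrt{\frac{2^{2n}}{(2^{2n}-1)(2^{2n-1}-1)}}$.
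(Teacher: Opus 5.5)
Your overall architecture matches the paper: the reduction through \Cref{lem:dcmoe-ue}, Step~1 (the ciphertext $\frac1N(I+(-1)^mX^uZ^v)$, which does not depend on $A$), and Step~3 (the count $|\keyspace|=(2^{2n}-1)(2^{2n-1}-1)$) are all correct. The gap is in Step~2, which carries the entire content of the bound, and there your argument would fail.

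Write $\langle X\rangle:=\Tr[X\tau]$ for the normalized Choi state $\tau$. The identities $M_k=\tfrac12(\beta_k\otimes I+I\otimes\gamma_k)$ and $B^k_0\otimes C^k_0+B^k_1\otimes C^k_1=\tfrac12(I+\beta_k\otimes\gamma_k)$ hold exactly. So the success probability is $\tfrac14\bigl(1+\langle E_B\rangle+\langle E_C\rangle+\langle E_{BC}\rangle\bigr)$, where
\[
E_B=\sum_kP(k)\,\Theta_k^{T}\otimes\beta_k\otimes I,\qquad E_C=\sum_kP(k)\,\Theta_k^{T}\otimes I\otimes\gamma_k,\qquad E_{BC}=\sum_kP(k)\,I\otimes\beta_k\otimes\gamma_k .
\]
Your split bounds $\tfrac14(1+\langle E_{BC}\rangle)\le\tfrac12$ and then needs $\tfrac14(\langle E_B\rangle+\langle E_C\rangle)\le\tfrac12\sqrt{N\sum_kP(k)^2}$. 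That second inequality is false. Suppose $\alice$ forwards the whole ciphertext to $\bob$, who measures $\Theta_k$, while $\charlie$ guesses. Then $\langle E_B\rangle=1$ and $\langle E_C\rangle=\langle E_{BC}\rangle=0$, so your ``remaining term'' equals $1/4$. The total is $1/2$ only because the identity part drops to $1/4$. The three correlators trade off against each other, so they cannot be bounded separately.

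The Cauchy--Schwarz step also fails on its own terms. The quantity you bound, $\sum_kP(k)\langle\psi|\Theta_k^{T}\otimes\beta_k\otimes\gamma_k|\psi\rangle$, carries the odd factor $(-1)^{3m}$ and vanishes after averaging over $m$, so it does not appear in the success probability. Your $L^\dagger R$ factorization pairs $k$ only with itself, since $L^\dagger L=\sum_kP(k)\,I\otimes\beta_k^2\otimes I\preceq I$. Hence the orthogonality $\Tr(\Theta_k\Theta_{k'})=N\delta_{k,k'}$ never enters, and you only get the trivial bound $1$.

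The missing idea is the joint positive-part argument of \cite{AS26}, which the paper uses (\Cref{lem:AS-positive-part}).
\begin{itemize}
\item Write the success probability as $\tfrac12(1+\Tr[G\tau])\le\tfrac12(1+\|G_+\|_\infty)$ with $G=\tfrac12(E_B+E_C+E_{BC}-I)$.
\item Show $\|G_+\|_\infty^2\le N\sum_kP(k)^2$ from two facts: $\Gamma_{a,c}=\tfrac14(I+aE_B+cE_C+acE_{BC})\succeq0$ for all signs, and $\|E_BE_{BC}^{\ell}E_C\|_\infty\le N\sum_kP(k)^2$ for all $\ell\ge0$.
\item Orthogonality enters only at $\ell=0$, through \Cref{lem:AS-conditional-overlap} applied to the product $UV$. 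Here $U=\sum_kP(k)\Theta_k^{T}\otimes\beta_k$ is Bob's side and $V=\sum_kP(k)\Theta_k^{T}\otimes\gamma_k$ is Charlie's side, sharing the reference register. Tracing out the reference gives $\sum_{k,k'}P(k)P(k')\Tr(\Theta_k\Theta_{k'})\beta_k\beta_{k'}=N\sum_kP(k)^2\beta_k^2$, because the cross terms vanish.
\item This yields $\|E_BE_C\|_\infty\le N\sum_kP(k)^2$ with no square root.
\item The square root appears only afterwards. Take a top eigenvector of $G$ with eigenvalue $t$, and use $-\Delta\preceq E_B-E_C\preceq\Delta$ with $\Delta=I-E_{BC}$. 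This gives $t^2\le\|E_BFE_C\|_\infty$ with $F=2t(2tI+\Delta)^{-1}$, and expanding $F$ as a Neumann series in $E_{BC}$ brings in the bound on $\|E_BE_{BC}^{\ell}E_C\|_\infty$.
\end{itemize}
Without this argument, or an equivalent one that bounds the three correlators jointly, Step~2 does not go through.
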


Due to \Cref{lem:dcmoe-ue}, \Cref{thm:dcmoe} gives the following immediate corollary.

\begin{corollary}\label{cor:dcmoe-ue}
The encryption scheme $\cosetue$ is an unconditional $\frac{1}{2}\sqrt{\frac{2^{2n}}{(2^{2n}-1)(2^{2n-1}-1)}}$-unclonable indistinguishable encryption scheme.
\end{corollary}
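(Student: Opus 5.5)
By \Cref{lem:dcmoe-ue}, it suffices to bound the unclonable-indistinguishable success probability of \cosetue for an arbitrary adversary $(\Phi_\alice,\{B^k_b\},\{C^k_b\})$. The argument has three steps: compute the conditional ciphertext state, prove a general bound for a family of "traceless-Pauli-type" encryptions, and specialize to \cosetue.

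\textbf{Step 1: the conditional ciphertext state.} Fix the key $k=(u,v)$ and the bit $m$. We compute the mixed state $\rho_{k,m}$ obtained by averaging over $A\in\comp(u,v)$ and over $s,t$ subject to the parity constraint. Set $N:=2^{2n}$. Since $\ket{A_{s,t}}=X^sZ^t\ket{A}$, the Pauli commutation rule together with $u\in A$ and $v\in A^\perp$ gives
\[
X^uZ^v\ket{A_{s,t}}=(-1)^{\langle u,t\rangle+\langle v,s\rangle}\ket{A_{s,t}}.
\]
For fixed $A$, the states $\ket{A_{s,t}}$, ranging over coset representatives $s\in\FF_2^{2n}/A$ and $t\in\FF_2^{2n}/A^\perp$, form an orthonormal basis of $\C^N$. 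Exactly half of these basis vectors lie in the $(-1)^m$-eigenspace of $X^uZ^v$, which has dimension $N/2$. The uniform choice of $(s,t)$ therefore produces
\[
\rho_{k,m}=\frac{2}{N}\Pi_{k,m}=\frac{1}{N}\bigl(I+(-1)^mX^uZ^v\bigr).
\]
This expression is independent of $A$, so the average over $A$ is trivial.

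\textbf{Step 2: a general lemma.} Let $\{\Theta_k\}$ be Hermitian involutions on $\C^N$ with $\Tr\Theta_k=0$ and $\Tr(\Theta_k\Theta_{k'})=0$ for $k\neq k'$. Let $k$ be sampled from a distribution $P$, and let $\rho_{k,m}=N^{-1}(I+(-1)^m\Theta_k)$. The claim is that the success probability is at most $\tfrac12+\tfrac12\sqrt{N\sum_kP(k)^2}$.

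Write $\Phi$ for $\Phi_\alice$ and define the observables $\beta_k:=B^k_0-B^k_1$ and $\gamma_k:=C^k_0-C^k_1$. The probability that both outputs equal $m$ is $\Tr[(B^k_m\otimes C^k_m)\Phi(\rho_{k,m})]$. Expanding $B^k_m=\tfrac12(I+(-1)^m\beta_k)$, and similarly for $C^k_m$, and averaging over $m$, the terms odd in $(-1)^m$ combine. The success probability takes the form
\[
\frac14\Bigl(1+\E_k\Tr[(\beta_k\otimes\gamma_k)\Phi(I/N)]\Bigr)+\frac14\,\E_k\frac1N\Tr\bigl[(\beta_k\otimes I+I\otimes\gamma_k)\Phi(\Theta_k)\bigr].
\]
The first expectation is at most $1$, so the first part contributes at most $\tfrac12$.

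For the second part, pass to the adjoint channel. Define $\Lambda:=\E_k\Phi^\dagger(\beta_k\otimes I+I\otimes\gamma_k)\otimes\ket{k}\!\bra{k}$, or more simply consider $Y_k:=\Phi^\dagger(\beta_k\otimes I)$, which satisfies $\|Y_k\|\leq1$. The quantity to control is $\E_k N^{-1}\Tr(Y_k\Theta_k)$. Write it as $\Tr(Y\,T)$ with $T:=\sum_kP(k)\Theta_k\otimes\ket{k}\!\bra{k}$ and $Y:=\sum_k Y_k\otimes\ket{k}\!\bra{k}$, and apply Cauchy--Schwarz. A naive bound gives $\|T\|_2^2=N\sum_kP(k)^2$, which loses a factor. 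The orthogonality $\Tr(\Theta_k\Theta_{k'})=0$ must therefore be used in the form
\[
\Bigl\|\sum_kP(k)\Theta_k\Bigr\|_2^2=N\sum_kP(k)^2.
\]
This means exploiting the fact that $\Phi^\dagger$ is a single map independent of $k$, following Ananth--Sahai.

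\textbf{Main obstacle.} This decoupling is the hardest step. I would follow the Ananth--Sahai argument: bound the sum of the two correlation terms by a quantity like $\|\sum_k P(k)\,\Theta_k\otimes\beta_k\otimes\gamma_k\|$, using that $\beta_k\otimes\gamma_k$ has norm at most $1$ and that each $\Theta_k$ has equal $\pm1$ eigenspaces. The pairwise trace orthogonality then yields the factor $\sqrt{N\sum_kP(k)^2}$ via a Gram-matrix and Cauchy--Schwarz estimate over $k$. I expect the bookkeeping needed to reduce the three-party correlation to a single operator norm on the ciphertext space to be the real work.

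\textbf{Step 3: specialization to \cosetue.} Take $\Theta_k=X^uZ^v$. These are distinct non-identity Pauli operators, so they are traceless, pairwise trace-orthogonal, and have equal $\pm1$ eigenspaces. The key distribution $P$ is uniform on $\keyspace$. Its size is $|\keyspace|=(N-1)(N/2-1)$, since $u$ ranges over the $N-1$ nonzero vectors and $v$ over the $N/2-1$ nonzero vectors in $u^\perp$. Then
\[
N\sum_kP(k)^2=\frac{2^{2n}}{(2^{2n}-1)(2^{2n-1}-1)},
\]
which gives exactly the bound of \Cref{thm:dcmoe}.
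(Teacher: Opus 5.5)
Your Steps 1 and 3 agree with the paper (\Cref{claim:conditional-cipher-state} and the specialization in the proof of \Cref{thm:dcmoe}), and your Step 2 states exactly \Cref{lem:general-UE}. Your overall route is therefore the paper's. The gap is that Step 2 carries all of the content, and the route you sketch for it cannot work.

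After expanding over $m$, you bound the $\beta_k\otimes\gamma_k$ term by $1$. You then try to show that the single-party terms, such as $\E_k N^{-1}\Tr(Y_k\Theta_k)$ with $Y_k=\Phi^\dagger(\beta_k\otimes I)$, are $O(\sqrt{N\col(P)})$. This is false. If \alice forwards the whole ciphertext to \bob, who measures $\Theta_k$, then $\beta_k=Y_k=\Theta_k$ and this term equals $1$. That strategy still succeeds with probability only $1/2$, because the $\beta_k\otimes\gamma_k$ term then vanishes (\charlie can only guess). The three correlation terms trade off against one another. Once you replace the joint term by its trivial bound $1$, no Cauchy--Schwarz or Gram-matrix estimate over $k$ can recover the bound, since a single party is allowed to decrypt perfectly.

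What is missing is the joint Ananth--Sahai argument that the paper reproduces.
\begin{itemize}
\item By the Choi identity, the success probability equals $\frac12(1+\Tr[G\tau])$, where $G=\frac12(E_B+E_C+E_{BC}-I)$, $\tau$ is the normalized Choi state of $\Phi$, and
\begin{itemize}
\item $E_B=\sum_kp_k\Theta_k^T\otimes B_k\otimes I$,
\item $E_C=\sum_kp_k\Theta_k^T\otimes I\otimes C_k$,
\item $E_{BC}=\sum_kp_kI\otimes B_k\otimes C_k$.
\end{itemize}
It therefore suffices to show $\|G_+\|_\infty^2\le N\col(P)$.
\item The orthogonality $\Tr(\Theta_k\Theta_{k'})=N\delta_{k,k'}$ controls only the \emph{product} of the two parties' operators, not either one alone. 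It gives $\tr_{\regalice}[U^\dagger U]=N\sum_kp_k^2B_k^2\preceq N\col(P)I$, and the analogue for \charlie. By \Cref{lem:AS-conditional-overlap}, $\|E_BE_{BC}^{\ell}E_C\|_\infty\le N\col(P)$ for every $\ell\ge0$.
\item The operators $\Gamma_{a,c}=\frac14(I+aE_B+cE_C+acE_{BC})$ are positive for all signs $a,c$; this encodes that the local POVMs are genuine measurements. This positivity converts the product bound into $\|G_+\|_\infty^2\le N\col(P)$, via the resolvent and Neumann-series argument of \Cref{lem:AS-positive-part}. This step is the origin of the square root in the final bound.
\end{itemize}

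Your proposal invokes Ananth--Sahai but identifies neither the product bound nor the positivity step. The operator $\|\sum_kP(k)\Theta_k\otimes\beta_k\otimes\gamma_k\|$ you suggest is not the quantity that gets controlled. As written, Step 2 remains unproven.
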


In order to prove \Cref{thm:dcmoe}, we will use its interpretation as a unclonable indistinguishable experiment, given in \Cref{lem:dcmoe-ue}. In particular, we mildly extend the proof of \cite{AS26}\footnote{A similar generalization of~\cite{AS26} have been independently shown in the concurrent works of \cite{BBC26,BDF26}} to get a general lemma showing unclonable-indistinguishable security of a class of unclonable bit-encryption schemes, that captures both $\cosetue$ and the construction of~\cite{AS26}.

\begin{lemma}[General class of unclonable indistinguishable encryption schemes]\label{lem:general-UE}
    Suppose for every $\ell\in \NN$, there exists a set of keys $\keyspace$ and a set of hermitian operators  $\{\Theta_k\}_{k\in \cal{K}}$ in $\mathcal{L}((\CC^2)^{\otimes \ell})$, indexed by $\keyspace$, satisfying that for every $k$, $\tr[\Theta_k]=0$, $\Theta_k^2=I$, and for every $k,k'\in \cal{K}$, $\tr[\Theta_k\Theta_{k'}]=N\delta_{k,k'}$ where $N=2^\ell$ is the dimension of the space, and $\delta_{ij}$ is the Kronecker Delta function, then for any distribution $P=\{p_k\}_{k\in \cal{K}}$ on $\cal{K}$, the following quantum encryption scheme given by
    \begin{enumerate}
        \item $\keygen(1^\secparam)$: Sample $k\gets P$, and output $k$.
        \item $\enc(k,m)$: Prepare and output the state $\rho^{k,m}_{ct}=1/N\left(I+ (-1)^m\Theta_k\right)$.
        \item $\dec(k,\rho)$: Measure $\rho$ in the eigen basis of $\Theta_k$, i.e., the POVM $(\Pi_0,\Pi_1):=\left(\frac{I + \Theta_k}{2},\frac{I - \Theta_k}{2}\right)$ and output the outcome.
    \end{enumerate}
     satisfies that for every $\secparam$, and adversarial triplet $\advue=(\alice,\bob,\charlie)$ against it in the unclonable-indistinguishable experiment (see \Cref{def:UE-security}), the adversarial success probability is at most $\frac{1}{2}+\frac{1}{2}\sqrt{N\col(P)}$, where $\col(P)=\sum_{k\in \cal{K}}p_k^2$' i.e., the above scheme satisfies $\frac{1}{2}\sqrt{N\col(P)}$-unclonable-indistinguishability\footnote{We clarify that for certain choices of the operator family $\{\Theta_k\}_{k\in \cal{K}}$ and distribution $P$, the obtained scheme may not be efficient. }.
\end{lemma}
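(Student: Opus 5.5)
We will write the success probability in operator form, symmetrize over the message bit, and reduce to an operator-norm bound controlled by the key collision probability, following the argument of \cite{AS26}.

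\textbf{Operator form.} Write $\Phi_\alice$ for $\alice$'s channel and $\Phi^\dagger_\alice$ for its adjoint. For each key $k$ set
\[
Z^\bob_k:=B^k_0-B^k_1, \qquad Z^\charlie_k:=C^k_0-C^k_1,
\]
which are Hermitian with $\|Z^\bob_k\|,\|Z^\charlie_k\|\leq 1$. Both parties output $m$ with operator $B^k_m\otimes C^k_m$, and
\[
B^k_m\otimes C^k_m=\tfrac14\bigl(I+(-1)^mZ^\bob_k\bigr)\otimes\bigl(I+(-1)^mZ^\charlie_k\bigr).
\]
Averaging over $m$ against $\rho^{k,m}=\frac1N\bigl(I+(-1)^m\Theta_k\bigr)$, the terms odd in $(-1)^m$ pair up. This gives
\[
\Pr[\mathrm{win}]=\sum_k p_k\cdot\tfrac{1}{4N}\Bigl(\Tr\bigl[\Phi^\dagger(I+Z^\bob_k\otimes Z^\charlie_k)\bigr]+\Tr\bigl[\Theta_k\,\Phi^\dagger(Z^\bob_k\otimes I+I\otimes Z^\charlie_k)\bigr]\Bigr).
\]
Since $\Phi^\dagger$ is unital and $\|Z^\bob_k\otimes Z^\charlie_k\|\leq1$, the first bracket is at most $2N$, which contributes at most $\frac12$ in total. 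It therefore suffices to bound
\[
\Delta:=\frac{1}{4N}\sum_k p_k\Tr\bigl[\Theta_k\,\Phi^\dagger(Z^\bob_k\otimes I)\bigr]+(\text{same for }\charlie).
\]

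\textbf{Where the real bound comes from.} Bounding the first term of $\Delta$ naively (the trace of $\Theta_k$ against a contraction) gives something of order $\frac14$ per party. That cannot be enough, since one party can decrypt perfectly alone. The gain must come from combining the two parties. The plan is the following.
\begin{enumerate}
\item Use $\Tr\Theta_k=0$ and the ability to shift the first bracket: replace $I+Z^\bob\otimes Z^\charlie$ by $(I+Z^\bob\otimes I)(I+I\otimes Z^\charlie)$ minus the cross terms.
\item Write the winning operator as $\frac12 I+\frac{1}{2N}\sum_k p_k\Theta_k\otimes W_k$, for a suitable contraction $W_k=\Phi^\dagger$ applied to an operator built from $B^k,C^k$.
\end{enumerate}
The key bound, following \cite{AS26}, is a Cauchy--Schwarz in Hilbert--Schmidt space. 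Introduce the operator
\[
M:=\sum_k p_k\,\Theta_k\otimes\bigl(\text{the }\bob\text{-operator}\bigr)\otimes\bigl(\text{the }\charlie\text{-operator}\bigr)
\]
on the joint space, and view the success as $\Tr$ of this operator against a purification of the maximally mixed state (a Choi-state viewpoint: replace the ciphertext by half of a maximally entangled state of dimension $N$).

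Then bound
\[
\Bigl\|\sum_k p_k\,\Theta_k\otimes(B^k_0-B^k_1)\otimes(C^k_0-C^k_1)\Bigr\|.
\]
Squaring it, the cross terms $\Theta_k\Theta_{k'}$ for $k\neq k'$ are handled via their trace orthogonality, $\Tr[\Theta_k\Theta_{k'}]=N\delta_{k,k'}$, after taking the partial trace on the ciphertext register. This yields a norm bound of $\sqrt{N\sum_k p_k^2}$: the factor $N$ arises from the Choi normalization, and $\sum_k p_k^2=\col(P)$ arises from the diagonal $k=k'$ terms, each with $\Theta_k^2=I$. Adding the $\frac12$ baseline gives $\frac12+\frac12\sqrt{N\col(P)}$.

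\textbf{Main obstacle.} The main obstacle is getting this Cauchy--Schwarz step exactly right. Concretely, we must express the game value as $\frac12+\frac12\langle\psi|X|\psi\rangle$ with $X=\sum_k p_k\,\Theta_k^{T}\otimes\Phi^\dagger(Z^\bob_k\otimes Z^\charlie_k)$ and $\psi$ maximally entangled. This uses the fact that each $\rho^{k,m}$ is $\frac2N$ times a projector with equal-dimensional eigenspaces, which is where $\Tr\Theta_k=0$ and $\Theta_k^2=I$ enter.

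We would then bound $\langle\psi|X|\psi\rangle\leq\|X\psi\|$. To compute $\|X\psi\|^2$, expand the double sum, use $\|Z\|\leq1$ to drop the adversary operators inside an appropriate Cauchy--Schwarz, and compute $\Tr[\Theta_k\Theta_{k'}]/N=\delta_{k,k'}$ on the reduced state $I/N$.

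If this direct route leaks a factor, we would instead follow \cite{AS26} verbatim and simply check that their proof uses only these three properties of $\Theta_k$.
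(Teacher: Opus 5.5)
Your first display is correct, but the identity you single out as the main step contradicts it, and the bound you aim for is false.

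\textbf{The identity is wrong.} Expand $\frac1N(I+(-1)^m\Theta_k)$ against $\frac14(I+(-1)^mZ^{\bob}_k)\otimes(I+(-1)^mZ^{\charlie}_k)$. The triple term $\Theta_k\otimes Z^{\bob}_k\otimes Z^{\charlie}_k$ carries the sign $(-1)^{3m}=(-1)^m$, so it cancels when you average over $m$. Only the three pairwise correlators survive, exactly as in your first display. Hence $\Pr[\mathrm{win}]=\frac12+\frac12\langle\psi|X|\psi\rangle$ with $X=\sum_kp_k\Theta_k^T\otimes\Phi^\dagger(Z^{\bob}_k\otimes Z^{\charlie}_k)$ is false. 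That correlator is the bias of a different game, the XOR condition $b_{\bob}\oplus b_{\charlie}=m$.

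\textbf{The bound is false.} Let $\alice$ forward the ciphertext to $\bob$, who measures $\Theta_k$, while $\charlie$ always outputs $0$. Then $\langle\psi|X|\psi\rangle=\sum_kp_k\frac1N\Tr[\Theta_k^2]=1$, although this strategy wins the actual game with probability $\frac12$. So the operator norm of $\sum_kp_k\Theta_k\otimes Z^{\bob}_k\otimes Z^{\charlie}_k$, which you want to bound by $\sqrt{N\col(P)}$, is at least $1$ even when $N\col(P)\ll1$. No Cauchy--Schwarz that ``drops'' the contractions can give $\sqrt{N\col(P)}$ here.

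\textbf{The missing idea.} Each of the three surviving correlators can equal $1$ on its own: $\bob$ decrypts alone, $\charlie$ decrypts alone, or both output a shared coin. The content of the lemma is that they cannot be large simultaneously, and your plan has no mechanism for this.

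\textbf{How the paper does it.} Let $\tau$ be the Choi state of $\Phi$, and set $E_B=\sum_kp_k\Theta_k^T\otimes Z^{\bob}_k\otimes I$, $E_C=\sum_kp_k\Theta_k^T\otimes I\otimes Z^{\charlie}_k$, and $E_{BC}=\sum_kp_k I\otimes Z^{\bob}_k\otimes Z^{\charlie}_k$. The success probability is $\frac12(1+\Tr[G\tau])\leq\frac12(1+\|G_+\|_\infty)$, where $G=\frac12(E_B+E_C+E_{BC}-I)$.
\begin{enumerate}
\item Your Hilbert--Schmidt computation is the right one, but it belongs to the \emph{product} $E_BE_C$, not to a triple correlator. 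For $U=\sum_kp_k\Theta_k^T\otimes Z^{\bob}_k$, one gets $\tr_{\mathsf A}[U^\dagger U]=N\sum_kp_k^2(Z^{\bob}_k)^2\preceq N\col(P)\,I$, and similarly for $\charlie$. The conditional-overlap lemma then gives $\|E_BE_C\|_\infty\leq N\col(P)$.
\item Because the $p_k$ form a convex combination, the same bound extends to $\|E_BE_{BC}^\ell E_C\|_\infty\leq N\col(P)$ for every $\ell\geq0$.
\item The square root comes from the positive-part lemma, which uses positivity of $\frac14(I+aE_B+cE_C+acE_{BC})$ for all signs $a,c$. Take an eigenvector of $G$ with eigenvalue $t>0$ and set $F=2t(2tI+I-E_{BC})^{-1}$. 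One shows $(E_B-E_C)F(E_B-E_C)\preceq 2t(I-E_{BC})$, which yields $t^2\leq\|E_BFE_C\|_\infty$.
\item A Neumann expansion of $F$ in powers of $E_{BC}$, combined with step 2, gives $t^2\leq N\col(P)$, hence $\|G_+\|_\infty\leq\sqrt{N\col(P)}$.
\end{enumerate}

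Your fallback of re-running \cite{AS26} is essentially what the paper does. As written, though, it is a deferral rather than a proof. Carrying it out also requires checking that a nonuniform $P$ enters only through the bound $N\sum_kp_k^2$ in step 1 and the convex-combination step 2.
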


The class of unclonable encryption scheme in the above lemma is a generalization of~\cite{AS26} in two directions, first, in terms of considering arbitrary hermitian operator family as opposed to Paulis, and second in terms of considering arbitrary distribution on the keyspace as opposed to the uniformly random distribution.

Next we prove \Cref{thm:dcmoe} assuming \Cref{lem:general-UE}.
\begin{proof}[Proof of \Cref{thm:dcmoe}]
We will use the interpretation of \Cref{lem:dcmoe-ue} and prove $\left(\frac{1}{2}\sqrt{\frac{2^{2n}}{(2^{2n}-1)(2^{2n-1}-1)}}\right)$-unclonable-indistinguishable security of $\cosetue$, which by \Cref{lem:dcmoe-ue} would imply \Cref{thm:dcmoe}. 

We will show that $\cosetue$ is an instance of the general class of unclonable indistinguishable encryption schemes in \Cref{lem:general-UE}.
Let $\keyspace:=\{(u,v)\mid u,v \in \FF_2^{2n}\setminus\{0\}, \langle u,v\rangle =0\}$ denote the key space for \cosetue.
For any key $k=(u,v)\in \keyspace$, let $\Theta_k=X^uZ^v\in \mathcal{L}((\CC^2)^{\otimes 2n})$.
Since for every $k$, $\Theta_k=X^uZ^v$ is a product of Pauli, it is a unitary and it holds that \[\Theta_k^\dagger=Z^vX^u=(-1)^{\langle u,v\rangle}X^uZ^v=X^uZ^v=\Theta_k,\] i.e., $\Theta_k$ is Hermitian, and hence, due to unitary condition, $\Theta_k^2=\Theta_k\Theta^\dagger_k=I$. Moreover, since $(u,v)\neq 0$, $\Theta_k=X^uZ^v$ is a non-identity product of Paulis and therefore $\Tr[\Theta_k]=0$ and $\Theta_k^2=I$.
Similarly, for any pair of keys, $k=(u,v)\in \keyspace$ and $k'=(u',v')\keyspace$, \[\tr\left(\Theta_k\Theta_{k'}\right)=\tr\left(X^{u}Z^v X^{u'} Z^{v'}\right)=\tr\left((-1)^{\langle v,u'\rangle}X^{u+ u'}Z^{v+ v'}\right)=(-1)^{\langle v,u'\rangle}\tr\left(X^{u+ u'}Z^{v+ v'}\right),\]
which is $\tr(I)=2^{2n}$ if $u=u', v=v'$, and $0$ otherwise. Hence, we conclude that for any pair of keys $k,k'\in \keyspace$, $\tr(\Theta_k\Theta_{k'})=2^{2n}\delta_{k,k'}$. Hence $\{\Theta_k\}_{k\in \keyspace}$ satisfies the hypothesis of \Cref{lem:general-UE} with $N=2^{2n}$. 

Next, let $P$ denote the distribution induced by the $\keygen$ algorithm of $\cosetue$ is just the uniformly random distribution on $\keyspace$. Hence, $\col(P)=\frac{1}{|\keyspace|}$. It is easy to check that $|K|=(2^{2n}-1)(2^{2n-1}-1)$. Hence, we conclude that 
\begin{equation}\label{eq:adv-term}
\col(P):=\frac{1}{(2^{2n}-1)(2^{2n-1}-1)}.
\end{equation}

Next we claim that for any message $m\in \bit$, and key $k\in \keyspace$, the cipher is of the form required in the hypothesis of \Cref{lem:general-UE}.

\begin{claim}\label{claim:conditional-cipher-state}
    For any message $m\in \bit$, and key $k\in \keyspace$, let $\rho_{ct}^{k,m}$ denote the ciphertext produced by $\cosetue.\enc(k,m)$, then it holds that
    \[\rho_{ct}^{k,m}=\frac{1}{N}\left(I + (-1)^m \Theta_k\right).\]
\end{claim}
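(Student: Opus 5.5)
We fix a key $k=(u,v)\in\keyspace$ and a message $m\in\bit$. The plan is first to show that every possible ciphertext is an eigenvector of $\Theta_k=X^uZ^v$ with eigenvalue $(-1)^m$. Then we will show that, for each fixed compatible subspace $A$, the mixture over the allowed $(s,t)$ equals the normalized projector onto that eigenspace. Since the resulting operator will not depend on $A$, averaging over $A\in\comp(u,v)$ leaves it unchanged, and the claim follows.

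\emph{Step 1 (eigenvector identity).} Writing $\ket{A_{s,t}}=X^sZ^t\ket{A_{0,0}}$ and using $Z^vX^s=(-1)^{\langle v,s\rangle}X^sZ^v$, we get
\[X^uZ^v\ket{A_{s,t}}=(-1)^{\langle v,s\rangle}\ket{A_{s+u,t+v}}.\]
Expanding $\ket{A_{s+u,t+v}}$, the condition $v\in A^\perp$ kills the phase $\langle a,v\rangle$. The condition $u\in A$ lets us substitute $a\mapsto a+u$, which produces a factor $(-1)^{\langle u,t\rangle}$. Hence
\[X^uZ^v\ket{A_{s,t}}=(-1)^{\langle u,t\rangle+\langle v,s\rangle}\ket{A_{s,t}}=(-1)^m\ket{A_{s,t}}\]
for every $(s,t)$ allowed by $\enc(k,m)$.

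\emph{Step 2 (orthonormal basis and uniformity).} Fix $A$. We will check that the states $\ket{A_{s,t}}$, with $s$ ranging over representatives of $\FF_2^{2n}/A$ and $t$ over representatives of $\FF_2^{2n}/A^\perp$, form an orthonormal basis of the $2^{2n}$-dimensional space. States with distinct cosets $A+s$ have disjoint supports. For a fixed coset, the inner product of two states reduces to a character sum $\sum_{a\in A}(-1)^{\langle a,t+t'\rangle}$, which vanishes unless $t+t'\in A^\perp$. Also, shifting $s$ by an element of $A$ or $t$ by an element of $A^\perp$ changes $\ket{A_{s,t}}$ only by a global phase. Hence the density matrix $\ketbra{A_{s,t}}{A_{s,t}}$ depends only on the class $(s+A,\,t+A^\perp)$.

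Moreover, the parity $\langle u,t\rangle+\langle v,s\rangle$ is constant on each class, again because $u\in A$ and $v\in A^\perp$, and it is a nonzero linear function of the class. So exactly half of the $2^{2n}$ basis elements have parity $m$. Uniform $(s,t)$ conditioned on parity $m$ induces the uniform distribution on these $2^{2n-1}$ classes, since every class has the same number $|A|\cdot|A^\perp|$ of preimages.

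\emph{Step 3 (conclude).} By Step 1, these $2^{2n-1}$ orthonormal vectors lie in the $(-1)^m$-eigenspace of $\Theta_k$. That eigenspace has dimension exactly $N/2$ because $\tr\Theta_k=0$ and $\Theta_k^2=I$. Therefore the vectors span the eigenspace, and the uniform mixture over them is
\[\frac{2}{N}\cdot\frac{I+(-1)^m\Theta_k}{2}=\frac1N\bigl(I+(-1)^m\Theta_k\bigr).\]
This expression is independent of $A$, so averaging over $A$ gives $\rho^{k,m}_{ct}$.

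The main obstacle is the bookkeeping in Step 2. One must verify carefully that distinct coset classes give orthogonal states, and that conditioning on the parity yields the uniform distribution over the eigenspace basis rather than some nonuniform weighting.
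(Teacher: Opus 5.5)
Your proposal is correct and follows essentially the same route as the paper. Like the paper, you first establish $X^uZ^v\ket{A_{s,t}}=(-1)^m\ket{A_{s,t}}$ via the Pauli commutation rule and the substitution $a\mapsto a+u$. You then show that the $2^{2n-1}$ parity-$m$ coset states form an orthonormal set inside the $N/2$-dimensional eigenspace, hence a basis of it, so the uniform mixture equals $\frac{2}{N}\Pi_{V^k_m}$ independently of $A$. Your remark that $\ketbra{A_{s,t}}{A_{s,t}}$ depends only on the class $(s+A,\,t+A^\perp)$ handles the global-phase bookkeeping a little more explicitly than the paper's use of canonical representatives.
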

Next we complete the proof of the theorem assuming \Cref{claim:conditional-cipher-state} before proving the claim.
Let $\sf{UE}$ be a quantum encryption schme that is the same as \cosetue, but with a different decryption algorithm, namely the new decryption algorithm $\dec'$ on input $k,\rho_ct$ simply measure in the eigenbasis of $\Theta_k$, i.e., the POVM $(\Pi_0,\Pi_1):=\left(\frac{I + \Theta_k}{2},\frac{I - \Theta_k}{2}\right)$ and outputs the outcome\footnote{Nevertheless, it can be checked that the two decryption algorithms are functionally the same.}. Clearly, the fact that the operator family $\{\Theta_k\}_{k\in\keyspace}$ satisfies the hypothesis of \Cref{lem:general-UE}, combined with \Cref{claim:conditional-cipher-state} imply that $\sf{UE}$ is an instance of the general class of encryption schemes described in \Cref{lem:general-UE}. Hence by \Cref{lem:general-UE}, we conclude that $\sf{UE}$ satisfies $\frac{1}{2}\sqrt{N\col(P)}$-unclonable indistinguishable security. Since the unclonable-indistinguishable experiment (see \Cref{def:UE-security}) is independent of the decryption algorithm of the encryption scheme, and $\cosetue$ and $\sf{UE}$ only differ in decryption algorithms, $\cosetue$ must also satisfy $\frac{1}{2}\sqrt{N\col(P)}$-unclonable indistinguishable security. By \Cref{eq:adv-term}, this implies that $\cosetue$ satisfies
\[\frac{1}{2}\sqrt{N\col(P)}=\frac{1}{2}\sqrt{\frac{2^{2n}}{(2^{2n}-1)(2^{2n-1}-1)}}\text{-unclonable indistinguishable security.}\]

Next we prove \Cref{claim:conditional-cipher-state} to complete the proof.
\begin{proof}[Proof of \Cref{claim:conditional-cipher-state}]
For any $(u,v)\in \keyspace$, let $\comp(u,v)$ denote the set of all subspace $A\leq \FF_2^{2n}$ with $dim(A)=n$ such that $u\in A$ and $v\in A^\perp$.

For a fixed $A\in\comp(u,v)$, let $\can(A)$ and
$\can(A^\perp)$ denote the sets of canonical representatives of
the cosets of $A$ and $A^\perp$, respectively, where a canonical representative of a coset is the lexicographically first vector in the coset, and also define
\begin{align*}
    &\comparitycoset((u,v),m,A)\\
    &:=
    \left\{
        (s,t)\in \can(A)\times \can(A^\perp)
        \;\middle|\;
        \langle v,s\rangle+\langle u,t\rangle=m
    \right\}.
\end{align*}

There are exactly $2^n$ cosets of $A$ and $2^n$ cosets of $A^\perp$ and the inner product of every vector in a fixed coset $A+s'$ with $v$ is the same as $\langle v,s\rangle$, and similarly, the inner product of every vector in a fixed coset $A^\perp+t'$ is the same as $\langle u,t\rangle$.
Hence, the map $\mathcal{F}:(s+A,t+A^\perp)\longmapsto\langle v,s\rangle+\langle u,t\rangle$ is well-defined on $\left(\FF_2^{2n}/A\right) \times \left(\FF_2^{2n}/A^\perp\right)$. Since for both $u$ and $v$, and any bit $b\in \bit$, there are exactly $2^{2n-1}$ vectors that have inner product $b$ with $u$ and $2^{2n-1}$ vectors that have inner product $b$ with $v$, it holds that $\mathcal{F}$ is a balanced function, i.e., has the same number of preimages of $0$ as that for $1$. Moreover, for any $m\in\bit$, the set $\comparitycoset((u,v),m,A)$ denotes the preimage set of $m$ under $\mathcal{F}$.
Hence
\begin{equation}\label{eq:comparity-counting}
    |\comparitycoset((u,v),m,A)|
    = \frac{\left|\frac{\FF_2^{2n}}{A}\times \frac{\FF_2^{2n}}{A^\perp}\right|}{2}=\frac{2^{2n}}{2}=  2^{2n-1}.
\end{equation}

Since every pair of cosets has the same number of vector representatives, and the parity is constant on each pair of cosets, sampling $s,t\getsr\FF_2^{2n}$ conditioned on $\langle v,s\rangle+\langle u,t\rangle=m$ induces the uniform distribution over $\comparitycoset((u,v),m,A)$.

For any fixed key $(u,v)\in\keyspace$, $A\in\comp(u,v)$, and
message $m\in\bit$, let $\rho_{ct}^{(u,v),m,A}$ denote the cipher state, conditioned on the sampled subspace being $A$. Hence, by the above conclusion, we can write
\[
    \rho_{ct}^{(u,v),m,A}
    =
    \frac{1}{
        |\comparitycoset((u,v),m,A)|
    }
    \sum_{
        (s,t)\in
        \comparitycoset((u,v),m,A)
    }
    \ketbra{A_{s,t}}{A_{s,t}}.
\]

    
     

    
    The ciphertext $\rho_{ct}^{(u,v),m}=\cosetue.\enc((u,v),m)$ can then be written as
    \begin{align}
        \rho_{ct}^{(u,v),m}=\EE_{A\getsr \comp(u,v)}\rho_{ct}^{(u,v),m,A}.
    \end{align}
    Next, we will prove that for any fixed key $(u,v)\in \keyspace$, $A\in \comp(u,v)$ and message $m\in \bit$, it holds that
    \begin{equation}\label{eq:to-be-proven}
    \rho_{ct}^{(u,v),m,A}=\frac{1}{N}\left(I+(-1)^m\Theta_{(u,v)}\right),
    \end{equation}
    since that would imply
     \begin{align}    \rho_{ct}^{(u,v),m}&=\EE_{A\getsr\comp(u,v)}\rho_{ct}^{(u,v),m,A}=\EE_{A\getsr \comp(u,v)}\left[\frac{1}{N}\left(I+(-1)^m\Theta_{(u,v)}\right)\right]=\frac{1}{N}\left(I+(-1)^m\Theta_{(u,v)}\right),
    \end{align}
    which proves the claim.  Next we prove \Cref{eq:to-be-proven} to complete the proof of the claim.
    Fix a key $k=(u,v)$, $m\in \bit$ and $A\in \comp(k)=\comp(u,v)$ arbitrarily. Since $\Theta_k$ is Hermitian, and $\Theta_k^2=I$, it holds that $\Theta_k$ only has eigen values in $\{-1,+1\}$. For any bit $b\in\bit$, let the eigenspace of $(-1)^b$, i.e., the subspace of eigenvectors with eigenvalue $(-1)^b$ of $\Theta_k$, be denoted as $V^k_b$. Clearly, the Hilbert space $(\CC^2)^{\otimes 2n}$ can be written as the direct sum of $V^k_0$ and $V^k_1$. Since $\Tr(\Theta_k)=0$, this implies that $dim(V^k_0)=2^{2n-1}=dim(V^k_1)$. Moreover, for any bit $b\in \bit$, the projection on to the eigenspace $V^k_b$, denoted as $\Pi_{V^k_b}$ can be written as
    \begin{equation}\label{eq:Projector}
        \Pi_{V^k_b}=\frac{1}{2}\left(I+(-1)^b\Theta_k\right).
    \end{equation}

    Next, for every $(s,t)\in\comparitycoset((u,v),m,A)$, it holds that
    \begin{align*}
        \Theta_k \ket{A_{s,t}}&=X^u Z^v\ket{A_{s,t}}\\
        &=X^uZ^vX^sZ^t\ket{A}&\text{since $X^sZ^t\ket{A}=\ket{A_{s,t}}$.}\\
        &=(-1)^{\langle v,s\rangle} X^uX^sZ^vZ^t\ket{A}&\text{by Pauli anti-commutaion.}\\
        &=(-1)^{\langle v,s\rangle} X^{u+ s}Z^{v+ t}\ket{A}\\
        &=(-1)^{\langle v,s\rangle}\ket{A_{u+ s,v+ t}}\\
        &=(-1)^{\langle v,s\rangle}(-1)^{\langle u,t\rangle}\ket{A_{s,t}}\\
        &=(-1)^{\langle v,s\rangle + \langle u,t\rangle}\ket{A_{s,t}}=(-1)^m\ket{A_{s,t}},
    \end{align*}
    where the last equality follows from the fact that
    \begin{align*}
        \ket{A_{u+ s,v+ t}}&=\frac{1}{\sqrt{|A|}}\sum_{a\in A}(-1)^{\langle v+ t,a\rangle}\ket{a+s+u}\\
        &=\frac{1}{\sqrt{|A|}}\sum_{a\in A}(-1)^{\langle t,a\rangle}\ket{a+s+u}&\text{since $v\in A^\perp$.}\\
        &=\frac{1}{\sqrt{|A|}}\sum_{a'\in A}(-1)^{\langle u,t\rangle}(-1)^{\langle a',t\rangle}\ket{a'+s}&\text{substituting $a'=u+a$ and $a=u+a'$.}\\
        &=\frac{(-1)^{\langle u,t\rangle}}{\sqrt{|A|}}\sum_{a'\in A}(-1)^{\langle a',t\rangle}\ket{a'+s}\\
        &=(-1)^{\langle u,t\rangle}\ket{A_{s,t}}.
    \end{align*}

    Therefore the set of vectors $\{\ket{A_{s,t}}\}_{(s,t)\in \comparitycoset(k,m,A)}\subset V^k_m$. But clearly, for any two pairs of $(s,t),(s',t')\in \comparitycoset(k,m,A)$, such that $(s,t)\neq (s',t')$, $\ip{A_{s,t}}{A_{s',t'}}=0$. This follows from the fact that if $s\neq s'$ then they are simply supported on different subsets of computational basis elements, and if $t\neq t'$, then the same holds after applying QFT on both states, as $QFT\ket{A_{s,t}}=\ket{A^\perp_{t,s}}$ and $QFT\ket{A_{s',t'}}=\ket{A^\perp_{t',s'}}$. Combined with the fact that each element in $\{\ket{A_{s,t}}\}_{(s,t)\in \comparitycoset(k,m,A)}$ is a normalized state, we conclude that $\{\ket{A_{s,t}}\}_{(s,t)\in \comparitycoset(k,m,A)}$ forms an orthonormal set inside $V^k_m$. However, $dim(V^k_m)=2^{2n-1}=|\comparitycoset((u,v),m,A)|=|\comparitycoset(k,m,A)|$ (see \Cref{eq:comparity-counting}). 
    Therefore it must hold that $\{\ket{A_{s,t}}\}_{(s,t)\in \comparitycoset(k,m,A)}$ is an orthonormal basis for $V^k_m$. Hence, it holds that
    \begin{equation}\label{eq:projector-orthonormal-basis}
    \Pi_{V^k_m}=\sum_{(s,t)\in \comparitycoset(k,m,A)}\ketbra{A_{s,t}}{A_{s,t}}.
    \end{equation} 
    Combined with \Cref{eq:Projector}, \Cref{eq:projector-orthonormal-basis} implies,
    \begin{align}
    \rho_{ct}^{k,m,A}
    &=\frac{1}{|\comparitycoset(k,m,A)|}\sum_{(s,t)\in\comparitycoset(k,m,A)}\ketbra{A_{s,t}}{A_{s,t}}\\
    &=\frac{1}{2^{2n-1}}\Pi_{V^k_m}&\text{by \Cref{eq:comparity-counting,eq:projector-orthonormal-basis}}\\
    &=\frac{2}{N}\Pi_{V^k_m}&\text{by definition of $N$.}\\
    &=\frac{2}{N}\cdot \frac{1}{2}\left[I+(-1)^m\Theta_k\right]&\text{by \Cref{eq:Projector}}.\\
    &=\frac{1}{N}\left[I+(-1)^m\Theta_k\right],
    \end{align}
    which completes the proof of \Cref{eq:to-be-proven} and hence the claim.
\end{proof}

\end{proof}
\subsection{Unclonable indistinguishable security of the general class of encryption schemes (Proof of \texorpdfstring{\Cref{lem:general-UE}}{Lemma 1})}\label{sec:general-UE-proof}

We first note that the scheme is perfectly correct. Since $\Theta_k$ is a Hermitian operator, with $\Theta_k^2=I$, all its eigenvalues are in $\{+1,-1\}$. Hence
\[
    I+(-1)^m\Theta_k\succeq 0.
\]
Moreover, since $\tr[\Theta_k]=0$,
\[
    \tr\left[\rho^{k,m}_{ct}\right]=\frac{1}{N}\left( \tr[I]+(-1)^m\tr[\Theta_k]\right)=1.
\]
Thus $\rho^{k,m}_{ct}$ is a valid density operator. Furthermore, it is supported entirely on the $(-1)^m$-eigenspace of $\Theta_k$, and hence the decryption measurement outputs $m$ with probability one.


Next, we prove unclonable-indistinguishability security.
We closely follow the notations and the proof template of~\cite{AS26}. In particular, we reprove the Parts 2 and 3 of their main security proof~\cite[Theorem 3.2]{AS26} for the general case of arbitrary Hermitian operators, and arbitrary distributions on the key space. Next, we recall the following results from~\cite{AS26} that will be required for our proof.  

Let $\|\|_\infty$ denote the operator norm, i.e., for any operator $M$, we denote $\|M\|_\infty:=\sup_{|\psi|=1}\|M\ket{\psi}\|$, where $\|\|$ is the $L_2$ norm for vectors. Similarly, or any Hermitian operator $M$, that can be spectral decomposed as $M=\sum_{j}\lambda_j\ketbra{v_j}{v_j}$, we denote $M_+:=\sum_{j}\max(\lambda_j,0)\ketbra{v_j}{v_j}$.
\begin{theorem}[Normalized Choi identity,
{\cite[Theorem~2.1]{AS26}}]
\label{thm:AS-choi}
Let
$\Phi:   \mathcal L(\mathcal H_{\regeve})\rightarrow\mathcal L(\mathcal H_{\regbob}\otimes  \mathcal H_{\regcharlie})$ be a quantum channel, where
$\dim(\mathcal H_{\regeve})=N$. Let
$\mathcal H_{\regalice}\simeq\mathcal H_{\regeve}$, and define \[\ket{\Omega}_{\regalice\regeve}:=\frac{1}{\sqrt N}\sum_{i=1}^{N}\ket{i}_{\regalice}\ket{i}_{\regeve}.\]
Then, for every
$M\in\mathcal L(\mathcal H_{\regeve})$ and
$Q\in
\mathcal L(
    \mathcal H_{\regbob}\otimes\mathcal H_{\regcharlie}
)$,
\[
    \frac{1}{N}
    \tr[
        Q\Phi(M)
    ]
    =
    \tr\left[
        (M^T\otimes Q)
        (I_{\regalice}\otimes\Phi)
        (\ket{\Omega}\!\bra{\Omega})
    \right],
\]
where the transpose is taken in the basis used to define
$\ket{\Omega}$.
\end{theorem}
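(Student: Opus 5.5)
The plan is a direct computation in the computational basis $\{\ket{i}\}_{i=1}^N$ used to define $\ket{\Omega}$. It uses only the linearity of $\Phi$; complete positivity and trace preservation play no role. Both sides of the identity are linear in $M$ and in $Q$, so no positivity assumptions on $M$ or $Q$ are needed either.

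First, I expand the maximally entangled projector as
\[
\ket{\Omega}\!\bra{\Omega}_{\regalice\regeve}=\frac{1}{N}\sum_{i,j=1}^{N}\ket{i}\!\bra{j}_{\regalice}\otimes\ket{i}\!\bra{j}_{\regeve}.
\]
Since $I_{\regalice}\otimes\Phi$ acts linearly and only on the second tensor factor, this gives
\[
(I_{\regalice}\otimes\Phi)(\ket{\Omega}\!\bra{\Omega})=\frac{1}{N}\sum_{i,j}\ket{i}\!\bra{j}_{\regalice}\otimes\Phi(\ket{i}\!\bra{j}).
\]

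Next, I take the trace against $M^T\otimes Q$ and use that the trace of a tensor product factorizes. The right-hand side becomes
\[
\frac1N\sum_{i,j}\tr\bigl[M^T\ket{i}\!\bra{j}\bigr]\,\tr\bigl[Q\,\Phi(\ket{i}\!\bra{j})\bigr].
\]
The first factor equals $\bra{j}M^T\ket{i}=\bra{i}M\ket{j}=M_{ij}$. This is the one point needing care: the transpose must be taken in the same basis as $\ket{\Omega}$, and it exactly cancels the index swap coming from the trace.

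Finally, I pull the scalars $M_{ij}$ inside $\tr[Q\,\Phi(\cdot)]$ using linearity of $\Phi$ and of the trace. The expression becomes
\[
\frac1N\tr\Bigl[Q\,\Phi\Bigl(\sum_{i,j}M_{ij}\ket{i}\!\bra{j}\Bigr)\Bigr]=\frac1N\tr[Q\,\Phi(M)],
\]
which is the left-hand side.

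There is no real obstacle. The only thing to watch is the transpose and index bookkeeping in the second step, which I will check once on a rank-one $M=\ket{a}\!\bra{b}$.
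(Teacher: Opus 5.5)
Your proof is correct. Expanding $\ket{\Omega}\!\bra{\Omega}$ in the defining basis, factorizing the trace, using $\bra{j}M^T\ket{i}=M_{ij}$, and then applying linearity of $\Phi$ establishes the identity, and you are right that only linearity of $\Phi$ is needed. The paper gives no proof of its own and simply cites \cite[Theorem~2.1]{AS26}; your direct basis computation is the standard argument for this fact and fully suffices.
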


\begin{lemma}[Conditional overlap,
{\cite[Lemma~3.3]{AS26}}]
\label{lem:AS-conditional-overlap}
Let
$ U_{\regalice\regbob}\in \mathcal L(\mathcal H_{\regalice}\otimes \mathcal H_{\regbob})$
and $V_{\regalice\regcharlie}\in\mathcal L(\mathcal H_{\regalice}\otimes\mathcal H_{\regcharlie}).$

Viewing them canonically (by tensoring with identity operators) as operators on
$\mathcal H_{\regalice}\otimes
 \mathcal H_{\regbob}\otimes
 \mathcal H_{\regcharlie}$, we have
\[
    \|UV\|_\infty
    \leq
    \sqrt{
        \left\|
            \tr_{\regalice}[U^\dagger U]
        \right\|_\infty
    }
    \sqrt{
        \left\|
            \tr_{\regalice}[VV^\dagger]
        \right\|_\infty
    }.
\]
\end{lemma}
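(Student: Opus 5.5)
The plan is to prove the bound by a matrix-entry expansion over the register $\regalice$, followed by an operator Cauchy--Schwarz inequality. The expansion exploits the fact that the $\regbob$-blocks of $U$ and the $\regcharlie$-blocks of $V$ act on different tensor factors and therefore commute. Fix an orthonormal basis $\{\ket{i}\}$ of $\mathcal H_{\regalice}$ with $d:=\dim\mathcal H_{\regalice}$. Write
\[
U=\sum_{i,j}\ketbra{i}{j}\otimes U_{ij},\qquad V=\sum_{j,k}\ketbra{j}{k}\otimes V_{jk},
\]
where $U_{ij}\in\mathcal L(\mathcal H_{\regbob})$ and $V_{jk}\in\mathcal L(\mathcal H_{\regcharlie})$. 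Then
\[
UV=\sum_{i,k}\ketbra{i}{k}\otimes\sum_j U_{ij}\otimes V_{jk},
\]
and the two partial traces are
\[
\tr_{\regalice}[U^\dagger U]=\sum_{i,j}U_{ij}^\dagger U_{ij},\qquad \tr_{\regalice}[VV^\dagger]=\sum_{j,k}V_{jk}V_{jk}^\dagger.
\]
The target inequality is thus a statement about the block matrix $\bigl(\sum_j U_{ij}\otimes V_{jk}\bigr)_{i,k}$ on $\mathcal H_{\regbob}\otimes\mathcal H_{\regcharlie}$.

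The first step is to write $UV=LR$ with
\[
L:=\sum_{i,j}\bra{j}_{\regalice'}\otimes\ketbra{i}{i}_{\regalice}\otimes U_{ij}\otimes I_{\regcharlie},\qquad
R:=\sum_{j,k}\ket{j}_{\regalice'}\otimes\ketbra{k}{k}_{\regalice}\otimes I_{\regbob}\otimes V_{jk},
\]
suitably arranged. Here $\regalice'$ is a fresh copy of $\regalice$ that carries the summation index $j$, and a fixed relabelling (an isometry, not a partial trace) is used to move the output index $i$ and input index $k$ to the correct positions. Then $\|UV\|_\infty\le\|L\|_\infty\|R\|_\infty$, and the aim is to arrange the factorization so that $L L^\dagger$ (or $L^\dagger L$) is block diagonal with blocks $I\otimes\sum_{j}U_{ij}^\dagger U_{ij}$, and similarly for $R$ with $\sum_j V_{jk}V_{jk}^\dagger$. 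Each such block is dominated in operator order by $\tr_{\regalice}[U^\dagger U]$ (respectively $\tr_{\regalice}[VV^\dagger]$), because the partial sum over $j$ at a fixed outer index is a sum of positive terms contained in the full sum over both indices. This would give $\|L\|_\infty^2\le\|\tr_{\regalice}[U^\dagger U]\|_\infty$ and $\|R\|_\infty^2\le\|\tr_{\regalice}[VV^\dagger]\|_\infty$, and hence the lemma.

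Equivalently, in vector form: for unit $\ket{\phi}=\sum_i\ket{i}\ket{\phi_i}$ and $\ket{\psi}=\sum_k\ket{k}\ket{\psi_k}$,
\[
\bra{\phi}UV\ket{\psi}=\sum_{i,j,k}\bigl\langle (U_{ij}^\dagger\otimes I)\phi_i\,\big|\,(I\otimes V_{jk})\psi_k\bigr\rangle.
\]
One then applies Cauchy--Schwarz over the index triple $(i,j,k)$, after first using the commutation $[U_{ij}\otimes I,\,I\otimes V_{jk}]=0$ to decide which index each factor absorbs.

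The main obstacle is choosing the factorization so that no dimension factor $d$ appears. A naive Cauchy--Schwarz over all of $(i,j,k)$ gives
\[
\Bigl(\sum_{ijk}\|U_{ij}^\dagger\phi_i\|^2\Bigr)^{1/2}\Bigl(\sum_{ijk}\|V_{jk}\psi_k\|^2\Bigr)^{1/2},
\]
in which the free index $k$ (respectively $i$) contributes a factor of $d$ in each term. It also produces $UU^\dagger$-type sums instead of $U^\dagger U$.

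The remedy I would try first is to pair $i$ with $\ket{\phi}$ and $k$ with $\ket{\psi}$ through the maximally entangled vector on $\regalice\regalice'$. Concretely, I would insert $\sum_j\ket{j}\ket{j}$ and use a transpose identity in the spirit of \Cref{thm:AS-choi}, so that the $j$-sum becomes an inner product on $\regalice'$ rather than a free sum. The Cauchy--Schwarz split would then be placed between the blocks $(U_{ij})$ viewed as a map $\mathcal H_{\regalice'}\otimes\mathcal H_{\regbob}\to\mathcal H_{\regalice}\otimes\mathcal H_{\regbob}$ and the blocks $(V_{jk})$ viewed analogously on $\regcharlie$. With the split in that position, the squared norms are exactly the partial-trace expressions $\sum_{i,j}U_{ij}^\dagger U_{ij}$ and $\sum_{j,k}V_{jk}V_{jk}^\dagger$ above, with the correct adjoint orientation. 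Checking that this placement yields $U^\dagger U$ on the $\regbob$ side and $VV^\dagger$ on the $\regcharlie$ side, rather than the reverse, is the delicate bookkeeping step.
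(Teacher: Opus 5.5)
The paper gives no proof of its own here (it imports the statement from \cite[Lemma~3.3]{AS26}), so your argument has to stand alone, and as written it does not close. There is a genuine gap. The step with real content is getting $U^\dagger U$ on the $\regbob$ side and $VV^\dagger$ on the $\regcharlie$ side with no factor of $d$. That is exactly what you defer as ``delicate bookkeeping,'' and the factorization you actually write down goes the wrong way.

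Your $L$ (built from the $U_{ij}$) and $R$ (built from the $V_{jk}$) multiply to $\sum_{i,j}\ketbra{i}{i}\otimes U_{ij}\otimes V_{ji}$, which is only the $i=k$ blocks of $UV$. More seriously, their norms are $\|L\|_\infty^2=\max_i\|\sum_j U_{ij}U_{ij}^\dagger\|_\infty$ and $\|R\|_\infty^2=\max_k\|\sum_j V_{jk}^\dagger V_{jk}\|_\infty$, the reversed orientation. No re-routing of indices that keeps these norms can produce $UV$, because the reversed bound is false. Take $\ket{\Omega}$ maximally entangled of dimension $d$, $U=\ketbra{\Omega}{00}_{\regalice\regbob}$ and $V=\ketbra{00}{\Omega}_{\regalice\regcharlie}$. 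Then $UV$ maps $\ket{\Omega}_{\regalice\regcharlie}\ket{0}_{\regbob}$ to $\ket{\Omega}_{\regalice\regbob}\ket{0}_{\regcharlie}$, so $\|UV\|_\infty=1$. Both reversed quantities equal $1/d$, while the lemma's right-hand side is $1$. So the order of the two factors itself must be swapped. The maximally-entangled/transpose remedy you sketch is unspecified and is not needed.

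The missing step is the commutation you mention and then drop. Write $U_{ij}\otimes V_{jk}=(I\otimes V_{jk})(U_{ij}\otimes I)$, so that
\[
\bra{\phi}UV\ket{\psi}=\sum_{i,j,k}\bigl\langle (I\otimes V_{jk}^\dagger)\phi_i\,\big|\,(U_{ij}\otimes I)\psi_k\bigr\rangle .
\]
Now the left vector depends on the triple through $(\phi_i,V_{jk})$ and the right one through $(\psi_k,U_{ij})$. No index is free, so plain Cauchy--Schwarz over $(i,j,k)$ bounds the two squared norms by
\[
\sum_i\bra{\phi_i}\bigl(I_{\regbob}\otimes\tr_{\regalice}[VV^\dagger]\bigr)\ket{\phi_i}\le\|\tr_{\regalice}[VV^\dagger]\|_\infty,
\qquad
\sum_k\bra{\psi_k}\bigl(\tr_{\regalice}[U^\dagger U]\otimes I_{\regcharlie}\bigr)\ket{\psi_k}\le\|\tr_{\regalice}[U^\dagger U]\|_\infty,
\]
which is the lemma. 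In factorization form, $UV=LR$ with $R:=\sum_{i,j,k}\ket{i,j,k}\bra{k}\otimes U_{ij}\otimes I$ (built from $U$, acting first) and $L:=\sum_{i,j,k}\ket{i}\bra{i,j,k}\otimes I\otimes V_{jk}$ (built from $V$). For these, $R^\dagger R=I\otimes\tr_{\regalice}[U^\dagger U]\otimes I$ and $LL^\dagger=I\otimes I\otimes\tr_{\regalice}[VV^\dagger]$. Your ``naive'' Cauchy--Schwarz failed only because you paired $\phi_i$ with $U_{ij}^\dagger$ and $\psi_k$ with $V_{jk}$.
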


Next, we abstract out an operator bound implicit in the proof of \cite[Proposition~3.8]{AS26}.

\begin{lemma}[{Abstracted from the Positive-part bound,~\cite[Proposition~3.8]{AS26}}]
\label{lem:AS-positive-part}
Let $E_B,E_C,E_{BC}$ be Hermitian operators on a
finite-dimensional Hilbert space, and define the Hermitian operator
\[
    G
    :=
    \frac{1}{2}
    \left(
        E_B+E_C+E_{BC}-I
    \right).
\]
Suppose that, 
\begin{enumerate}
\item for every $a,c\in\{+1,-1\}$,
$\Gamma_{a,c}
    :=
    \frac{1}{4}
    \left(
        I+aE_B+cE_C+acE_{BC}
    \right)
    \succeq0,$
and
\item there exists $\gamma\geq0$ such that for every integer
$\ell\geq0$,
\[
    \left\|
        E_BE_{BC}^{\ell}E_C
    \right\|_\infty
    \leq
    \gamma.
\]
\end{enumerate}
Then, given the spectral decomposition $G=\sum_{j}\lambda_j\ketbra{v_j}{v_j}$ and $G_+:=\sum_{j}\max(\lambda_j,0)\ketbra{v_j}{v_j}$, it holds that
\[
    \|G_+\|_\infty^2
    \leq
    \gamma.
\]
\end{lemma}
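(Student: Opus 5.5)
The plan is a variational argument that exhibits the operator $E_BE_{BC}^{\ell}E_C$ through a single top eigenvector of $G$. If $\|G_+\|_\infty=0$ there is nothing to prove. Otherwise let $\lambda:=\|G_+\|_\infty>0$ be the largest eigenvalue of $G$ and let $\ket{v}$ be a unit eigenvector with $G\ket{v}=\lambda\ket{v}$. The goal is to show that
\[
\lambda^2\leq \bigl|\bra{v}E_B\, f(E_{BC})\, E_C\ket{v}\bigr|
\]
for some $f(x)=\sum_{\ell\geq0}c_\ell x^\ell$ with $c_\ell\geq0$ and $\sum_\ell c_\ell\leq1$. Once this holds, the triangle inequality and the hypothesis $\|E_BE_{BC}^\ell E_C\|_\infty\leq\gamma$ give $\lambda^2\leq\gamma$.

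\textbf{Step 1: structural consequences of $\Gamma_{a,c}\succeq0$.} Since $\sum_{a,c}\Gamma_{a,c}=I$, the operators $\{\Gamma_{a,c}\}$ form a four-outcome POVM. Writing
\[
E_B=\Gamma_{++}+\Gamma_{+-}-\Gamma_{-+}-\Gamma_{--}
\]
as a difference of two positive operators whose sum is at most $I$ shows $\|E_B\|_\infty\leq1$. The same argument gives $\|E_C\|_\infty,\|E_{BC}\|_\infty\leq1$. I will also record the pairwise sums and differences, which follow by adding or subtracting the four $\Gamma$'s:
\[
\Gamma_{++}+\Gamma_{--}=\tfrac12(I+E_{BC}),\qquad
\Gamma_{++}-\Gamma_{--}=\tfrac14(E_B+E_C),\qquad
\Gamma_{+-}+\Gamma_{-+}=\tfrac12(I-E_{BC}).
\]
Finally, $G=2\Gamma_{++}-I$, so on $\ket{v}$ we have $\bra{v}\Gamma_{++}\ket{v}=(1+\lambda)/2$. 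Consequently the remaining three $\Gamma$'s carry total weight $(1-\lambda)/2$ on $\ket{v}$. In particular
\[
\bra{v}(I-E_{BC})\ket{v}\leq 1-\lambda
\qquad\text{and}\qquad
\bra{v}E_B\ket{v},\ \bra{v}E_C\ket{v}\geq\lambda .
\]

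\textbf{Step 2: from the local biases to a product bound.} Using $\|E_B\|_\infty\leq 1$, the inequality $\bra{v}E_B\ket{v}\geq\lambda$ gives $\|E_B\ket{v}-\ket{v}\|^2\leq 2(1-\lambda)$, and similarly for $E_C$ and $E_{BC}$. These alone only give an estimate of order $1-O(\sqrt{1-\lambda})$, which is not quadratic in $\lambda$. To obtain exactly $\lambda^2$ I would use the positivity of the two-by-two block structure $\pm\tfrac12(E_B+E_C)\preceq I+E_{BC}$, i.e. $\Gamma_{++},\Gamma_{--}\succeq0$, via a Schur-complement or Cauchy--Schwarz step in the inner product $\langle x,(I+E_{BC})y\rangle$. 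Concretely, write
\[
E_B+E_C = 2G+(I-E_{BC}),
\]
so that $(E_B+E_C)\ket{v}$ equals $2\lambda\ket{v}$ plus a term controlled by $I-E_{BC}$. I would then expand $\bra{v}E_B\,(I+tE_{BC})^{-1}\,E_C\ket{v}$ with $t\in[0,1)$ and normalise by $(1-t)$. This is a nonnegative geometric series in $E_{BC}$ whose coefficients sum to $1$, so it is admissible for the hypothesis. The aim is to lower bound it by $\lambda^2$, using the symmetric role of $E_B,E_C$ and $\Gamma_{+-},\Gamma_{-+}\succeq0$ to discard cross terms. If the geometric series does not close, I would instead try the limiting average $\lim_{L\to\infty}L^{-1}\sum_{\ell<L}E_{BC}^\ell$, i.e. the projection onto the $+1$ eigenspace of $E_{BC}$ restricted to the relevant range, following the template of \cite[Proposition~3.8]{AS26}.

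\textbf{Main obstacle.} The difficulty is Step 2: choosing the function $f$ of $E_{BC}$, in particular the sign pattern, since $E_{BC}$ may have negative eigenvalues. The inequality must come out as exactly $\lambda^2$ rather than $\lambda^2$ minus an error term. I expect the bookkeeping of the $\Gamma_{+-},\Gamma_{-+}$ contributions to be the delicate part, and I would mirror the corresponding computation of \cite{AS26}, which only uses the four positivity constraints and the power bounds listed in the hypotheses.
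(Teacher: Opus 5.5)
\textbf{There is a genuine gap: Step 2 is the entire content of the lemma, and you do not supply it.} Your framing does match the paper's: take a top eigenvector $v$ of $G$ with eigenvalue $\lambda>0$, show $\lambda^2\le\operatorname{Re}\langle v|E_BFE_C|v\rangle$ for some $F=\sum_\ell c_\ell E_{BC}^\ell$ with $\sum_\ell|c_\ell|\le1$, and finish with the triangle inequality. Step 1 is fine apart from the slip $\Gamma_{++}-\Gamma_{--}=\tfrac12(E_B+E_C)$. None of the routes you sketch for Step 2 delivers it.
\begin{itemize}
\item Taking $f\equiv1$ (your $t=0$) genuinely fails. Let $E_B=\tfrac35\bigl(\begin{smallmatrix}1&-1\\-1&0\end{smallmatrix}\bigr)$, $E_C=\tfrac35\bigl(\begin{smallmatrix}1&1\\1&0\end{smallmatrix}\bigr)$, $E_{BC}=\mathrm{diag}(\tfrac15,-1)$. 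All four $\Gamma_{a,c}$ are positive semidefinite (with $\Gamma_{--}=0$), and $G=\mathrm{diag}(\tfrac15,-1)$, so $\lambda=\tfrac15$ with $v=e_1$. Yet $\langle v|E_BE_C|v\rangle=0$. The weight constructed below gives $\tfrac3{50}\ge\lambda^2$ on this example.
\item The Ces\`aro fallback also fails. Let $P$ project onto $\ker(I-E_{BC})$. Then $\Gamma_{+-},\Gamma_{-+}\succeq0$ force $(E_B-E_C)P=0$, and the eigenvector identity gives $PE_Bv=PE_Cv=\lambda Pv$. Hence $\langle v|E_BPE_C|v\rangle=\lambda^2\|Pv\|^2\le\lambda^2$, which is the wrong direction.
\item Your family $(1-t)(I+tE_{BC})^{-1}=(1-t)\sum_\ell(-t)^\ell E_{BC}^\ell$ is alternating, not nonnegative; only its absolute coefficients sum to one. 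It also carries $E_{BC}$ with the sign opposite to the one in the eigenvector identity, and you give no inequality that produces $\lambda^2$ from it.
\item A Schur complement built from $\Gamma_{++},\Gamma_{--}\succeq0$ only bounds $(E_B+E_C)(\cdot)^{-1}(E_B+E_C)$ from above. That is not the direction you need.
\end{itemize}

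\textbf{The missing idea is a weight that depends on $\lambda$ and exactly inverts the eigenvector identity.} Write $\Delta:=I-E_{BC}\succeq0$ and $R:=E_B-E_C$, so the identity reads $(E_B+E_C)v=(2\lambda I+\Delta)v$. Set $F:=2\lambda(2\lambda I+\Delta)^{-1}=(1-q)(I-qE_{BC})^{-1}$ with $q=(1+2\lambda)^{-1}$. This is a geometric series with nonnegative coefficients $(1-q)q^\ell$ summing to one. Polarization gives
$4\operatorname{Re}\langle v|E_BFE_C|v\rangle=\langle(E_B+E_C)v|F|(E_B+E_C)v\rangle-\langle Rv|F|Rv\rangle$.
\begin{itemize}
\item The first term equals $4\lambda^2+2\lambda\langle v|\Delta|v\rangle$ exactly, because $F(E_B+E_C)v=2\lambda v$.
\item The second term is at most $2\lambda\langle v|\Delta|v\rangle$, by the Schur-complement bound $R(\Delta+2\lambda I)^{-1}R\preceq\Delta$. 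This bound follows from $\Gamma_{+-},\Gamma_{-+}\succeq0$, i.e.\ $-\Delta\preceq R\preceq\Delta$.
\end{itemize}
Hence $\lambda^2\le\operatorname{Re}\langle v|E_BFE_C|v\rangle\le\|E_BFE_C\|_\infty\le\gamma$. The positivity constraints enter with roles reversed from your plan: $\Gamma_{++}$ enters only through the eigenvector identity, and $\Gamma_{+-},\Gamma_{-+}$ drive the Schur-complement step.
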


We defer the proof of \Cref{lem:AS-positive-part}  to \Cref{app:AS-positive-part}, which follows from similar arguments as used in the proof of \cite[Proposition~3.8]{AS26}. Next we complete the security proof assuming \Cref{lem:AS-positive-part}.


Fix an
arbitrary adversarial triplet $
    \advue=(\alice,\bob,\charlie).$

Let $\Phi:
    \mathcal L((\CC^2)^{\otimes \ell})
    \rightarrow
    \mathcal L(
        \mathcal H_{\regbob}\otimes
        \mathcal H_{\regcharlie})$
denote the splitting channel applied by $\alice$ to produce the registers $\regbob$ and $\regcharlie$ for $\bob$ and $\charlie$ respectively, before the key is revealed to them.

Next, for every key $k\in\mathcal K$, let
\[
    \{M^k_{\bob,0},M^k_{\bob,1}\}
    \qquad\text{and}\qquad
    \{M^k_{\charlie,0},M^k_{\charlie,1}\}
\] denote the binary POVM corresponding to the measurements applied by $\bob$ and $\charlie$ respectively,
after receiving $k$, and define
\[
    B_k:=M^k_{\bob,0}-M^k_{\bob,1},
    \qquad
    C_k:=M^k_{\charlie,0}-M^k_{\charlie,1}.
\]
It is easy to see that $B_k$ and $C_k$ are Hermitian contractions, and we can rewrite
\[
    M^k_{\bob,m}
    =
    \frac{I+(-1)^mB_k}{2},
    \qquad
    M^k_{\charlie,m}
    =
    \frac{I+(-1)^mC_k}{2}.
\]

Let $\ket{\Omega}_{\regalice\regeve}
    :=
    \frac{1}{\sqrt N}
    \sum_{i=1}^{N}
    \ket{i}_{\regalice}\ket{i}_{\regeve},$ 
where $\regeve$ is isomorphic to the ciphertext register $\regalice$, and
let
\[
    \tau_{\regalice\regbob\regcharlie}
    :=
    (I_{\regalice}\otimes\Phi_{\regeve})
    \left(
        \ket{\Omega}\!\bra{\Omega}
    \right)
\]
be the normalized Choi state of $\Phi$.

For every $k$, define
$\widehat{\Theta}_k
    :=
    (\Theta_k^T)_{\regalice},$ and define
Define
\begin{align}
    E_B
    &:=
    \sum_{k\in\mathcal K}
    p_k\,
    \widehat{\Theta}_k
    \otimes B_k\otimes I,
    \label{eq:EB-general}
    \\
    E_C
    &:=
    \sum_{k\in\mathcal K}
    p_k\,
    \widehat{\Theta}_k
    \otimes I\otimes C_k,
    \label{eq:EC-general}
    \\
    E_{BC}
    &:=
    \sum_{k\in\mathcal K}
    p_k\,
    I\otimes B_k\otimes C_k.
    \label{eq:EBC-general}
\end{align}
Also define
\[
    G
    :=
    \frac{1}{2}
    \left(
        E_B+E_C+E_{BC}-I
    \right).
\]

By \Cref{thm:AS-choi},
\begin{align}
    &\Pr[
        b_{\bob}=b_{\charlie}=m
    ]
    \notag\\
    &=
    \mathbb E_{k\gets P,\;m\getsr\{0,1\}}
    \tr\Bigg[
        \left(
            I+(-1)^m\widehat{\Theta}_k
        \right)
        \otimes
        \frac{I+(-1)^mB_k}{2}
        \otimes
        \frac{I+(-1)^mC_k}{2}
        \;\tau
    \Bigg].
    \label{eq:win-choi-general}
\end{align}
Expanding the three factors and averaging over the uniform bit
$m$ removes all terms containing an odd power of $(-1)^m$.
Hence
\begin{align}
    \Pr[
        b_{\bob}=b_{\charlie}=m
    ]
    &=
    \frac{1}{4}
    \left(
        1+
        \tr[E_B\tau]+
        \tr[E_C\tau]+
        \tr[E_{BC}\tau]
    \right)
    \notag\\
    &=
    \frac{1}{2}
    \left(
        1+\tr[G\tau]
    \right)
    \notag\\
    &\leq
    \frac{1}{2}
    \left(
        1+\|G_+\|_\infty
    \right),
    \label{eq:win-G-general}
\end{align}
where the last inequality follows from
$ G\preceq G_+
    \preceq
    \|G_+\|_\infty I$, 
and the fact that $\tau$ is a density operator.

We now show that the operator bounds used in~\cite{AS26}
continue to hold when the key is sampled from the possibly
nonuniform distribution $P$.

Let    $\gamma
    :=
    N\sum_{k\in\mathcal K}p_k^2
    =
    N\col(P),$ and define
\[
    U_{\regalice\regbob}
    :=
    \sum_{k\in\mathcal K}
    p_k\,
    \widehat{\Theta}_k\otimes B_k
,
    V_{\regalice\regcharlie}
    :=
    \sum_{k\in\mathcal K}
    p_k\,
    \widehat{\Theta}_k\otimes C_k.
\]
Their canonical extensions to the three registers are
$E_B$ and $E_C$, respectively.

Using the Hilbert--Schmidt orthogonality of
$\{\Theta_k\}_{k\in\mathcal K}$, we obtain
\begin{align}
    \tr_{\regalice}
    \left[
        U_{\regalice\regbob}^\dagger
        U_{\regalice\regbob}
    \right]
    &=
    N\sum_{k\in\mathcal K}
    p_k^2 B_k^2
    \preceq
    \gamma I_{\regbob},
    \label{eq:EB-general-bound}
    \\
    \tr_{\regalice}
    \left[
        V_{\regalice\regcharlie}
        V_{\regalice\regcharlie}^\dagger
    \right]
    &=
    N\sum_{k\in\mathcal K}
    p_k^2 C_k^2
    \preceq
    \gamma I_{\regcharlie}.
    \label{eq:EC-general-bound}
\end{align}
Hence, by \Cref{lem:AS-conditional-overlap},
\begin{equation}
    \|E_BE_C\|_\infty
    \leq
    \gamma.
    \label{eq:endpoint-general}
\end{equation}

We next verify the corresponding extension of
\cite[Proposition~3.6]{AS26}. For every integer $\ell\geq0$,
\begin{equation}
    \left\|
        E_BE_{BC}^{\ell}E_C
    \right\|_\infty
    \leq
    \gamma.
    \label{eq:moment-general}
\end{equation}
For $\ell=0$, this follows from
\cref{eq:endpoint-general}. For $\ell\geq1$, expanding
$E_{BC}^{\ell}$ gives
\begin{align}
    E_BE_{BC}^{\ell}E_C
    &=
    \sum_{k_1,\ldots,k_\ell}
    p_{k_1}\cdots p_{k_\ell}
    \left(
        I\otimes I\otimes
        C_{k_1}\cdots C_{k_\ell}
    \right)
    E_BE_C
    \left(
        I\otimes
        B_{k_1}\cdots B_{k_\ell}
        \otimes I
    \right).
    \label{eq:moment-expansion-general}
\end{align}
The two exterior operators are contractions, while the
nonnegative coefficients sum to one. Therefore
\cref{eq:moment-general} follows from
\cref{eq:endpoint-general}.

Finally, for $a,c\in\{+1,-1\}$, define
\[
    \Gamma_{a,c}
    :=
    \frac{1}{4}
    \left(
        I+aE_B+cE_C+acE_{BC}
    \right).
\]
For every fixed $k$, the operators
$\widehat{\Theta}_k\otimes B_k\otimes I$  and 
    $\widehat{\Theta}_k\otimes I\otimes C_k$
are commuting Hermitian contractions, and their product is
    $I\otimes B_k\otimes C_k.$
Therefore
\[
    \frac{1}{4}
    \left(
        I+
        a\widehat{\Theta}_k\otimes B_k\otimes I
    \right)
    \left(
        I+
        c\widehat{\Theta}_k\otimes I\otimes C_k
    \right)
    \succeq0.
\]
Averaging with weights $p_k$ shows that
\begin{equation}
    \Gamma_{a,c}\succeq0
    \qquad
    \text{for every }a,c\in\{+1,-1\}.
    \label{eq:Gamma-general}
\end{equation}

By
\cref{eq:moment-general,eq:Gamma-general},
the hypotheses of \Cref{lem:AS-positive-part} hold with
$\gamma=N\col(P).$
Therefore
\[
    \|G_+\|_\infty^2
    \leq
    N\col(P).
\]
Combining this with~\cref{eq:win-G-general} gives
$\Pr[
        b_{\bob}=b_{\charlie}=m
    ]
    \leq
    \frac{1}{2}
    +
    \frac{1}{2}
    \sqrt{N\col(P)},$
which completes the proof of the lemma.

\subsubsection{Proof of \texorpdfstring{\Cref{lem:AS-positive-part}}{Lemma 8}}\label{app:AS-positive-part}

\begin{proof}[Proof of \Cref{lem:AS-positive-part}]
The proof follows the same arguments as in the proof of
\cite[Proposition~3.8]{AS26}.

Let $t:=\|G_+\|_\infty.$
If $t=0$, there is nothing to prove. Suppose $t>0$. Since the
Hilbert space is finite dimensional and $G$ is Hermitian,
there exists a unit vector $\ket{\psi}$ such that
\[
    G\ket{\psi}=t\ket{\psi}.
\]

First note that the positivity of the four operators $\Gamma_{a,c}$ implies that
\[-I  \preceq E_B,E_C,E_{BC} \preceq I.
\]
Indeed, suitable pairs of the four operators
$\Gamma_{a,c}$ sum to
\[
    \frac{1}{2}(I\pm E_B),
    \qquad
    \frac{1}{2}(I\pm E_C),
    \qquad
    \frac{1}{2}(I\pm E_{BC}).
\]

Define $\Delta:=I-E_{BC}, \qquad R:=E_B-E_C.$
Then it is easy to see that $\Delta\succeq0$. Moreover, positivity of
$\Gamma_{+1,-1}$ and $\Gamma_{-1,+1}$ gives
\[\Delta+R\succeq0,  \qquad \Delta-R\succeq0,\]
and hence
\begin{equation}
    -\Delta
    \preceq
    R
    \preceq
    \Delta.
    \label{eq:AS-R-order}
\end{equation}

Next, the eigenvector equation for $G$ gives
\begin{equation}
    (E_B+E_C)\ket{\psi}
    =
    (2tI+\Delta)\ket{\psi}.
    \label{eq:AS-eigenvector}
\end{equation}

Define $F:=2t(2tI+\Delta)^{-1}.$
Since $t>0$ and $\Delta\succeq0$, $F$ is well defined and
\[
    0\prec F\preceq I,
    \qquad
    F(2tI+\Delta)=2tI.
\]

We will first show that
\begin{equation}
    RFR
    \preceq
    2t\Delta.
    \label{eq:AS-RFR}
\end{equation}
For arbitrary vectors $\ket{\xi},\ket{\eta}$, positivity of
$\Delta+R$ and $\Delta-R$ gives
\begin{align}
    0&\leq \frac{1}{2} \bra{\xi+\eta} (\Delta+R)\ket{\xi+\eta}
    +
    \frac{1}{2}\bra{\xi-\eta}(\Delta-R)\ket{\xi-\eta}
    +
    2t\|\eta\|^2\notag\\
    &=
    \bra{\xi}\Delta\ket{\xi}
    +
    2\operatorname{Re}
    \bra{\xi}R\ket{\eta}
    +
    \bra{\eta} (\Delta+2tI)\ket{\eta}.
    \label{eq:AS-completion}
\end{align}
Choosing $\ket{\eta}=-(\Delta+2tI)^{-1}R\ket{\xi}$ in \cref{eq:AS-completion} yields
\[
    R(\Delta+2tI)^{-1}R
    \preceq
    \Delta.
\]
Multiplying by $2t$ proves
\cref{eq:AS-RFR}.

For any vector $\ket{w}$, define
\[
    Q_F(w)
    :=
    \langle w|F|w\rangle.
\]
Using \cref{eq:AS-eigenvector} and
$F(2tI+\Delta)=2tI$, we have
\begin{align}
    Q_F\bigl((E_B+E_C)\psi\bigr)
    &=
    4t^2
    +
    2t
    \langle\psi|\Delta|\psi\rangle.
    \label{eq:AS-plus}
\end{align}
On the other hand, by \cref{eq:AS-RFR},
\begin{align}
    Q_F\bigl((E_B-E_C)\psi\bigr)
    &=
    \langle\psi|RFR|\psi\rangle
    \notag\\
    &\leq
    2t
    \langle\psi|\Delta|\psi\rangle.
    \label{eq:AS-minus}
\end{align}
Subtracting \cref{eq:AS-minus} from
\cref{eq:AS-plus} and expanding the cross terms gives
\begin{equation}
    t^2
    \leq
    \operatorname{Re}
    \langle\psi|
        E_BFE_C
    |\psi\rangle
    \leq
    \|E_BFE_C\|_\infty.
    \label{eq:AS-real-part}
\end{equation}

It remains to upper bound the last norm. Set
$q:=(1+2t)^{-1}\in(0,1).$
Since $\Delta=I-E_{BC}$,
\[
    F
    =
    (1-q)(I-qE_{BC})^{-1}.
\]
For an integer $L\geq1$, define
$F_L   :=(1-q)\sum_{\ell=0}^{L-1}q^\ell E_{BC}^{\ell}.$
By the hypothesis of the lemma,
\begin{align}
    \|E_BF_LE_C\|_\infty
    &\leq
    (1-q)
    \sum_{\ell=0}^{L-1}
    q^\ell
    \left\|
        E_BE_{BC}^{\ell}E_C
    \right\|_\infty
    \notag\\
    &\leq
    (1-q)
    \sum_{\ell=0}^{L-1}
    q^\ell\gamma
    \notag\\
    &=
    (1-q^L)\gamma.
    \label{eq:AS-truncated}
\end{align}

Moreover,
$F-F_L=q^L E_{BC}^L F.$

As observed above,
\[\|E_B\|_\infty,\|E_C\|_\infty,\|E_{BC}\|_\infty,\|F\|_\infty\leq1.\]
Therefore
\[
    \|E_B(F-F_L)E_C\|_\infty=\|E_B(q^L E_{BC}^L F)E_C\|_\infty= q^L\|E_BE_{BC}^L FE_C\|_\infty\leq q^L.
\]
Letting $L\rightarrow\infty$ in conjunction with
\cref{eq:AS-truncated} gives
\[\|E_BFE_C\|_\infty \leq \gamma.\]
Combining this with \cref{eq:AS-real-part}, we obtain $t^2\leq\gamma.$
Since $t=\|G_+\|_\infty$, this concludes the proof of the lemma.
\end{proof}

%% file: moe/comp-moe.tex
\section{Computational Monogamy-of-Entanglement Games}

In this subsection, we prove the computational versions of the \dcmoe game that will be used in the security proofs of SDE and UPO. The proof follows the computational subspace-hiding proof of~\cite{C:CLLZ21}, and its extension to simulatable auxiliary information in~\cite{ABH+26,CG26}. We deviate from the formalism~\cite{ABH+26,CG26} who considered independent auxiliary -information  generators for $\bob$ and $\charlie$ and instead use a joint auxiliary information generator for \bob and \charlie, rather than two independent generators, since the applications in this article allow the two auxiliary strings to be arbitrarily correlated.

For the computational games, let
\[
    V:=\FF_2^{4n}.
\]
For any subspace $S\leq V$ and $z\in V$, let $P_{S+z}$ denote
the canonical circuit that, on input $x\in V$, outputs $1$ if
and only if $x\in S+z$. All circuits compared using $\iO$ below
are padded to the same size. We condition on the event that all
the generated membership programs are correct on every input;
the complement of this event contributes a negligible term.

We first record the following immediate extension of
\Cref{thm:dcmoe}, which will be used at the end of the
computational reduction.

\begin{corollary}[Post-split auxiliary information]
\label{cor:dcmoe-post-aux}
Consider the \dcmoe experiment, except that after sampling
$(u,v)$, the challenger applies an arbitrary, possibly
inefficient, classical algorithm
\[
    (z_{\bob},z_{\charlie})
    \leftarrow
    \mathsf{Post}(A,u,v),
\]
and sends $(u,v,z_{\bob})$ to \bob and
$(u,v,z_{\charlie})$ to \charlie. The algorithm
$\mathsf{Post}$ does not receive $s,t$ or the split registers,
and its two outputs may be arbitrarily correlated. Then, for
every triplet of adversaries
$\advmoe=(\alice,\bob,\charlie)$,
\[
    \Pr[\mathsf{win}]
    \leq
    \frac{1}{2}
    +
    \frac{1}{2}
    \sqrt{
        \frac{2^{2n}}
        {(2^{2n}-1)(2^{2n-1}-1)}
    }.
\]
\end{corollary}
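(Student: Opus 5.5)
The plan is to reduce \Cref{cor:dcmoe-post-aux} to \Cref{thm:dcmoe}, or more precisely to its unclonable-encryption form \Cref{lem:general-UE}, by absorbing the auxiliary strings into the recipients' measurements. Since $\mathsf{Post}$ sees only $(A,u,v)$, the difficulty is that $A$ is not part of the \cosetue key $k=(u,v)$. The proof therefore needs to handle the extra dependence on $A$.

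First I would pass to the \cosetue view of \Cref{lem:dcmoe-ue}. There, $(u,v)$ is uniform on $\keyspace$, then $A\getsr\comp(u,v)$, and then $(s,t)$ is uniform subject to the parity $m$. The key structural fact is the proof of \Cref{claim:conditional-cipher-state}, specifically \Cref{eq:to-be-proven}. It shows that for every fixed $A\in\comp(u,v)$, the ciphertext conditioned on $(k,m,A)$ is already $\frac1N(I+(-1)^m\Theta_k)$, independent of $A$. Consequently, conditioned on $(u,v,m)$, the ciphertext state is independent of $A$. Because $\mathsf{Post}$ is run using fresh randomness on $(A,u,v)$ only, the triple consisting of the ciphertext, $A$, and $(z_\bob,z_\charlie)$ is therefore distributed as follows: the ciphertext is $\rho^{k,m}_{ct}$, and independently $(z_\bob,z_\charlie)\gets\mathcal{D}_{u,v}$. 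Here $\mathcal{D}_{u,v}$ is the distribution obtained by taking $A\getsr\comp(u,v)$ and then running $\mathsf{Post}$. In particular, $(z_\bob,z_\charlie)$ is independent of $m$ and of the ciphertext, given $k$.

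Next I would fold the auxiliary strings into the recipients. For each key $k$, define the recipients' POVMs as mixtures over $(z_\bob,z_\charlie)\gets\mathcal D_{k}$. The obstacle here is that $z_\bob$ and $z_\charlie$ are correlated, so the averaged joint operator $\E_{z}[M^{k,z_\bob}_{\bob,m}\otimes M^{k,z_\charlie}_{\charlie,m}]$ is not a product. The simplest fix is to enlarge the key: set $k'=(u,v,z_\bob,z_\charlie)$ with distribution $P'$ and $\Theta_{k'}:=\Theta_{(u,v)}$. This fails, however, because Hilbert--Schmidt orthogonality is lost for keys with the same $(u,v)$. Instead I would redo the argument of \Cref{sec:general-UE-proof} directly with operators indexed by $k'$. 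Define $E_B=\sum_{k'}p'_{k'}\widehat\Theta_{k}\otimes B_{k'}\otimes I$, and define $E_C$ and $E_{BC}$ analogously. The winning probability formula \Cref{eq:win-choi-general} and the positivity \Cref{eq:Gamma-general} go through verbatim, because they only use the per-key product structure.

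The main step is the endpoint bound \Cref{eq:endpoint-general}. Writing $U=\sum_k p_k\widehat\Theta_k\otimes\bar B_k$ with $\bar B_k:=\E_{z\gets\mathcal D_k}B_{k,z_\bob}$, which is a contraction, orthogonality over $k$ still gives $\tr_{\regalice}[U^\dagger U]=N\sum_kp_k^2\bar B_k^2\preceq N\col(P)I$. The same holds for $V$, so $\|E_BE_C\|_\infty\leq N\col(P)$. For the moments \Cref{eq:moment-general}, the expansion \Cref{eq:moment-expansion-general} still writes $E_BE_{BC}^\ell E_C$ as a convex combination of contractions sandwiching $E_BE_C$. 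This works because $E_{BC}=\sum_{k'}p'_{k'}I\otimes B_{k'}\otimes C_{k'}$ is a convex combination of products of contractions acting only on $\regbob\regcharlie$. \Cref{lem:AS-positive-part} then yields $\|G_+\|^2_\infty\leq N\col(P)$, and hence the stated bound with the same $\col(P)=1/|\keyspace|$. I expect the only delicate point to be verifying the independence claim in the first step, which is exactly where the restriction that $\mathsf{Post}$ sees no $s,t$ is used.
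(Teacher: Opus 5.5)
Your proposal is correct, but it takes a different route from the paper. Both arguments start from the same observation, via \Cref{claim:conditional-cipher-state}: conditioned on $k=(u,v)$, the pair $(z_\bob,z_\charlie)$ is independent of $m$ and of the ciphertext. You identified the obstacle that averaging over correlated $(z_\bob,z_\charlie)$ destroys the product form of the recipients' joint POVM. The paper removes this obstacle in a black-box way. $\alice$ appends a shared classical random string to both registers. After receiving $k$, each recipient uses that string to resample $A$ from its conditional distribution given $k$, runs $\mathsf{Post}$, and keeps its own component. The averaged operator is then a product of local POVMs on the enlarged registers, and \Cref{thm:dcmoe} applies as stated. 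You instead reopen the proof of \Cref{lem:general-UE} with enlarged keys $k'=(u,v,z_\bob,z_\charlie)$, and your checks go through. The Choi expansion and $\Gamma_{a,c}\succeq0$ are per-key statements. In the endpoint bound, the auxiliary data are averaged into the contractions $\bar B_k$ and $\bar C_k$, so Hilbert--Schmidt orthogonality is needed only across distinct $(u,v)$. The moment bound uses only that $E_{BC}$ is a convex combination of products of contractions acting on $\regbob$ and $\regcharlie$ separately. The paper's route is much shorter and transfers whatever bound holds for the base game. Yours is longer but proves a slightly stronger structural fact. \Cref{lem:general-UE} itself continues to hold, with $\gamma=N\col(P)$ computed from the marginal on the core key. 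This holds even when the recipients' measurements also depend on arbitrarily correlated classical data that is independent of the message and ciphertext given the key.
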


\begin{proof}
Fix a nonzero orthogonal pair $k=(u,v)$ and a bit $m$. By
\Cref{claim:conditional-cipher-state}, for every compatible
subspace $A$, the state conditioned on $k,m,A$ is
\[
    \rho_{ct}^{k,m,A}
    =
    \frac{1}{2^{2n}}
    \left(I+(-1)^m\Theta_k\right),
\]
which is independent of $A$. Moreover, for every fixed
compatible $A$ and $k$, the bit
\[
    m=\langle u,t\rangle+\langle v,s\rangle
\]
is uniform. Hence, conditioned on $k$, the joint distribution
of $(A,z_{\bob},z_{\charlie})$ is independent of $m$.

The splitting adversary may append a shared classical random
string to the two output registers. After receiving $k$, both
recipients can use this common random string to sample the same
value of $A$ according to its conditional distribution given
$k$, and then sample the correlated pair
$(z_{\bob},z_{\charlie})$ according to
$\mathsf{Post}(A,k)$. This does not require communication and
does not use $m$. Therefore, the post-split auxiliary
information can be absorbed into the local strategies of \bob
and \charlie, and the corollary follows from
\Cref{thm:dcmoe}.
\end{proof}

Next, we define the auxiliary information allowed in the
computational game.

\begin{definition}[Simulatable pre-split auxiliary information]
\label{def:comp-dcmoe-aux}
Let $\mathsf{Aux}=(\auxsetup,\auxgen)$ be a pair of QPT
algorithms with the following syntax.
\begin{enumerate}
    \item $pp\gets \auxsetup(1^\secparam):$ takes as input the security parameter $1^\secparam$, and outputs a classical public parameter $pp$.
    \item $(z_{1,\bob},z_{2,\bob},z_{1,\charlie},z_{2,\charlie})\gets\auxgen(pp,A,u,v):$ takes as input a classical public parameter $pp$, a subspace $A\leq V$, vectors$u\in A\setminus\{0\}$, and $v\in A^\perp\setminus\{0\}$, and outputs pre-split auxiliary information $(z_{1,\bob},z_{1,\charlie})$ and post-split auxiliary information $(z_{2,\bob},z_{2,\charlie})$, for $\bob$ and $\charlie$ respectively\footnote{The four strings $(z_{1,\bob},z_{2,\bob},z_{1,\charlie},z_{2,\charlie})$ can be arbitrarily correlated.}.
\end{enumerate}


There exists a possibly inefficient algorithm $\auxsim$ such that for every subspace $A\leq V$, $u\in A\setminus \{0\},v\in A^\perp\setminus \{0\}$,
\begin{equation}\label{eq:comp-dcmoe-aux-simulation}
\left\{
    (pp,z_{1,\bob},z_{1,\charlie})
    :
    \begin{array}{l}
        pp\gets\auxsetup(1^\secparam),\\
        (z_{1,\bob},z_{2,\bob},z_{1,\charlie},z_{2,\charlie})
        \gets\auxgen(pp,A,u,v)
    \end{array}
\right\}
\approx_s
\left\{
    (pp,\widetilde z_{1,\bob},\widetilde z_{1,\charlie})
    :
    \begin{array}{l}
        pp\gets\auxsetup(1^\secparam),\\
        (\widetilde z_{1,\bob},\widetilde z_{1,\charlie})
        \leftarrow\auxsim(pp)
    \end{array}
\right\}.
\end{equation}
where the first distribution is obtained by sampling
$pp\gets\auxsetup(1^\secparam)$ and then running
$\auxgen(pp,A,u,v)$, and the second is obtained by sampling
$pp\gets\auxsetup(1^\secparam)$ and then running
$\auxsim(pp)$.
\end{definition}

\par\noindent
\begin{definition}[\compdcmoe]\label{def:computational-dcmoe}
Let $n:=n(\secparam)$ be a polynomial function in $\secparam$, and $V=\FF_2^{4n}$.
For a triplet of QPT adversaries
$\advmoe=(\alice,\bob,\charlie)$ and auxiliary algorithms
$\mathsf{Aux}$ as above, consider the following experiment.
$\gam{\compdcmoe_{\advmoe,\mathsf{Aux}}}(1^\secparam)$:
\begin{enumerate}
    \item $\ch$ samples a uniformly random $2n$-dimensional subspace $A\leq V$ and independent $s,t\getsr V$, and prepares
    \[\ket{A_{s,t}}=\frac{1}{\sqrt{|A|}}\sum_{a\in A}(-1)^{\langle a,t\rangle}\ket{a+s}.\]
    It also generates
    $M_0\gets\iO(P_{A+s}),\text{ and } M_1\gets\iO(P_{A^\perp+t}).$
    \item $\ch$ samples $u\getsr A\setminus\{0\}, v\getsr A^\perp\setminus\{0\},$ and generates $pp\gets\auxsetup(1^\secparam),$ and samples $(z_{1,\bob},z_{2,\bob},z_{1,\charlie},z_{2,\charlie})\gets\auxgen(pp,A,u,v).$
    \item $\ch$ sends $\left(\ket{A_{s,t}},M_0,M_1,pp,z_{1,\bob},z_{1,\charlie}\right)$ to \alice. 
    \item $\alice$ outputs a joint state
    $\sigma_{\regbob,\regcharlie}$ and sends registers
    $\regbob$ to \bob and $\regcharlie$ to \charlie.

    \item After the split, $\ch$ sends
    $(u,v,z_{2,\bob})$ to \bob and
    $(u,v,z_{2,\charlie})$ to \charlie.

    \item $\bob$ and $\charlie$ output
    $b_\bob,b_\charlie$.

    \item The output of the experiment is $1$ if
        $b_\bob=b_\charlie= \langle u,t\rangle+\langle v,s\rangle.$    
\end{enumerate}
\end{definition}
Note that the pair $(u,v)$ is sampled before the split only because the
pre-split auxiliary information may depend on it. 


For convenience, define
\[\epsilondcmoe  :=
    \frac{1}{2}
    \sqrt{
        \frac{2^{2n}}
        {(2^{2n}-1)(2^{2n-1}-1)}
    },
    \qquad
    \beta_n
    :=
    \frac{2^n-1}{2^{2n}-1}.
\]

\begin{theorem}[Computational Decisional Coset Monogamy with
auxiliary information]
\label{thm:comp-dcmoe-aux}
Assume post-quantum secure $\iO$ and post-quantum secure subspace-hiding obfuscation. Then, for every function $n=n(\secparam)\in \omega(\log(\secparam))$, and for every QPT triplet $\advmoe=(\alice,\bob,\charlie)$ and every auxiliary-information generator satisfying \Cref{def:comp-dcmoe-aux}, there exists a negligible function $\negl(\secparam)$ such that
\[\Pr\left[ \gam{\compdcmoe_{\advmoe,\mathsf{Aux}}}(1^\secparam)=1\right]\leq \frac{1}{2}+\negl(\secparam).\]
\end{theorem}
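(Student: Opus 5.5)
Our approach is a hybrid argument in the style of the computational coset monogamy proof of~\cite{C:CLLZ21} and its auxiliary-information extension~\cite{ABH+26}. The hybrids strip away the obfuscated membership programs and the pre-split auxiliary strings. What remains is exactly the information-theoretic experiment of \Cref{cor:dcmoe-post-aux}, played inside a random $2n$-dimensional subspace of $V$.

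\textbf{Hybrid 0} is the real game of \Cref{def:computational-dcmoe}. In \textbf{Hybrid 1} we replace $(z_{1,\bob},z_{1,\charlie})$ by $\auxsim(pp)$. By \Cref{eq:comp-dcmoe-aux-simulation} this changes the winning probability only negligibly, since the winning predicate is a (possibly inefficient) function of the whole experiment and statistical closeness suffices. The post-split strings $z_{2,\bob},z_{2,\charlie}$ are still generated by $\auxgen(pp,A,u,v)$, now conditioned only on $(pp,A,u,v)$; after Hybrid 1 they never enter $\alice$'s view.

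\textbf{Hybrid 2} follows~\cite{C:CLLZ21}. We rewrite $M_0=\iO(P_{A+s})$ as $\iO(P_{A}\circ(\cdot - s))$, which computes the same function, and then use subspace-hiding obfuscation to replace the program for $A$ by one for a random superspace $B_0\supseteq A$ of dimension $3n$. Symmetrically, we replace the program for $A^\perp$ by one for a random $B_1\supseteq A^\perp$ of dimension $3n$. Each step uses $\iO$ to switch between a direct membership circuit and $\shO$ composed with a shift.

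The reduction for each $\shO$ step must generate everything else itself: the coset state, the other program, $pp$, $\auxsim(pp)$, and $(u,v)$ together with the post-split strings. It then checks whether $\bob$ and $\charlie$ both output $\langle u,t\rangle+\langle v,s\rangle$. This check is fine because the reduction knows $s,t,A$, but $\auxsim$ and $\auxgen$ may be inefficient. I would handle this by non-uniformly fixing, or by noting that the $\shO$ distinguisher may be given advice. Alternatively, since $\shO$ security is stated with a QPT sampler $\mathcal{B}$, the final bit can be estimated efficiently: the reduction outputs $1$ iff both outputs match the parity, and it computes $z_2$ with the efficient $\auxgen$. Only $z_1$ was simulated, and we can simply delay the $\auxsim$ swap until after the $\shO$ hybrids, using the real efficient $\auxgen$ for $z_1$ throughout. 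So the order becomes: first the $\shO$ and $\iO$ hybrids with real auxiliary information, then the $\auxsim$ swap. This keeps every reduction QPT.

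Once the programs depend only on $(B_0+s, B_1+t)$ and $\auxsim(pp)$, we follow~\cite{C:CLLZ21} and condition on $(B_0,B_1,s+B_0,t+B_1)$. Given $B_0\supseteq A\supseteq B_1^\perp$ (note $A^\perp \subseteq B_1$ implies $B_1^\perp\subseteq A$), $A$ is uniform among $2n$-dimensional subspaces sandwiched between $B_1^\perp$ (dimension $n$) and $B_0$ (dimension $3n$). Identifying $B_0/B_1^\perp\cong\FF_2^{2n}$ and projecting, the coset state $\ket{A_{s,t}}$ becomes, up to a known local isometry, a coset state of a uniform $n$-dimensional subspace $A'\leq\FF_2^{2n}$ with uniform shifts. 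The programs and $pp,\auxsim(pp)$ then depend only on public data. The challenge $(u,v)$ must map to a uniform nonzero pair in $A'$ and $A'^{\perp}$.

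\textbf{The main obstacle.} The parity $\langle u,t\rangle+\langle v,s\rangle$ should correspond to the reduced parity, and $u\in A\setminus\{0\}$ may fall in $B_1^\perp$, where its image in $A'$ is zero. This is presumably where $\beta_n=\frac{2^n-1}{2^{2n}-1}$ enters: it is exactly the probability that $u$ lands in $B_1^\perp$. The same holds for $v$ landing in $B_0^\perp$. I would handle this with a bad event of probability $2\beta_n=\negl$, and otherwise show that $(u,v)$ induces a uniform key for \cosetue\ in dimension $2n$, with the extra components being public-data-determined phases.

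Post-split information $z_2=\auxgen(pp,A,u,v)$ depends only on $(A,u,v,pp)$, so \Cref{cor:dcmoe-post-aux} absorbs it, with the public $B_0,B_1$, shifts-modulo and $pp$ as shared randomness. The resulting bound is $\frac12+\epsilondcmoe+2\beta_n+\negl=\frac12+\negl$ since $n=\omega(\log\secparam)$.
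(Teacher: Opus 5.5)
Your overall plan matches the paper's proof. You enlarge both cosets with $\shO$ (your $B_0$ and $B_1^\perp$ are the paper's $U$ and $L$), and you perform the \auxsim swap only after all computational hybrids so that every reduction stays QPT. You then reduce the sandwiched subspace to a uniform $n$-dimensional $\overline A\leq\FF_2^{2n}$, charge $2\beta_n$ for $u\in B_1^\perp$ or $v\in B_0^\perp$, and absorb post-split information via \Cref{cor:dcmoe-post-aux}.

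\textbf{The gap is in how you swap in the simulator.} You replace $(z_{1,\bob},z_{1,\charlie})$ by $\auxsim(pp)$, but you keep $(z_{2,\bob},z_{2,\charlie})$ distributed as the marginal of $\auxgen(pp,A,u,v)$, independent of the simulated strings. \Cref{eq:comp-dcmoe-aux-simulation} only makes the marginal $(pp,z_{1,\bob},z_{1,\charlie})$ close. The pre- and post-split outputs of $\auxgen$ may be arbitrarily correlated. For example, take $z_{1,\bob}=z_{2,\bob}=r$ for uniform $r$, which $\auxsim$ simulates perfectly; or consider the correlation between $(z_{\mathsf{aux}},\Delta)$ and $(x_\bob,w_\bob)$ in the paper's UPO application. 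Decorrelating them can move the joint distribution by statistical distance nearly $1$. Recipients that compare $z_{2,\bob}$ with what $\alice$ saw can then win noticeably less often in your hybrid than in the real game, so the upper bound does not transfer back.

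The fix, as in the paper's $\hybrid_5$ of this proof, is to sample $(z_{2,\bob},z_{2,\charlie})$ from the conditional law of $\auxgen(pp,A,u,v)$'s post-split output, given that its pre-split output equals the simulated pair. For each fixed $(A,u,v)$ this is a classical channel applied to $(pp,z_{1,\bob},z_{1,\charlie})$, so the joint distributions stay statistically close. $\alice$'s view remains independent of $(u,v)$. The post-split algorithm handed to \Cref{cor:dcmoe-post-aux} may then depend on the conditioned-upon values $pp,\widetilde z_{1,\bob},\widetilde z_{1,\charlie}$ as well as $(A,u,v)$, which that corollary allows.

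\textbf{A smaller omission.} After the $\shO$ steps your programs still hardcode $s$ and $t$ themselves. $\iO$ does not make the obfuscated program a function of the cosets $s+B_0$ and $t+B_1$ alone, so you cannot yet condition only on public coset data. You need the two extra $\iO$ hybrids (the paper's $\hybrid_2$ and $\hybrid_4$), which replace $s$ by a uniform $\sigma\in s+B_0$ and $t$ by a uniform $\tau\in t+B_1$. These representatives are also what make the correction term $\langle u,\tau\rangle+\langle v,\sigma\rangle$ in the parity decomposition computable after the split.
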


\begin{proof}
Fix a QPT triplet of adversaries $\advmoe=(\alice,\bob,\charlie)$. We proceed through a sequence of hybrids, following the computational monogamy proof of~\cite{C:CLLZ21} and the auxiliary-information hybrids of~\cite{ABH+26,CG26}.

\paragraph{$\hybrid_0$.}
This is the experiment
$\gam{\compdcmoe_{\advmoe,\mathsf{Aux}}}(1^\secparam)$.

\paragraph{$\hybrid_1$: Enlarge the primal subspace.}
Replace the first public program $M_0=\iO(P_{A+s})$ by
$\iO(x\mapsto\widehat U(x-s))$, where $U\supset A$ is a uniformly random $3n$-dimensional superspace and $\widehat U\gets\shO(U)$. Formally, this change is performed through the standard intermediate hybrid in which $M_0$ is first replaced by $\iO(x\mapsto\widehat A(x-s))$ for $\widehat A\gets\shO(A)$, followed by the subspace-hiding replacement of $\widehat A$ by $\widehat U$. The reduction knows $(A,s,t,u,v)$ and can therefore prepare the coset state and generate all four auxiliary strings. Hence $\hybrid_0$ and $\hybrid_1$ are computationally indistinguishable.

\paragraph{$\hybrid_2$: Rerandomize the primal shift.}
Sample $w_s\getsr U$, set $\sigma:=s+w_s$, and replace the source shift $s$ in the first public program by $\sigma$. Since $U+s=U+\sigma$, the two source circuits are functionally equivalent, and therefore $\hybrid_1$ and $\hybrid_2$ are computationally indistinguishable by $\iO$-security.

\paragraph{$\hybrid_3$: Enlarge the dual subspace.}
Sample a uniformly random $n$-dimensional subspace $L\subset A$, so that $L^\perp$ is a uniformly random $3n$-dimensional superspace of $A^\perp$, and replace the second public program $M_1=\iO(P_{A^\perp+t})$ by $\iO(y\mapsto\widehat{L^\perp}(y-t))$, where $\widehat{L^\perp}\gets\shO(L^\perp)$. As in $\hybrid_1$, this is performed through the standard intermediate hybrid using an obfuscation of $A^\perp$, and subspace-hiding security gives that $\hybrid_2$ and $\hybrid_3$ are computationally indistinguishable.

\paragraph{$\hybrid_4$: Rerandomize the dual shift.}
Sample $w_t\getsr L^\perp$, set $\tau:=t+w_t$, and replace the source shift $t$ in the second public program by $\tau$. Since $L^\perp+t=L^\perp+\tau$, $\iO$-security gives that $\hybrid_3$ and $\hybrid_4$ are computationally indistinguishable.

The first four hybrids are exactly the primal and dual subspace-hiding and affine-rerandomization hybrids used in~\cite{C:CLLZ21}. Their validity is unaffected by the joint auxiliary information because the reduction can sample the auxiliary information as it knows the smaller subspace and all values required to run $\auxsetup$ and $\auxgen$. Thus, there exists a negligible function $\negl$ such that
\[
\left|
\Pr[\hybrid_0=1]
-
\Pr[\hybrid_4=1]
\right|
\leq
\negl(\secparam).
\]

\paragraph{$\hybrid_5$: Simulate the pre-split auxiliary information.}
Sample $(\widetilde z_{1,\bob},\widetilde z_{1,\charlie})
\gets\auxsim(pp)$ and use this pair in place of
$(z_{1,\bob},z_{1,\charlie})$. Then sample $(z_{2,\bob},z_{2,\charlie})$ jointly from the conditional distribution given by
\[(z_{1,\bob},z_{2,\bob},z_{1,\charlie},z_{2,\charlie})\gets\auxgen(pp,A,u,v)),\]
conditioned on the event
\[(z_{1,\bob},z_{1,\charlie})=(\widetilde z_{1,\bob},\widetilde z_{1,\charlie}).\]
If this event $(z_{1,\bob},z_{1,\charlie})=(\widetilde z_{1,\bob},\widetilde z_{1,\charlie})$ has probability zero, define the conditional
distribution arbitrarily. By \Cref{def:comp-dcmoe-aux} and the fact that statistical distance does not increase under a classical channel, there exists a negligible function $\negl'(\secparam)$ such that
\[
\left|
\Pr[\hybrid_4=1]
-
\Pr[\hybrid_5=1]
\right|
\leq
\negl'(\secparam).
\]
The experiment $\hybrid_5$ may be inefficient, but this is harmless since the transition is statistical and no further computational indistinguishability argument will be used.

In $\hybrid_5$, the entire input given to \alice before the split is independent of $(u,v)$ conditioned on $(A,s,t)$ and the two public programs. Hence, the experiment is identically distributed if $(u,v)$ is sampled only after \alice outputs the two registers, and $(z_{2,\bob},z_{2,\charlie})$ is subsequently sampled from the conditional distribution above.

\paragraph{$\hybrid_6$: Reveal $L,U,\sigma,\tau$.}
In addition to the information already given before the split, give $(L,U,\sigma,\tau)$ to \alice. This can only increase the winning probability, and therefore
\[
\Pr[\hybrid_5=1]
\leq
\Pr[\hybrid_6=1].
\]

\paragraph{Analysis of $\hybrid_6$.}
Condition on fixed values of $(L,U,\sigma,\tau,pp,\widetilde z_{1,\bob},\widetilde z_{1,\charlie})$ and on the two public-program strings. Conditioned on these values, $A$ is uniformly distributed among the $2n$-dimensional subspaces satisfying $L\subset A\subset U$. Moreover, $s+\sigma$ is uniform over $U$, $t+\tau$ is uniform over $L^\perp$, and these two random variables are independent.

Choose an invertible linear transformation $T$, as a deterministic function of $(L,U)$, such that
\[
T(L)=\FF_2^n\oplus 0\oplus 0
\qquad\text{and}\qquad
T(U)=\FF_2^n\oplus\FF_2^{2n}\oplus 0.
\]
We use $T$ on computational-basis coordinates and $T^{-\mathsf T}$ on phase coordinates. For every intermediate subspace $A$, there exists a unique $n$-dimensional subspace $\overline A\leq\FF_2^{2n}$ such that
\[
T(A)
=
\FF_2^n\oplus\overline A\oplus 0,
\qquad
T^{-\mathsf T}(A^\perp)
=
0\oplus\overline A^\perp\oplus\FF_2^n.
\]
Since $A$ is uniform subject to $L\subset A\subset U$, the induced subspace $\overline A$ is uniformly distributed among the $n$-dimensional subspaces of $\FF_2^{2n}$.

By the same change-of-basis calculation as in~\cite{C:CLLZ21,ABH+26,CG26}, after applying the known Pauli correction $X^\sigma Z^\tau$ and the unitary corresponding to $T$, the coset state becomes, up to a global phase,
\[
\ket{+}^{\otimes n}
\otimes
\ket{\overline A_{\overline s,\overline t}}
\otimes
\ket{0}^{\otimes n},
\]
where $\overline s,\overline t\getsr\FF_2^{2n}$ are independent and uniform. The two outer registers are fixed and can be incorporated into \alice's strategy.
Write the transformed vectors as
\[
T(u)=(u_L,\overline u,0),
\qquad
T^{-\mathsf T}(v)=(0,\overline v,v_R),
\]
where $\overline u\in\overline A$ and $\overline v\in\overline A^\perp$. The target bit satisfies
\begin{equation}\label{eq:comp-dcmoe-target-decomposition}
\langle u,t\rangle+\langle v,s\rangle
=
\langle u,\tau\rangle
+
\langle v,\sigma\rangle
+
\langle\overline u,\overline t\rangle
+
\langle\overline v,\overline s\rangle.
\end{equation}
The first two terms depend only on $(\sigma,\tau,u,v)$ and are therefore known to both recipients after the split.
Next, let $\mathsf{Bad}$ denote the event that $\overline u=0$ or $\overline v=0$. Since $\overline u=0$ exactly when $u\in L$, and $\overline v=0$ exactly when $v\in U^\perp$, we have
\[
\Pr[\overline u=0]
=
\Pr[\overline v=0]
=
\frac{2^n-1}{2^{2n}-1}
=
\beta_n.
\]
Therefore,
\begin{equation}\label{eq:comp-dcmoe-bad}
\Pr[\mathsf{Bad}]
\leq
2\beta_n.
\end{equation}
Conditioned on $\neg\mathsf{Bad}$, the subspace $\overline A$ remains uniform and, conditioned on $\overline A$, the vectors
$\overline u\getsr\overline A\setminus\{0\}$ and
$\overline v\getsr\overline A^\perp\setminus\{0\}$ are independent and uniform.

We now reduce $\hybrid_6$, conditioned on the fixed values above and on $\neg\mathsf{Bad}$, to the information-theoretic experiment of \Cref{cor:dcmoe-post-aux}. The information-theoretic \alice receives
$\ket{\overline A_{\overline s,\overline t}}$, appends
$\ket{+}^{\otimes n}$ and $\ket{0}^{\otimes n}$, applies the inverse of the known change of basis and Pauli correction, supplies the fixed public programs and the fixed strings
$(pp,\widetilde z_{1,\bob},\widetilde z_{1,\charlie},L,U,\sigma,\tau)$, and runs the splitting algorithm from $\hybrid_6$.

After the information-theoretic challenger reveals
$(\overline u,\overline v)$, the post-split algorithm samples
$u_L,v_R\getsr\FF_2^n$ and defines
\[
u:=T^{-1}(u_L,\overline u,0),
\qquad
v:=T^{\mathsf T}(0,\overline v,v_R).
\]
These are distributed exactly as the original vectors conditioned on $\neg\mathsf{Bad}$. The algorithm reconstructs
\[
A
=
T^{-1}\left(
\FF_2^n\oplus\overline A\oplus 0
\right),
\]
samples $(z_{2,\bob},z_{2,\charlie})$ jointly according to the conditional distribution used in $\hybrid_5$, and computes
$\eta:=\langle u,\tau\rangle+\langle v,\sigma\rangle$. It gives
$(u,v,z_{2,\bob},\eta)$ to \bob and
$(u,v,z_{2,\charlie},\eta)$ to \charlie. This procedure may be inefficient, but it does not receive $(\overline s,\overline t)$ or the split registers and is therefore allowed by \Cref{cor:dcmoe-post-aux}.

Each recipient runs the corresponding recipient from $\hybrid_6$. If the recipient outputs $b_X$, the information-theoretic recipient outputs $b_X+\eta$. By \Cref{eq:comp-dcmoe-target-decomposition}, the winning events in the two experiments are identical. Therefore,
\[
\Pr[
\hybrid_6=1
\mid
\neg\mathsf{Bad}
]
\leq
\frac{1}{2}+\epsilondcmoe.
\]

Let $\delta:=\Pr[\mathsf{Bad}]$. Upper bounding the winning probability conditioned on $\mathsf{Bad}$ by one and using \Cref{eq:comp-dcmoe-bad}, we obtain
\begin{align*}
\Pr[\hybrid_6=1]
&\leq
\delta
+
(1-\delta)
\left(
\frac{1}{2}+\epsilondcmoe
\right)\\
&\leq
\frac{1}{2}
+
\epsilondcmoe
+
\frac{\delta}{2}\\
&\leq
\frac{1}{2}
+
\epsilondcmoe
+
\beta_n.
\end{align*}

Combining the computational hybrid transitions, the statistical transition from $\hybrid_4$ to $\hybrid_5$, and the inequality between $\hybrid_5$ and $\hybrid_6$, and absorbing all negligible functions into a single negligible function, gives
\[
\Pr[\hybrid_0=1]
\leq
\frac{1}{2}
+
\epsilondcmoe
+
\beta_n
+
\negl(\secparam).
\]
In particular, since $n=\omega(\log\secparam)$, both $\epsilondcmoe$ and $\beta_n$ are negligible, which completes the proof of the theorem.
\end{proof}

\begin{corollary}[Computational Decisional Coset Monogamy without auxiliary information]
\label{cor:comp-dcmoe}
Let $\gam{\compdcmoe_{\advmoe}}(1^\secparam)$ denote the
experiment of \Cref{def:computational-dcmoe} with no auxiliary
information, i.e., with
\[
    pp=z_{1,\bob}=z_{2,\bob}
    =z_{1,\charlie}=z_{2,\charlie}=\bot.
\]
Assume post-quantum secure $\iO$ and post-quantum secure
subspace-hiding obfuscation. Then, for every function
$n=n(\secparam)\in\omega(\log\secparam)$ and every QPT triplet
$\advmoe=(\alice,\bob,\charlie)$, there exists a negligible
function $\negl(\secparam)$ such that
\[
    \Pr\left[
        \gam{\compdcmoe_{\advmoe}}(1^\secparam)=1
    \right]
    \leq
    \frac12+\negl(\secparam).
\]
\end{corollary}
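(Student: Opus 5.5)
This corollary should follow as a direct specialization of \Cref{thm:comp-dcmoe-aux}, so the plan is simply to exhibit the trivial auxiliary-information generator and check that it is admissible. I will define $\mathsf{Aux}_\bot=(\auxsetup,\auxgen)$ as follows. $\auxsetup(1^\secparam)$ outputs $pp:=\bot$. For every input $(pp,A,u,v)$, $\auxgen(pp,A,u,v)$ deterministically outputs $(\bot,\bot,\bot,\bot)$. Both algorithms are trivially QPT.

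Next I will verify the simulatability condition of \Cref{def:comp-dcmoe-aux}. I take $\auxsim(pp)$ to output $(\bot,\bot)$. Then, for every subspace $A\leq V$ and every $u\in A\setminus\{0\}$, $v\in A^\perp\setminus\{0\}$, the two distributions in \Cref{eq:comp-dcmoe-aux-simulation} are both the point mass on $(\bot,\bot,\bot)$. They are therefore identical, and in particular statistically close. So $\mathsf{Aux}_\bot$ satisfies \Cref{def:comp-dcmoe-aux}.

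Finally, I will observe that $\gam{\compdcmoe_{\advmoe,\mathsf{Aux}_\bot}}(1^\secparam)$ is syntactically the experiment $\gam{\compdcmoe_{\advmoe}}(1^\secparam)$. Two small points need checking. First, \alice receives the extra constant strings $(pp,z_{1,\bob},z_{1,\charlie})=(\bot,\bot,\bot)$, and each recipient receives the extra constant $z_{2,X}=\bot$; these are fixed constants and can be hardwired into or ignored by the adversary. Second, the pair $(u,v)$ is sampled before the split, but it is used only to run $\auxgen$, whose output is constant, so the joint distribution is unchanged. Hence every QPT triplet $\advmoe$ has the same winning probability in both experiments. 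Applying \Cref{thm:comp-dcmoe-aux} with $n\in\omega(\log\secparam)$ then gives the bound $\frac12+\negl(\secparam)$.

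There is no genuine obstacle here. The only point requiring care is confirming that the constant generator meets the formal requirements of \Cref{def:comp-dcmoe-aux}, namely that it is QPT and that the simulator condition holds for every $(A,u,v)$; both hold trivially.
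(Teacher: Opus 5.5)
Your proposal is correct and matches the paper's proof: the paper likewise instantiates \Cref{thm:comp-dcmoe-aux} with $\auxsetup$ outputting $\bot$, $\auxgen$ outputting $(\bot,\bot,\bot,\bot)$, and $\auxsim$ outputting $(\bot,\bot)$, and notes that simulatability then holds perfectly. Your extra remarks, that the constant strings can be ignored and that sampling $(u,v)$ before the split is harmless, are correct and only make explicit what the paper leaves implicit.
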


\begin{proof}
Let $\auxsetup(1^\secparam)$ output $\bot$, let
$\auxgen(pp,A,u,v)$ output
$(\bot,\bot,\bot,\bot)$, and let
$\auxsim(pp)$ output $(\bot,\bot)$. The statistical
simulatability condition in
\Cref{def:comp-dcmoe-aux} then holds perfectly, and the result
follows directly from \Cref{thm:comp-dcmoe-aux}.
\end{proof}

The subspace-hiding obfuscators required by \cref{cor:comp-dcmoe} can be obtained from post-quantum secure indistinguishability obfuscation and injective keyed one-way functions (see ~\cite{C:CLLZ21}) which can be constructed from indistinguishability obfuscation and one-way functions~\cite{TCC:BitPanWic16}.

%% file: sde/main.tex
\section{Single-Decryptor Encryption Construction}
\label{sec:sde-construction}

In this section, we construct an SDE scheme for the message space
$\messpa=\zo^{\meslen(\lambda)}$ and prove that it satisfies correlated-challenge decision security. The construction is a minor variant of the coset-state SDE scheme of Kitagawa and Yamakawa (\cite{TCC:KitYam25}): we combine the two ciphertext
programs in their construction into a single program with a selector bit.

\subsection{Construction}

Let $\io$ be an indistinguishability obfuscation scheme and set
$n=n(\lambda):=2\lambda$. For the ciphertext programs used below, let
$\Omega_{\io}=\zo^{\ell_{\io}(\lambda)}$ be the random-tape space of $\io$.
We regard $\io$ as deterministic once this tape is fixed. All circuits compared in an $\io$ hybrid are padded to the same size, and $\bot$ is represented by a fixed string distinct from the message outputs.

\paragraph{\underline{$\sde.\keygen(1^\lambda)$}}
\begin{enumerate}
    \item Sample a uniformly random $n$-dimensional subspace
    $A\leq\FF_2^{2n}$ and independent $s,t\getsr\FF_2^{2n}$.
    \item Compute $\obfd{M_0}\gets\io(1^\lambda,M_0)$ and
    $\obfd{M_1}\gets\io(1^\lambda,M_1)$, where $M_0,M_1$ are the following
    programs.

    \begin{mdframed}
        {\bf \underline{$M_0(v)$}}

        {\bf Hardcoded: $A+s$}
        \begin{enumerate}[label=\arabic*.]
            \item If $v\in A+s$, output $1$; otherwise, output $0$.
        \end{enumerate}
    \end{mdframed}

    \begin{mdframed}
        {\bf \underline{$M_1(v)$}}

        {\bf Hardcoded: $A^\perp+t$}
        \begin{enumerate}[label=\arabic*.]
            \item If $v\in A^\perp+t$, output $1$; otherwise, output $0$.
        \end{enumerate}
    \end{mdframed}

    \item Set $\pk:=(\obfd{M_0},\obfd{M_1})$ and $\qsk:=\ket{A_{s,t}}
        :=\frac{1}{\sqrt{|A|}}
        \sum_{a\in A}(-1)^{\langle a,t\rangle}\ket{a+s}.$
    Output $(\pk,\qsk)$.
\end{enumerate}

\paragraph{\underline{$\sde.\enc(\pk,\mes)$}}
The randomness space of encryption is
$\mathcal R:=\messpa\times\Omega_{\io}$. On randomness
$\tau=(r,\omega)\getsr\mathcal R$, proceed as follows.
\begin{enumerate}
    \item Parse $\pk=(\obfd{M_0},\obfd{M_1})$.
    \item Compute $\obfd{P}\gets\io(1^\lambda,P_{\mes,r};\omega)$, where
    $P_{\mes,r}$ is the following program.

    \begin{mdframed}
        {\bf \underline{$P_{\mes,r}(c,v)$}}

        {\bf Hardcoded: $\obfd{M_0},\obfd{M_1},\mes,r$}
        \begin{enumerate}[label=\arabic*.]
            \item If $c=0$ and $\obfd{M_0}(v)=1$, output $\mes\oplus r$.
            \item If $c=1$ and $\obfd{M_1}(v)=1$, output $r$.
            \item Otherwise, output $\bot$.
        \end{enumerate}
    \end{mdframed}

    \item Output $\ct:=\obfd{P}$.
\end{enumerate}
Thus, encryption is fully classical and uses an independent uniform pad and
an ordinary fresh $\io$ tape.

\paragraph{\underline{$\sde.\dec(\qsk,\ct)$}}
\begin{enumerate}
    \item Parse $\obfd{P}=\ct$.
    \item Run $\obfd{P}(0,\cdot)$ on $\qsk$ coherently, let $y_0$ denote the outcome. Then, apply the inverse unitary on $\qsk$ to restore the state $\qsk$ (as in \cref{prelem:gentlemes}).
    \item Apply $H^{\otimes 2n(\lambda)}$ to $\qsk$.
    \item Run $\obfd{P}(1,\cdot)$ on $\qsk$ coherently, let $y_1$ denote the
    outcome. Then, again apply the inverse of the last unitary to recover the state $\qsk$ (as in \cref{prelem:gentlemes}).
    \item Apply $H^{\otimes 2n(\lambda)}$ to $\qsk$.
    \item Output $y_0\oplus y_1$.
\end{enumerate}

 \begin{theorem}[Correlated-challenge-secure SDE]
  \label{thm:sde-security}
  Assuming post-quantum secure indistinguishability obfuscation and post-quantum one-way functions, the scheme above is perfectly correct and satisfies correlated-challenge decision security (\cref{defn:sde-new-defn}). Consequently, it satisfies strong CPA and $\mathsf{CPA}^{+}$ anti-piracy security.
\end{theorem}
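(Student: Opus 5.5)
Correctness is checked directly: for $v$ in the support of $\ket{A_{s,t}}$ we have $\obfd{M_0}(v)=1$, so $\obfd P(0,\cdot)$ outputs $\mes\oplus r$ deterministically on the whole support. Evaluating it coherently and uncomputing therefore leaves the key unchanged. After $H^{\otimes 2n}$ the state becomes $\ket{A^\perp_{t,s}}$, which is supported on $A^\perp+t$. Evaluating $\obfd P(1,\cdot)$ there yields $r$, and the XOR of the two outputs is $\mes$. The final claim about strong CPA and CPA$^{+}$ follows from \cref{thm:sde-corr-implies-strong-cpa} and \cref{thm:sde-prior-relations}. So the work is proving SDE-CORR security by reduction to \cref{thm:comp-dcmoe-aux}, instantiated with dimension $2n$ in place of $n$. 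Subspace-hiding obfuscation comes from $\io$ and one-way functions via \cref{thm:subspace-hiding-instantiation}.

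\textbf{Rewriting the challenge bits.} For any admissible sampler in separate mode, write $(b_\bob,b_\charlie)=(g,g\oplus\Delta)$, where $g$ is uniform and independent of $\Delta:=b_\bob\oplus b_\charlie$. This is possible because the marginals are uniform, so $\Pr[00]=\Pr[11]$ and $\Pr[01]=\Pr[10]$. Identical mode is the case $\Delta=0$ with a shared ciphertext. In the reduction I set $g:=e\oplus\zeta$, where $e=\langle u,t\rangle\oplus\langle v,s\rangle$ is the hidden monogamy bit and $\zeta$ is a fresh uniform bit. If both receivers answer correctly, then $\widetilde b_\bob\oplus\zeta$ and $\widetilde b_\charlie\oplus\zeta\oplus\Delta$ both equal $e$.

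\textbf{Hybrid on each challenge ciphertext.} Each challenge ciphertext must be produced without knowing $s$ and $t$. Fix coordinate $i$ and party $X$, and write $\mu_j:=\mes_{X,i,j}$ and $D:=\mu_0\oplus\mu_1$. Choose a uniform pad $r'$ and define $\widetilde P$ as follows:
\begin{itemize}
\item on input $(0,w)$ with $\obfd{M_0}(w)=1$, output $r'\oplus\mu_0\oplus(\langle v,w\rangle\oplus\zeta_X)\cdot D$;
\item on input $(1,w)$ with $\obfd{M_1}(w)=1$, output $r'\oplus\langle u,w\rangle\cdot D$;
\item otherwise output $\bot$.
\end{itemize}
Here $\zeta_\bob=\zeta$ and $\zeta_\charlie=\zeta\oplus\Delta$. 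On $A+s$ we have $\langle v,w\rangle=\langle v,s\rangle$, and on $A^\perp+t$ we have $\langle u,w\rangle=\langle u,t\rangle$. Hence $\widetilde P$ agrees on every input with $P_{\mes_{X,i,b_X},r}$, where $b_X=e\oplus\zeta_X$ and $r=r'\oplus\langle u,t\rangle D$; this $r$ is uniform when $r'$ is. Since the programs are functionally equivalent, $\io$ security lets me switch to $\widetilde P$ one ciphertext at a time, over $2N$ hybrids, or $N$ in identical mode. The reduction simulating these hybrids knows $A,s,t$, so it can sample everything. In identical mode a single program with $\zeta_\bob=\zeta_\charlie$ is obfuscated once and copied, which preserves exact identity of the ciphertexts.

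\textbf{Main obstacle: timing.} In SDE-CORR the message pairs and the mode are chosen by $\alice$ only at the split, while $(u,v)$ is revealed only after the split. In the final hybrid, however, the ciphertexts depend only on $(u,v,\zeta,\Delta)$, the messages, the pads and $\pk$, and they are delivered only after the split. So the monogamy adversary $\alice$ runs the SDE $\alice$ on $(\pk=(M_0,M_1),\qsk)$, forwards the registers together with the message pairs and mode, and samples $\zeta$, the pads, the $\io$ tapes and $\Delta$ from the conditional sampler. It gives each receiver the same classical copy of these values. Once $(u,v)$ arrives, each receiver builds its own $\widetilde P$ locally, runs the SDE receiver, and corrects its output by $\zeta_X$.

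There is one subtlety: $\Delta$ must be sampled from the admissible sampler given $\pk$, independently of $g$. I expect this to need a dedicated check. The step I would try first is to sample $(b_\bob,b_\charlie)\gets\sampler(\pk,\sep)$ before the split, keep only $\Delta=b_\bob\oplus b_\charlie$, and discard the rest. This is valid because $g$ is uniform and independent of $\Delta$. No auxiliary information is needed, so \cref{cor:comp-dcmoe} gives success at most $1/2+\negl(\secparam)$. Adding the hybrid losses completes the proof.
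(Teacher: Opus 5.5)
Your proposal is correct and follows essentially the same route as the paper. It rewrites the sampler's bits as $(g,g\oplus\Delta)$ with $g:=e\oplus\zeta$, re-expresses each pad as $r=r'\oplus\langle u,t\rangle\cdot D$ so that $\io$ can switch each ciphertext program to one computing $\langle v,w\rangle$ and $\langle u,w\rangle$ from its input, and builds the ciphertexts only after the split in a reduction to \cref{cor:comp-dcmoe}, copying one obfuscation (same pad and tape) in identical mode. One small fix: the game of \cref{cor:comp-dcmoe} lives in $\FF_2^{4n'}$ with a $2n'$-dimensional subspace, so to match the construction's $\FF_2^{2n}$ you should instantiate it with $n'=n/2=\lambda$, not with $2n$ in place of $n$.
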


\subsection{Correctness}
\begin{lemma}\label{lem:sde-correctness}
The above construction is perfectly correct.    
\end{lemma}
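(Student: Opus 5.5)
Fix a message $\mes$, a key pair $(\pk,\qsk)$ with $\qsk=\ket{A_{s,t}}$, and a ciphertext $\ct=\obfd P$ where $\obfd P=\io(P_{\mes,r};\omega)$. The plan is to follow the decryption algorithm step by step, showing that each measured value is deterministic and that the key state is restored exactly at each stage. We condition on perfect correctness of $\io$. The $\io$ of \cref{predef:io} is perfectly correct, so $\obfd{M_0},\obfd{M_1},\obfd P$ compute exactly $M_0,M_1,P_{\mes,r}$ on every input.

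\textbf{Step 1.} Every computational basis string in the support of $\ket{A_{s,t}}$ has the form $a+s$ with $a\in A$. Hence $\obfd{M_0}(a+s)=1$ and $\obfd P(0,a+s)=\mes\oplus r$ for all such strings. The coherent evaluation therefore writes the same value $\mes\oplus r$ on every branch of the superposition. The output register is then in a product state with the key register, so measuring it yields $y_0=\mes\oplus r$ with probability one and does not disturb the key. Uncomputing the evaluation returns exactly $\ket{A_{s,t}}$; here \cref{prelem:gentlemes} applies with $\epsilon=0$.

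\textbf{Step 2.} Applying $H^{\otimes 2n}$ gives the dual coset state $\ket{A^\perp_{t,s}}$ up to a global phase. The only calculation to check is the standard identity $H^{\otimes 2n}\ket{A_{s,t}}=(-1)^{\langle s,t\rangle}\ket{A^\perp_{t,s}}$, which follows from $\sum_{a\in A}(-1)^{\langle a,x+t\rangle}=|A|\cdot[x+t\in A^\perp]$. This state is supported on $A^\perp+t$, so the same argument as in Step 1 gives $y_1=r$ deterministically and restores the state. The final $H^{\otimes 2n}$ returns $\ket{A_{s,t}}$ up to a global phase.

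\textbf{Conclusion.} The output is $y_0\oplus y_1=\mes$ with probability one, and the key is returned intact, which also gives reusability. No step is a real obstacle. The only points needing care are the Hadamard identity above and the remark that a measurement with a deterministic outcome causes no disturbance.
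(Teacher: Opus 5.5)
Your proof is correct and follows essentially the same route as the paper. Both evaluate $\obfd P(0,\cdot)$ deterministically on the support $A+s$ to get $y_0=\mes\oplus r$, apply the identity $H^{\otimes 2n}\ket{A_{s,t}}=(-1)^{\langle s,t\rangle}\ket{A^\perp_{t,s}}$, and evaluate $\obfd P(1,\cdot)$ deterministically on $A^\perp+t$ to get $y_1=r$, so the output is $(\mes\oplus r)\oplus r=\mes$. You additionally verify the Hadamard identity and explain why a deterministic-outcome measurement followed by uncomputation leaves the key intact; the paper only asserts both points.
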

\begin{proof}[Proof of \Cref{lem:sde-correctness}]
Fix an honestly generated key pair and a ciphertext encrypting $\mes\in\messpa$ with pad $r$. For every $a\in A$, correctness of $\io$ gives $\obfd{P}(0,a+s)=\mes\oplus r$. The first evaluation therefore returns $y_0=\mes\oplus r$ with certainty and leaves the coset state unchanged. Moreover, $H^{\otimes 2n}\ket{A_{s,t}} =(-1)^{\langle s,t\rangle}\ket{A^\perp_{t,s}}$. Since $\obfd{P}(1,a^\perp+t)=r$ for every $a^\perp\in A^\perp$, the second evaluation returns $y_1=r$ with certainty. Thus decryption outputs $(\mes\oplus r)\oplus r=\mes$, and the final Hadamard transform restores $\ket{A_{s,t}}$ up to a global phase.
\end{proof}

\subsection{Proof of security}
\begin{proof}[Proof of \Cref{thm:sde-security}]
Fix an admissible sampler $\sampler$ and a QPT adversary $\adve=(\alice,\bob,\charlie)$ for the correlated-challenge experiment. Let $N$ be the number of challenge pairs per party output by $\alice$; since it is encoded in unary, $N$ is polynomially bounded. For $X\in\{\bob,\charlie\}$ and $i\in[N]$, write $(\mu_{X,i,0},\mu_{X,i,1})$ for $X$'s $i$-th pair, identifying the two parties' pairs in identical mode. Set $\delta_{X,i}:=\mu_{X,i,0}\oplus\mu_{X,i,1}$ and $\mu_{X,i}:=\mu_{X,i,b_X}=\mu_{X,i,0}\oplus b_X\cdot\delta_{X,i}$, where $b\cdot z=z$ if $b=1$ and $b\cdot z=0^{\meslen(\lambda)}$ otherwise.

For the $i^{th}$ challenge ciphertext that party $X\in\{\bob,\charlie\}$ receives, let $r_{X,i}$ denote the uniform pad hardcoded in its obfuscated circuit. In separate mode, the pads $(r_{X,i})_{X\in\{\bob,\charlie\},\,i\in[N]}$ are mutually independent. In identical mode, the challenger samples one pad $r_i$ for each $i\in[N]$, uses it in the single ciphertext for that coordinate, and gives the same ciphertext to both parties; equivalently, $r_{\bob,i}=r_{\charlie,i}:=r_i$.

\paragraph{$\hybrid_0$: Real experiment.}
This is the correlated-challenge experiment of \Cref{def:sde-corr-game}, instantiated with the scheme above.

\paragraph{$\hybrid_1$: Use one uniform challenge bit.}
In separate mode, run $\sampler(\pk,\sep)$ and denote its output by $(b'_\bob,b'_\charlie)$. Set $\Delta:=b'_\bob\oplus b'_\charlie$. In identical mode, set $\Delta:=0$. Next sample $\widetilde e\getsr\bit$. Use $b_\bob:=\widetilde e$ as Bob's challenge bit and $b_\charlie:=\widetilde e\oplus\Delta$ as Charlie's challenge bit. All other steps are unchanged.

We show that $\hybrid_0$ and $\hybrid_1$ are identically distributed. Fix $\pk$. For $i,j\in\bit$, let $p_{ij}$ be the probability that $\sampler(\pk,\sep)$ outputs $(i,j)$. The uniform-marginal condition gives $p_{00}=p_{11}$ and $p_{01}=p_{10}$. Consequently, conditioned on any value of $\Delta$ in its support, $b'_\bob$ is uniform and $b'_\charlie=b'_\bob\oplus\Delta$. Thus $(\widetilde e,\widetilde e\oplus\Delta)$ in $\hybrid_1$ has the same distribution as $(b'_\bob,b'_\charlie)$ in $\hybrid_0$. In identical mode, the sampler's output is a single uniform bit, which has the same distribution as $\widetilde e$. Finally, $\alice$ produces its messages and split registers before the sampler is invoked, and the sampler receives only $\pk$ and uses fresh coins. The equality therefore holds jointly with the complete output of $\alice$.

\paragraph{$\hybrid_2$: Mask the coset parity to get to the challenge bit.}
After sampling $(A,s,t)$, additionally sample $u\getsr A\setminus\{0\}$, $v\getsr A^\perp\setminus\{0\}$, and $\zeta\getsr\bit$. Define $\alpha:=\langle v,s\rangle$, $\beta:=\langle u,t\rangle$, and $e:=\alpha\oplus\beta$, and set $b_\bob:=e\oplus\zeta$ and $b_\charlie:=e\oplus\zeta\oplus\Delta$. For every fixed choice of the remaining variables, $e\oplus\zeta$ is uniform. Hence $\hybrid_1$ and $\hybrid_2$ are identically distributed. For later use, define $d_\bob:=\zeta$ and $d_\charlie:=\zeta\oplus\Delta$; then $b_X=e\oplus d_X$ for $X\in\{\bob,\charlie\}$. This hybrid changes only the sampling of the challenge bits; the pads $r_{X,i}$ are still sampled as specified above.

\paragraph{$\hybrid_3$: Write the encryption pads using uniform strings.}
We change only how the pads, i.e., $(r_{X,i})_{X\in\{\bob,\charlie\},\,i\in[N]}$ from $\hybrid_2$ are sampled. In separate mode, sample mutually independent $q_{X,i}\getsr\messpa$ for every $X\in\{\bob,\charlie\}$ and $i\in[N]$. In identical mode, sample independent $q_i\getsr\messpa$ for $i\in[N]$ and set $q_{\bob,i}=q_{\charlie,i}:=q_i$. In either mode, define $r_{X,i}:=q_{X,i}\oplus\beta\cdot\delta_{X,i}$. Conditioned on all variables sampled before the $q$'s, the string $\beta\cdot\delta_{X,i}$ is fixed, so the map from $q_{X,i}$ to $r_{X,i}$ is a bijection on $\messpa$. Consequently, in separate mode the pads remain mutually independent and uniform. In identical mode, $\delta_{\bob,i}=\delta_{\charlie,i}$ and $q_{\bob,i}=q_{\charlie,i}$, so $r_{\bob,i}=r_{\charlie,i}$ is a single uniform pad, exactly as in $\hybrid_2$. The transformation is independent across coordinates, and therefore the complete joint distribution of the pads is unchanged. Hence $\hybrid_2$ and $\hybrid_3$ are identically distributed.

This way of sampling the pads prepares the next hybrid. Since $b_X=\alpha\oplus\beta\oplus d_X$, the two occurrences of $\beta\cdot\delta_{X,i}$ cancel in $\mu_{X,i,b_X}\oplus r_{X,i}$, giving $\mu_{X,i,b_X}\oplus r_{X,i}=q_{X,i}\oplus\mu_{X,i,0}\oplus(\alpha\oplus d_X)\cdot\delta_{X,i}$. Thus the first branch of the ciphertext program will use the value $\alpha$ available on $A+s$, while its second branch will use the value $\beta$ available on $A^\perp+t$.

\paragraph{$\hybrid_4$: Replace the ciphertext programs.}
For every $X,i$, set $\gamma_{X,i}:=d_X\cdot\delta_{X,i}$ and replace $P_{\mu_{X,i},r_{X,i}}$ by the following program $Q_{X,i}$.

\begin{mdframed}
{\bf \underline{$Q_{X,i}(c,w)$}}

{\bf Hardcoded: $\obfd{M_0},\obfd{M_1},u,v,q_{X,i},\gamma_{X,i}, \mu_{X,i,0},\delta_{X,i}$}
    \begin{enumerate}[label=\arabic*.]
\item If $c=0$ and $\obfd{M_0}(w)=1$, output $q_{X,i}\oplus\mu_{X,i,0}\oplus\gamma_{X,i} \oplus\langle v,w\rangle\cdot\delta_{X,i}$.
\item If $c=1$ and $\obfd{M_1}(w)=1$, output $q_{X,i}\oplus\langle u,w\rangle\cdot\delta_{X,i}$.
\item Otherwise, output $\bot$.
    \end{enumerate}
\end{mdframed}

The old and new programs are functionally equivalent. If $w\in A+s$, then $\langle v,w\rangle=\alpha$, and the first branch returns $q_{X,i}\oplus\mu_{X,i,0} \oplus(d_X\oplus\alpha)\cdot\delta_{X,i} =\mu_{X,i,b_X}\oplus r_{X,i}$. If $w\in A^\perp+t$, then $\langle u,w\rangle=\beta$, and the second branch returns $q_{X,i}\oplus\beta\cdot\delta_{X,i}=r_{X,i}$. Both programs return $\bot$ otherwise.

In separate mode, the $2N$ obfuscations use independent tapes, and hence we change the $2N$ programs inside the obfuscations one at a time using $\io$ security. In identical mode, there is one common old program and one common new program per coordinate $i\in [N]$, and we change the two identical copies of the obfuscations of the $N$ programs. Since $N$ is polynomial, $\lvert\Pr[\hybrid_3=1]-\Pr[\hybrid_4=1]\rvert\leq\negl(\lambda)$.

We reduce $\hybrid_4$ to computational decisional coset monogamy (\cref{cor:comp-dcmoe}), instantiated there with $n=\lambda$. Its challenger samples a $2\lambda$-dimensional subspace of $\FF_2^{4\lambda}$, exactly as in our construction, where $n=2\lambda$.

The splitter $\alice'$ receives the coset state and the two obfuscated membership programs for the cosets. It treats the programs as $\pk$ and runs $\alice$ on $(\pk,\ket{A_{s,t}})$. It samples $\Delta,\zeta,d_\bob,d_\charlie$, the $q$'s, and the $\io$ tapes as in $\hybrid_4$, and appends to party $X$'s register $\pk,d_X$ and $(\mu_{X,i,0},\delta_{X,i},q_{X,i},\omega_{X,i})_{i\in[N]}$. In identical mode, the parties receive the same $q_i$ and tape $\omega_i$ for each coordinate; otherwise, all tapes are independent. These values can be generated before $(u,v)$ are revealed and without knowing $(u,v)$ or $e$.

Upon receiving $(u,v)$, party $X'$ computes $\gamma_{X,i}=d_X\cdot\delta_{X,i}$, constructs and obfuscates every $Q_{X,i}$ using the tape it received for every $i\in[N]$ from $\alice'$, and runs $X$ on the resulting ciphertexts, and obtains $\widetilde b_X$. In particular, in the identical mode, the two parties use identical program descriptions and tapes, so they obtain the same obfuscation, exactly matching a single ciphertext copied to both parties. Finally, $X'$ outputs $z_X:=\widetilde b_X\oplus d_X$. Since $b_X=e\oplus d_X$, we have $z_X=e$ exactly when $\widetilde b_X=b_X$. Thus the reduction wins the coset-monogamy game exactly when $\adve$ wins $\hybrid_4$. By \cref{cor:comp-dcmoe}, $\Pr[\hybrid_4=1]\leq \frac12+\negl(\lambda)$, and hence $\Pr[\corrsde_{\sde,\adve,\sampler}(1^\lambda)=1]\leq \frac12+\negl(\lambda)$. Strong CPA and $\mathsf{CPA}^{+}$ security follow from \cref{thm:sde-corr-implies-strong-cpa,thm:sde-prior-relations}.
\end{proof}

%% file: upo/main.tex
\section{Unclonable Puncturable Obfuscation Construction}
\label{sec:upo-construction}

In this section, we prove the following theorem.
\begin{theorem}[Correlated UPO from $\io$ and LWE]
\label{thm:correlated-upo-construction}
Assume the existence of post-quantum polynomially secure $\io$ and the quantum hardness of LWE. Then, for every constant $c>0$ and every polynomial-size keyed circuit class $\mathcal C=\{C_k:\mathcal X_\lambda\rightarrow\mathcal Y_\lambda\}_{k\in\mathcal K_\lambda}$ where $\mathcal{X}=\zo^{\ell_{\mathsf{in}}},\mathcal{Y}:=\zo^{\ell_{\mathsf{out}}}$ and $\ell_{\mathsf{in}}=\ell_{\mathsf{in}}(\secparam)\geq \secparam^c$ for some constant $c>0$, then there exists a UPO scheme for $\mathcal C$ that satisfies $\lambda^c$-entropic correlated UPO security. Moreover, for the special case of $\IDU$-generalized UPO security, the LWE assumption can be replaced with post-quantum one-way functions.
\end{theorem}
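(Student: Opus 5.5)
We will instantiate the coset-state UPO construction of~\cite{ABH+26,CG26} and reduce its correlated security to the computational decisional coset monogamy theorem with auxiliary information (\cref{thm:comp-dcmoe-aux}), following the SDE template of \cref{thm:sde-security}.

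\textbf{Construction.} The obfuscation of $C_k$ is a coset state $\ket{A_{s,t}}$ (with $A\leq\FF_2^{4n}$ of dimension $2n$) together with $\iO$-obfuscated membership programs for $A+s$ and $A^\perp+t$, and an $\iO$-obfuscated program $\widehat P$. The program $\widehat P$, on input $(x,c,w)$, outputs a share of $C_k(x)$ only when $w$ lies in the appropriate coset (primal if $c=0$, dual if $c=1$). The shares are XOR-masked by a puncturable PRF, so that evaluation recombines them exactly as in SDE decryption. Inputs are first mapped through an injective/lossy function $L$ and a universal hash $h$ into $\zo^{8n}$. This lets each input $x$ define a pair $(u_x,v_x)$ via $h(L(x))$, which is later programmed as the monogamy challenge. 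Correctness will follow as in \cref{lem:sde-correctness}. By \cref{thm:correlated-upo-composition} we may also assume $\io$-security.

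\textbf{Hybrids.} We start from the experiment of \cref{def:ac-upo-security}.
\begin{enumerate}
\item Using the equal-or-opposite decomposition from $\hybrid_1$ and $\hybrid_2$ of the SDE proof, we write $b_\bob=e\oplus d_\bob$ and $b_\charlie=e\oplus d_\charlie$. Here $d_\bob=\zeta$, and $d_\charlie=\zeta\oplus(b_\bob\oplus b_\charlie)$ is sampled jointly with $(z_{\mathsf{aux}},x_\bob,x_\charlie,w_\bob,w_\charlie)$, and $e=\langle u,t\rangle\oplus\langle v,s\rangle$. Condition \eqref{eq:ac-upo-uniform-marginals} is exactly what makes $e\oplus\zeta$ uniform and independent of everything else, including the auxiliary strings.
\item We puncture the PRF at $x_\bob,x_\charlie$ (a set of size at most two, with the Bob-priority convention when the points coincide). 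Via $\iO$, we rewrite $\widehat P$ so that on the punctured points its primal branch outputs a value depending on $\langle v,w\rangle$ and its dual branch a value depending on $\langle u,w\rangle$. In this form, $\mu_X(x_X)$ versus $C_k(x_X)$ is selected by $b_X$ exactly as $Q_{X,i}$ was in $\hybrid_4$ of the SDE proof.
\item We identify $(u,v)$ with the challenge pair of \cref{def:computational-dcmoe} and move everything else into the auxiliary-information generator. The pre-split auxiliary string contains $z_{\mathsf{aux}}$, $k$ (via running $\alice_0$), the punctured key, and obfuscations. The post-split strings are $(x_X,w_X,d_X)$ plus data needed to rebuild the program.
\end{enumerate}

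\textbf{Main obstacle.} The main difficulty is establishing the simulatability condition \eqref{eq:comp-dcmoe-aux-simulation}: the pre-split information must be statistically independent of $(A,u,v)$. Since $\widehat P$ hardcodes $x_\bob,x_\charlie$ and depends on $(u,v)$, we first switch $L$ to lossy mode using LWE. Then $\Hmin(x_X\mid z_{\mathsf{aux}},L(x_X))\geq\lambda^c-\lambda^{2c/3}$ by \cref{lem:average-min-entropy-facts}, and the leftover hash lemma makes $h(L(x_X))$ statistically close to uniform given the pre-split view. This lets us program $(u,v)$ into the hash output and replace the hardcoded points by simulated values. Handling both points simultaneously when they are correlated or equal needs care: the point-equality predicate must be revealed only after the split. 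We expect to use the fact that each marginal separately has entropy, which suffices to decouple each point from $(u,v)$.

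After this step, \cref{thm:comp-dcmoe-aux} bounds the success probability by $\frac12+\negl$, since each party outputs $\widehat b_X\oplus d_X$. For $\IDU$, $x$ is uniform and $h(x)$ is directly uniform, so the lossy function, and hence LWE, is unnecessary; one-way functions suffice.
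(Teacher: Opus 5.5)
Your overall architecture matches the paper's: one shared uniform bit plus $\Delta$, puncturing the PRF at the challenge points, programming the two branches through $\langle v,\cdot\rangle$ and $\langle u,\cdot\rangle$, hiding the points with a lossy function and a universal hash, and reducing to \cref{thm:comp-dcmoe-aux}. However, the step you flag as the main obstacle fails as you set it up, for three reasons.

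\textbf{The hash must be applied to the point, not to $L(x)$.} The program needs an exact point test. In the paper this is the hardcoded value $\eta_X=L_X(x_X)$, and the masked string is $\mathsf{str}_X=\mathsf{msg}(u,v)\oplus h_X(x_X)$. With your composition, $h(L(x_X))$ is a function of $L(x_X)$. The point test, and your own entropy bound, put $L(x_X)$ in the pre-split view, so $h(L(x_X))$ is determined there and $\mathsf{str}$ reveals $(u,v)$. The bound on $\Hmin(x_X\mid z_{\mathsf{aux}},L(x_X))$ only licenses the leftover hash lemma for $h(x_X)$. The paper also uses independent pairs $(L_\bob,h_\bob)$ and $(L_\charlie,h_\charlie)$, evaluates the Bob test first, and bounds the collision probability of $(h_\bob(x_\bob),h_\charlie(x_\charlie))$ by the marginal ones. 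That handles correlated or equal points with no special treatment of the equality predicate.

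\textbf{The punctured key cannot stay in the pre-split view.} You keep $K_{\{x_\bob,x_\charlie\}}$ there. \Cref{predef:puncprf} gives no hiding of the punctured set, and standard GGM-based punctured keys reveal the points. This destroys the min-entropy of $x_X$ and with it the simulation. The paper adds an un-puncturing hybrid, $\hybrid_7$: one of the tests $L_X(x)=\eta_X$ fires on every point of $\mathcal S$, so $\iO$ lets you replace $K_{\mathcal S}$ by the full key $K$.

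\textbf{The interface with \cref{def:comp-dcmoe-aux} is mis-assigned.} $\auxgen$ sees neither $\obfd{M_0},\obfd{M_1}$ nor $\alice_0$'s retained quantum state, so it cannot output the obfuscations or $k$. Only $(z_{\mathsf{aux}},\Delta,\eta_\bob,\eta_\charlie,\mathsf{str}_\bob,\mathsf{str}_\charlie)$ is simulated auxiliary information. $\widehat{\alice}$ runs $\alice_0$ and builds and obfuscates the programs itself. In particular, $\Delta$ and hence $d_X$ must be available before the split, not after, because the programs hardcode them. Relatedly, $L$ and $h$ belong only in the hybrids, not in the scheme.

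For the ``Moreover'' part, uniformity of $x$ does not make the lossy function unnecessary. $L$ is not there to create entropy; it provides a point test that, after a computational switch, no longer determines $x$. Any exact point test preserving $\iO$ functional equivalence makes the program description information-theoretically determine $x$, hence $h(x)$, hence $(u,v)=\mathsf{str}\oplus h(x)$. That violates the statistical simulatability required by \cref{def:comp-dcmoe-aux}.

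The paper instead replaces $\hybrid_6$--$\hybrid_8$ with key-robust non-committing encryption, obtainable from $\iO$ and one-way functions:
\begin{itemize}
\item Write $x=x_0\|x_1$, and hardcode $\mathsf{ct}=\mathsf{NCE.Enc}(k_{\mathsf{nce}},\mathsf{msg}(u,v))$, $\mathsf{str}=k_{\mathsf{nce}}\oplus x_0$, and $x_1$.
\item Test whether $z_1=x_1$ and $\mathsf{NCE.Dec}(\mathsf{str}\oplus z_0,\mathsf{ct})\neq\bot$; this is exact by key robustness.
\item Use NCE security to replace $\mathsf{ct}$ by a fake ciphertext independent of $(u,v)$.
\item Each recipient reconstructs the challenge point only after the split, as $x=(\mathsf{str}\oplus\mathsf{NCE.Open}(\mathsf{st},\mathsf{msg}(u,v)))\|x_1$.
\end{itemize}
Uniformity of $x_0$ is used precisely to justify defining the point after the fact. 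The final reduction is then to \cref{cor:comp-dcmoe}, without auxiliary information.
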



\subsection{Construction}
\label{sec:UPO-construction}
Fix an arbitrary constant $c>0$, with respect to which we want to prove \Cref{thm:correlated-upo-construction}. The UPO scheme that we will use to prove \Cref{thm:correlated-upo-construction} is the UPO construction of~\cite{ABH+26,CG26}. Next, we recall the construction of~\cite{ABH+26} (also considered for copy protection in~\cite{CG26}, independently) as follows.

Let $n:=\lceil\lambda^{c/3}\rceil$, let $V:=\FF_2^{4n}$, and let $\mathcal Y_\lambda=\zo^{\ell_{\mathsf{out}}(\lambda)}$. Let $\pprf$ be a puncturable PRF with domain $\mathcal X_\lambda$ and range $\mathcal Y_\lambda$.

\paragraph{\underline{$\upo.\Obf(1^\lambda,C_k)$}}
\begin{enumerate}
    \item Sample a uniformly random $2n$-dimensional subspace $A\leq V$ and independent $s,t\getsr V$.
    \item Sample $\obfd{M_0}\gets\io(M_0)$ and $\obfd{M_1}\gets\io(M_1)$, where $M_0,M_1$ are the following programs.

    \begin{mdframed}
        {\bf \underline{$M_0(a)$}}

        {\bf Hardcoded: $A+s$}
        \begin{enumerate}[label=\arabic*.]
            \item If $a\in A+s$, output $1$; otherwise, output $0$.
        \end{enumerate}
    \end{mdframed}

    \begin{mdframed}
        {\bf \underline{$M_1(a)$}}

        {\bf Hardcoded: $A^\perp+t$}
        \begin{enumerate}[label=\arabic*.]
            \item If $a\in A^\perp+t$, output $1$; otherwise, output $0$.
        \end{enumerate}
    \end{mdframed}

    \item Sample $K\gets\pprf.\gen(1^\lambda)$.
    \item Sample $\obfd{Q_0}\gets\io(Q_0)$ and $\obfd{Q_1}\gets\io(Q_1)$, where $Q_0,Q_1$ are the following programs.

    \begin{mdframed}
        {\bf \underline{$Q_0(a,x)$}}

        {\bf Hardcoded: $\obfd{M_0},C_k,K$}
        \begin{enumerate}[label=\arabic*.]
            \item If $\obfd{M_0}(a)=0$, output $\bot$.
            \item Otherwise, output $F_K(x)\oplus C_k(x)$.
        \end{enumerate}
    \end{mdframed}

    \begin{mdframed}
        {\bf \underline{$Q_1(a,x)$}}

        {\bf Hardcoded: $\obfd{M_1},K$}
        \begin{enumerate}[label=\arabic*.]
            \item If $\obfd{M_1}(a)=0$, output $\bot$.
            \item Otherwise, output $F_K(x)$.
        \end{enumerate}
        ~
    \end{mdframed}

    \item Output $\rho_k:=\left(\ket{A_{s,t}},\obfd{Q_0},\obfd{Q_1}\right)$.
\end{enumerate}

\paragraph{\underline{$\upo.\Eval(\rho_k,x)$}}
\begin{enumerate}
    \item Parse $\rho_k=\left(\ket{A_{s,t}},\obfd{Q_0},\obfd{Q_1}\right)$, and let $\regi$ denote the register containing $\ket{A_{s,t}}$.
    \item Using a reversible implementation of $\obfd{Q_0}$, coherently evaluate $\obfd{Q_0}(\cdot,x)$ on $\regi$ into a fresh output register, and measure the output register to obtain $y_0$.
    \item Apply $H^{\otimes 4n}$ to $\regi$, obtaining $\ket{A^\perp_{t,s}}$ up to a global phase.
    \item Using a reversible implementation of $\obfd{Q_1}$, coherently evaluate $\obfd{Q_1}(\cdot,x)$ on $\regi$ into a fresh output register, and measure the output register to obtain $y_1$.
    \item Apply $H^{\otimes 4n}$ to $\regi$, restoring $\ket{A_{s,t}}$.
    \item Output $(\rho_k,y_0\oplus y_1)$.
\end{enumerate}

\begin{lemma}[Correctness]
The above construction is perfectly correct.
\end{lemma}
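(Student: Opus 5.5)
The plan is to follow the template of the SDE correctness argument in \Cref{lem:sde-correctness}, now applied to the two evaluation programs $\obfd{Q_0}$ and $\obfd{Q_1}$.

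First, fix $k$, an input $x$, and an honest output $\rho_k=(\ket{A_{s,t}},\obfd{Q_0},\obfd{Q_1})$. We condition on the fact that $\io$ is perfectly correct (\cref{predef:io}). Then $\obfd{M_0}$ computes membership in $A+s$ and $\obfd{M_1}$ computes membership in $A^\perp+t$ on every input. Consequently $\obfd{Q_0}(a,x)=F_K(x)\oplus C_k(x)$ for every $a\in A+s$, and $\obfd{Q_1}(a,x)=F_K(x)$ for every $a\in A^\perp+t$.

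Second, we analyze the coherent evaluation step. The register $\regi$ is supported entirely on $A+s$, and the value of $\obfd{Q_0}(\cdot,x)$ is constant on that support. So the reversible evaluation maps $\ket{A_{s,t}}\ket{0}$ to $\ket{A_{s,t}}\ket{F_K(x)\oplus C_k(x)}$, which is a product state. Measuring the output register therefore yields $y_0=F_K(x)\oplus C_k(x)$ with certainty and leaves $\regi$ undisturbed.

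Third, applying $H^{\otimes 4n}$ gives $(-1)^{\langle s,t\rangle}\ket{A^\perp_{t,s}}$, which is supported on $A^\perp+t$. The same argument as in the second step gives $y_1=F_K(x)$ with certainty, again leaving $\regi$ in a product state. The final Hadamard transform then restores $\ket{A_{s,t}}$ up to a global phase. Hence $y_0\oplus y_1=C_k(x)$ with probability one, and the returned state equals $\rho_k$, which also yields reusability.

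The only point requiring care is the word ``perfectly'', since perfect correctness relies on the obfuscated membership programs being exactly correct. The $\io$ definition in \cref{predef:io} grants probability-one correctness, and $\obfd{Q_0},\obfd{Q_1}$ call $\obfd{M_0},\obfd{M_1}$ only as hardcoded subroutines, so no error arises. I would also remark that the same argument applies to the programmed circuits $G^{\mathsf{ac}}$ used in the security experiment, as stipulated in the UPO definitions.
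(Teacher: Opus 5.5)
Your proposal is correct and follows essentially the same argument as the paper: the coset state is supported on $A+s$ (and on $A^\perp+t$ after the Hadamard transform), so each coherent evaluation deterministically returns the constant value $F_K(x)\oplus C_k(x)$ (respectively $F_K(x)$) without disturbing the state, and the PRF masks cancel to give $C_k(x)$. Your explicit appeal to the probability-one correctness of $\io$ in \cref{predef:io} makes precise a point the paper leaves implicit.
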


\begin{proof}
Every basis vector in the support of $\ket{A_{s,t}}$ belongs to $A+s$. Hence the first coherent evaluation returns $y_0=F_K(x)\oplus C_k(x)$ and does not disturb the coset state. After applying $H^{\otimes 4n}$, every basis vector in the support of $\ket{A^\perp_{t,s}}$ belongs to $A^\perp+t$, and the second coherent evaluation returns $y_1=F_K(x)$. The final Hadamard transform restores the input state, so the evaluation algorithm outputs $(\rho_k,C_k(x))$ with certainty.
\end{proof}

\subsection{Proof of Security}

The proof follows the template used in the proof of~\cite[Theorem 6.3]{ABH+26}. There are two main differences, first we use a probabilistic argument to rewrite the experiment arbitrary correlated challenge bits, as an experiment with a single challenge bit (see the indistinguishability argument in the indistinguishability proof of $\hybrid_0$ and $\hybrid_1$, which crucially relies on the restriction (\Cref{eq:ac-upo-uniform-marginals}) of the UPO sampler definition (\Cref{def:ac-upo-sampler})), which is suitable for reducing to the computational MOE experiment $\compdcmoe$, secondly, analogous to the~\cite[Lemma 5.3]{ABH+26}, we prove a new simulation lemma (see \Cref{lem:correlated-upo-masked-string-simulation}) for the correlated UPO security definition that itself allows auxiliary information leakage to the adversaries, (see \Cref{def:ac-upo-security}), to show that the pre-split auxiliary information in our final hybrid can be simulated to reduce the hybrid to a valid instance of \compdcmoe.

Let $m:=8n$, and for $u,v\in V$ define $\mathsf{msg}(u,v):=u\mathbin\|v\in\zo^m$. Let $\mathcal H$ be an efficiently sampleable universal hash family from $\mathcal X_\lambda$ to $\zo^m$. Let $\mathsf{LF}=(\mathsf{LF.Gen}_{\mathsf{inj}},\mathsf{LF.Gen}_{\mathsf{loss}})$ be the injective/lossy function family from~\cite{ABH+26}, with domain $\mathcal X_\lambda$, such that every function sampled in injective mode is injective, the two modes are computationally indistinguishable against QPT algorithms, and every function sampled in lossy mode has image size at most $2^{\ell_{\mathsf{img}}}$, where $\ell_{\mathsf{img}}\leq\lambda^{2c/3}$. We use the puncturable PRF and subspace-hiding obfuscation instantiated in~\cite{ABH+26}. For $a\in\bit$ and $z\in\mathcal Y_\lambda$, define $a\cdot z:=z$ if $a=1$, and $a\cdot z:=0^{\ell_{\mathsf{out}}}$ if $a=0$. Set $\varepsilon:=2^{-\lambda^{2c/3}}$. For all sufficiently large $\lambda$, $\lambda^c\geq 2m+2\ell_{\mathsf{img}}+2\log(1/\varepsilon)+3$. For jointly distributed classical random variables $X,Z$, we use average conditional min-entropy, defined as
\begin{equation}
\Hmin(X\mid Z)
:=
-\log\left(
\sum_z\Pr[Z=z]\max_x\Pr[X=x\mid Z=z]
\right).
\end{equation}
In particular, if $Y$ has support size at most $2^\ell$, then $\Hmin(X\mid Z,Y)\geq\Hmin(X\mid Z)-\ell$. 

\begin{proof}[Proof of \Cref{thm:correlated-upo-construction}]
We need to prove $\secparam^c$-entropic correlated UPO security for $\upo$.

Fix any correlated UPO sampler $\sampler$ satisfying $\Hmin(x_X\mid z_{\mathsf{aux}})\geq\lambda^c$ for every $X\in\{\bob,\charlie\}$, and a QPT triplet of adversaries $\advupo=(\alice,\bob,\charlie)$. We view $\alice$ as a two-stage algorithm with retained state. The pre-split auxiliary string $z_{\mathsf{aux}}$ is given to the first stage of $\alice$ before it chooses $(k,\mu_\bob,\mu_\charlie)$, while $w_\bob,w_\charlie$ are carried unchanged through all the hybrids and are given to $\bob,\charlie$, respectively, after the split.
  We consider the following set of hybrids.
\paragraph{$\hybrid_0$: Real experiment.} This is the real experiment $\gam{\acupo_{\Pi,\advupo,\sampler,\mathcal C}}(1^\lambda)$ from \Cref{def:ac-upo-security}, instantiated with $\upo$ as the construction.

\paragraph{$\hybrid_1$: Sample the challenge bits using one uniform bit.} Run $(z_{\mathsf{aux}},x_\bob,x_\charlie,b'_\bob,b'_\charlie,w_\bob,w_\charlie)\gets\sampler(1^\lambda)$, set $\Delta:=b'_\bob+b'_\charlie$, discard $b'_\bob,b'_\charlie$, sample $\widetilde e\getsr\bit$, and set $b_\bob:=\widetilde e$ and $b_\charlie:=\widetilde e+\Delta$. The remaining steps are unchanged.

We show that $\hybrid_0$ and $\hybrid_1$ are identically distributed. Fix $(\bar z_{\mathsf{aux}},\bar x_\bob,\bar x_\charlie,\bar w_\bob,\bar w_\charlie)$ in the support of $(z_{\mathsf{aux}},x_\bob,x_\charlie,w_\bob,w_\charlie)$, and let
\begin{equation*}
p_{ij}:=\Pr\!\left[
b_\bob=i,b_\charlie=j
\,\middle|\,
\substack{
z_{\mathsf{aux}}=\bar z_{\mathsf{aux}},\ x_\bob=\bar x_\bob,\ x_\charlie=\bar x_\charlie,\\
w_\bob=\bar w_\bob,\ w_\charlie=\bar w_\charlie
}
\right].
\end{equation*}
By the condition (\Cref{eq:ac-upo-uniform-marginals}) in \Cref{def:ac-upo-sampler}, it is easy to see that $p_{00}=p_{11}$ and $p_{01}=p_{10}$, and hence conditioned on $b_\bob\oplus b_\charlie=b$ for some bit $b$, the marginal distribution on $b_\bob$ as well as $b_\charlie$ remains uniform. Therefore, we conclude that conditioned on $(z_{\mathsf{aux}},x_\bob,x_\charlie,w_\bob,w_\charlie,\Delta)$, the bit $b_\bob$ is uniform and $b_\charlie=b_\bob+\Delta$.
The first-stage output and retained state of $\alice$ are obtained from $z_{\mathsf{aux}}$ and independent randomness, so adjoining them preserves this identical distribution.

\paragraph{$\hybrid_2$: Write the common uniform bit as the coset parity plus a uniform mask.} In the execution of $\Obf$, after sampling $(A,s,t)$, sample $u\getsr A\setminus\{0\}$, $v\getsr A^\perp\setminus\{0\}$, and $\zeta\getsr\bit$, and define $\alpha:=\langle v,s\rangle$, $\beta:=\langle u,t\rangle$, and $e:=\alpha+\beta$. Set $b_\bob:=e+\zeta$ and $b_\charlie:=e+\zeta+\Delta$. Since $e+\zeta$ is uniform for every fixed value of the remaining variables, $\hybrid_1$ and $\hybrid_2$ are identically distributed. Define $d_\bob:=\zeta$, $d_\charlie:=\zeta+\Delta$, and $\mathsf{diff}_X(x):=C_k(x)\oplus\mu_X(x)$ for $X\in\{\bob,\charlie\}$. Then $b_X=\alpha+\beta+d_X$.

\paragraph{$\hybrid_3$: Puncture the PRF and hardcode its values at the challenge points.} Let $\mathcal S:=\{x_\bob,x_\charlie\}$, and let $K_{\mathcal S}$ be the puncturable PRF key punctured at $\mathcal S$. If $x_\bob\neq x_\charlie$, let $y_X:=F_K(x_X)$ for every $X\in\{\bob,\charlie\}$. If $x_\bob=x_\charlie$, let $y_\bob:=F_K(x_\bob)$ and set $y_\charlie:=y_\bob$. Replace $Q_0,Q_1$ by the following programs $P_0,P_1$.

\begin{mdframed}
{\bf \underline{$P_0(a,x)$}}

{\bf Hardcoded: $\obfd{M_0},C_k,\mu_\bob,\mu_\charlie,K_{\mathcal S},x_\bob,x_\charlie,b_\bob,b_\charlie,y_\bob,y_\charlie$}
\begin{enumerate}[label=\arabic*.]
    \item If $\obfd{M_0}(a)=0$, output $\bot$.
    \item If $x=x_\bob$, output
    $y_\bob\oplus C_k(x)\oplus
    b_\bob\cdot\mathsf{diff}_\bob(x)$.
    \item If $x=x_\charlie$ and $x_\charlie\neq x_\bob$, output
    $y_\charlie\oplus C_k(x)\oplus
    b_\charlie\cdot\mathsf{diff}_\charlie(x)$.
    \item Output $F_{K_{\mathcal S}}(x)\oplus C_k(x)$.
\end{enumerate}
\end{mdframed}

\begin{mdframed}
{\bf \underline{$P_1(a,x)$}}

{\bf Hardcoded: $\obfd{M_1},K_{\mathcal S},x_\bob,x_\charlie,y_\bob,y_\charlie$}
\begin{enumerate}[label=\arabic*.]
    \item If $\obfd{M_1}(a)=0$, output $\bot$.
    \item If $x=x_\bob$, output $y_\bob$.
    \item If $x=x_\charlie$ and $x_\charlie\neq x_\bob$, output
    $y_\charlie$.
    \item Output $F_{K_{\mathcal S}}(x)$.
\end{enumerate}
\end{mdframed}

The relevant circuits in $\hybrid_2$ and $\hybrid_3$ have the same functionality. At $x=x_\bob$, the programs in $\hybrid_3$ output $y_\bob\oplus C_k(x_\bob)\oplus b_\bob\cdot\mathsf{diff}_\bob(x_\bob)$ and $y_\bob$, respectively, which are exactly the corresponding outputs of the programs in $\hybrid_2$. This also covers the case $x_\bob=x_\charlie$, since the Bob line is evaluated first. If $x_\bob\neq x_\charlie$, the same argument applies at $x=x_\charlie$. Finally, for $x\notin\mathcal S$, punctured correctness gives $F_{K_{\mathcal S}}(x)=F_K(x)$. Therefore, by $\io$-security, we conclude that $\hybrid_2$ and $\hybrid_3$ are computationally indistinguishable.

\paragraph{$\hybrid_4$: Replace the hardcoded PRF values by uniform strings.} If $x_\bob\neq x_\charlie$, replace $(F_K(x_\bob),F_K(x_\charlie))$ by independent $y_\bob,y_\charlie\getsr\mathcal Y_\lambda$. If $x_\bob=x_\charlie$, sample $y_\bob\getsr\mathcal Y_\lambda$ and set $y_\charlie:=y_\bob$. Puncturable-PRF security implies that $\hybrid_3$ and $\hybrid_4$ are computationally indistinguishable.

\paragraph{$\hybrid_5$: Rewrite the uniform values.} Sample independent $q_\bob,q_\charlie\getsr\mathcal Y_\lambda$. If $x_\bob\neq x_\charlie$, then for every $X\in\{\bob,\charlie\}$ set
\begin{equation}
y_X:=q_X\oplus\beta\cdot\mathsf{diff}_X(x_X).
\end{equation}
If $x_\bob=x_\charlie$, set
\begin{equation}
y_\bob=y_\charlie
:=
q_\bob\oplus\beta\cdot\mathsf{diff}_\bob(x_\bob),
\end{equation}
and let $q_\charlie$ be an unused independent uniform string. Since the values $y_X$ in $\hybrid_4$ are uniform, this is an exact change of variables. Replace $P_0,P_1$ by the following programs.

\begin{mdframed}
{\bf \underline{$P'_0(a,x)$}}

{\bf Hardcoded: $\obfd{M_0},C_k,\mu_\bob,\mu_\charlie,K_{\mathcal S},x_\bob,x_\charlie,q_\bob,q_\charlie,\alpha,d_\bob,d_\charlie$}
\begin{enumerate}[label=\arabic*.]
    \item If $\obfd{M_0}(a)=0$, output $\bot$.
    \item If $x=x_\bob$, output
    $q_\bob\oplus C_k(x)\oplus
    (\alpha+d_\bob)\cdot\mathsf{diff}_\bob(x)$.
    \item If $x=x_\charlie$ and $x_\charlie\neq x_\bob$, output
    $q_\charlie\oplus C_k(x)\oplus
    (\alpha+d_\charlie)\cdot\mathsf{diff}_\charlie(x)$.
    \item Output $F_{K_{\mathcal S}}(x)\oplus C_k(x)$.
\end{enumerate}
\end{mdframed}

\begin{mdframed}
{\bf \underline{$P'_1(a,x)$}}

{\bf Hardcoded: $\obfd{M_1},C_k,\mu_\bob,\mu_\charlie,K_{\mathcal S},x_\bob,x_\charlie,q_\bob,q_\charlie,\beta$}
\begin{enumerate}[label=\arabic*.]
    \item If $\obfd{M_1}(a)=0$, output $\bot$.
    \item If $x=x_\bob$, output
    $q_\bob\oplus
    \beta\cdot\mathsf{diff}_\bob(x)$.
    \item If $x=x_\charlie$ and $x_\charlie\neq x_\bob$, output
    $q_\charlie\oplus
    \beta\cdot\mathsf{diff}_\charlie(x)$.
    \item Output $F_{K_{\mathcal S}}(x)$.
\end{enumerate}
\end{mdframed}

The relevant circuits in $\hybrid_4$ and $\hybrid_5$ have the same functionality. Suppose first that $x_\bob\neq x_\charlie$ and fix $X\in\{\bob,\charlie\}$. By the definition of $y_X$,
\begin{equation}
q_X\oplus\beta\cdot\mathsf{diff}_X(x_X)=y_X.
\end{equation}
Moreover, since $b_X=\alpha+\beta+d_X$,
\begin{align*}
&q_X\oplus C_k(x_X)\oplus
(\alpha+d_X)\cdot\mathsf{diff}_X(x_X)\\
&\qquad=
y_X\oplus C_k(x_X)\oplus
(\alpha+\beta+d_X)\cdot\mathsf{diff}_X(x_X)\\
&\qquad=
y_X\oplus C_k(x_X)\oplus
b_X\cdot\mathsf{diff}_X(x_X).
\end{align*}
Thus, the individual outputs of $P'_0,P'_1$ at $x_X$ are exactly the corresponding outputs of $P_0,P_1$. If $x_\bob=x_\charlie$, the Bob line is evaluated first, and the same calculation applies with $X=\bob$. At every other point, the programs are unchanged. Therefore, by $\io$-security we conclude that $\hybrid_4$ and $\hybrid_5$ are computationally indistinguishable.

\paragraph{$\hybrid_6$: Replace the explicit point comparisons by injective-function comparisons.}
For every $X\in\{\bob,\charlie\}$, sample
$h_X\getsr\mathcal H$ and
$L_X\gets\mathsf{LF.Gen}_{\mathsf{inj}}(1^\lambda)$ independently,
and set
$\eta_X:=L_X(x_X)$ and
$\mathsf{str}_X:=\mathsf{msg}(u,v)\oplus h_X(x_X)$.
Replace $P'_0,P'_1$ by the following programs $R_0,R_1$.

\begin{mdframed}
{\bf \underline{$R_0(a,x)$}}

{\bf Hardcoded: $\obfd{M_0},C_k,\mu_\bob,\mu_\charlie,
K_{\mathcal S},q_\bob,q_\charlie,d_\bob,d_\charlie,
L_\bob,L_\charlie,\eta_\bob,\eta_\charlie,
h_\bob,h_\charlie,\mathsf{str}_\bob,\mathsf{str}_\charlie$}
\begin{enumerate}[label=\arabic*.]
    \item If $\obfd{M_0}(a)=0$, output $\bot$.
    \item If $L_\bob(x)=\eta_\bob$, parse
    $(u'_\bob,v'_\bob):=
    \mathsf{str}_\bob\oplus h_\bob(x)$ and output
    \[
    q_\bob\oplus C_k(x)\oplus
    \bigl(\langle v'_\bob,a\rangle+d_\bob\bigr)
    \cdot\mathsf{diff}_\bob(x).
    \]
    \item If $L_\charlie(x)=\eta_\charlie$, parse
    $(u'_\charlie,v'_\charlie):=
    \mathsf{str}_\charlie\oplus h_\charlie(x)$ and output
    \[
    q_\charlie\oplus C_k(x)\oplus
    \bigl(\langle v'_\charlie,a\rangle+d_\charlie\bigr)
    \cdot\mathsf{diff}_\charlie(x).
    \]
    \item Output $F_{K_{\mathcal S}}(x)\oplus C_k(x)$.
\end{enumerate}
\end{mdframed}

\begin{mdframed}
{\bf \underline{$R_1(a,x)$}}

{\bf Hardcoded: $\obfd{M_1},C_k,\mu_\bob,\mu_\charlie,
K_{\mathcal S},q_\bob,q_\charlie,
L_\bob,L_\charlie,\eta_\bob,\eta_\charlie,
h_\bob,h_\charlie,\mathsf{str}_\bob,\mathsf{str}_\charlie$}
\begin{enumerate}[label=\arabic*.]
    \item If $\obfd{M_1}(a)=0$, output $\bot$.
    \item If $L_\bob(x)=\eta_\bob$, parse
    $(u'_\bob,v'_\bob):=
    \mathsf{str}_\bob\oplus h_\bob(x)$ and output
    \[
    q_\bob\oplus
    \langle u'_\bob,a\rangle
    \cdot\mathsf{diff}_\bob(x).
    \]
    \item If $L_\charlie(x)=\eta_\charlie$, parse
    $(u'_\charlie,v'_\charlie):=
    \mathsf{str}_\charlie\oplus h_\charlie(x)$ and output
    \[
    q_\charlie\oplus
    \langle u'_\charlie,a\rangle
    \cdot\mathsf{diff}_\charlie(x).
    \]
    \item Output $F_{K_{\mathcal S}}(x)$.
\end{enumerate}
\end{mdframed}

The relevant circuits in $\hybrid_5$ and $\hybrid_6$ have the
same functionality. Suppose first that
$x_\bob\neq x_\charlie$. At $x=x_X$, the equality
$L_X(x_X)=\eta_X$ holds, while the equality corresponding to the
other recipient does not hold by injectivity. Moreover,
$\mathsf{str}_X\oplus h_X(x_X)=u\|v$. For every
$a\in A+s$, it holds that
$\langle v,a\rangle=\langle v,s\rangle=\alpha$, and for every
$a\in A^\perp+t$, it holds that
$\langle u,a\rangle=\langle u,t\rangle=\beta$. Therefore, the
individual outputs of $R_0$ and $R_1$ at $x_X$ are exactly the
corresponding outputs of $P'_0$ and $P'_1$.

If $x_\bob=x_\charlie$, both equalities hold at the common point,
but the Bob equality is evaluated first in both programs. Hence,
the programs use the Bob values independently of $\Delta$, exactly
as prescribed by \Cref{eq:ac-upo-puncturing}. At every
$x\notin\mathcal S$, neither equality holds, and the final lines of
the programs are evaluated. Thus, the two pairs of programs are
functionally equivalent. Therefore, by $\io$-security, we conclude that $\hybrid_5$ and $\hybrid_6$ are computationally indistinguishable.

\paragraph{$\hybrid_7$: Replace the punctured PRF key with the unpunctured key.} Replace $K_{\mathcal S}$ by $K$ in the final line of $R_0,R_1$. The relevant circuits have the same functionality, since one of the preceding equality tests holds at every point in $\mathcal S$, and $F_{K_{\mathcal S}}(x)=F_K(x)$ for every $x\notin\mathcal S$. Therefore, by $\io$-security, we conclude that $\hybrid_6$ and $\hybrid_7$ are computationally indistinguishable.

\paragraph{$\hybrid_8$: Switch the two functions to lossy mode.} Replace $L_\bob$ and $L_\charlie$, one at a time, by functions sampled using $\mathsf{LF.Gen}_{\mathsf{loss}}(1^\lambda)$, and recompute $\eta_X:=L_X(x_X)$ after each replacement. Injective/lossy mode indistinguishability implies that $\hybrid_7$ and $\hybrid_8$ are computationally indistinguishable.

Define the auxiliary-information generator (see \Cref{def:comp-dcmoe-aux}) as follows. The algorithm $\auxsetup(1^\lambda)$ samples $h_\bob,h_\charlie\getsr\mathcal H$ and independent $L_\bob,L_\charlie\gets\mathsf{LF.Gen}_{\mathsf{loss}}(1^\lambda)$, and outputs $pp:=(L_\bob,L_\charlie,h_\bob,h_\charlie)$. On input $(pp,A,u,v)$, the algorithm $\auxgen$ runs $(z_{\mathsf{aux}},x_\bob,x_\charlie,b'_\bob,b'_\charlie,w_\bob,w_\charlie)\gets\sampler(1^\lambda)$, sets $\Delta:=b'_\bob+b'_\charlie$, computes $\eta_X:=L_X(x_X)$ and $\mathsf{str}_X:=\mathsf{msg}(u,v)\oplus h_X(x_X)$ for $X\in\{\bob,\charlie\}$, and outputs
\begin{align*}
z_{1,\bob}&:=(z_{\mathsf{aux}},\Delta,\eta_\bob,\eta_\charlie,\mathsf{str}_\bob,\mathsf{str}_\charlie),&z_{2,\bob}&:=(x_\bob,w_\bob),\\
z_{1,\charlie}&:=\bot,&z_{2,\charlie}&:=(x_\charlie,w_\charlie).
\end{align*}
The simulator $\auxsim(pp)$ runs $(z_{\mathsf{aux}},x_\bob,x_\charlie,b'_\bob,b'_\charlie,w_\bob,w_\charlie)\gets\sampler(1^\lambda)$, computes $\Delta,\eta_\bob,\eta_\charlie$, discards $w_\bob,w_\charlie$, samples independent $U_\bob,U_\charlie\getsr\zo^m$, and outputs $\widetilde z_{1,\bob}:=(z_{\mathsf{aux}},\Delta,\eta_\bob,\eta_\charlie,U_\bob,U_\charlie)$ and $\widetilde z_{1,\charlie}:=\bot$. Next we show the statistical simulatability condition for the above auxiliary-information generator, as requried in \Cref{def:comp-dcmoe-aux}.

\begin{lemma}[Statistical simulation of the pre-split auxiliary information]
\label{lem:correlated-upo-masked-string-simulation}
For every fixed subspace $A\leq V$, $u\in A\setminus\{0\}$, and
$v\in A^\perp\setminus\{0\}$, the auxiliary-information algorithms
defined above satisfy the statistical simulatability condition in
\Cref{eq:comp-dcmoe-aux-simulation} with statistical distance at
most $\varepsilon$.
\end{lemma}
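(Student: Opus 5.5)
Both distributions share $(pp,z_{\mathsf{aux}},\Delta,\eta_\bob,\eta_\charlie)$ exactly: $\auxsim$ reruns the same sampler, recomputes $\Delta$ and the lossy images the same way, and sets $\widetilde z_{1,\charlie}=z_{1,\charlie}=\bot$. They differ only in the last two coordinates, which are $(\mathsf{str}_\bob,\mathsf{str}_\charlie)=(\mathsf{msg}(u,v)\oplus h_\bob(x_\bob),\,\mathsf{msg}(u,v)\oplus h_\charlie(x_\charlie))$ in the real case and $(U_\bob,U_\charlie)$ in the simulated case. Since $(A,u,v)$ is fixed, XORing with the fixed string $\mathsf{msg}(u,v)$ is a bijection on $\zo^m$. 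It therefore suffices to show that
\[(h_\bob,h_\charlie,L_\bob,L_\charlie,z_{\mathsf{aux}},\Delta,\eta_\bob,\eta_\charlie,h_\bob(x_\bob),h_\charlie(x_\charlie))\]
is close to the same tuple with the two hash values replaced by independent uniform strings. Statistical distance is preserved under the bijection and under discarding $w_\bob,w_\charlie$.

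We proceed by a two-step hybrid, applying the leftover hash lemma (\Cref{lem:average-min-entropy-facts}) once per party.

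\emph{Step 1 (Bob).} Let $Z_\bob:=(L_\bob,L_\charlie,h_\charlie,z_{\mathsf{aux}},\Delta,\eta_\bob,\eta_\charlie,h_\charlie(x_\charlie))$. The lossy functions and $h_\charlie$ are independent of $x_\bob$ given $z_{\mathsf{aux}}$, so they cost no entropy. By the leakage bound, $\Delta$ costs at most $1$ bit, $\eta_\bob$ and $\eta_\charlie$ cost at most $\ell_{\mathsf{img}}$ bits each (lossy mode has image size at most $2^{\ell_{\mathsf{img}}}$), and $h_\charlie(x_\charlie)$ costs at most $m$ bits. This gives
\[\Hmin(x_\bob\mid Z_\bob)\geq\lambda^c-1-2\ell_{\mathsf{img}}-m.\]
Since $h_\bob$ is sampled independently of $(x_\bob,Z_\bob)$, the leftover hash lemma applies provided this quantity is at least $m+2\log(1/\varepsilon')$. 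This holds by the parameter inequality $\lambda^c\geq 2m+2\ell_{\mathsf{img}}+2\log(1/\varepsilon)+3$ stated before the proof. So $h_\bob(x_\bob)$ can be replaced by $U_\bob$ at statistical cost $\varepsilon'$.

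\emph{Step 2 (Charlie).} Repeat with $Z_\charlie:=(L_\bob,L_\charlie,h_\bob,z_{\mathsf{aux}},\Delta,\eta_\bob,\eta_\charlie,U_\bob)$. Here $U_\bob$ is independent of everything and costs no entropy, so this step is even slightly easier. This replaces $h_\charlie(x_\charlie)$ by $U_\charlie$. The triangle inequality then bounds the total distance by $2\varepsilon'$.

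The main obstacle is the bookkeeping needed to land exactly at $\varepsilon$ rather than $2\varepsilon$. Two routes are available. The first is to run each step with target error $\varepsilon/2$, which costs $2$ extra bits of required entropy; the slack of $+3$ together with the step-2 savings (no $m$ loss there) should cover this. If the accounting is tight, the second route is to apply the leftover hash lemma once to the concatenated hash $(h_\bob,h_\charlie)$ evaluated on the pair $(x_\bob,x_\charlie)$. This only needs $\Hmin$ of the pair, which is at least $\Hmin(x_\bob\mid z_{\mathsf{aux}})$, but it works only when $x_\bob\ne x_\charlie$, so the hybrid argument is preferable. A further point to check is that conditioning on $\Delta$ (derived from the challenge bits) is legitimate: it is a $1$-bit leakage, so it costs exactly one bit via the leakage bound.
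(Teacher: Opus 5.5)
Your proof is correct, but it is organized differently from the paper's. The paper does not use the leftover hash lemma as a black box. After conditioning on $(Z_{\mathsf{aux}},\Delta,\eta_\bob,\eta_\charlie)$ with $L_\bob,L_\charlie$ fixed, it directly bounds the expected collision probability of the joint tuple $(h_\bob,h_\charlie,h_\bob(X_\bob),h_\charlie(X_\charlie))$ by $2^{-2m}+2^{-m}(\operatorname{cp}_\bob+\operatorname{cp}_\charlie)+\operatorname{cp}_{\bob,\charlie}$. It then uses $\operatorname{cp}_{\bob,\charlie}\leq\min\{\operatorname{cp}_\bob,\operatorname{cp}_\charlie\}$ and converts to statistical distance via Jensen, so both parties are handled in one calculation.

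Your two-step hybrid instead applies \Cref{lem:average-min-entropy-facts} once per party, paying $m$ bits in the first step for conditioning on $h_\charlie(x_\charlie)$. Both routes inherently need $\lambda^c\gtrsim 2m$, since when $x_\bob=x_\charlie$ all $2m$ output bits must come from one point's entropy; this is why the parameter inequality contains $2m$.

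Your hedged accounting in fact closes exactly. With $\varepsilon'=\varepsilon/2$, Step 1 requires $\lambda^c-1-2\ell_{\mathsf{img}}-m\geq m+2\log(1/\varepsilon)+2$, which is precisely $\lambda^c\geq 2m+2\ell_{\mathsf{img}}+2\log(1/\varepsilon)+3$. Step 2 has $m$ bits to spare, so you land at $\varepsilon$ without needing any Step-2 savings. Your route is more modular and uses only the stated preliminaries; the paper's is a single computation with a little slack in the constant.

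One correction to your aside on the alternative route. Concatenating the hashes does not fail because $x_\bob$ may equal $x_\charlie$: if $x_\bob=x_\charlie$ identically, $(h_\bob,h_\charlie)$ applied to the common point is universal into $\zo^{2m}$. It fails because two distinct pairs can agree in exactly one coordinate, where the collision probability is only $2^{-m}$. The paper's proof is exactly that route, done correctly by carrying the extra $2^{-m}(\operatorname{cp}_\bob+\operatorname{cp}_\charlie)$ term.
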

\begin{proof}[Proof of \Cref{lem:correlated-upo-masked-string-simulation}]
Let $(Z_{\mathsf{aux}},X_\bob,X_\charlie,b_\bob,b_\charlie,W_\bob,W_\charlie)\gets\sampler(1^\lambda)$ and set $\Delta:=b_\bob+b_\charlie$. The variables $W_\bob,W_\charlie$ are carried as post-split auxiliary information and do not occur in the pre-split strings. Condition on the descriptions of $L_\bob,L_\charlie$, set $\eta_X:=L_X(X_X)$ for every $X\in\{\bob,\charlie\}$, and let $Z:=(\Delta,\eta_\bob,\eta_\charlie)$. Since the descriptions of $L_\bob,L_\charlie$ are independent of $(Z_{\mathsf{aux}},X_\bob,X_\charlie,\Delta)$ and $|\operatorname{Supp}(Z)|\leq 2^{1+2\ell_{\mathsf{img}}}$, the support-size leakage bound for average conditional min-entropy gives $\Hmin(X_X\mid Z_{\mathsf{aux}},Z)\geq\lambda^c-1-2\ell_{\mathsf{img}}$ for every $X\in\{\bob,\charlie\}$.

For every fixed $(Z_{\mathsf{aux}},Z)=(z_{\mathsf{aux}},z)$, let $\operatorname{cp}_\bob(z_{\mathsf{aux}},z)$, $\operatorname{cp}_\charlie(z_{\mathsf{aux}},z)$, and $\operatorname{cp}_{\bob,\charlie}(z_{\mathsf{aux}},z)$ denote the collision probabilities of $X_\bob$, $X_\charlie$, and $(X_\bob,X_\charlie)$, respectively, conditioned on $(Z_{\mathsf{aux}},Z)=(z_{\mathsf{aux}},z)$. For every $X\in\{\bob,\charlie\}$,
\begin{align*}
\EE_{Z_{\mathsf{aux}},Z}[\operatorname{cp}_X(Z_{\mathsf{aux}},Z)]
&\leq\sum_{z_{\mathsf{aux}},z}\Pr[Z_{\mathsf{aux}}=z_{\mathsf{aux}},Z=z]\max_x\Pr[X_X=x\mid Z_{\mathsf{aux}}=z_{\mathsf{aux}},Z=z]\\
&=2^{-\Hmin(X_X\mid Z_{\mathsf{aux}},Z)}
\leq 2^{-(\lambda^c-1-2\ell_{\mathsf{img}})}.
\end{align*}
Moreover, $\operatorname{cp}_{\bob,\charlie}(z_{\mathsf{aux}},z)\leq\min\{\operatorname{cp}_\bob(z_{\mathsf{aux}},z),\operatorname{cp}_\charlie(z_{\mathsf{aux}},z)\}$ for every $(z_{\mathsf{aux}},z)$.

Independence and universality of $h_\bob,h_\charlie$ give, for every fixed $(Z_{\mathsf{aux}},Z)=(z_{\mathsf{aux}},z)$,
\begin{align*}
&\EE_{h_\bob,h_\charlie}\left[\operatorname{cp}\bigl(h_\bob(X_\bob),h_\charlie(X_\charlie)\bigr)\mid Z_{\mathsf{aux}}=z_{\mathsf{aux}},Z=z\right]\\
&\qquad\leq 2^{-2m}+2^{-m}\bigl(\operatorname{cp}_\bob(z_{\mathsf{aux}},z)+\operatorname{cp}_\charlie(z_{\mathsf{aux}},z)\bigr)+\operatorname{cp}_{\bob,\charlie}(z_{\mathsf{aux}},z).
\end{align*}
Averaging over $(Z_{\mathsf{aux}},Z)$ therefore gives
\begin{align*}
&\EE_{Z_{\mathsf{aux}},Z,h_\bob,h_\charlie}\left[\operatorname{cp}\bigl(h_\bob(X_\bob),h_\charlie(X_\charlie)\bigr)\right]\\
&\qquad\leq 2^{-2m}+\left(1+2^{1-m}\right)2^{-(\lambda^c-1-2\ell_{\mathsf{img}})}.
\end{align*}
The collision-probability bound on statistical distance, followed by Jensen's inequality, gives statistical distance at most
\begin{equation}
\frac12\sqrt{2^{2m-(\lambda^c-1-2\ell_{\mathsf{img}})}\left(1+2^{1-m}\right)}\leq\varepsilon,
\end{equation}
where the last inequality follows from the choice of parameters. Finally, XORing both hash values with the fixed string $\mathsf{msg}(u,v)$ does not change statistical distance. Hence the real and simulated pre-split auxiliary-information distributions satisfy the statistical simulatability condition in \Cref{def:comp-dcmoe-aux}.
\end{proof}

We construct a QPT triplet of adversaries $\widehat\advmoe=(\widehat\alice,\widehat\bob,\widehat\charlie)$ for \Cref{thm:comp-dcmoe-aux}. The algorithm $\widehat\alice$ receives $\ket{A_{s,t}}$, $\obfd{M_0},\obfd{M_1}$, $pp$, and the pre-split auxiliary information, from which it parses $z_{\mathsf{aux}},\Delta,\eta_\bob,\eta_\charlie,\mathsf{str}_\bob,\mathsf{str}_\charlie$. It runs the first stage of $\alice$ on $(1^\lambda,z_{\mathsf{aux}})$ to obtain $(k,\mu_\bob,\mu_\charlie)$ and retained state, samples $\zeta\getsr\bit$, $q_\bob,q_\charlie\getsr\mathcal Y_\lambda$, and $K\gets\pprf.\gen(1^\lambda)$, sets $d_\bob:=\zeta$ and $d_\charlie:=\zeta+\Delta$, constructs the two lossy-mode programs from $\hybrid_8$, obfuscates them using $\io$, and gives $\rho_k:=(\ket{A_{s,t}},\obfd{R_0},\obfd{R_1})$ to the continuation of $\alice$ together with its retained state. When $\alice$ outputs its two registers, $\widehat\alice$ appends a classical copy of $\zeta$ to the register sent to $\widehat\bob$ and classical copies of $(\zeta,\Delta)$ to the register sent to $\widehat\charlie$.

After the split, $\widehat\bob$ receives $(u,v,x_\bob,w_\bob)$, runs $\bob$ on $(x_\bob,w_\bob)$, and outputs $\widehat b_\bob+\zeta$. The algorithm $\widehat\charlie$ receives $(u,v,x_\charlie,w_\charlie)$, runs $\charlie$ on $(x_\charlie,w_\charlie)$, and outputs $\widehat b_\charlie+\zeta+\Delta$. Since $b_\bob=e+\zeta$, $b_\charlie=e+\zeta+\Delta$, and $e=\langle u,t\rangle+\langle v,s\rangle$, the winning event in $\hybrid_8$ is identical to the winning event of $\widehat\advmoe$ in the computational $\dcmoe$ experiment. By \Cref{thm:comp-dcmoe-aux}, $\Pr[\hybrid_8=1]\leq\frac12+\negl(\lambda)$. Combining the preceding hybrid transitions and the statistical error $\varepsilon$ proves the theorem.

\paragraph{Changes for the identical-uniform challenge distribution.}
For the $\IDU$-generalized UPO special case, we replace the injective/lossy-function and universal-hashing hybrids $\hybrid_6$--$\hybrid_8$, together with \Cref{lem:correlated-upo-masked-string-simulation}, by the NCE-based hybrids of~\cite[Theorem~5.2]{ABH+26}. In particular, for the common uniform challenge point $x$, we sample one key-robust NCE key, encrypt $\mathsf{msg}(u,v)$ under this key, and use the resulting NCE ciphertext and masked key string in both modified programs. NCE correctness and key robustness give the required functionality equivalence to invoke $\iO$ security, while the non-committing property gives a simulation of the pre-split information that is independent of $(u,v)$. The preceding puncturable PRF and $\io$ hybrids and the final reduction to computational $\dcmoe$ remain unchanged. Since the required NCE can be instantiated from post-quantum one-way functions, the LWE assumption can be replaced with one-way functions in this case. For completeness, we give the formal proof of the ``Moreover'' part in \Cref{sec:alt-proof-upo}.

\end{proof}

\subsubsection{Formal proof of the moreover part of \texorpdfstring{\Cref{thm:correlated-upo-construction}}{Theorem 6}}
\label{sec:alt-proof-upo}
We need the following primitive.

\begin{definition}[Key-robust one-time non-committing encryption]
A key-robust one-time non-committing encryption scheme $\mathsf{NCE}=(\mathsf{NCE.Gen},\mathsf{NCE.Enc},\mathsf{NCE.Dec},\mathsf{NCE.Fake},\mathsf{NCE.Open})$ with message space $\zo^{\ell_{\mathsf{NCE-msg}}}$ and key space $\zo^{\ell_{\mathsf{NCE-key}}}$ consists of five PPT algorithms as follows.
\begin{enumerate}
    \item $k\gets\mathsf{NCE.Gen}(1^\lambda):$ outputs an encryption key $k$.
    \item $\mathsf{ct}\gets\mathsf{NCE.Enc}(k,M):$ takes as input a key $k$ and a message $M$, and outputs a ciphertext $\mathsf{ct}$.
    \item $M'\gets\mathsf{NCE.Dec}(k,\mathsf{ct}):$ is deterministic and outputs a message $M'$ or $\bot$.
    \item $(\mathsf{ct}^\star,\mathsf{st})\gets\mathsf{NCE.Fake}(1^\lambda):$ outputs a fake ciphertext and state.
    \item $k^\star\gets\mathsf{NCE.Open}(\mathsf{st},M):$ takes as input a state and a message, and outputs a key.
\end{enumerate}
A key $k$ is good if, for every message $M$ and every $k'\neq k$,
\begin{equation}
\mathsf{NCE.Dec}(k,\mathsf{NCE.Enc}(k,M))=M
\qquad\text{and}\qquad
\mathsf{NCE.Dec}(k',\mathsf{NCE.Enc}(k,M))=\bot.
\label{eq:nce-good-key}
\end{equation}
A key generated by $\mathsf{NCE.Gen}$ is good with overwhelming probability. Non-committing security requires that, for every QPT algorithm that outputs a message $M$ together with quantum auxiliary information $R$,
\begin{equation}
(R,\mathsf{ct},k)\approx_c(R,\mathsf{ct}^\star,k^\star),
\label{eq:nce-security}
\end{equation}
where $k\gets\mathsf{NCE.Gen}(1^\lambda)$ and $\mathsf{ct}\gets\mathsf{NCE.Enc}(k,M)$ in the first distribution, while $(\mathsf{ct}^\star,\mathsf{st})\gets\mathsf{NCE.Fake}(1^\lambda)$ and $k^\star\gets\mathsf{NCE.Open}(\mathsf{st},M)$ in the second distribution. We may assume that $\mathsf{NCE.Open}$ is deterministic by including its random tape in $\mathsf{st}$.
\end{definition}

\begin{theorem}[Key-robust one-time non-committing encryption from $\io$ and one-way functions~\cite{ABH+26}]
\label{thm:nce-instantiation}
Assume the existence of post-quantum polynomially secure $\io$ and post-quantum one-way functions. For every polynomially bounded input length $\ell_{\mathsf{in}}=\ell_{\mathsf{in}}(\lambda)\in\omega(\log\lambda)$, the parameters can be chosen so that there exist $n=n(\lambda)\in\omega(\log\lambda)$ and a key-robust one-time non-committing encryption scheme with message space $\zo^{8n}$ and key space $\zo^{\ell_{\mathsf{NCE-key}}}$, where $\ell_{\mathsf{NCE-key}}\leq\ell_{\mathsf{in}}(\lambda)$, satisfying \Cref{eq:nce-good-key,eq:nce-security} against QPT distinguishers with quantum auxiliary information.
\end{theorem}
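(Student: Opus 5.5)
This result is imported from~\cite{ABH+26}, so my plan is to recall their construction, check that it meets the two properties as stated here (\Cref{eq:nce-good-key,eq:nce-security}), and do the parameter bookkeeping relating $\ell_{\mathsf{NCE-key}}$ to $\ell_{\mathsf{in}}$. I have not worked out a construction of my own that provably satisfies both properties.

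\emph{Parameters.} I would first choose a security parameter $\kappa=\kappa(\lambda)$ for the underlying primitives (PRG/puncturable PRF from post-quantum one-way functions, and $\iO$). I would set $n$ to be the largest integer such that the resulting key length $\ell_{\mathsf{NCE-key}}$ (a fixed polynomial in $8n$ and $\kappa$) is at most $\ell_{\mathsf{in}}$. In the intended regime $\kappa$ is taken as large as the budget $\ell_{\mathsf{in}}$ allows, so the key length is essentially linear in $8n+\kappa$ and $n=\Theta(\ell_{\mathsf{in}})-O(\kappa)$. I expect this step to need the least new work, but one caveat is real: security of the primitives is only negligible when $\kappa=\omega(\log\lambda)$, and then $n\in\omega(\log\lambda)$ follows from $\ell_{\mathsf{in}}\in\omega(\log\lambda)$ only if the budget leaves room for both. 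If the one-way functions are merely polynomially secure, this balancing needs care.

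\emph{Candidate construction.} The natural starting point is a one-time pad with an authenticating tag. The key is $k=(r,\sigma)$ with $r\in\zo^{8n}$ and $\sigma$ a PRG seed, and the ciphertext is $(M\oplus r,\ \mathsf{tag})$. Decryption under $k'$ outputs $\bot$ unless the tag verifies against $k'$. Here $\mathsf{tag}$ would be an $\iO$-obfuscated point-check program on the key, or the image of the key under a length-doubling (statistically injective on good keys) PRG, to force $\bot$ under wrong keys. Non-committing equivocation would be shown by a hybrid argument: replace the honest ciphertext by a fake $(c^\star,\mathsf{tag}^\star)$ with $c^\star$ uniform, and open by setting $r^\star:=c^\star\oplus M$ and deriving the remaining key component from $\mathsf{st}$. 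The transitions would use PRG/puncturable-PRF security and $\iO$ for functionally equivalent programs, and they must hold against QPT distinguishers holding quantum auxiliary information $R$.

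\emph{Main obstacle, and what fails in the naive version.} The difficulty is the tension between key robustness and equivocation. Robustness requires $\mathsf{NCE.Dec}(k',\mathsf{NCE.Enc}(k,M))=\bot$ for \emph{every} $k'\ne k$, so the tag must bind all of $k$, including the pad $r$. Equivocation, however, requires the fake ciphertext to be openable to a key whose $r^\star$ is chosen after the fact. My simplest attempt resolves neither side. If the tag checks only the seed $\sigma$, the fake is perfectly openable: real and fake distributions coincide, since $(M\oplus r,G(\sigma),(r,\sigma))$ and $(c^\star,G(\sigma^\star),(c^\star\oplus M,\sigma^\star))$ have the same law. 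But then a wrong pad with the correct seed decrypts to a wrong message rather than $\bot$, so robustness fails. If the tag checks all of $k$, opening is impossible. An $\iO$ point-check hybrid also does not obviously help, because the revealed key exposes the seed, and the real and fake programs differ on inputs $(r',\sigma)$ with $r'\ne r$.

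The construction of~\cite{ABH+26} must resolve this tension with an encoding that I have not reconstructed here. This is exactly the step where I would rely on their argument rather than re-derive it; the remaining work is checking that their statement covers key space at most $\ell_{\mathsf{in}}$ and security against distinguishers with quantum auxiliary information.
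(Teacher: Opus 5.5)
Your proposal ends where the paper does: the paper gives no argument for this theorem beyond importing it from \cite{ABH+26}, so deferring to that construction, and checking that it meets the key-length budget $\ell_{\mathsf{NCE-key}}\leq\ell_{\mathsf{in}}$ and is secure against distinguishers with quantum auxiliary input, is the same route. Two side remarks in your exploration need correcting: first, with only polynomially secure $\iO$ and one-way functions, an internal security parameter $\kappa$ gives negligible-in-$\lambda$ advantage only when $\kappa\geq\lambda^{\Omega(1)}$, not merely when $\kappa=\omega(\log\lambda)$, so a key that carries $\kappa$-bit secrets forces $\ell_{\mathsf{in}}\geq\lambda^{\Omega(1)}$ (precisely the point to check against the cited statement); second, the robustness-versus-equivocation tension is not an impossibility, because key robustness is required only of honestly generated ciphertexts under good keys, so a tag that is statistically binding on all of $k$ in real encryptions but equivocal in the fake mode, with the two modes computationally indistinguishable, resolves it.
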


For the proof below, fix $n$ and a key-robust one-time non-committing encryption scheme as guaranteed by \Cref{thm:nce-instantiation}.

\begin{proof}[Detailed proof of the moreover part of \Cref{thm:correlated-upo-construction}]
Fix a QPT triplet of adversaries against $\IDU$-generalized UPO. The challenger samples $x\getsr\mathcal X_\lambda$ and $b\getsr\bit$, gives the same point $x$ to both recipients after the split, and uses the generalized punctured circuit that equals $\mu_\bob(x)$ at $x$ if $b=1$ and equals $C_k(x)$ otherwise. Choose $n\in\omega(\log\lambda)$ and a key-robust one-time non-committing encryption scheme with message length $m=8n$ and key length $\ell_{\mathsf{NCE-key}}\leq\ell_{\mathsf{in}}$. Parse every point as $x=x_0\|x_1$, where $|x_0|=\ell_{\mathsf{NCE-key}}$.

Specialize $\hybrid_0$--$\hybrid_5$ above to $x_\bob=x_\charlie=x$, $b_\bob=b_\charlie=b$, $\Delta=0$, and $\mu_\bob$. Thus, after sampling $u\getsr A\setminus\{0\}$, $v\getsr A^\perp\setminus\{0\}$, and $\zeta\getsr\bit$, we have $b=\alpha+\beta+\zeta$, where $\alpha=\langle v,s\rangle$ and $\beta=\langle u,t\rangle$. The puncturable PRF is punctured at the single point $x$, its value at $x$ is replaced by a uniform $y$, and we sample $q\getsr\mathcal Y_\lambda$ and write $y=q\oplus\beta\cdot\mathsf{diff}(x)$, where $\mathsf{diff}(z):=C_k(z)\oplus\mu_\bob(z)$. The two hardcoded values at $x$ are therefore $q\oplus C_k(x)\oplus(\alpha+\zeta)\cdot\mathsf{diff}(x)$ and $q\oplus\beta\cdot\mathsf{diff}(x)$.

Next sample $k_{\mathsf{nce}}\gets\mathsf{NCE.Gen}(1^\lambda)$, $\mathsf{ct}\gets\mathsf{NCE.Enc}(k_{\mathsf{nce}},\mathsf{msg}(u,v))$, and set $\mathsf{str}:=k_{\mathsf{nce}}\oplus x_0$. Replace the two point-hardcoded programs by the following programs, initially using the punctured key $K_{\{x\}}$ in their last lines.

\begin{mdframed}
{\bf \underline{$S_0(a,z)$}}

{\bf Hardcoded: $\obfd{M_0},C_k,\mu_\bob,K_{\{x\}},q,\zeta,\mathsf{ct},\mathsf{str},x_1$}
\begin{enumerate}[label=\arabic*.]
    \item If $\obfd{M_0}(a)=0$, output $\bot$.
    \item Parse $z=z_0\|z_1$, set $k_z:=\mathsf{str}\oplus z_0$, and compute $M_z:=\mathsf{NCE.Dec}(k_z,\mathsf{ct})$.
    \item If $z_1=x_1$ and $M_z\neq\bot$, parse $(u',v'):=M_z$ and output $q\oplus C_k(z)\oplus(\langle v',a\rangle+\zeta)\cdot\mathsf{diff}(z)$.
    \item Output $F_{K_{\{x\}}}(z)\oplus C_k(z)$.
\end{enumerate}
\end{mdframed}

\begin{mdframed}
{\bf \underline{$S_1(a,z)$}}

{\bf Hardcoded: $\obfd{M_1},C_k,\mu_\bob,K_{\{x\}},q,\mathsf{ct},\mathsf{str},x_1$}
\begin{enumerate}[label=\arabic*.]
    \item If $\obfd{M_1}(a)=0$, output $\bot$.
    \item Parse $z=z_0\|z_1$, set $k_z:=\mathsf{str}\oplus z_0$, and compute $M_z:=\mathsf{NCE.Dec}(k_z,\mathsf{ct})$.
    \item If $z_1=x_1$ and $M_z\neq\bot$, parse $(u',v'):=M_z$ and output $q\oplus\langle u',a\rangle\cdot\mathsf{diff}(z)$.
    \item Output $F_{K_{\{x\}}}(z)$.
\end{enumerate}
\end{mdframed}

Condition on the overwhelmingly likely event that $k_{\mathsf{nce}}$ is good. At $z=x$, $k_z=\mathsf{str}\oplus x_0=k_{\mathsf{nce}}$, so correctness gives $M_z=\mathsf{msg}(u,v)$. Hence, for $a\in A+s$, $S_0$ returns $q\oplus C_k(x)\oplus(\alpha+\zeta)\cdot\mathsf{diff}(x)$, and for $a\in A^\perp+t$, $S_1$ returns $q\oplus\beta\cdot\mathsf{diff}(x)$. If $z\neq x$, either $z_1\neq x_1$, or $z_0\neq x_0$ and $k_z\neq k_{\mathsf{nce}}$, in which case key robustness gives $M_z=\bot$. Thus the relevant circuits have the same functionality, and two applications of $\io$-security justify the replacement. We then replace $K_{\{x\}}$ by the full puncturable PRF key $K$ in the last line of each program. The relevant circuits again have the same functionality, and this replacement follows from $\io$-security.

Since $x_0$ is uniform, the same distribution can be sampled by first generating $k_{\mathsf{nce}}$ and $\mathsf{ct}$, then sampling $\mathsf{str}\getsr\zo^{\ell_{\mathsf{NCE-key}}}$ and $x_1\getsr\zo^{\ell_{\mathsf{in}}-\ell_{\mathsf{NCE-key}}}$, and finally setting $x_0:=\mathsf{str}\oplus k_{\mathsf{nce}}$. Apply non-committing security to replace $(\mathsf{ct},k_{\mathsf{nce}})$ by $(\mathsf{ct}^\star,k^\star)$, where $(\mathsf{ct}^\star,\mathsf{st})\gets\mathsf{NCE.Fake}(1^\lambda)$ and $k^\star:=\mathsf{NCE.Open}(\mathsf{st},\mathsf{msg}(u,v))$. Except with negligible probability, $\mathsf{NCE.Dec}(k^\star,\mathsf{ct}^\star)=\mathsf{msg}(u,v)$; otherwise, testing this equality would distinguish the fake/open distribution from the real correct distribution. In the resulting hybrid, the values $(\mathsf{ct}^\star,\mathsf{str},x_1)$ given before the split are independent of $(u,v)$.

We reduce this hybrid to the computational $\dcmoe$ experiment without auxiliary information. The algorithm $\widehat\alice$ receives $\ket{A_{s,t}}$, $\obfd{M_0},\obfd{M_1}$, runs the first stage of $\alice$ to obtain $(k,\mu_\bob,\mu_\charlie)$, samples $(\mathsf{ct}^\star,\mathsf{st})\gets\mathsf{NCE.Fake}(1^\lambda)$, $\mathsf{str}\getsr\zo^{\ell_{\mathsf{NCE-key}}}$, $x_1\getsr\zo^{\ell_{\mathsf{in}}-\ell_{\mathsf{NCE-key}}}$, $q\getsr\mathcal Y_\lambda$, $\zeta\getsr\bit$, and $K\gets\mathsf{PRF.Gen}(1^\lambda)$. It constructs and obfuscates the two programs $S_0,S_1$ with $\mathsf{ct}^\star$ and the full key $K$, gives the resulting UPO state to the continuation of $\alice$, and appends classical copies of $(\mathsf{st},\mathsf{str},x_1,\zeta)$ to both output registers.

After the split, each of $\widehat\bob,\widehat\charlie$ receives $(u,v)$. It computes $M:=\mathsf{msg}(u,v)$, $k^\star:=\mathsf{NCE.Open}(\mathsf{st},M)$, $x_0:=\mathsf{str}\oplus k^\star$, and $x:=x_0\|x_1$, runs the corresponding original recipient on $x$, and adds $\zeta$ to its output. Since the IDU challenge bit is $b=\langle u,t\rangle+\langle v,s\rangle+\zeta$, both original recipients output $b$ if and only if both $\widehat\bob$ and $\widehat\charlie$ output $\langle u,t\rangle+\langle v,s\rangle$. Therefore, \Cref{cor:comp-dcmoe} bounds the winning probability by $\frac12+\negl(\lambda)$. Combining the puncturable PRF, $\io$, non-committing encryption, and negligible bad-key errors proves $\IDU$-generalized UPO security from post-quantum $\io$ and post-quantum one-way functions.
\end{proof}


%% file: upo/applications.tex
\section{Applications of Correlated UPO to Copy-Protection}
\label{sec:correlated-upo-copy-protection-applications}

We recall the UPO-to-copy-protection implications of~\cite{AB24,ABH+26} and adapt them to correlated challenge distributions.  Throughout this subsection, $\mathsf{Circ}=\{C_k:k\in\mathcal K_\lambda\}$ is a keyed circuit class and $\mathcal D_{\mathsf{Circ}}$ is induced by an efficiently samplable key distribution $\mathcal D_K$: sampling $C\gets\mathcal D_{\mathsf{Circ}}(1^\lambda)$ means sampling $k\gets\mathcal D_K(1^\lambda)$ and setting $C:=C_k$.  The key, and hence $C$, is sampled independently of the corresponding challenge sampler.

\begin{definition}[Two-point unpredictability-style puncturing security, adapted from~\cite{ABH+26}]
\label{def:correlated-unpredictability-puncturing}
Let $\mathsf{Circ}$ be a circuit class with output space $\mathcal Y_\lambda=\zo^{m(\lambda)}$. The class $\mathsf{Circ}$ is two-point unpredictability-style puncturable with respect to $(\mathcal D_{\mathsf{Circ}},\mathsf{Samp}_{\mathsf{CP}})$ if there exists a deterministic PPT algorithm $\mathsf{Puncture}$ such that, on input a circuit $C\in\mathsf{Circ}_\lambda$ and a set $\mathcal S\subseteq\mathcal X_\lambda$ of size at most two, it outputs a circuit $C^{\setminus\mathcal S}:\mathcal X_\lambda\to\mathcal Y_\lambda$ satisfying
\begin{equation}
C^{\setminus\mathcal S}(x)=
\begin{cases}
0^{m(\lambda)},&x\in\mathcal S,\\
C(x),&x\notin\mathcal S.
\end{cases}
\label{eq:two-point-puncturing-correctness}
\end{equation}
All circuits compared below are padded to the same size.  Moreover, for every QPT adversary $\mathcal D$, there exists a negligible function $\negl$ such that
\begin{equation}
\Pr\left[
\widehat y_\bob=C(x_\bob)\ \vee\
\widehat y_\charlie=C(x_\charlie):
\begin{array}{l}
C\gets\mathcal D_{\mathsf{Circ}}(1^\lambda),\\
(x_\bob,x_\charlie)\gets\mathsf{Samp}_{\mathsf{CP}}(1^\lambda),\\
\mathcal S:=\{x_\bob,x_\charlie\},\\
C^{\setminus\mathcal S}\gets\mathsf{Puncture}(C,\mathcal S),\\
(\widehat y_\bob,\widehat y_\charlie)
\gets\mathcal D(C^{\setminus\mathcal S},x_\bob,x_\charlie)
\end{array}
\right]
\leq
1-\left(1-2^{-m(\lambda)}\right)^2+\negl(\lambda).
\label{eq:correlated-unpredictability-puncturing}
\end{equation}
\end{definition}

\begin{definition}[Two-point pseudorandomness-style puncturing security with correlated challenges, adapted from~\cite{ABH+26}]
\label{def:correlated-pr-puncturing}
Let $\mathsf{Circ}$ be a deterministic circuit class with output space $\mathcal Y_\lambda=\zo^{m(\lambda)}$. We say that $\mathsf{Circ}$ is two-point pseudorandomness-style puncturable with respect to $(\mathcal D_{\mathsf{Circ}},\mathsf{Samp}_{\mathsf{PRCP}})$ if it has a deterministic PPT puncturing algorithm satisfying \Cref{eq:two-point-puncturing-correctness}, and the following two ensembles are computationally indistinguishable.

Sample $C\gets\mathcal D_{\mathsf{Circ}}(1^\lambda)$ and $(x_\bob,x_\charlie,b_\bob,b_\charlie)\gets\mathsf{Samp}_{\mathsf{PRCP}}(1^\lambda)$, and let $\mathcal S:=\{x_\bob,x_\charlie\}$. In the first ensemble, output
\begin{equation}
(C^{\setminus\mathcal S},
x_\bob,x_\charlie,b_\bob,b_\charlie,
C(x_\bob),C(x_\charlie)).
\label{eq:correlated-pr-puncturing-real}
\end{equation}
In the second ensemble, sample an independent $U_x\getsr\mathcal Y_\lambda$ for every distinct $x\in\mathcal S$, and output
\begin{equation}
(C^{\setminus\mathcal S},
x_\bob,x_\charlie,b_\bob,b_\charlie,
U_{x_\bob},U_{x_\charlie}).
\label{eq:correlated-pr-puncturing-random}
\end{equation}
\end{definition}

Given a UPO scheme $\Pi=(\Obf,\Eval)$, define $\mathsf{CP}_\Pi$ by $\mathsf{CopyProtect}(1^\lambda,C):=\Obf(1^\lambda,C)$ and by letting $\mathsf{CP}_\Pi.\mathsf{Eval}(\rho,x)$ run $(\rho',y)\gets\Eval(\rho,x)$ and output $y$. Whenever we invoke UPO security below, the supported circuit class is understood to contain the base circuits and the circuits produced by the relevant puncturing and programming algorithms, padded to a common size. For \Cref{def:correlated-unpredictability-puncturing}, the corresponding generalized-puncturing algorithm first computes $C^{\setminus\mathcal S}\gets\mathsf{Puncture}(C,\mathcal S)$ and then programs $\mu_\bob(x_\bob)$ and, if $x_\charlie\neq x_\bob$, $\mu_\charlie(x_\charlie)$, using Bob priority on a collision. Thus the punctured event of the UPO game uses the same puncturing procedure as the security assumption.

We first adapt the non-oracular point-function application of generalized UPO to arbitrary high-min-entropy point distributions.

\begin{theorem}[Copy protection for point functions from $\ID_{\mathcal D}$-generalized UPO, adapted from~{\cite[Theorem~89 and Corollary~90]{AB24}}]
\label{thm:idu-upo-to-point-functions}
Let $\mathcal D=\{\mathcal D_\lambda\}_{\lambda\in\NN}$ be a QPT-samplable distribution over $\mathcal X_\lambda=\zo^{\ell_{\mathsf{in}}(\lambda)}$ such that $\Hmin(\mathcal D_\lambda)\in\omega(\log\lambda)$. Suppose that there is a UPO scheme for polynomial-size circuits satisfying $\ID_{\mathcal D}$-generalized UPO security and $\io$-security. Then point functions whose point is sampled according to $\mathcal D_\lambda$ have a copy-protection scheme satisfying identical-challenge point-function security with respect to $\mathcal D$ from \Cref{def:point-function-copy-protection}.
\end{theorem}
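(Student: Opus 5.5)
The scheme is the natural one: protect $P_y$ by $\rho_y\gets\Obf(1^\lambda,P_y)$, where $P_y$ is written as a padded circuit, and evaluate with $\Eval$. Correctness is inherited from UPO correctness. For security, I will fix an adversary $(\alice,\bob,\charlie)$ for $\gam{\mathsf{PFCP}}$ and change the experiment through a short sequence of hybrids. The goal is to reach an $\ID_{\mathcal D}$-generalized UPO experiment, whose advantage is bounded by $1/2+\negl$. Then \Cref{lem:point-function-trivial-success} shows that this equals $p_{\mathsf{triv}}^{\mathsf{PF},\mathcal D}+\negl$.

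\textbf{Step 1 (rewrite the $b=0$ branch).} In $\gam{\mathsf{PFCP}}$, the challenge is either $x=y$ (bit $1$) or a fresh $x\gets\mathcal D_\lambda$ (bit $0$). With fresh $x$, the correct answer is $P_y(x)=[x=y]$. This equals $0$ except with probability $\operatorname{cp}(\mathcal D_\lambda)\le 2^{-\Hmin(\mathcal D_\lambda)}=\negl$. So, up to negligible loss, the winning condition becomes ``both output $b$''.

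I then change what is protected in the $b=0$ branch. Sample $y\gets\mathcal D_\lambda$ and $x\gets\mathcal D_\lambda$ independently, and protect $P_y$. Since $y\neq x$ with overwhelming probability, $P_y$ agrees with the circuit $Z_x$ that is $0$ at $x$ and otherwise computes $P_y$. In the $b=1$ branch, $x=y$ and the protected $P_y$ equals the circuit that outputs $1$ at $x$ and $P_y$ elsewhere.

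\textbf{Step 2 (decouple $y$ from the challenge via $\io$-security).} This is the main obstacle. In the UPO game, $\alice$ chooses the base circuit before the challenge point is sampled, and that circuit must not depend on $x$. In the point-function game, however, the protected circuit when $b=1$ is $P_x$ itself.

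My plan is to take the UPO base circuit to be the all-zero circuit $C_0$, with $\mu_\bob=\mu_\charlie$ the constant-$1$ function. Then the $b=1$ punctured circuit $G^\star$ is exactly $P_x$. For $b=0$, the real experiment protects $P_y$ for an independent $y$, whereas the UPO game protects $C_0$. I would bridge this with a hybrid that protects $C_0$ instead of $P_y$, where $y\gets\mathcal D_\lambda$ is fresh and independent of everything else. $P_y$ and $C_0$ differ only at $y$, so they are not functionally equivalent, and $\io$-security does not apply directly.

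Instead I will invoke $\ID_{\mathcal D}$-generalized UPO security once more, in a single-point reduction. A distinguisher between $\Obf(P_y)$ and $\Obf(C_0)$ for unpredictable $y$ gives a UPO attacker: $\alice$ submits $C_0$ with $\mu=1$, passes the whole state to $\bob$, and sends a dummy register to $\charlie$. The punctured point is never used by the distinguisher. The point to check is whether a single-sided distinguisher suffices in a game where both parties must be correct. It does, because $\charlie$ can be given the bit through a shared guess, so an advantage of $\epsilon$ becomes $\epsilon/2$. $\io$-security of the UPO scheme is needed only to pad the circuits to a common size and to move between equivalent circuit representations.

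\textbf{Step 3 (final reduction).} After these hybrids, the experiment is exactly $\gam{\mathsf{GenUPO}}$ with $\mathcal D_X=\ID_{\mathcal D}$, base $C_0$, and $\mu_\bob=\mu_\charlie\equiv1$. Each freeloader receives the common point $x$ after the split and must output $b$. Therefore the success probability is at most $1/2+\negl$. Summing the negligible losses from the collision event $x=y$ and from the hybrids, and comparing with \Cref{lem:point-function-trivial-success}, completes the proof.
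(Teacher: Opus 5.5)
Your overall argument is the paper's proof. It uses $\io$-security to pass between $P_y$ and the programmed circuit $G_y$ (base $C_0$, value one at $y$), and the collision bound $\operatorname{cp}(\mathcal D_\lambda)\leq 2^{-\Hmin(\mathcal D_\lambda)}$ to set the target in the independent branch to zero. It then makes a second use of $\ID_{\mathcal D}$-generalized UPO security, with base $C_0$ and constant-one replacement circuits, to switch $\Obf(G_y)$ to $\Obf(C_0)$ in that branch. Finally, it identifies the resulting game with the $\ID_{\mathcal D}$-generalized UPO experiment and concludes via \Cref{lem:point-function-trivial-success}.

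The one step that fails as you describe it is the single-sided reduction. Suppose $\alice$ hands the whole state to $\bob$, who runs the distinguisher after the split, while $\charlie$ holds a dummy register and outputs a shared guess that does not depend on the obfuscated state. Then $\charlie$'s view is independent of the UPO challenge bit, because the common point is sampled independently of that bit. Hence $\charlie$ is correct with probability exactly $1/2$, and the joint success is at most $1/2$ whatever the distinguisher does. The fix is to have $\alice$ run the entire distinguisher herself before the split. This is possible because the distinguisher needs only the obfuscated state and a fresh $x\gets\mathcal D_\lambda$ that it samples itself; it never uses the punctured point. $\alice$ then sends classical copies of the resulting bit to both recipients, who output it. Their joint success is then $\frac12+\frac{\epsilon}{2}$, which is the $\epsilon/2$ loss you claim. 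The paper leaves this step implicit in the point-function proof but spells out exactly this reduction in the compute-and-compare proof.
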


\begin{proof}
We adapt the proof of~\cite[Theorem~89]{AB24}. Let $C_0$ be the all-zero circuit and let $G_y$ be obtained by programming $C_0$ to output one at $y$. The $\io$-security of UPO replaces an obfuscation of $P_y$ by an obfuscation of the functionally equivalent circuit $G_y$. In the independent event, replace the desired output by zero. This changes success only when the independent samples $x,y\gets\mathcal D_\lambda$ coincide, which occurs with probability
\begin{equation}
\operatorname{cp}(\mathcal D_\lambda)
:=\sum_x\Pr[\mathcal D_\lambda=x]^2
\leq 2^{-\Hmin(\mathcal D_\lambda)}
=\negl(\lambda).
\end{equation}
The switch from an obfuscation of $G_y$ to an obfuscation of $C_0$ in the independent event is precisely the guarantee given by $\ID_{\mathcal D}$-generalized UPO security with the constant-one replacement circuits. After this switch, both events have challenge-point marginal $\mathcal D_\lambda$, and the final experiment is the $\ID_{\mathcal D}$-generalized UPO experiment. Thus every adversary succeeds with probability at most $1/2+\negl(\lambda)$. By \Cref{lem:point-function-trivial-success}, this implies the required trivial-adversary comparison. The uniform specialization is exactly~\cite[Corollary~90]{AB24}.
\end{proof}

\begin{corollary}[Copy-protection for point functions]
\label{cor:point-function-copy-protection}
Assume post-quantum polynomially secure $\io$ and the quantum hardness of LWE. For every constant $c>0$ and every QPT-samplable distribution ensemble $\mathcal D=\{\mathcal D_\lambda\}_{\lambda\in\NN}$ over $\mathcal X_\lambda$ satisfying $\Hmin(\mathcal D_\lambda)\geq\lambda^c$, point functions whose point is sampled according to $\mathcal D_\lambda$ have a copy-protection scheme satisfying identical-challenge point-function security with respect to $\mathcal D$. Moreover, when $\mathcal D_\lambda$ is uniform over $\zo^{\ell_{\mathsf{in}}(\lambda)}$ and $\ell_{\mathsf{in}}\in\omega(\log\lambda)$, LWE can be replaced with post-quantum one-way functions.
\end{corollary}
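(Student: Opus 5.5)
The plan is to chain three earlier results: \Cref{thm:correlated-upo-construction}, \Cref{thm:correlated-upo-composition} and \Cref{lem:ac-upo-implies-prior}. Together they produce a UPO scheme for all polynomial-size circuits that is both $\ID_{\mathcal D}$-generalized UPO secure and $\io$-secure. Once we have that, \Cref{thm:idu-upo-to-point-functions} gives the corollary directly.

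\emph{Constructing the correlated UPO.} Fix the constant $c>0$ and the distribution $\mathcal D$ with $\Hmin(\mathcal D_\lambda)\geq\lambda^c$.
\begin{itemize}
\item Since $\mathcal D_\lambda$ is supported on $\zo^{\ell_{\mathsf{in}}}$, its min-entropy is at most $\ell_{\mathsf{in}}$, so $\ell_{\mathsf{in}}\geq\lambda^c$. This is the input-length hypothesis of \Cref{thm:correlated-upo-construction}.
\item Take the circuit class to be $\mathsf{Circ}_{\ell_{\mathsf{in}},1}$, the polynomial-size circuits with one output bit (padded to a common polynomial size). It contains the point functions $P_y$, the all-zero circuit $C_0$, and the programmed circuits $G_y$.
\item Applying \Cref{thm:correlated-upo-construction} under $\io$ and LWE gives a UPO for this class with $\lambda^c$-entropic correlated UPO security.
\item \Cref{thm:correlated-upo-composition} then upgrades it to a scheme that is also $\io$-secure, still assuming only $\io$.
\end{itemize}

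\emph{Specializing to identical challenges.} Apply \Cref{lem:ac-upo-implies-prior} with $\mathcal D_X=\ID_{\mathcal D}$. Both marginals of $\ID_{\mathcal D}$ equal $\mathcal D_\lambda$, which has min-entropy at least $\lambda^c$, so the lemma gives $\ID_{\mathcal D}$-generalized UPO security. Since $\lambda^c\in\omega(\log\lambda)$, every hypothesis of \Cref{thm:idu-upo-to-point-functions} now holds, and it yields identical-challenge point-function security with respect to $\mathcal D$.

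\emph{The one-way-function case.} Here $\mathcal D_\lambda$ is uniform, so $\ID_{\mathcal D}=\IDU$. I would invoke the ``moreover'' part of \Cref{thm:correlated-upo-construction}, proved in \Cref{sec:alt-proof-upo}, which gives $\IDU$-generalized UPO security from $\io$ and one-way functions alone. Its hypothesis is only $\ell_{\mathsf{in}}\in\omega(\log\lambda)$, via \Cref{thm:nce-instantiation}, which matches the corollary's condition.

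The main obstacle is that \Cref{thm:correlated-upo-composition} is stated for correlated UPO security, not for the weaker $\IDU$ notion. I expect to argue that the add-$\io$ transformation of \cite[Lemma~3.16]{ABH+26} and \cite[Theorem~12]{AB24} preserves $\IDU$-generalized security by the same reduction. That reduction passes the challenge distribution through unchanged, so it should apply to $\IDU$ as well; I would make this step explicit rather than treat it as immediate. After that, \Cref{thm:idu-upo-to-point-functions} with uniform $\mathcal D$ finishes the proof. Its collision-probability term is $2^{-\ell_{\mathsf{in}}}$, which is negligible because $\ell_{\mathsf{in}}\in\omega(\log\lambda)$.
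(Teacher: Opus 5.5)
Your proposal is correct and follows essentially the same route as the paper. The paper also chains the correlated-UPO construction, the add-$\io$ composition theorem and the lemma relating correlated UPO to $\ID_{\mathcal D}$-generalized UPO into the point-function theorem. For the uniform case it likewise combines the ``Moreover'' part of the construction theorem with the add-$\io$ transformation of \cite[Lemma~3.16]{ABH+26}, which is exactly the $\IDU$ step you flagged.

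Your explicit checks are details the paper leaves implicit. These are that $\Hmin(\mathcal D_\lambda)\geq\lambda^c$ forces $\ell_{\mathsf{in}}\geq\lambda^c$, and that the collision term in the uniform case is $2^{-\ell_{\mathsf{in}}}$.
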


\begin{proof}
The first claim follows from \Cref{thm:correlated-upo-construction,thm:correlated-upo-composition,lem:ac-upo-implies-prior}, which give a UPO satisfying $\ID_{\mathcal D}$-generalized UPO security and $\io$-security, and \Cref{thm:idu-upo-to-point-functions}. The uniform claim uses the ``Moreover'' part of \Cref{thm:correlated-upo-construction} together with the add-$\io$ transformation of~\cite[Lemma~3.16]{ABH+26}.
\end{proof}

\begin{corollary}[Copy protection for $k$-point functions]
\label{cor:k-point-function-copy-protection}
Assume post-quantum polynomially secure $\io$ and the quantum hardness of LWE. Let $k=k(\lambda)$ be polynomially bounded, and let $\mathcal D^{(k)}=\{\mathcal D^{(k)}_\lambda\}_{\lambda\in\NN}$ be a QPT-samplable distribution over the $k$-element subsets of $\mathcal X_\lambda$ such that $\Hmin(\mathcal D^{(k)}_\lambda)\geq\lambda^c$ for some constant $c>0$. Let $\mathcal D^{\mathsf{pt}}_\lambda$ be the distribution obtained by sampling $S\gets\mathcal D^{(k)}_\lambda$ and then $x\getsr S$, and suppose that $\Hmin(\mathcal D^{\mathsf{pt}}_\lambda)\geq\lambda^{c'}$ for some constant $c'>0$. Suppose moreover that the resulting distribution over $k$-point functions is preimage-samplable in the sense of~\cite[Definition~84]{AB24}, with the uniform challenge distribution in that definition replaced by $\mathcal D^{\mathsf{pt}}_\lambda$. Then the $k$-point functions sampled according to $\mathcal D^{(k)}$ have a copy-protection scheme satisfying the identical-challenge security (see \Cref{rem:k-point-functions-def}). If $\mathcal D^{(k)}_\lambda$ is uniform over the $k$-element subsets of $\mathcal X_\lambda$ and $k(\lambda)/|\mathcal X_\lambda|$ is negligible, then LWE can be replaced with post-quantum one-way functions.
\end{corollary}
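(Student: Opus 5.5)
The plan mirrors the proof of \Cref{cor:point-function-copy-protection}. We first obtain a suitable UPO scheme and then adapt the $k$-point reduction of~\cite[Theorem~85 and Corollary~90]{AB24}. For the first step, \Cref{thm:correlated-upo-construction} (with $\ell_{\mathsf{in}}\geq\lambda^{c'}$, which is forced by $\Hmin(\mathcal D^{\mathsf{pt}}_\lambda)\geq\lambda^{c'}$) gives a UPO for polynomial-size circuits with $\lambda^{c'}$-entropic correlated security. \Cref{thm:correlated-upo-composition} then adds $\io$-security, and \Cref{lem:ac-upo-implies-prior} specializes the result to $\ID_{\mathcal D^{\mathsf{pt}}}$-generalized UPO security.

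For the reduction, we copy-protect $P_S$ by $\Obf(P_S)$, where $P_S$ is padded to a common size. The hybrids are as follows.
\begin{enumerate}
\item Use $\io$-security of UPO to replace $\Obf(P_S)$ by the obfuscation of the functionally identical circuit $G_{S,x}$. This circuit is defined relative to the challenge point $x\getsr S$: it equals $P_S$ off $x$ and has $x$ ``programmed'' to one.
\item Use preimage-samplability~\cite[Definition~84]{AB24}, with $\mathcal D^{\mathsf{pt}}_\lambda$ as the challenge distribution. This lets us jointly resample $(S,x)$ in the $b=1$ branch as $x\gets\mathcal D^{\mathsf{pt}}_\lambda$ followed by $S$ conditioned on $x\in S$. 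In the $b=0$ branch, $x$ comes from an independent $S'$, so the marginal of $x$ is the same in both branches.
\item In the $b=0$ branch, replace the target answer $P_S(x)$ by $0$. The correct answer is $1$ there only with probability $\sum_x q_\lambda(x)^2/k=k\operatorname{cp}(\mathcal D^{\mathsf{pt}}_\lambda)$, which is negligible because $k$ is polynomial and $\Hmin(\mathcal D^{\mathsf{pt}}_\lambda)\geq\lambda^{c'}$.
\item The remaining experiment switches between a circuit that outputs $1$ at $x$ and one that outputs $0$ at $x$, with the same $x$ given to both recipients. This is exactly an $\ID_{\mathcal D^{\mathsf{pt}}}$-generalized UPO challenge, with $C_k$ the $(k-1)$-point part and $\mu$ the constant-one circuit (or the reverse, after flipping the bit). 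Hence the success probability is at most $1/2+\negl(\lambda)$.
\end{enumerate}
By \Cref{eq:k-point-function-trivial-success}, the trivial success probability is $1/2+\negl(\lambda)$ as well, which gives the required trivial-adversary comparison.

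The main obstacle is the second step. The base circuit handed to the UPO challenger must be fixed before the challenge point is revealed, yet it may depend on $S\setminus\{x\}$. We therefore need the UPO sampler to generate $x$ together with the remaining $k-1$ points in a way that keeps $x$'s conditional min-entropy. The first thing to try is to have $\alice_0$ of the UPO game receive nothing and to place the dependence in the circuit choice. If that fails, pass $S\setminus\{x\}$ as $z_{\mathsf{aux}}$ in the correlated UPO game. This still satisfies $\Hmin(x\mid z_{\mathsf{aux}})\geq\lambda^{c'}-\log k$, using the preimage-samplability hypothesis and the leakage bound of \Cref{lem:average-min-entropy-facts}.

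For the final claim, take $\mathcal D^{(k)}_\lambda$ uniform with $k/|\mathcal X_\lambda|$ negligible. Then $\mathcal D^{\mathsf{pt}}_\lambda$ is uniform and the sampler is the identical-uniform one, so the ``Moreover'' part of \Cref{thm:correlated-upo-construction} together with~\cite[Lemma~3.16]{ABH+26} replaces LWE by one-way functions.
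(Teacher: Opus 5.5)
Your first step matches the paper: you obtain $\ID_{\mathcal D^{\mathsf{pt}}}$-generalized UPO with $\io$-security from \Cref{thm:correlated-upo-construction,thm:correlated-upo-composition,lem:ac-upo-implies-prior}. The collision estimate $k\operatorname{cp}(\mathcal D^{\mathsf{pt}}_\lambda)$, the trivial-success comparison and the one-way-function case also match. The gap is step~4, where you claim the remaining experiment is \emph{exactly} an $\ID_{\mathcal D^{\mathsf{pt}}}$-generalized UPO challenge. In that game $\alice$ fixes the base circuit before the point is sampled, and the two obfuscated circuits differ only at that point. Neither holds in your experiment:
\begin{itemize}
\item In the $b=1$ branch, the circuit away from $x$ is $P_{S\setminus\{x\}}$, which depends on $x$.
\item In the $b=0$ branch, it is a $k$-point function $P_S$ with $S$ independent of $x$. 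This is not the $b=1$ circuit with only its value at $x$ changed.
\end{itemize}
To finish you need two things you do not supply:
\begin{itemize}
\item[(i)] a base circuit $g$, sampled independently of $x$, such that $g$ programmed to $1$ at $x\gets\mathcal D^{\mathsf{pt}}_\lambda$, together with $x$, is indistinguishable from $(P_S,x)$ with $x\getsr S$;
\item[(ii)] an extra hybrid in the $b=0$ branch replacing $\Obf(P_S)$ by $\Obf(g)$. This is justified by the single-party consequence of UPO security, as in the $G_y\to C_0$ step in the proof of \Cref{thm:idu-upo-to-point-functions}.
\end{itemize}
The paper gets this structure from the preimage-samplability hypothesis, by running the proof of \cite[Theorem~89]{AB24} with $\mathcal D^{\mathsf{pt}}_\lambda$ in place of the uniform distribution. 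You read that hypothesis only as an efficient sampler for $S$ conditioned on $x\in S$. That is Bayes' rule made efficient, and it does not give you an $x$-independent base circuit.

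Your fallback does not close the gap. Setting $z_{\mathsf{aux}}:=S\setminus\{x\}$ does not give $\Hmin(x\mid z_{\mathsf{aux}})\geq\lambda^{c'}-\log k$. The leakage bound of \Cref{lem:average-min-entropy-facts} charges only $q$ bits when the leaked variable has support at most $2^q$, and $S\setminus\{x\}$ has enormous support. For a counterexample, take $k=2$ and $S=\{y,\bar y\}$ for uniform $y$, with $\bar y$ the bitwise complement. Both $\mathcal D^{(2)}_\lambda$ and $\mathcal D^{\mathsf{pt}}_\lambda$ have essentially full min-entropy, and $S$ is trivially samplable given $x\in S$. Yet $S\setminus\{x\}=\{\bar x\}$ determines $x$. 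So the sampler is not $\kappa$-entropic and \Cref{thm:correlated-upo-construction} says nothing about it; indeed $\alice$ could compute $x$ and evaluate the program there. Even when the entropy condition holds, the UPO $b=0$ branch would protect the $(k-1)$-point function $P_{S\setminus\{x\}}$, correlated with the challenge. That is still not the copy-protection $b=0$ branch, so (ii) remains missing. Your first option, ``placing the dependence in the circuit choice,'' cannot work as stated, since $\alice_0$ must choose the circuit before $x$ exists.
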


\begin{proof}
For every fixed $x\in\mathcal X_\lambda$,
\begin{equation}
\Pr_{S\gets\mathcal D^{(k)}_\lambda}[x\in S]
=k(\lambda)\Pr[\mathcal D^{\mathsf{pt}}_\lambda=x]
\leq k(\lambda)2^{-\Hmin(\mathcal D^{\mathsf{pt}}_\lambda)}
=\negl(\lambda),
\end{equation}
so the sampled function class is evasive. Under the stated preimage-samplability assumption, the proof of~\cite[Theorem~89]{AB24} applies after replacing each uniform challenge point by a sample from $\mathcal D^{\mathsf{pt}}_\lambda$. By \Cref{thm:correlated-upo-construction,thm:correlated-upo-composition,lem:ac-upo-implies-prior}, the stated assumptions give the required $\ID_{\mathcal D^{\mathsf{pt}}}$-generalized UPO and $\io$-security. The uniform specialization is the setting of~\cite[Theorem~85 and Corollary~90]{AB24}, and the ``Moreover'' part of \Cref{thm:correlated-upo-construction} together with the add-$\io$ transformation of~\cite[Lemma~3.16]{ABH+26} supplies the required $\IDU$-generalized UPO from one-way functions. The trivial success probability is given by \Cref{eq:k-point-function-trivial-success}.
\end{proof}

\begin{theorem}[Copy protection for compute-and-compare functions]
\label{thm:compute-and-compare-copy-protection}
Assume post-quantum polynomially secure $\io$ and the quantum hardness of LWE. Let $\mathsf{Samp}$ be a QPT algorithm that samples $(f,y)\gets\mathsf{Samp}(1^\lambda)$, where $f:\mathcal X_\lambda\rightarrow\mathcal Z_\lambda$ is a polynomial-size circuit, $\mathcal Z_\lambda=\zo^{\ell_{\mathsf{cc}}(\lambda)}$, and $y\in\mathcal Z_\lambda$. Let $(F,Y)$ denote its output distribution and suppose that $\Hmin(Y\mid F)\geq\lambda^c$ for some constant $c>0$. Fix a family of distributions $\{\mathcal Q_{f,z}\}_{f,z}$ as in \Cref{def:compute-and-compare-functions-copy-protection}, and suppose that the resulting identical-challenge experiment is efficiently realizable. Then the compute-and-compare functions sampled by $\mathsf{Samp}$ have a copy-protection scheme satisfying the identical-challenge security of \Cref{def:compute-and-compare-functions-copy-protection}.
Moreover, if \(Y\) conditioned on \(F=f\) is uniform over \(\mathcal Z_\lambda\) for every \(f\) in the support of \(F\), then the LWE assumption can be replaced with post-quantum one-way functions.
\end{theorem}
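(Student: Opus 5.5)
Our plan is to use the construction suggested in the technical overview. The protected program for $\mathsf{CC}_{f,y}$ is the pair $(f,\rho_y)$, where $\rho_y\gets\Obf(1^\lambda,P_y)$ for a UPO scheme $\Pi$ over polynomial-size circuits on $\mathcal Z_\lambda$. Evaluation on $x$ computes $f(x)$ and runs $\Eval(\rho_y,f(x))$. Correctness follows directly from UPO correctness.

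By \Cref{thm:correlated-upo-construction} with input length $\ell_{\mathsf{cc}}\geq\lambda^c$ (padding if needed) and \Cref{thm:correlated-upo-composition}, we may take $\Pi$ to satisfy both $\lambda^{c}$-entropic correlated UPO security and $\io$-security. The first hybrid uses $\io$-security to replace $\Obf(P_y)$ by an obfuscation of $G_y$, the all-zero circuit $C_0$ programmed to output $1$ at $y$; the two circuits are functionally equivalent.

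Next we define a correlated UPO sampler (\Cref{def:ac-upo-sampler}) as follows.
\begin{itemize}
\item Sample $(f,y)\gets\mathsf{Samp}$ and $b\getsr\bit$.
\item If $b=1$, set $z:=y$. If $b=0$, sample a fresh $z'$ from $Y\mid F=f$ (efficient by realizability) and set $z:=z'$.
\item Sample $x\gets\mathcal Q_{f,z}$.
\item Output $z_{\mathsf{aux}}:=f$, $x_\bob=x_\charlie:=y$, $b_\bob=b_\charlie:=b$, and $w_\bob=w_\charlie:=x$.
\end{itemize}
The UPO adversary's $\alice_0$ outputs key $C_0$ and $\mu_\bob=\mu_\charlie$ equal to the constant-one circuit. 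The UPO challenge then obfuscates $G_y$ when $b=1$ and $C_0$ when $b=0$. The recipients ignore the point $y$, use only $(f,x)$, and output their compute-and-compare guess as the bit.

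\textbf{Main obstacle.} The difficulty is that this does not directly match the copy-protection experiment. There, the program is always $G_y$, and the target bit is $[f(x)=y]$ rather than $b$. Conversely, the sampler must satisfy the uniform-marginals condition \Cref{eq:ac-upo-uniform-marginals} given $(z_{\mathsf{aux}},x_X,w_X)$. With $x_X=y$ and $w_X=x$, $b$ is not uniform: when $b=1$ we have $f(x)=y$, which is visible.

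We would restructure via a hybrid. First, in the $b=0$ branch, replace the target bit $[f(x)=y]$ by $0$. This costs at most $\Pr[z'=y]\leq 2^{-\Hmin(Y\mid F)}$, which is negligible. Second, in that branch, switch the obfuscated $G_y$ to $C_0$, justified by UPO security with the distinguishing bit being the branch itself.

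To meet the marginal condition, the plan is to set the challenge point to $y$, independent of the post-split information. The $b=1$ world is $(G_y, x\gets\mathcal Q_{f,y})$, and the $b=0$ world is $(C_0, x\gets\mathcal Q_{f,z'})$. Since the released $x$ depends on $b$ through $z$, we would instead use a sampler with two independent draws, $(y,z')$, and reveal $x$ derived from the branch-independent variable: we set $x_\bob=x_\charlie:=z$ and $w_X:=x\gets\mathcal Q_{f,z}$. In both branches $z$ is distributed as $Y\mid F=f$, so given $(f,z,x)$ the bit $b$ is uniform, satisfying \Cref{eq:ac-upo-uniform-marginals}. The entropy condition holds because $\Hmin(z\mid f)\geq\lambda^c$.

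The punctured circuit is then $C_0$ programmed at $z$ iff $b=1$. When $b=1$ this equals $G_y$ since $z=y$. When $b=0$ it equals $C_0$, which matches the hybrid world above. The winning condition, outputting $b$, coincides with outputting $[f(x)=y]$ up to the collision event. So any advantage over the trivial value $1/2+\negl$ (computed as in \Cref{lem:point-function-trivial-success}) yields a correlated UPO break.

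For the ``moreover'' part, when $Y\mid F$ is uniform, $z$ is uniform and independent of $f$. We then expect to use the $\IDU$ special case (one-way functions), with $f$ handled by giving it to the recipients rather than as pre-split auxiliary information. Verifying that this reordering is legitimate is a secondary point we would check last.
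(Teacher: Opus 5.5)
For the main (LWE) statement you follow the paper's proof: the same construction $(f,\rho_y)$, the $\io$ hop from $P_y$ to $G_y$, replacing the $b=0$ target by $0$ at collision cost, and the switch from $G_y$ to $C_0$ in the $b=0$ branch. Your final sampler also matches: once the unused second draw is dropped, it has the same distribution as the paper's $(f,z,z,b,b,x,x)$ with $(f,z)\gets\mathsf{Samp}$ and $x\gets\mathcal Q_{f,z}$.

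The step you leave unspecified is the switch from $G_y$ to $C_0$. This is a separate indistinguishability claim about one branch, and the relevant event (both recipients output $0$) is a joint event. So it needs its own invocation of correlated UPO security, with the distinguisher run entirely before the split. The paper uses the sampler $(f,y,y,d,d,\bot,\bot)$. In it, $\mathcal R_{\alice}$ runs the whole $b=0$ branch as a single distinguisher between $(f,\Obf(G_y))$ and $(f,\Obf(C_0))$. It samples the fresh $z'$ and $x\gets\mathcal Q_{f,z'}$ itself; this is possible because they are independent of $y$ given $f$ and efficiently samplable by realizability. It then sends the resulting classical bit to both recipients.

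Your sketch of the ``moreover'' part does not work as phrased. \alice receives the protected program $(f,\rho_y)$, so $f$ cannot be postponed until after the split. The identical-uniform case goes through for a different reason:
\begin{itemize}
\item Since $Y$ is uniform and independent of $F$, the reduction can sample $f$ itself (run $\mathsf{Samp}$ and discard the second output) and give it to \alice. No pre-split auxiliary string is then needed, for the final reduction or for the $G_y\to C_0$ switch.
\item $\IDU$-generalized UPO has no post-split strings either. So $\mathcal R_{\alice}$ must put common random coins into both registers, so that both recipients, on receiving the common point $z$, compute the same $x\gets\mathcal Q_{f,z}$.
\item The $\io$-security used in the hop from $P_y$ to $G_y$ must be added by the add-$\io$ transformation, as in the paper's identical-uniform corollaries.
\end{itemize}
The paper asserts this case in one sentence, so these are details it leaves implicit rather than a different route.
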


\begin{proof}

Let $\Pi$ be a UPO, guaranteed by \Cref{thm:correlated-upo-construction,thm:correlated-upo-composition} for any polynomial input and output length circuit class, and let $\mathsf{CP}_{\mathsf{PF}}$ be the point-function scheme obtained from $\Pi$ by the construction used in \Cref{thm:idu-upo-to-point-functions}. Following the generic transformation of~\cite[Construction~4 and Theorem~7]{CMP24}, define
\begin{equation}
\rho_y\gets\mathsf{CopyProtect}_{\mathsf{PF}}(1^\lambda,P_y),
\qquad
\mathsf{CopyProtect}_{\mathsf{CC}}(1^\lambda,\mathsf{CC}_{f,y})
:=(f,\rho_y),
\end{equation}
and
\begin{equation}
\mathsf{Eval}_{\mathsf{CC}}((f,\rho_y),x)
:=
\mathsf{Eval}_{\mathsf{PF}}(\rho_y,f(x)).
\end{equation}
Correctness follows from
$\mathsf{CC}_{f,y}(x)=P_y(f(x))$.

For every $f$ in the support of $F$, let $\mathcal D_f$ denote the conditional distribution of $Y$ given $F=f$. Consider the correlated-UPO sampler that samples
\begin{equation}
(f,z)\gets\mathsf{Samp}(1^\lambda),
\qquad
x\gets\mathcal Q_{f,z},
\qquad
b\getsr\bit,
\end{equation}
where the second output of $\mathsf{Samp}$ is denoted by $z$, and outputs
\begin{equation}\label{eq:first-sampler}
(z_{\mathsf{aux}},x_\bob,x_\charlie,b_\bob,b_\charlie,w_\bob,w_\charlie)
:=(f,z,z,b,b,x,x).
\end{equation}
This sampler satisfies \Cref{def:ac-upo-sampler}: the shared bit is uniform conditioned on $(f,z,x)$ and
\[
\Hmin(z\mid f)=\Hmin(Y\mid F)\geq\lambda^c.
\]

We adapt the proof of \Cref{thm:idu-upo-to-point-functions}. Let $C_0$ be the all-zero circuit and let $G_y$ be obtained by programming $C_0$ to output one at $y$. The $\io$-security of UPO replaces an obfuscation of $P_y$ by an obfuscation of the functionally equivalent circuit $G_y$. In the event $b=0$, replace the desired output by zero. This changes success only if the independent samples $y,z\gets\mathcal D_f$ coincide. Averaged over $f$, this occurs with probability
\begin{equation}
\EE_f[\operatorname{cp}(\mathcal D_f)]
\leq
2^{-\Hmin(Y\mid F)}
=
\negl(\lambda).
\end{equation}

Next, in the event $b=0$, replace the obfuscation of $G_y$ by an obfuscation of $C_0$. To justify this change, consider the correlated-UPO sampler that samples $(f,y)\gets\mathsf{Samp}(1^\lambda)$, a uniform bit $d\getsr\bit$, and outputs
\begin{equation}
(z_{\mathsf{aux}},x_\bob,x_\charlie,b_\bob,b_\charlie,w_\bob,w_\charlie)
:=(f,y,y,d,d,\bot,\bot).
\end{equation}
This is a valid sampler because $\Hmin(y\mid f)\geq\lambda^c$. Correlated UPO security therefore implies that the ensembles consisting of $(f,\Obf(G_y))$ and $(f,\Obf(C_0))$ are computationally indistinguishable: otherwise, $\mathcal R_\alice$ runs the distinguisher and sends a classical copy of its output to each of $\mathcal R_\bob$ and $\mathcal R_\charlie$. In the event $b=0$, the fresh value $z\gets\mathcal D_f$ and the point $x\gets\mathcal Q_{f,z}$ are independent of $y$ conditioned on $f$ and are efficiently sampleable. Hence adjoining $(z,x)$ and applying the QPT algorithms of the compute-and-compare adversary preserves this indistinguishability.

After this change, the resulting experiment is exactly the correlated-UPO experiment for the first sampler (\Cref{eq:first-sampler}) displayed above. The reduction $\mathcal R_\alice$, after receiving $f$ and the UPO state, gives $(f,\rho)$ to $\alice$ and forwards its two output registers; after the split, $\mathcal R_\bob$ and $\mathcal R_\charlie$ ignore the common UPO point $z$ and give the common post-split auxiliary string $x$ to $\bob$ and $\charlie$, respectively. When the shared bit is zero, the state protects $C_0$ and both recipients must output zero. When the shared bit is one, the common UPO challenge point is $z$, the state protects $G_z=P_z$, and both recipients must output one. Since $x\gets\mathcal Q_{f,z}$ and hence $f(x)=z$, their target value is
\begin{equation}
\mathsf{CC}_{f,y}(x)=P_y(z).
\end{equation}
Thus correlated UPO security implies that every adversary in the last experiment succeeds with probability at most $1/2+\negl(\lambda)$. Therefore, we conclude that the success probability of the adversaries in the original compute-and-compare copy protection experiment is at most $1/2+\negl'(\lambda)$ for some negligible function $\negl'(\secparam)$.

Combining the last conclusion with the fact that the trivial adversarial strategy of outputting $1$ (i.e., $\alice$ sends the copy-protected state to $\bob$ and $\charlie$ always outputs $1$) has success probability
\begin{equation}
\frac{1}{2}+\frac12\EE_f[\operatorname{cp}(\mathcal D_f)]
\leq
\frac12+\frac12\,2^{-\Hmin(Y\mid F)},
\end{equation}
which is $1/2$ plus some negligible function, we conclude that the trivial success probability for the experiment is $\frac12$ plus some negligible function. Hence, by the preceding argument, we conclude that the success probability for any adversary is upper bounded by the trivial success probability upto negligible corrections (see \Cref{def:compute-and-compare-functions-copy-protection}), which concludes the proof of the security.

The ``Moreover'' clearly follows by noting that in this special case, we only need correlated UPO secure against identical uniform challenge distribution and then replacing the general statement of \Cref{thm:correlated-upo-construction} in the above proof with the ``Moreover'' part of \Cref{thm:correlated-upo-construction}.
\end{proof}

Next, we turn our attention to unpredictability-style copy protection for puncturable functionalities.

\begin{theorem}[Unpredictability-style copy protection from UPO, adapted from~{\cite[Theorem~57]{AB24}}]
\label{thm:upo-to-non-oracular-unpredictability-copy-protection}
Let $\mathcal D_X$ be the joint distribution induced by $\mathsf{Samp}_{\mathsf{CP}}$, let $m\in\omega(\log\lambda)$, and let $\mathsf{Circ}$ be two-point unpredictability-style puncturable with respect to $(\mathcal D_{\mathsf{Circ}},\mathsf{Samp}_{\mathsf{CP}})$. If $\Pi$ satisfies $\mathcal D_X$-generalized UPO security, then $\mathsf{CP}_\Pi$ satisfies unpredictability-style copy-protection security with respect to $(\mathcal D_{\mathsf{Circ}},\mathsf{Samp}_{\mathsf{CP}})$ (see \Cref{def:unpredictability-copy-protection}).
\end{theorem}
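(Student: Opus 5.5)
We follow the hybrid template of \cite[Theorem~57]{AB24}, adapted to correlated challenge points. Fix a QPT triplet $(\alice,\bob,\charlie)$ in $\mathsf{UnpCP}$ with success probability $p$.

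\textbf{Step 1: switch to the punctured obfuscation.} We define a generalized UPO adversary $(\alice',\bob',\charlie')$. The first stage $\alice'$ samples $C\gets\mathcal D_{\mathsf{Circ}}$ and submits $C$ as the key. It also submits the constant programming circuits $\mu_\bob\equiv\mu_\charlie\equiv 0^{m}$. With this choice, the generalized-puncturing algorithm fixed before the theorem outputs exactly $\mathsf{Puncture}(C,\{x_\bob,x_\charlie\})$, using Bob priority when the two points collide. $\alice'$ then runs $\alice$ on the received state, and sends each register to the corresponding recipient together with a classical copy of $C$. After the split, recipient $X'$ receives $x_X$ and runs $X$ to obtain $\widehat y_X$. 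It outputs $\widehat b_X:=0$ if $\widehat y_X=C(x_X)$, and $\widehat b_X:=1$ otherwise.

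Write $p_0$ for the probability that both parties are correct when the obfuscated circuit is $C$, and $p_1$ for the same probability when it is $C^{\setminus\mathcal S}$. We want $p_0-p_1\le\negl$. The naive bound $\Pr[\text{win}]=\tfrac12(p_0+1-q_1)$ does not suffice: here $q_1$ is the probability that the UPO game is lost under $b=1$, and it need not equal $p_1$. To handle this we use the standard ``anti-correlated'' trick from \cite{AB24}. Choose the shared output bits so that a mismatch is recorded as a guess of $1$. Then in the $b=1$ world, winning requires \emph{both} recipients to fail to produce $C(x_X)$.

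We need that probability to be at least $1-p_1$ minus a negligible term, that is, that at most one recipient succeeds against $C^{\setminus\mathcal S}$ except with probability $p_1$. This is immediate, since $p_1$ is the probability that both succeed and the two one-sided events are symmetric. So I would instead bound the success probability by the ``or'' event. By two-point unpredictability (\Cref{eq:correlated-unpredictability-puncturing}), given $C^{\setminus\mathcal S}$ the probability that \emph{either} recipient outputs $C(x_X)$ is at most $1-(1-2^{-m})^2+\negl=\negl$, because $m\in\omega(\log\lambda)$.

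\textbf{Step 2: bound the punctured world.} In the $b=1$ world, a reduction that runs $\alice,\bob,\charlie$ jointly needs only $C^{\setminus\mathcal S}$ and the points. The obfuscation $\Obf(C^{\setminus\mathcal S})$ is efficiently computable from the circuit itself, so the puncturing reduction $\mathcal D$ can simulate the whole experiment. Hence $\Pr[\text{some }X\text{ outputs }C(x_X)]\le\negl$ in this world. Consequently both $\bob',\charlie'$ output $1$ with probability $1-\negl$.

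\textbf{Step 3: combine.} The generalized UPO winning probability is therefore at least $\tfrac12\,p+\tfrac12(1-\negl)$. By $\mathcal D_X$-generalized UPO security this is at most $\tfrac12+\negl$, so $p\le\negl$. The trivial adversary's success probability is nonnegative, so $\Pr[\mathsf{UnpCP}=1]\le\Pr[\mathsf{UnpCP}_{\mathcal A_{\mathsf{triv}}}=1]+\negl$ holds for any trivial adversary, which gives \Cref{def:unpredictability-copy-protection}.

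\textbf{Main obstacle.} The delicate point is checking that the UPO challenge distribution matches the copy-protection challenge distribution. The theorem samples $C$ independently of $\mathsf{Samp}_{\mathsf{CP}}$, and $\alice'$ chooses $C$ before the points are drawn, so the distributions do match. We must also check that the collision convention in the generalized puncturing coincides with $\mathsf{Puncture}$ on the set $\{x_\bob,x_\charlie\}$. Both conventions output $0^m$ at the common point, so they agree even when $x_\bob=x_\charlie$.
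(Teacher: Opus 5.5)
Your final argument is correct and is the same as the paper's. You use constant-zero replacement circuits, and each recipient outputs $0$ exactly on a correct prediction. The $b=0$ branch then wins with the copy-protection success probability, and the $b=1$ branch wins with probability $1-\Pr[\text{either recipient is correct}]\geq(1-2^{-m})^2-\negl(\lambda)$ by \Cref{eq:correlated-unpredictability-puncturing}, so generalized-UPO security forces the copy-protection success to be negligible.

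One caveat: your intermediate aside calling $\Pr[\text{both fail}]\geq 1-p_1$ ``immediate'' by symmetry is false, since in general only $\Pr[\text{both fail}]\leq 1-p_1$ holds. You should delete it; your switch to the ``or'' event is exactly what makes the argument work.
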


\begin{proof}
We adapt the reduction of~\cite[Theorem~57]{AB24} while preserving the sampled joint pair. In summary, the argument below uses the challenge pair as a jointly sampled object, and hence permits arbitrary correlation among challenge points. 

Fix a copy-protection adversary, and construct a generalized-UPO adversary using the constant replacement circuits $\mu_X(z):=0^{m(\lambda)}$ for $X\in\{\bob,\charlie\}$. Let $p_0$ be the probability that both recipients recover their respective values when the UPO challenge bit is zero, and let $p_1$ be the probability that neither recipient recovers its respective value when the challenge bit is one. By construction, $p_0$ is exactly the success probability in the copy-protection experiment (see \Cref{def:unpredictability-copy-protection}). In the punctured event, the puncturing guarantee \Cref{eq:correlated-unpredictability-puncturing} yields
\begin{equation}
p_1\geq\left(1-2^{-m(\lambda)}\right)^2-\negl(\lambda).
\end{equation}
The reduction outputs zero precisely upon a correct prediction and one otherwise, so generalized-UPO security gives $(p_0+p_1)/2\leq1/2+\negl(\lambda)$. Therefore,
\begin{equation}
p_0\leq1-\left(1-2^{-m(\lambda)}\right)^2+\negl(\lambda),
\end{equation}
which is negligible in $\secparam$. 
\end{proof}

\begin{corollary}[Unpredictability-style copy protection with correlated challenges]
\label{cor:correlated-unpredictability-copy-protection}
Assume post-quantum polynomially secure $\io$ and the quantum hardness of LWE. Let $c>0$, let $\mathsf{Samp}_{\mathsf{CP}}$ be any QPT sampler whose two challenge-point marginals have min-entropy at least $\lambda^c$, and let $\mathsf{Circ}$ have output length $m\in\omega(\log\lambda)$ and satisfy \Cref{def:correlated-unpredictability-puncturing}. Then there exists a copy-protection scheme for $\mathsf{Circ}$ satisfying unpredictability-style security for $(\mathcal D_{\mathsf{Circ}},\mathsf{Samp}_{\mathsf{CP}})$.
\end{corollary}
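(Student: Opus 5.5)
This corollary is a composition of earlier results, so the plan is to chain them. First, \Cref{thm:correlated-upo-construction} applies to the class $\mathsf{Circ}$, or to the full polynomial-size circuit class containing the base circuits and all circuits produced by $\mathsf{Puncture}$ and the constant-zero programming, padded to a common size. We take the input length $\ell_{\mathsf{in}}\geq\lambda^c$; this is forced because each marginal has min-entropy at least $\lambda^c$. The theorem then yields a UPO scheme $\Pi$ satisfying $\lambda^c$-entropic correlated UPO security.

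Second, let $\mathcal D_X$ be the joint distribution induced by $\mathsf{Samp}_{\mathsf{CP}}$. It is QPT-samplable, and both of its marginals have min-entropy at least $\lambda^c$. So \Cref{lem:ac-upo-implies-prior} upgrades correlated UPO security to $\mathcal D_X$-generalized UPO security. We will check that the generalized puncturing used there matches the one fixed before \Cref{thm:idu-upo-to-point-functions}: first apply $\mathsf{Puncture}(C,\mathcal S)$, then program $\mu_\bob(x_\bob)$, and program $\mu_\charlie(x_\charlie)$ only when $x_\charlie\neq x_\bob$, with Bob priority. With the constant-zero $\mu_X$, this circuit equals $C^{\setminus\mathcal S}$ functionally. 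If the correlated-UPO experiment instead uses the canonical $\acpuncture$ circuit, the two obfuscated circuits are functionally equivalent. In that case we first compose with \Cref{thm:correlated-upo-composition} to obtain $\io$-security of UPO, and use it to swap between them at negligible cost.

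Third, we apply \Cref{thm:upo-to-non-oracular-unpredictability-copy-protection} with $m\in\omega(\log\lambda)$ and the assumed two-point unpredictability-style puncturability. This shows that $\mathsf{CP}_\Pi$ makes any adversary succeed with probability at most $1-(1-2^{-m})^2+\negl(\lambda)$, which is negligible. A negligible success probability is bounded by the trivial adversary's success probability plus a negligible term, since the latter is nonnegative, so \Cref{def:unpredictability-copy-protection} is satisfied.

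The main obstacle is the matching step: making sure the circuit class handled by the UPO includes the punctured circuits and that the puncturing in the UPO game coincides with $\mathsf{Puncture}$. The plan is to handle this with the full-circuit-class version from \Cref{thm:correlated-upo-composition} together with $\io$-security.
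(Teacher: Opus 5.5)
Your proposal is correct and follows the paper's proof, which applies \Cref{thm:correlated-upo-construction} and \Cref{lem:ac-upo-implies-prior} to the joint distribution induced by $\mathsf{Samp}_{\mathsf{CP}}$ and then invokes \Cref{thm:upo-to-non-oracular-unpredictability-copy-protection}. The only difference is your optional $\io$-security swap via \Cref{thm:correlated-upo-composition}: the paper avoids it by stipulating, at the start of the applications section, that the generalized-puncturing algorithm in the UPO game is exactly $\mathsf{Puncture}$ followed by Bob-priority programming, so your extra step is harmless but not needed.
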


\begin{proof}
Apply \Cref{thm:correlated-upo-construction,lem:ac-upo-implies-prior} to the joint distribution induced by $\mathsf{Samp}_{\mathsf{CP}}$, and then apply \Cref{thm:upo-to-non-oracular-unpredictability-copy-protection}.
\end{proof}

We note that the above result can be strengthened to oracular security when the two challenge points are identical and are sampled from a high-min-entropy distribution. Let $\mathcal D=\{\mathcal D_\lambda\}_{\lambda\in\NN}$ be a QPT-samplable distribution ensemble over $\mathcal X_\lambda$. We write $\mathsf{Samp}_{\ID_{\mathcal D}}$ for the sampler that samples $x\gets\mathcal D_\lambda$ and outputs $(x_\bob,x_\charlie):=(x,x)$.

\begin{theorem}[Oracular unpredictability-style copy protection for identical challenges]
\label{thm:upo-to-oracular-unpredictability-copy-protection}
Let $\Hmin(\mathcal D_\lambda)\in\omega(\log\lambda)$ and $m\in\omega(\log\lambda)$. Suppose that $\mathsf{Circ}$ is two-point unpredictability-style puncturable with respect to $(\mathcal D_{\mathsf{Circ}},\mathsf{Samp}_{\ID_{\mathcal D}})$ and that $\Pi$ satisfies $\ID_{\mathcal D}$-generalized UPO security. Then $\mathsf{CP}_\Pi$ satisfies oracular unpredictability-style copy-protection security with respect to $(\mathcal D_{\mathsf{Circ}},\mathsf{Samp}_{\ID_{\mathcal D}})$.
\end{theorem}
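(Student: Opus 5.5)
The plan is to reduce the oracular experiment to the non-oracular one, and then invoke \Cref{thm:upo-to-non-oracular-unpredictability-copy-protection} with $\mathcal D_X=\ID_{\mathcal D}$. There are two differences between the oracular game $\mathsf{OraUnpCP}$ under $\mathsf{Samp}_{\ID_{\mathcal D}}$ and the plain game. First, $\alice$ gets oracle access to $C$ before the split. Second, $\bob$ and $\charlie$ each get oracle access to $C^{\setminus x}$ at the common point $x$. We remove the first difference with a hybrid argument and absorb the second into the punctured circuit.

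\textbf{Hybrid 1.} Replace $\alice$'s oracle $C$ with $C^{\setminus x}$, where $x\gets\mathcal D_\lambda$ is sampled up front. This $x$ is the same point given to both recipients, and we keep the identical-challenge structure intact. Since $x$ is independent of $\alice$'s view, apply the standard one-way-to-hiding or hybrid (BBBV-style) argument. If $\alice$ makes $q=\poly(\lambda)$ superposition queries, the distinguishing advantage is at most $O(q\sqrt{\sum_{i}\E[\|\Pi_x\psi_i\|^2]})$. Each query's mass on $x$, averaged over $x\gets\mathcal D_\lambda$, is at most $2^{-\Hmin(\mathcal D_\lambda)}$. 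The total is therefore $O(q\cdot 2^{-\Hmin/2})=\negl(\lambda)$, because $\Hmin\in\omega(\log\lambda)$. Measuring whether both recipients output $C(x)$ is a post-processing step, so it changes by at most this amount.

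\textbf{Hybrid 2.} Now every oracle in the experiment is $C^{\setminus x}$, which the reduction can implement from the circuit $C^{\setminus\{x\}}\gets\mathsf{Puncture}(C,\{x\})$. The two differ only in the output at $x$ ($\bot$ versus $0^m$), and the reduction knows $x$, so it can patch this. Run the reduction of \Cref{thm:upo-to-non-oracular-unpredictability-copy-protection} against $\ID_{\mathcal D}$-generalized UPO with constant replacement circuits $\mu_X\equiv0^m$. The UPO adversary must supply oracles without knowing $x$ before the split, which is the main obstacle. To handle it, give $\alice$ and the recipients oracle access to the punctured circuit $G^\star$ (or to $C$) in place of $C^{\setminus x}$.

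The fix is to make the reduction sample $(C,x)$ itself. It hands the UPO challenger the keyed circuit $C$ and replacement circuits, and then uses its own knowledge of $C$ and $x$ to simulate $C^{\setminus x}$. However, the generalized UPO challenger samples $x$, not the reduction. So instead we rely on Hybrid 1 in reverse on the recipients' side. After the split, $\bob$ and $\charlie$ learn $x$ and can implement $C^{\setminus x}$ from a classical description of $C$ that $\alice$ appended. Before the split, $\alice$'s oracle is replaced by $C$ itself, which is justified by Hybrid 1 run in reverse.

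With these changes, the UPO adversary needs only $C$, which it chose. In the punctured branch, the recipients hold $C^{\setminus x}$ but must output $C(x)$. The two-point puncturing security of \Cref{def:correlated-unpredictability-puncturing}, with $\mathcal S=\{x\}$, holds even given $C^{\setminus\{x\}}$ and hence given oracle access to $C^{\setminus x}$. That security bounds the success probability as in the proof of \Cref{thm:upo-to-non-oracular-unpredictability-copy-protection}, yielding $p_0\le 1-(1-2^{-m})^2+\negl$, which is negligible.

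The delicate point is in the last step. The recipients must not receive $C$ in full, since then $C(x)$ is trivial. The fix is that $\alice$ appends the description of $C$ only to the reduction's internal workspace, and the recipient wrappers query it only at points $\neq x$. The wrappers are part of the UPO adversary, but the unpunctured and punctured branches then differ only through the UPO state, as required.
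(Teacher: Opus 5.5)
\textbf{There is a genuine gap in the punctured branch.} Your final UPO reduction undoes Hybrid~1, so $\alice$ keeps quantum oracle access to the full circuit $C$. In the punctured branch she therefore holds $\Obf(G_x)$ \emph{and} an oracle that contains the value $C(x)$ you want to call unpredictable.

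Two-point puncturing security only bounds predictors whose entire view can be computed from $(C^{\setminus\{x\}},x)$. The recipients' oracle $C^{\setminus x}$ meets this requirement, but $\alice$'s $C$-oracle does not, so $p_1\geq(1-2^{-m})^2-\negl(\lambda)$ does not follow. To remove that oracle you need a BBBV switch $C\to G_x$ inside the punctured branch. That switch requires $\alice$'s query mass on $x$ to be negligible \emph{while she holds $\Obf(G_x)$}.

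Your Hybrid~1 justification, that $x$ is independent of $\alice$'s view, holds only in the unpunctured branch, where she holds $\Obf(C)$. In the punctured branch her state depends on $x$, and nothing you invoke rules out that it helps her locate and query $x$. Notice also that your argument never uses that the two challenges are identical. That is a sign the oracle issue has not actually been handled.

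\textbf{The missing idea is to bound this query mass using $\ID_{\mathcal D}$-generalized UPO security itself.} Build a UPO adversary as follows:
\begin{itemize}
\item It samples $k$, submits $C_k$ with constant-$0^{m}$ replacement circuits, and simulates $\alice$'s oracle with $C_k$.
\item It picks $i\getsr[q]$, measures $\alice$'s $i$-th query to obtain $x'$ (or $\bot$), and sends the same classical $x'$ to both recipients.
\item Each recipient outputs $1$ iff $x'=x$.
\end{itemize}
If the UPO bit is $0$, then $x'$ is independent of the fresh $x$, so both recipients are correct with probability at least $1-2^{-\Hmin(\mathcal D_\lambda)}$. If the bit is $1$, both are correct with probability $\mathbb E[W]/q$, where $W$ is the total query weight on $x$. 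UPO security then forces $\mathbb E[W]\leq q\cdot 2^{-\Hmin(\mathcal D_\lambda)}+\negl(\lambda)$.

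With that bound, BBBV replaces $\alice$'s oracle by $G_x$ at cost $O(\sqrt{q\,\mathbb E[W]})$. After the swap, the entire punctured-branch experiment, including the recipients' oracle $C^{\setminus x}=G_x^{\setminus x}$, is simulatable from $(G_x,x)$, and your $p_0,p_1$ calculation goes through. This test needs both recipients to receive the same $x$, so that a single classical guess $x'$ can be checked by both. That is exactly where the identical-challenge hypothesis enters.
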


\begin{proof}
We combine the reduction above with the oracle-hybrid argument of~\cite[Remark~8.3]{ABH+26}, specialized to one common point $x\gets\mathcal D_\lambda$. Let $G_x$ be the punctured circuit, which agrees with $C$ off $x$ and outputs $0^{m(\lambda)}$ at $x$. Conditioned on the event that the UPO challenge bit is one, suppose that $\alice$ makes at most $q=q(\lambda)$ oracle queries, and let $W$ be their total query weight on $x$ when $\alice$ receives $\Obf(G_x)$ but still has oracle access to $C$. We claim that $\mathbb E[W]$ is negligible.

To see this, let $\delta_\lambda:=2^{-\Hmin(\mathcal D_\lambda)}$. For $q\geq1$, construct an $\ID_{\mathcal D}$-generalized UPO adversary $\mathcal{R}=(\mathcal{R}_\alice,\mathcal{R}_\bob,\mathcal{R}_\charlie)$ that samples $k\gets\mathcal D_K(1^\lambda)$, submits $k$ together with the constant replacement circuits $\mu_\bob(z)=\mu_\charlie(z):=0^{m(\lambda)}$, and uses $C:=C_k$ to simulate $\alice$'s $C$-oracle. The algorithm $\mathcal{R}_\alice$ chooses $i\getsr[q]$, measures the input of the $i$-th query if it occurs, sets $x':=\bot$ otherwise, and sends $x'$ to both reduction recipients. Upon receiving the common challenge $x$, each of $\mathcal{R}_\bob$ and $\mathcal{R}_\charlie$ outputs one if $x'=x$ and zero otherwise. Conditioned on the event that the UPO challenge bit is zero, $x'$ is independent of the fresh $x\gets\mathcal D_\lambda$, so both reduction recipients are correct with probability at least $1-\delta_\lambda$. Conditioned on the event that the UPO challenge bit is one, both are correct with probability $\mathbb E[W]/q$. Therefore, $\ID_{\mathcal D}$-generalized UPO security gives
\begin{equation}
\frac12\left(1-\delta_\lambda+\frac{\mathbb E[W]}q\right)
\leq\frac12+\negl(\lambda),
\end{equation}
and hence $\mathbb E[W]\leq q\delta_\lambda+\negl(\lambda)=\negl(\lambda)$. The case $q=0$ is immediate.

By the BBBV query bound~\cite[Theorem~2.1]{ABH+26}, changing $\alice$'s oracle from $C$ to $G_x$ changes the adversarial success probability by at most $O(\sqrt{q\mathbb E[W]})=\negl(\lambda)$. Both post-split recipient oracles are already identical, since $C^{\setminus x}=G_x^{\setminus x}$. The puncturing reduction can consequently simulate the punctured event using $G_x$, and the same $p_0,p_1$ calculation as above gives negligible piracy success and hence the required trivial-strategy bound.
\end{proof}

\begin{corollary}[Oracular unpredictability-style copy protection for identical challenges]
\label{cor:oracular-identical-high-entropy-unpredictability-copy-protection}
Assume post-quantum polynomially secure $\io$ and the quantum hardness of LWE. Let $c>0$ be a constant, let $\mathcal D$ be QPT-samplable with $\Hmin(\mathcal D_\lambda)\geq\lambda^c$, and let $\mathsf{Circ}$ have output length $m\in\omega(\log\lambda)$ and satisfy \Cref{def:correlated-unpredictability-puncturing} with respect to $(\mathcal D_{\mathsf{Circ}},\mathsf{Samp}_{\ID_{\mathcal D}})$. Then there exists a copy-protection scheme for $\mathsf{Circ}$ satisfying oracular unpredictability-style security with respect to $(\mathcal D_{\mathsf{Circ}},\mathsf{Samp}_{\ID_{\mathcal D}})$.
\end{corollary}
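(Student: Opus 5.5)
The corollary follows from three earlier results, composed in the same way as the non-oracular \Cref{cor:correlated-unpredictability-copy-protection}. First, instantiate \Cref{thm:correlated-upo-construction} with the given constant $c$. This is possible because $\Hmin(\mathcal D_\lambda)\geq\lambda^c$ forces $\ell_{\mathsf{in}}\geq\lambda^c$. It yields a UPO scheme $\Pi$ for $\mathsf{Circ}$, under post-quantum $\io$ and LWE, that satisfies $\lambda^c$-entropic correlated UPO security. Here the supported class contains the base circuits $C_k$, the two-point punctured circuits produced by $\mathsf{Puncture}$, and the programmed versions with constant replacement circuits $\mu_X\equiv 0^{m}$, all padded to a common size. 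Keeping the circuit class fixed in this sense is needed so that the generalized-puncturing algorithm used in the UPO game matches the puncturing used in \Cref{def:correlated-unpredictability-puncturing}, as the paper requires before \Cref{thm:idu-upo-to-point-functions}.

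Second, apply \Cref{lem:ac-upo-implies-prior} with $\mathcal D_X:=\ID_{\mathcal D}$. This distribution is QPT-samplable: sample $x\gets\mathcal D_\lambda$ and output $(x,x)$. Each of its marginals is $\mathcal D_\lambda$, which has min-entropy at least $\lambda^c=\kappa(\lambda)$. The lemma therefore gives $\ID_{\mathcal D}$-generalized UPO security for $\Pi$.

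Third, invoke \Cref{thm:upo-to-oracular-unpredictability-copy-protection}. I will check its hypotheses one at a time:
\begin{itemize}
\item $\Hmin(\mathcal D_\lambda)\geq\lambda^c\in\omega(\log\lambda)$;
\item $m\in\omega(\log\lambda)$, by assumption;
\item $\mathsf{Circ}$ is two-point unpredictability-style puncturable with respect to $(\mathcal D_{\mathsf{Circ}},\mathsf{Samp}_{\ID_{\mathcal D}})$, by assumption;
\item $\Pi$ is $\ID_{\mathcal D}$-generalized UPO secure, by the previous step.
\end{itemize}
The theorem then says that $\mathsf{CP}_\Pi$, with $\mathsf{CopyProtect}:=\Obf$ and evaluation inherited from $\Eval$, satisfies oracular unpredictability-style copy-protection security with respect to $(\mathcal D_{\mathsf{Circ}},\mathsf{Samp}_{\ID_{\mathcal D}})$. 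Correctness of $\mathsf{CP}_\Pi$ follows from the perfect correctness of $\Pi$.

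I expect no real obstacle, since all the substantive work, including the BBBV oracle hybrid, is already inside \Cref{thm:upo-to-oracular-unpredictability-copy-protection}. The one point needing care is in the first step: \Cref{thm:correlated-upo-construction} is stated for a fixed keyed class, so I must confirm that the class it is applied to is closed under the puncturing and programming algorithms used inside the oracular reduction. If that is awkward to state directly, the fallback is to apply the theorem to the full class $\mathsf{Circ}_{\ell_{\mathsf{in}},m}$ of polynomial-size circuits, padded to a fixed polynomial size bound, which contains every circuit that arises.
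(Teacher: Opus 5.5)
Your proposal is correct and follows the paper's proof: the paper applies \Cref{thm:correlated-upo-construction} and \Cref{lem:ac-upo-implies-prior} to $\ID_{\mathcal D}$ and then invokes \Cref{thm:upo-to-oracular-unpredictability-copy-protection}, exactly as you do. Your extra checks are details the paper leaves implicit, and you handle them correctly: $\Hmin(\mathcal D_\lambda)\geq\lambda^c$ forces $\ell_{\mathsf{in}}\geq\lambda^c$, and the supported class must contain the punctured and programmed circuits.
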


\begin{proof}
Apply \Cref{thm:correlated-upo-construction,lem:ac-upo-implies-prior} to $\ID_{\mathcal D}$ and then apply \Cref{thm:upo-to-oracular-unpredictability-copy-protection}.
\end{proof}

For the OWF-based instantiation, we retain the identical-uniform distribution, since the ``Moreover'' part of \Cref{thm:correlated-upo-construction} establishes $\IDU$-generalized UPO security.

\begin{corollary}[Identical-uniform oracular unpredictability-style copy protection]
\label{cor:oracular-idu-unpredictability-copy-protection}
Assume post-quantum polynomially secure $\io$ and post-quantum one-way functions. Let $\ell_{\mathsf{in}},m\in\omega(\log\lambda)$, and suppose that $\mathsf{Circ}$ is two-point unpredictability-style puncturable for the identical-uniform sampler. Then there exists a copy-protection scheme for $\mathsf{Circ}$ satisfying oracular unpredictability-style security for that sampler.
\end{corollary}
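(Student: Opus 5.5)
The plan is to compose the ``Moreover'' part of \Cref{thm:correlated-upo-construction} with the add-$\io$ transformation and then rerun the proof of \Cref{thm:upo-to-oracular-unpredictability-copy-protection}, specialized to the identical-uniform sampler $\mathsf{Samp}_{\IDU}$. The proof has three steps.

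\textbf{Step 1: obtain the UPO.} With $\ell_{\mathsf{in}}\in\omega(\log\lambda)$, post-quantum $\io$ and one-way functions give, via \Cref{thm:nce-instantiation}, a key-robust NCE. The ``Moreover'' part of \Cref{thm:correlated-upo-construction} (formally, \Cref{sec:alt-proof-upo}) then yields a UPO satisfying $\IDU$-generalized security. Applying the add-$\io$ transformation of~\cite[Lemma~3.16]{ABH+26}, as in \Cref{thm:correlated-upo-composition}, adds $\io$-security without losing $\IDU$-generalized security. The composition argument only forwards the sampler output unchanged, so it applies to the fixed $\IDU$ sampler as well.

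\textbf{Step 2: rerun the oracular argument with $\mathcal D$ uniform.} Instantiate the proof of \Cref{thm:upo-to-oracular-unpredictability-copy-protection} with $\mathcal D_\lambda$ uniform over $\zo^{\ell_{\mathsf{in}}}$. Then $\ID_{\mathcal D}=\IDU$, and $\delta_\lambda=2^{-\ell_{\mathsf{in}}}$ is negligible because $\ell_{\mathsf{in}}\in\omega(\log\lambda)$. That proof invokes $\ID_{\mathcal D}$-generalized UPO security in two places.
\begin{enumerate}
\item The query-weight reduction. Here the UPO adversary uses constant-zero replacement circuits and measures a random query of $\alice$. This bounds $\mathbb E[W]\le q\delta_\lambda+\negl(\lambda)$.
\item The final $p_0,p_1$ calculation. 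Here the puncturing hypothesis for the identical-uniform sampler gives $p_1\ge(1-2^{-m})^2-\negl(\lambda)$.
\end{enumerate}
Both places use only $\IDU$-generalized security, which Step 1 supplies. Between them, the BBBV bound replaces $\alice$'s oracle $C$ by $G_x$ at cost $O(\sqrt{q\,\mathbb E[W]})$. After this replacement the post-split oracles $C^{\setminus x}=G_x^{\setminus x}$ coincide, so the reduction simulates them perfectly.

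\textbf{Step 3: conclude.} With $m\in\omega(\log\lambda)$, the argument of Step 2 gives $p_0\le 1-(1-2^{-m})^2+\negl(\lambda)=\negl(\lambda)$. This is the required comparison with a trivial adversary.

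\textbf{Main obstacle.} The delicate point is the reduction that simulates $\alice$'s oracle. $\mathcal R_\alice$ must itself sample the key $k$ and answer oracle queries with $C_k$, while the UPO challenger sees only $k$ and the constant replacement circuits. The point to check is that the ``Moreover'' construction allows an adversarially chosen key (it does: $\alice_0$ submits $k$) and that the NCE-based hybrids never depend on $x$ before the split. Since $x$ is revealed only after the split and the $\IDU$ game reveals exactly this common $x$ to both recipients, the reduction should go through verbatim.
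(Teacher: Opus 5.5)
Your proposal is correct and follows the paper's route. The paper's proof simply combines the ``Moreover'' ($\IDU$, one-way-function-based) part of \Cref{thm:correlated-upo-construction} with \Cref{thm:upo-to-oracular-unpredictability-copy-protection} instantiated with $\mathcal D_\lambda$ uniform over $\mathcal X_\lambda$; your Steps 2--3 re-derive that theorem's internal argument instead of citing it, and your add-$\io$ step is superfluous, since \Cref{thm:upo-to-oracular-unpredictability-copy-protection} (unlike its pseudorandomness-style counterpart) needs only $\ID_{\mathcal D}$-generalized UPO security and your reduction is a legitimate black-box $\IDU$-generalized UPO adversary, so the internals of the NCE hybrids never need inspection.
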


\begin{proof}
Combine the ``Moreover'' part of \Cref{thm:correlated-upo-construction} with \Cref{thm:upo-to-oracular-unpredictability-copy-protection}, taking $\mathcal D_\lambda$ to be uniform over $\mathcal X_\lambda$.
\end{proof}

Next, we consider applications to pseudorandomness-style copy protection.

\begin{theorem}[Pseudorandomness-style copy protection from correlated UPO, adapted from~{\cite[Theorem~7.4]{ABH+26}}]
\label{thm:non-oracular-pr-copy-protection-from-correlated-upo}
Let $\mathsf{Samp}_{\mathsf{PRCP}}$ be a correlated pseudorandomness-style challenge sampler, and let $\mathsf{Circ}$ satisfy two-point pseudorandomness-style puncturing security for $(\mathcal D_{\mathsf{Circ}},\mathsf{Samp}_{\mathsf{PRCP}})$. Suppose that $\Pi$ satisfies correlated-UPO security for $\mathsf{Samp}_{\mathsf{PRCP}}$ and $\io$-security. Then $\mathsf{CP}_\Pi$ satisfies pseudorandomness-style copy-protection security for $(\mathcal D_{\mathsf{Circ}},\mathsf{Samp}_{\mathsf{PRCP}})$.
\end{theorem}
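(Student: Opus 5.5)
The plan is a hybrid argument that turns a pseudorandomness-style piracy attack into a correlated-UPO attack. The sampler used is the one induced by $\mathsf{Samp}_{\mathsf{PRCP}}$. Fix a QPT triplet $(\alice,\bob,\charlie)$ for $\gam{\mathsf{PRCP}}$. Sample $C\gets\mathcal D_{\mathsf{Circ}}$, $(x_\bob,x_\charlie,b_\bob,b_\charlie)\gets\mathsf{Samp}_{\mathsf{PRCP}}$ and $\mathcal S:=\{x_\bob,x_\charlie\}$. Following the template of~\cite[Theorem~7.4]{ABH+26}, I will build the protected program from the punctured circuit and reprogram its values at the challenge points.

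First, I use the $\io$-security of UPO (\Cref{def:upo-io-security}) to replace $\Obf(C)$ by $\Obf(G)$. Here $G$ is $C^{\setminus\mathcal S}\gets\mathsf{Puncture}(C,\mathcal S)$, reprogrammed to output $C(x_\bob)$ at $x_\bob$ and, if $x_\charlie\neq x_\bob$, $C(x_\charlie)$ at $x_\charlie$. It is functionally identical to $C$, and the reduction samples everything itself, so the change is negligible. The recipients still receive $z_X$ defined by \Cref{eq:pr-copy-protection-challenge-values}.

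Second, I invoke two-point pseudorandomness-style puncturing security (\Cref{def:correlated-pr-puncturing}) to replace the values $C(x_\bob),C(x_\charlie)$ by fresh uniform values $U_{x}$, one per distinct point. These values appear both inside $G$ and in the challenge values $z_X$ with $b_X=0$. The distinguisher receives $(C^{\setminus\mathcal S},x_\bob,x_\charlie,b_\bob,b_\charlie,\cdot,\cdot)$, builds $G$, obfuscates it, runs the three adversaries, and computes the outputs. The $y_X$ for the $b_X=1$ branch are sampled by the distinguisher itself, so it is efficient and the change is negligible.

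Third, I recast the resulting hybrid as an instance of the correlated-UPO experiment. Set $\mu_\bob\equiv y_\bob$ and $\mu_\charlie\equiv y_\charlie$, the random values handed out when $b_X=1$, and take the base circuit to be $C^{\setminus\mathcal S}$ programmed with the $U$'s. The plan is to bring the experiment to a form where the circuit equals $U$-values at unpunctured points and $y$-values at punctured points, while each recipient always receives the value the circuit outputs at its point. Since $U_x$ and $y_X$ are identically distributed uniform strings, I rename them: at $x_X$ the circuit outputs $y_X$ when $b_X=1$ (the UPO puncture) and $U_{x_X}$ otherwise. I then give recipient $X$ the string ``circuit value at $x_X$'' as post-split auxiliary information $w_X$. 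This string is uniform and independent of $b_X$ given everything else, so \Cref{eq:ac-upo-uniform-marginals} holds. The entropy condition follows from $\kappa$-entropic marginals with $z_{\mathsf{aux}}=\bot$.

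The remaining worry is that $\alice_0$ must choose $k$ without seeing the points. I handle this by letting $\alice_0$ pick the keyed class member ``$C$ together with the two random values'' and letting the generic generalized puncturing realize the rest. If that is awkward, I instead fold $C$ into the sampler's pre-split auxiliary information and let the circuit be hardcoded. Because the sampler receives $C$ but the adversary sees only $\Obf$, the correlated-UPO experiment then bounds the success by $1/2+\negl(\lambda)$. Bob priority in \Cref{eq:ac-upo-puncturing} matches the tie rule in \Cref{eq:pr-copy-protection-challenge-values}, so the collision case is consistent.

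I expect the main obstacle to be the third step: exactly matching the challenge values $z_X$ to quantities expressible as the UPO sampler's $w_X$. Two points need care there: that $w_X$ leaks no information about $b_X$ when $x_\bob=x_\charlie$ but $b_\bob\neq b_\charlie$, and that $C$ (and the puncturing) can be supplied through $\alice_0$ or $z_{\mathsf{aux}}$ without breaking the min-entropy condition on the points.
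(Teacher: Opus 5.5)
Your Steps 1 and 2 are sound. They match the paper's opening $\io$ hybrid and a first use of two-point pseudorandomness-style puncturing. Step 3, however, fails at exactly the point you flag.

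After Step 2 and the renaming, the protected circuit outputs a fresh uniform $U_{x_X}$ at every challenge point whose effective bit is $0$, and $y_X$ at those whose effective bit is $1$. Recipient $X$ receives $U_{x_X}$ in both cases. That is the value of the \emph{unpunctured} circuit at $x_X$, not of the protected one, so ``circuit value at $x_X$'' has to be read that way. The unpunctured circuit is therefore $C^{\setminus\mathcal S}$ reprogrammed with the $U$'s, and it depends on $x_\bob,x_\charlie$.

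In the correlated-UPO experiment the base key is chosen in the clear by $\alice_0$. The correlated puncturing of \Cref{eq:ac-upo-puncturing} leaves every point with effective bit $0$ at $C_k(x_X)$. No key fixed without knowledge of $\mathcal S$ can output a fresh $U_{x_X}$ there while agreeing with $C$ elsewhere; in particular ``$C$ together with two random values'' cannot. Folding $C$ into $z_{\mathsf{aux}}$ does not help, because the obstruction is the dependence on $\mathcal S$, not on $C$. Folding the $\mathcal S$-dependent circuit into $z_{\mathsf{aux}}$ instead destroys $\Hmin(x_X\mid z_{\mathsf{aux}})\geq\kappa$, since its description hardcodes the points. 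Finally, ``the adversary sees only $\Obf$'' does not hold in this experiment, because $\alice_0$ sends $k$ herself.

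The missing idea is a second, reverse application of the puncturing assumption after the renaming. Replace the uniform values that now serve as the challenges back by $C(x_\bob),C(x_\charlie)$. In your version these are also the circuit values at the effective-bit-$0$ points, so they get replaced there too. The distinguisher samples the programmed values at the effective-bit-$1$ points itself. Because you randomized both coordinates in Step 2, the plain two-point definition suffices for this step.

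The paper does the same thing in a slightly different order. It randomizes only the values on $T=\{x\in\mathcal S:d_x=1\}$, swaps the names of the two uniform families, and applies \Cref{eq:selective-pr-puncturing} a second time. Afterwards the protected circuit is functionally the correlated-punctured circuit with base $C$ and constant $\mu_X$, and $\io$-security lets you switch to it. Every recipient's challenge is then simply $C(x_X)$.

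The reduction samples $k\gets\mathcal D_K$ itself, passes $k$ classically to both recipients, and each recipient computes $C_k(x_X)$. No auxiliary information is needed, so the UPO sampler is exactly $\mathsf{Samp}_{\mathsf{PRCP}}$, as the theorem's hypothesis requires. Your route through $w_X$ would in any case invoke correlated-UPO security for a different sampler than the one assumed.
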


\begin{proof}
We adapt the non-oracular portion of~\cite[Theorem~7.4]{ABH+26}. Since the challenge tuple is independent of the sampled circuit and of the adversary's pre-challenge view, we may sample it at the outset and withhold it until after the split without changing the experiment. For a sampled tuple let $\mathcal S:=\{x_\bob,x_\charlie\}$ contain each distinct point once. Define the effective bit $d_{x_\bob}:=b_\bob$ and, when $x_\charlie\neq x_\bob$, $d_{x_\charlie}:=b_\charlie$, and let $T:=\{x\in\mathcal S:d_x=1\}$. This is precisely the Bob-priority convention on a collision. Write $R_x:=C(x)$, and let $U_x,Y_x\getsr\mathcal Y_\lambda$ be independent for each $x\in\mathcal S$. For values $v=(v_x)_{x\in\mathcal S}$, let $\mathsf{Program}(C^{\setminus\mathcal S},v)$ agree with $C^{\setminus\mathcal S}$ off $\mathcal S$ and output $v_x$ at $x\in\mathcal S$.

First, $\io$-security replaces $\Obf(C)$ by $\Obf(\mathsf{Program}(C^{\setminus\mathcal S},R))$. We use the following consequence of two-point pseudorandomness-style puncturing: for every $T\subseteq\mathcal S$ efficiently determined from the public tuple $(x_\bob,x_\charlie,b_\bob,b_\charlie)$,
\begin{equation}
(C^{\setminus\mathcal S},R_{\mathcal S\setminus T},R_T)
\approx_c
(C^{\setminus\mathcal S},R_{\mathcal S\setminus T},U_T).
\label{eq:selective-pr-puncturing}
\end{equation}
Indeed, the real and fully uniform ensembles are indistinguishable by \Cref{def:correlated-pr-puncturing}. Applying to both the map that retains the coordinates outside $T$ and freshly resamples those in $T$ shows that the mixed and fully uniform ensembles are indistinguishable. The triangle inequality gives \Cref{eq:selective-pr-puncturing}. The sampled bits are retained throughout, so this argument also permits arbitrary correlation between the points and bits.

Apply \Cref{eq:selective-pr-puncturing} to replace the protected values $R_T$ by $U_T$, retaining the independent challenge values $Y_T$ as auxiliary information. Swapping the names of the independent uniform families $U_T,Y_T$ is distribution preserving. Apply \Cref{eq:selective-pr-puncturing} a second time, now to replace the uniform challenge values by $R_T$ while retaining the independently programmed values as auxiliary information. Thus the classical challenge is $C(x_X)$ for both recipients, while the protected circuit is programmed with independent uniform values exactly at the points whose effective bit is one. Finally, $\io$-security replaces this circuit by the canonical correlated-punctured circuit.

For completeness, the final correlated-UPO reduction samples $k\gets\mathcal D_K$ and independent $u_\bob,u_\charlie\getsr\mathcal Y_\lambda$ before sending $(k,\mu_\bob,\mu_\charlie)$ to the challenger, where $\mu_X$ is the constant-$u_X$ circuit. After the split, recipient $X$ computes $C_k(x_X)$ and feeds $(x_X,C_k(x_X))$ to its copy-protection recipient. If the points collide, $u_\charlie$ is unused by Bob priority. Hence the simulated experiment is exactly the final hybrid, and correlated-UPO security bounds its success by $1/2+\negl(\lambda)$. No oracle-switch hybrid is used, so no independence between $x_\bob$ and $x_\charlie$ is required.
\end{proof}

\begin{corollary}[Pseudorandomness-style copy protection with correlated challenges]
\label{cor:non-oracular-correlated-pr-copy-protection}
Assume post-quantum polynomially secure $\io$ and the quantum hardness of LWE. Let $c>0$, let $\mathsf{Samp}_{\mathsf{PRCP}}$ have point marginals of min-entropy at least $\lambda^c$, and suppose that $\mathsf{Circ}$ satisfies \Cref{def:correlated-pr-puncturing}. Then $\mathsf{Circ}$ has a copy-protection scheme satisfying pseudorandomness-style security for $(\mathcal D_{\mathsf{Circ}},\mathsf{Samp}_{\mathsf{PRCP}})$.
\end{corollary}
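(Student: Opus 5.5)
This corollary follows by chaining results already established, mirroring the proof of \Cref{cor:correlated-unpredictability-copy-protection}. The plan has three steps.

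First, I would instantiate the UPO. By \Cref{thm:correlated-upo-construction}, post-quantum $\io$ and quantum-hard LWE give a UPO scheme $\Pi$ with $\lambda^c$-entropic correlated UPO security. The supported circuit class is taken to contain $\mathsf{Circ}$, the punctured circuits $C^{\setminus\mathcal S}$, and their programmed variants, all padded to a common size; the input length is assumed to be at least $\lambda^c$, which the min-entropy hypothesis forces anyway. Applying \Cref{thm:correlated-upo-composition} then upgrades $\Pi$ so that it also has $\io$-security, while keeping the correlated UPO guarantee.

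Second, I would check that the sampler $\mathsf{Samp}_{\mathsf{PRCP}}$, with empty auxiliary strings, is a valid correlated UPO sampler in the sense of \Cref{def:ac-upo-sampler}. Uniform marginals of $b_\bob$ and $b_\charlie$, conditioned on the points, are exactly the condition in \Cref{def:correlated-pr-copy-protection-sampler}. Since $z_{\mathsf{aux}}=\bot$, the requirement $\Hmin(x_X\mid z_{\mathsf{aux}})\ge\lambda^c$ reduces to the assumed marginal min-entropy. Hence $\Pi$ satisfies correlated-UPO security for $\mathsf{Samp}_{\mathsf{PRCP}}$.

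Third, I would invoke \Cref{thm:non-oracular-pr-copy-protection-from-correlated-upo} with this $\Pi$, together with the assumed two-point pseudorandomness-style puncturing (\Cref{def:correlated-pr-puncturing}). This yields that $\mathsf{CP}_\Pi$ is pseudorandomness-style secure.

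I expect the main obstacle to be bookkeeping rather than any new argument. The circuits that the reduction in \Cref{thm:non-oracular-pr-copy-protection-from-correlated-upo} obfuscates are the programmed circuits $\mathsf{Program}(C^{\setminus\mathcal S},\cdot)$ and the correlated-punctured circuit built from constant replacement circuits. I need to confirm that all of these lie in one polynomial-size keyed class covered by \Cref{thm:correlated-upo-construction}. I would handle this by working with the full class $\mathsf{Circ}_{\ell_{\mathsf{in}},\ell_{\mathsf{out}}}$, as allowed by \Cref{thm:correlated-upo-composition}.
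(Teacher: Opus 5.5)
Your proposal is correct and matches the paper's proof, which simply combines \Cref{thm:correlated-upo-construction,thm:correlated-upo-composition} with \Cref{thm:non-oracular-pr-copy-protection-from-correlated-upo}. Your explicit check that $\mathsf{Samp}_{\mathsf{PRCP}}$ with empty auxiliary strings is a valid $\lambda^c$-entropic correlated UPO sampler is a detail the paper leaves implicit, as is your observation that the min-entropy hypothesis forces $\ell_{\mathsf{in}}\geq\lambda^c$.
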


\begin{proof}
Combine \Cref{thm:correlated-upo-construction,thm:correlated-upo-composition} with \Cref{thm:non-oracular-pr-copy-protection-from-correlated-upo}.
\end{proof}

For the oracular pseudorandomness-style result, let $\mathsf{Samp}^{\mathsf{pr}}_{\ID_{\mathcal D}}$ sample $x\gets\mathcal D_\lambda$ and $b\getsr\bit$ and output
\begin{equation}
(x_\bob,x_\charlie,b_\bob,b_\charlie):=(x,x,b,b).
\end{equation}

\begin{theorem}[Oracular pseudorandomness-style copy protection for identical challenges, adapted from~{\cite[Theorem~7.4]{ABH+26}}]
\label{thm:oracular-pr-copy-protection-from-correlated-upo}
Let $\Hmin(\mathcal D_\lambda)\in\omega(\log\lambda)$. Suppose that $\mathsf{Circ}$ is two-point pseudorandomness-style puncturable for $(\mathcal D_{\mathsf{Circ}},\mathsf{Samp}^{\mathsf{pr}}_{\ID_{\mathcal D}})$ and that $\Pi$ satisfies $\ID_{\mathcal D}$-generalized UPO security and $\io$-security. Then $\mathsf{CP}_\Pi$ satisfies oracular pseudorandomness-style copy-protection security for $(\mathcal D_{\mathsf{Circ}},\mathsf{Samp}^{\mathsf{pr}}_{\ID_{\mathcal D}})$.
\end{theorem}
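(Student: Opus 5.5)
We plan to combine the non-oracular pseudorandomness reduction of \Cref{thm:non-oracular-pr-copy-protection-from-correlated-upo}, specialized to $\mathsf{Samp}^{\mathsf{pr}}_{\ID_{\mathcal D}}$, with the oracle-switching argument used in \Cref{thm:upo-to-oracular-unpredictability-copy-protection}. Since $x_\bob=x_\charlie=x$, the set $\mathcal S=\{x\}$ is a singleton. Both recipients receive the same value $z$, which is $C(x)$ if $b=0$ and a uniform $y$ if $b=1$. Both also receive the same oracle $C^{\setminus x}$.

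\textbf{Step 1: move the challenge into the protected program.} We first sample $(x,b)$ at the outset. By the $\io$-security of UPO we replace $\Obf(C)$ by $\Obf(\mathsf{Program}(C^{\setminus x},C(x)))$.

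\textbf{Step 2: switch $\alice$'s oracle.} Next we switch $\alice$'s oracle from $C$ to $G_x:=\mathsf{Program}(C^{\setminus x},U)$, where $U$ is fresh and independent; equivalently, the oracle sees only $C^{\setminus x}$ off $x$. The post-split oracles are unchanged, since $C^{\setminus x}=G_x^{\setminus x}$.

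To justify this switch we bound the expected query weight $\E[W]$ on $x$. As in \Cref{thm:upo-to-oracular-unpredictability-copy-protection}, a reduction picks a random query index, measures that query, and forwards the outcome $x'$ to both recipients. Each recipient outputs $1$ iff $x'=x$. We run this inside $\ID_{\mathcal D}$-generalized UPO, using constant replacement circuits $\mu_\bob=\mu_\charlie\equiv u$ for a uniform $u$. When the UPO bit is $0$, $x'$ is independent of the fresh $x$, so both recipients are correct except with probability $2^{-\Hmin(\mathcal D_\lambda)}$. When the UPO bit is $1$, success is $\E[W]/q$. UPO security then forces $\E[W]$ to be negligible. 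The BBBV bound converts this into a negligible change in success probability.

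Applying the BBBV hybrid requires the recorded query weight to be measured in the experiment where $\alice$ holds an obfuscation of the programmed circuit. We handle this by first using the two-point pseudorandomness puncturing property to make the programmed value at $x$ indistinguishable from uniform. The reduction can sample $C$ itself, so it can simulate both $C$ and $C^{\setminus x}$.

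\textbf{Step 3: finish with the non-oracular chain.} With the oracles now determined by $C^{\setminus x}$ (plus an independent value at $x$), we follow the chain of \Cref{thm:non-oracular-pr-copy-protection-from-correlated-upo}. When $b=1$, we use \Cref{eq:selective-pr-puncturing} to replace $R_x$ by $U_x$, rename $U_x$ and $Y_x$, and re-apply it so that the classical challenge becomes $C(x)$. We then use $\io$-security to reach the canonical punctured circuit. Every intermediate ensemble includes $C^{\setminus x}$, so the reduction can still answer the oracle queries of $\alice$, $\bob$ and $\charlie$.

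The final experiment is the $\ID_{\mathcal D}$-generalized UPO game with constant replacement circuit $u$. In it, recipients compute $C_k(x)$ and answer oracle queries using $C_k^{\setminus x}$. UPO security then bounds success by $1/2+\negl(\lambda)$.

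\textbf{Main obstacle.} The hard step is justifying the oracle switch in Step 2. The query weight must be bounded in a hybrid where the protected program is still correlated with $C(x)$, and the puncturing ensembles only provide $C^{\setminus\mathcal S}$. Our first attempt will order the hybrids so that the UPO-distinguishing reduction for $\E[W]$ uses a circuit sampled by the reduction itself. This keeps full oracle simulation available, with no pseudorandomness assumption needed at that point, exactly as in the unpredictability case.
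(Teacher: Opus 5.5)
\textbf{Gap: where $\alice$'s oracle is switched.} After your Step~2, $\alice$'s oracle is $\mathsf{Program}(C^{\setminus x},U)$, and Step~3 never changes it back.

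\begin{itemize}
\item Inside the Step~3 hybrids this is harmless, because those reductions know $x$.
\item The concluding $\ID_{\mathcal D}$-generalized UPO reduction, however, learns $x$ only after the split. It therefore cannot answer $\alice$'s pre-split queries to a circuit that depends on $x$. You only explain how the recipients' oracles are simulated.
\item So the experiment you end in is not the UPO game. $\alice$'s oracle must be switched back to the $x$-independent $C$ at the very end, and that is the one genuinely nontrivial BBBV step.
\end{itemize}

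Conversely, your Step~2 switch is unnecessary. After Step~1, $C$ is functionally equal to $\mathsf{Program}(C^{\setminus x},C(x))$. Each pseudorandomness-puncturing reduction holds $x$, $C^{\setminus x}$ and the hardcoded value $R$, so it can answer $\alice$'s queries with $\mathsf{Program}(C^{\setminus x},R)$. This changes the protected program and $\alice$'s oracle together.

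Your justification of Step~2 also fails as written.
\begin{itemize}
\item Your UPO reduction bounds the query weight when $\alice$ holds $\Obf(\mathsf{Program}(C^{\setminus x},u))$. At Step~2, $\alice$ holds $\Obf(\mathsf{Program}(C^{\setminus x},C(x)))$, a different experiment.
\item The proposed patch, applying puncturing while $\alice$ still queries $C$, does not work. A puncturing reduction receives only $C^{\setminus x}$ and cannot answer a query at $x$. A reduction that samples $C$ itself is no longer a reduction to the puncturing property.
\end{itemize}

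\textbf{Repair.} Use the paper's ordering:
\begin{enumerate}
\item Program the protected circuit at $x$ using $\io$.
\item Run the puncturing, renaming and puncturing hybrids with $\alice$'s oracle equal to the same programmed circuit.
\item Only then switch $\alice$'s oracle from $\mathsf{Program}(C^{\setminus x},Y)$ back to $C$.
\item Apply the last $\io$ hybrid to the generalized punctured circuit, either before or after step~3.
\end{enumerate}

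The paper bounds $\mathbb{E}[W]$ at that switch by carrying the measured-query test through the hybrids.
\begin{itemize}
\item In the initial experiment $\alice$'s view is independent of $x$, so the test succeeds with probability at most $2^{-\Hmin(\mathcal D_\lambda)}$.
\item Every preceding reduction knows $x$ and can run the test. The test therefore still succeeds with negligible probability immediately before the switch, where its success probability equals $\mathbb{E}[W]/q$.
\end{itemize}

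Your UPO-based reduction is a valid alternative for exactly this final switch. The BBBV bound may be evaluated with the query weights of either oracle's run. In the run with oracle $C$, $\alice$ holds $\Obf(\mathsf{Program}(C^{\setminus x},u))$, up to the last $\io$ step. That is precisely the bit-one branch of your reduction. So your ingredients suffice once the single nontrivial switch is moved to the end.
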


\begin{proof}
Specialize the hybrids of~\cite[Theorem~7.4]{ABH+26} to $x_\bob=x_\charlie=x\gets\mathcal D_\lambda$ and $b_\bob=b_\charlie=b$. The programming and puncturing hybrids use one shared uniform value. Since the protected circuit and $C$ differ only at $x$, the punctured oracles given to $\bob$ and $\charlie$ are identical in every corresponding hybrid, and hence the cross-point oracle switches of~\cite[Claim~7.5]{ABH+26} are unnecessary.

It remains to justify the oracle switch for $\alice$. Let $q=q(\lambda)$ be a polynomial upper bound on the number of its oracle queries and let $W$ be their total query weight on $x$ immediately before this switch. The case $q=0$ is immediate, so suppose $q\geq1$. In the initial experiment, $x$ is sampled independently of $\alice$'s pre-challenge view. Thus, if an index is sampled uniformly from $[q]$ and the input of that query is measured to obtain $x'$, with $x':=\bot$ if the query does not occur, then
\begin{equation}
\Pr[x'=x]\leq 2^{-\Hmin(\mathcal D_\lambda)}=\negl(\lambda).
\end{equation}
The preceding programming and puncturing hybrids remain indistinguishable when this measurement is included. In the hybrid immediately before the oracle switch, the same test succeeds with probability $\mathbb E[W]/q$. Therefore, exactly as in~\cite[Claim~7.6]{ABH+26},
\begin{equation}
\mathbb E[W]
\leq q\cdot2^{-\Hmin(\mathcal D_\lambda)}
+\negl(\lambda)
=\negl(\lambda).
\end{equation}
The BBBV query bound~\cite[Theorem~2.1]{ABH+26} then changes the oracle between $C$ and the programmed circuit at negligible cost. The last $\io$ hybrid replaces the programmed circuit by the functionally equivalent circuit produced by the generalized-puncturing algorithm. The resulting experiment is exactly the $\ID_{\mathcal D}$-generalized UPO experiment, and its winning probability is at most $1/2+\negl(\lambda)$.
\end{proof}

We remark that for independent high-min-entropy challenge points with a common challenge bit, the corresponding oracular pseudorandomness-style result is already shown in~\cite[Theorem~7.4]{ABH+26}, and the oracular unpredictability-style result follows from~\cite[Theorem~8.1 and Remark~8.3]{ABH+26}.

Next \Cref{thm:oracular-pr-copy-protection-from-correlated-upo} combined with our feasibility result for UPO (\Cref{thm:correlated-upo-construction}) gives us the following immediate corollaries.
\begin{corollary}[Oracular pseudorandomness-style copy protection for identical challenges]
\label{cor:oracular-identical-high-entropy-pr-copy-protection}
Assume post-quantum polynomially secure $\io$ and the quantum hardness of LWE. Let $c>0$ be a constant, let $\mathcal D$ be QPT-samplable with $\Hmin(\mathcal D_\lambda)\geq\lambda^c$, and suppose that $\mathsf{Circ}$ is two-point pseudorandomness-style puncturable for $(\mathcal D_{\mathsf{Circ}},\mathsf{Samp}^{\mathsf{pr}}_{\ID_{\mathcal D}})$. Then there exists a copy-protection scheme for $\mathsf{Circ}$ satisfying oracular pseudorandomness-style security for $(\mathcal D_{\mathsf{Circ}},\mathsf{Samp}^{\mathsf{pr}}_{\ID_{\mathcal D}})$.
\end{corollary}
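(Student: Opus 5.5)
This corollary follows by chaining results already established, in the same way as \Cref{cor:oracular-identical-high-entropy-unpredictability-copy-protection}. First, I will fix the constant $c>0$ and the ensemble $\mathcal D$ with $\Hmin(\mathcal D_\lambda)\geq\lambda^c$. Since $\mathcal D_\lambda$ is a distribution over $\mathcal X_\lambda=\zo^{\ell_{\mathsf{in}}}$, its min-entropy is at most $\ell_{\mathsf{in}}$, so $\ell_{\mathsf{in}}\geq\lambda^c$. Hence the input-length hypothesis of \Cref{thm:correlated-upo-construction} holds. Applying that theorem under post-quantum $\io$ and LWE gives a UPO scheme with $\lambda^c$-entropic correlated UPO security. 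The scheme is for the polynomial-size circuit class containing $\mathsf{Circ}$ together with its puncturings and programmings, padded to a common size.

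Next, I will upgrade this scheme with \Cref{thm:correlated-upo-composition}, the add-$\io$ transformation, to obtain a single UPO scheme $\Pi$ that satisfies both $\lambda^c$-entropic correlated UPO security and $\io$-security. The composition theorem is stated for the full class $\mathsf{Circ}_{\ell_{\mathsf{in}},\ell_{\mathsf{out}}}$ and inherited by subclasses, so I will apply it to the full class and then restrict to $\mathsf{Circ}$.

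Then I will invoke \Cref{lem:ac-upo-implies-prior} with $\mathcal D_X:=\ID_{\mathcal D}$. Both marginals of this distribution equal $\mathcal D_\lambda$ and so have min-entropy at least $\lambda^c$. The lemma therefore gives $\ID_{\mathcal D}$-generalized UPO security for $\Pi$, and the reduction in the lemma leaves the scheme unchanged, so $\io$-security is preserved. Finally, $\Hmin(\mathcal D_\lambda)\geq\lambda^c\in\omega(\log\lambda)$, and by hypothesis $\mathsf{Circ}$ is two-point pseudorandomness-style puncturable for $(\mathcal D_{\mathsf{Circ}},\mathsf{Samp}^{\mathsf{pr}}_{\ID_{\mathcal D}})$. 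So all hypotheses of \Cref{thm:oracular-pr-copy-protection-from-correlated-upo} hold, and it yields that $\mathsf{CP}_\Pi$ satisfies oracular pseudorandomness-style security.

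The only point that needs care is the circuit class. The UPO must support the programmed and punctured circuits produced in the hybrids of \Cref{thm:oracular-pr-copy-protection-from-correlated-upo}, not only $\mathsf{Circ}$ itself. Following the paper's standing convention, I will instantiate \Cref{thm:correlated-upo-construction} for the full polynomial-size class $\mathsf{Circ}_{\ell_{\mathsf{in}},\ell_{\mathsf{out}}}$, viewed as a keyed class whose keys are circuit descriptions and with a sufficiently large polynomial size bound. That class contains all of these circuits.
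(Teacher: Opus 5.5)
Your proposal is correct and matches the paper's proof, which chains \Cref{thm:correlated-upo-construction}, \Cref{thm:correlated-upo-composition}, and \Cref{lem:ac-upo-implies-prior} (with $\mathcal D_X=\ID_{\mathcal D}$) into \Cref{thm:oracular-pr-copy-protection-from-correlated-upo}. The paper leaves two of your checks implicit: that $\Hmin(\mathcal D_\lambda)\geq\lambda^c$ forces $\ell_{\mathsf{in}}\geq\lambda^c$, and that the UPO must be instantiated for the full padded circuit class containing the programmed and punctured circuits.
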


\begin{proof}
Combine \Cref{thm:correlated-upo-construction,thm:correlated-upo-composition,lem:ac-upo-implies-prior} with \Cref{thm:oracular-pr-copy-protection-from-correlated-upo}.
\end{proof}

For the OWF-based instantiation, we again retain identical-uniform challenges.

\begin{corollary}[Identical-uniform oracular pseudorandomness-style copy protection]
\label{cor:oracular-idu-pr-copy-protection}
Assume post-quantum polynomially secure $\io$ and post-quantum one-way functions. Let $\ell_{\mathsf{in}}\in\omega(\log\lambda)$, and suppose that $\mathsf{Circ}$ is two-point pseudorandomness-style puncturable for $(\mathcal D_{\mathsf{Circ}},\mathsf{Samp}^{\mathsf{pr}}_{\IDU})$. Then there exists a copy-protection scheme for $\mathsf{Circ}$ satisfying oracular pseudorandomness-style security for $(\mathcal D_{\mathsf{Circ}},\mathsf{Samp}^{\mathsf{pr}}_{\IDU})$.
\end{corollary}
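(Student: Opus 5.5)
The proof reduces to combining the ``Moreover'' part of \Cref{thm:correlated-upo-construction} with \Cref{thm:oracular-pr-copy-protection-from-correlated-upo}, with $\mathcal D_\lambda$ taken to be the uniform distribution over $\mathcal X_\lambda=\zo^{\ell_{\mathsf{in}}(\lambda)}$. Under this choice, $\ID_{\mathcal D}=\IDU$ and $\mathsf{Samp}^{\mathsf{pr}}_{\ID_{\mathcal D}}=\mathsf{Samp}^{\mathsf{pr}}_{\IDU}$. Moreover, $\Hmin(\mathcal D_\lambda)=\ell_{\mathsf{in}}(\lambda)\in\omega(\log\lambda)$ holds by hypothesis. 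So the entropy premise of \Cref{thm:oracular-pr-copy-protection-from-correlated-upo} is met, and the two-point pseudorandomness-style puncturing premise is exactly what the corollary assumes.

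What remains is to produce a UPO scheme $\Pi$ that has both $\IDU$-generalized UPO security and $\io$-security, using only post-quantum $\io$ and one-way functions.
\begin{itemize}
\item The ``Moreover'' part of \Cref{thm:correlated-upo-construction}, proved in \Cref{sec:alt-proof-upo} via key-robust NCE (\Cref{thm:nce-instantiation}), gives $\IDU$-generalized UPO security from these assumptions. We apply it to the full circuit class $\mathsf{Circ}_{\ell_{\mathsf{in}},\ell_{\mathsf{out}}}$, which contains the base circuits and all circuits produced by the puncturing and programming algorithms, padded to a common size.
\item We then apply the add-$\io$ transformation of \cite[Lemma~3.16]{ABH+26}, in the form used by \Cref{thm:correlated-upo-composition}. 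This yields $\io$-security while preserving $\IDU$-generalized security, since the composition argument leaves the sampled challenge tuple unchanged.
\item Finally, we restrict to the subclass $\mathsf{Circ}$ and invoke \Cref{thm:oracular-pr-copy-protection-from-correlated-upo} to obtain oracular pseudorandomness-style security of $\mathsf{CP}_\Pi$.
\end{itemize}

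The only delicate point is a parameter check, not the reduction itself. \Cref{thm:nce-instantiation} needs $\ell_{\mathsf{in}}\in\omega(\log\lambda)$ so that some $n\in\omega(\log\lambda)$ with $\ell_{\mathsf{NCE-key}}\leq\ell_{\mathsf{in}}$ exists. That $n$ in turn feeds \Cref{cor:comp-dcmoe}, which needs $n\in\omega(\log\lambda)$; this is exactly what the corollary's hypothesis on $\ell_{\mathsf{in}}$ provides, and it is the same requirement as in \Cref{cor:oracular-idu-unpredictability-copy-protection}. A second check is that \Cref{thm:correlated-upo-composition} is stated for $\kappa$-entropic correlated security, whereas here it is applied to $\IDU$-generalized security. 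The proof of that theorem depends only on running the sampler unchanged inside the add-$\io$ reduction, so it transfers to $\IDU$-generalized security verbatim, and the argument is complete.
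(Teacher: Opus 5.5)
Your proposal is correct and follows the paper's own proof. The paper likewise takes the ``Moreover'' part of \Cref{thm:correlated-upo-construction} for $\IDU$-generalized UPO security from $\io$ and one-way functions, adds $\io$-security via \cite[Lemma~3.16]{ABH+26}, and applies \Cref{thm:oracular-pr-copy-protection-from-correlated-upo} with $\mathcal D_\lambda$ uniform over $\mathcal X_\lambda$. Your extra parameter checks, showing that $\ell_{\mathsf{in}}\in\omega(\log\lambda)$ gives both the min-entropy premise and an NCE/coset parameter $n\in\omega(\log\lambda)$, are consistent with the paper and just make explicit what it leaves implicit.
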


\begin{proof}
The ``Moreover'' part of \Cref{thm:correlated-upo-construction} gives $\IDU$-generalized UPO security. Add $\io$-security using~\cite[Lemma~3.16]{ABH+26}, and apply \Cref{thm:oracular-pr-copy-protection-from-correlated-upo} with $\mathcal D_\lambda$ uniform over $\mathcal X_\lambda$.
\end{proof}

\begin{corollary}[Oracular pseudorandomness-style copy protection for PRFs]
\label{cor:oracular-prf-copy-protection}
Assume post-quantum polynomially secure $\io$ and post-quantum one-way functions. Then there exists a puncturable PRF family on every domain $\zo^{\ell_{\mathsf{in}}(\lambda)}$ with $\ell_{\mathsf{in}}\in\omega(\log\lambda)$ having a copy-protection scheme satisfying oracular pseudorandomness-style security for $\mathsf{Samp}^{\mathsf{pr}}_{\IDU}$.
\end{corollary}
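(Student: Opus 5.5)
The plan is to instantiate \Cref{cor:oracular-idu-pr-copy-protection} with the puncturable PRF family of \Cref{prethm:puncprfexists}, which exists from post-quantum one-way functions. Concretely, take $\mathsf{Circ}_\lambda=\{F_K(\cdot):K\in\zo^{c(\lambda)}\}$ with domain $\zo^{\ell_{\mathsf{in}}(\lambda)}$ and output length $m(\lambda)=\lambda$, and let $\mathcal D_{\mathsf{Circ}}$ be induced by $\pprf.\gen$. That corollary already combines the ``Moreover'' part of \Cref{thm:correlated-upo-construction}, the add-$\io$ transformation, and \Cref{thm:oracular-pr-copy-protection-from-correlated-upo}. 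So the only remaining task is to verify that this PRF class is two-point pseudorandomness-style puncturable for $(\mathcal D_{\mathsf{Circ}},\mathsf{Samp}^{\mathsf{pr}}_{\IDU})$ in the sense of \Cref{def:correlated-pr-puncturing}.

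\textbf{Defining the puncturing algorithm.} Define $\mathsf{Puncture}(F_K,\mathcal S)$ as the circuit that hardcodes the punctured key $K_{\mathcal S}$ and the set $\mathcal S$. On input $x\in\mathcal S$ it outputs $0^{m}$; otherwise it outputs $F_{K_{\mathcal S}}(x)$. Punctured correctness of $\pprf$ gives \Cref{eq:two-point-puncturing-correctness}, and all such circuits can be padded to a common size. To make the algorithm deterministic, fix the puncturing randomness, or note that the standard GGM-style puncturing is deterministic.

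\textbf{Verifying pseudorandomness.} Under $\mathsf{Samp}^{\mathsf{pr}}_{\IDU}$ the set is $\mathcal S=\{x\}$ with $x$ uniform, and $b$ is an independent uniform bit. I must show that
\[
(C^{\setminus\{x\}},x,x,b,b,F_K(x),F_K(x))\approx_c(C^{\setminus\{x\}},x,x,b,b,U_x,U_x).
\]
The reduction works as follows. Given a distinguisher for these ensembles, sample $x$ and $b$ itself, and ask the punctured-PRF challenger for a key punctured at $\{x\}$ together with a challenge value. It then builds $C^{\setminus\{x\}}$ from the returned key, duplicates the challenge value, and forwards everything. This is exactly an instance of the ``pseudorandom at punctured point'' game of \Cref{predef:puncprf}. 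Since the punctured-PRF definition quantifies over every fixed set, the set must be fixed before key generation. Here $x$ is sampled independently of $K$, so the reduction can choose $x$ first and averaging preserves negligibility. Quantum security against QPT distinguishers is inherited directly.

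\textbf{Parameter check.} It remains to confirm that $\ell_{\mathsf{in}}\in\omega(\log\lambda)$, as \Cref{cor:oracular-idu-pr-copy-protection} requires; this makes $2^{-\Hmin}$ negligible and is assumed in the statement. The output length can be taken superlogarithmic, although the pseudorandomness notion does not need this. The main obstacle I expect is purely definitional: confirming that the corollary's underlying circuit-class requirement is met. In particular, the UPO must support the punctured and programmed PRF circuits, which \Cref{thm:correlated-upo-composition} provides, since those circuits are polynomial-size and lie in $\mathsf{Circ}_{\ell_{\mathsf{in}},\ell_{\mathsf{out}}}$. Beyond that, everything is a direct invocation.
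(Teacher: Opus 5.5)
Your proposal is correct and follows the same route as the paper. You instantiate \Cref{cor:oracular-idu-pr-copy-protection} with a puncturable PRF from one-way functions and observe that, since the identical-uniform sampler yields the singleton set $\mathcal S=\{x\}$ with $x$ independent of the key, ordinary punctured-point pseudorandomness gives \Cref{def:correlated-pr-puncturing}; your explicit puncturing circuit and the reduction that duplicates the single challenge value just fill in details the paper leaves implicit.
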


\begin{proof}
Post-quantum one-way functions imply post-quantum puncturable PRFs. Since the identical sampler yields the singleton set $\mathcal S=\{x\}$, ordinary puncturable-PRF security gives \Cref{def:correlated-pr-puncturing}; apply \Cref{cor:oracular-idu-pr-copy-protection}.
\end{proof}